\newcommand{\longversion}[1]{}
\newcommand{\shortversion}[1]{#1}
\newcommand{\futuresketch}[1]{}
\newcommand{\footnoteitext}[1]{\stepcounter{footnote}
  \footnotetext[\thefootnote]{#1}}
\newlength\shlength
\newcommand\xshlongvec[2][0]{\setlength\shlength{#1pt}%
  \stackengine{-5.6pt}{$#2$}{\smash{$\kern\shlength%
    \stackengine{7.55pt}{$\mathchar"017E$}%
      {\rule{\widthof{$#2$}}{.57pt}\kern.4pt}{O}{r}{F}{F}{L}\kern-\shlength$}}%
      {O}{c}{F}{T}{S}}
\newcommand*{\inlineequation}[2][]{%
  \begingroup
    \refstepcounter{equation}%
    \ifx\\#1\\%
    \else
      \label{#1}%
    \fi
    \relpenalty=10000 %
    \binoppenalty=10000 %
    \ensuremath{%
      #2%
    }%
    ~\@eqnnum
  \endgroup
}
\newcommand{\problemFont}[1]{\textsc{#1}}
\newcommand{\mtext}[1]{\ensuremath{\mathcal{#1}}}
\newcommand{\cntc}[0]{\ensuremath{\#\cdot}}
\DeclareMathOperator{\poly}{poly}
\DeclareMathOperator{\rank}{rank}
\DeclareMathOperator{\ord}{ord}
\DeclareMathOperator{\block}{bn}
\DeclareMathOperator{\SP}{ProofStates}
\DeclareMathOperator{\checkord}{isLvl}
\DeclareMathOperator{\checkmod}{CheckMod}
\DeclareMathOperator{\gatherproof}{proven}
\DeclareMathOperator{\possord}{lvl}
\newcommand{\MAI}[2]{\ensuremath{#1^+_{#2}}}%
\newcommand{\QBFSAT}{\textsc{QSat}\xspace}
\DeclareMathOperator{\tower}{\ensuremath{\mathsf{tow}}}
\newcommand{\citex}[1]{\citeauthor{#1}~\shortcite{#1}}
\newcommand{\citey}[1]{\citeauthor{#1},~\citeyear{#1}}
\renewcommand{\P}{\ensuremath{\textsc{P}}\xspace}
\newcommand{\NP}{\ensuremath{\textsc{NP}}\xspace}
\newcommand{\SIGMA}[2]{\ensuremath{\Sigma_{\textrm{#1}}^{\textrm{#2}}}}
\newcommand{\HCF}{\text{HCF}\xspace}
\tikzstyle{tdnode} = [draw,rounded corners,top color=vertexTopColor,bottom color=vertexBottomColor,minimum size=1.5em]
\tikzstyle{stdnode} = [tdnode, font=\scriptsize]
\tikzstyle{stdnodecompact} = [stdnode, inner sep = 1.5pt, outer sep = 0.1pt]
\tikzstyle{stdnodetable} = [stdnode, inner sep = 0.5pt, outer sep = 0]
\tikzstyle{stdnodenum} = [minimum size=1.5em, font=\scriptsize]
\tikzstyle{tdedge} = [-,draw,thick]
\tikzstyle{tdlabel} = [draw=none, rectangle, fill=none, inner sep=0pt, font=\scriptsize]
\colorlet{vertexTopColor}{white}
\colorlet{vertexBottomColor}{black!10}
\newif\iflong
\newcommand{\restrict}[2]{\ensuremath{#1\cap #2}}
\newcommand{\SB}{\{}%
\newcommand{\SM}{\mid}%
\newcommand{\SE}{\}}%
\def\hy{\hbox{-}\nobreak\hskip0pt}
\newcommand{\mdpa}[1]{\ensuremath{\mathtt{PCNT}_{#1}}}
\newcommand{\ta}[1]{\ensuremath{2^{#1}}}
\newcommand{\Card}[1]{\left|#1\right|}
\newcommand{\CCard}[1]{\|#1\|}
\newcommand{\algo}[1]{\ensuremath{\mathbb{#1}}}
\DeclareMathOperator{\width}{width}
\DeclareMathOperator{\children}{chldr}
\DeclareMathOperator{\pred}{prev}
\DeclareMathOperator{\prev}{prev}
\DeclareMathOperator{\dom}{dom}
\DeclareMathOperator{\rootOf}{root}
\DeclareMathOperator{\scc}{scc}
\DeclarePairedDelimiter\ceil{\lceil}{\rceil}
\newcommand{\bvali}[3]{\ensuremath{[\![#1]\!]_{#2,#3}}}
\newcommand{\algorithmfootnote}[2][\footnotesize]{
  \let\old@algocf@finish\@algocf@finish
  \def\@algocf@finish{\old@algocf@finish
    \leavevmode\rlap{\begin{minipage}{\linewidth}
    #1#2
    \end{minipage}}
  }
}
\DeclareMathOperator{\orig}{\algo{A}\hy origins}
\DeclareMathOperator{\origs}{\algo{A}\hy origins}
\newcommand{\origa}[1]{\operatorname{#1\hy origins}}
\newcommand{\origse}[1]{\operatorname{#1\hy origins}}
\DeclareMathOperator{\Ext}{Ext}
\DeclareMathOperator{\Exts}{Exts}
\DeclareMathOperator{\PExt}{SatExt}
\DeclareMathOperator{\pmc}{pasc}
\DeclareMathOperator{\ipmc}{ipasc}
\DeclareMathOperator{\bucket}{=_P}%
\DeclareMathOperator{\buckets}{buckets}
\DeclareMathOperator{\subbuckets}{sub\hy buckets}
\newcommand{\TTT}{\ensuremath{\mathcal{T}}}%
\newcommand{\WWW}{\ensuremath{\mathcal{W}}}%
\newcommand{\por}{\vee}
\newcommand{\eqdef}{\ensuremath{\,\mathrel{\mathop:}=}}
\newcommand{\hsep}{\leftarrow\,}
\newcommand{\SAT}{\textsc{SAT}\xspace}
\newcommand{\ASP}{\textsc{ASP}\xspace}
\newcommand{\PASP}{\textsc{\#PAs}\xspace}%
\newcommand{\PDASP}{\textsc{\#PDAs}\xspace}%
\newcommand{\AlgA}{\algo{A}}%
\newcommand{\AlgS}{\AlgA}%
\newcommand{\PROJ}{\algo{PROJ}\xspace}
\newcommand{\at}{\text{\normalfont at}}
\newcommand{\var}{\at}
\newcommand{\bigO}[1]{\ensuremath{{\mathcal O}(#1)}}
\newcommand{\CCC}{\ensuremath{\mathcal{C}}}%
\newcommand{\tuplecolor}[1]{\textcolor{#1}}
\newcommand{\inputPredColor}{orange!55!red}
\newcommand{\outputPredColor}{blue!45!black}
\newcommand{\statePredColor}{green!62!black}
\newcommand{\specialPredColor}{red!62!black}
\newcommand{\tabval}{\ensuremath{u}}
\newcommand{\tab}[1]{\ensuremath{\tau_{#1}}}
\newcommand{\att}[1]{\ensuremath{\at_{\hspace{-0.05em}\leq\hspace{-0.05em}#1}}}
\newcommand{\attneq}[1]{\ensuremath{\at_{\hspace{-0.05em}<\hspace{-0.05em}#1}}}
\newcommand{\prog}{\ensuremath{\Pi}}
\newcommand{\progt}[1]{\ensuremath{\prog_{\hspace{-0.05em}\leq\hspace{-0.05em}#1}}}
\newcommand{\progtneq}[1]{\ensuremath{\prog_{\hspace{-0.05em}<\hspace{-0.05em}#1}}}
\newcommand{\dpa}{\ensuremath{\mathtt{DP}}}
\newcommand{\Tab}[1]{\ensuremath{\text{C-Tabs}}}
\def\thyph{\text{-}\penalty0\hskip0pt\relax}
\newcommand{\ATab}[1]{\ensuremath{#1\thyph\text{Comp}}}
\newcommand{\tw}[1]{\mathit{tw}(#1)}
\newcommand{\Nat}{\mathbb{N}} %
\DeclareMathOperator{\type}{type}
\newcommand{\intr}{\textit{int}}
\newcommand{\leaf}{\textit{leaf}}
\newcommand{\inner}{\textit{inner}}
\newcommand{\rem}{\textit{forget}}
\newcommand{\join}{\textit{join}}
\newtheorem{example}{Example}
\newtheorem{conjecture}{Conjecture}
\newtheorem{proposition}{Proposition}
\newtheorem{hypothesis}{Hypothesis}
\newtheorem{observation}{Observation}
\newtheorem{theorem}{Theorem}
\newtheorem{lemma}{Lemma}
\newtheorem{definition}{Definition}
\newtheorem{corollary}{Corollary}
\newenvironment{restateobservation}[1][\unskip]{%
  \begingroup

}%
{%
  \addtocounter{observation}{-1}
  \endgroup
}%
\newenvironment{restatecorollary}[1][\unskip]{%
  \begingroup

}%
{%
  \addtocounter{corollary}{-1}
  \endgroup
}%
\newenvironment{restateproposition}[1][\unskip]{%
  \begingroup

}%
{%
  \addtocounter{proposition}{-1}
  \endgroup
}%
\newenvironment{restatetheorem}[1][\unskip]{%
  \begingroup

}%
{%
  \addtocounter{theorem}{-1}
  \endgroup
}%
\DeclareMathOperator{\pcnt}{pasc}
\DeclareMathOperator{\local}{local}
\DeclareMathOperator{\sipmc}{s-ipasc}
\DeclareMathOperator{\icnt}{ipasc}
\newcommand{\PRIM}{\ensuremath{{\algo{PHC}}}\xspace}
\newcommand{\INC}{\ensuremath{{\algo{INC}}}\xspace}
 \title{Characterizing Structural Hardness of Logic Programs: What makes  Cycles and Reachability Hard for Treewidth?} 
\author{%
  Markus Hecher
%
}
\begin{document}
%
%
\maketitle
\begin{abstract}
Answer Set Programming (ASP) is a problem modeling and solving framework for several problems in KR with growing industrial applications. 
Also for studies of computational complexity and deeper insights into the hardness and its sources, ASP 
has been attracting 
researchers for many years.
%
%
These studies resulted in fruitful characterizations in terms of complexity classes, 
fine-grained insights 
in form of dichotomy-style results, 
as well as detailed parameterized complexity landscapes.
%
%
%
%
Recently, this lead to a novel result establishing that for the measure treewidth, which captures structural density of a program, the evaluation of the well-known class of normal programs is expected to be slightly harder than deciding satisfiability (SAT).
%
However, it is unclear how to utilize this structural power of ASP.
This paper deals with a novel reduction from SAT to normal ASP that goes beyond well-known encodings: We explicitly utilize the structural power of ASP, whereby we sublinearly decrease the treewidth, which probably cannot be significantly improved.
%
%
%
Then, compared to existing results, this characterizes hardness in a fine-grained way by establishing the required functional dependency of the dependency graph's cycle length (SCC size) 
on the treewidth. 
%
%

\end{abstract}

\section{Introduction}

Answer Set Programming (ASP)~\cite{BrewkaEiterTruszczynski11,GebserKaminskiKaufmannSchaub12} is a declarative problem modeling and solving framework for knowledge representation and reasoning and artificial intelligence in general. 
This makes ASP a key formalism and suitable target language for solving problems in that area effectively, e.g., \cite{BalducciniGelfondNogueira06a,NiemelaSimonsSoininen99,NogueiraBalducciniGelfond01a,GuziolowskiEtAl13a,SchaubWoltran18,AbelsEtAl19}.
Such problems are thereby encoded in a logic program, which is a set of rules describing its solutions by means of so-called answer sets --
an approach that goes beyond satisfying a set of clauses (rules) as in \SAT, but additionally requires justifications for variables (atoms) that are claimed to hold.
Considerable effort has been put into providing extensions and a rich modeling language that can be efficiently evaluated by solvers like clasp~\cite{GebserKaufmannSchaub09a} or wasp~\cite{AlvianoEtAl19}.
\begin{example}[Encoding with \ASP]\label{ex:encoding}
The classical way of encoding satisfiability (\SAT) 
of a formula~$F$ is to guess for each variable~$v\in \var(F)$ its truth value via the two rules $v \leftarrow \neg \hat v$ and~$\hat v \leftarrow \neg v$. Then, for every clause~$l_1 \vee l_2 \vee \ldots \vee l_n$ in~$F$,
an additional constraint ensures the clause: $\bot \leftarrow \bar{l_1}, \bar{l_2}, \ldots, \bar{l_n}$ with~$\bar{l_i}$ being~$v_i$, if~$l_i=\neg v_i$, and~$\hat l_i$ otherwise.
\end{example}

The computational complexity of ASP is fairly well studied, where 
for the \emph{consistency problem} of deciding whether a so-called \emph{normal logic program} admits an answer set is \NP-complete~\cite{BidoitFroidevaux91,MarekTruszczynski91}.
This result also extends to
the class of \emph{head-cycle-free (HCF)}
programs~\cite{Ben-EliyahuDechter94},
but if full disjunctions are allowed in the heads of a rule,
the complexity increases to $\Sigma_2^P$-completeness~\cite{EiterGottlob95}.
Over the time, studying the complexity of ASP raised further attention.
There is a wide range of more fine-grained studies~\cite{Truszczynski11} for ASP, also in the context of parameterized
complexity~\cite{CyganEtAl15,Niedermeier06,DowneyFellows13,
FlumGrohe06}, 
where certain  parameters~\cite{
LacknerPfandler12} 
are taken into account.
In parameterized complexity, the ``hardness'' of a problem 
is classified according to the effort required 
to solve the problem, e.g., runtime dependency, in terms of a certain
\emph{parameter}. 
For \ASP there is growing research on the well-studied and prominent structural parameter treewidth~\cite{JaklPichlerWoltran09,BichlerMorakWoltran18,BliemEtAl20}. 
Intuitively, treewidth yields a \emph{tree decomposition},
which is a structural representation
that can be used
for solving numerous combinatorially hard problems in parts; the treewidth indicates the maximum number of variables of these parts one has to investigate 
during problem solving.

Recently, it has been shown that 
when assuming the \emph{Exponential Time Hypothesis (ETH)}~\cite{ImpagliazzoPaturiZane01}, which implies that \SAT \emph{cannot} be solved in time better than single-exponential in the number of variables in the worst case,
normal \ASP seems to be slightly harder~\cite{Hecher22} for treewidth than \SAT. 
More precisely, (i) normal \ASP can be solved in time~$2^{{\mathcal{O}(k\cdot\log(k))}} \cdot \poly(n)$ for any logic program of treewidth~$k$ with~$n$ variables (atoms)
 and under ETH this dependency on the treewidth can not be significantly improved.
 The reason for the hardness lies in very large cycles (\emph{strongly connected components, SCCs}) of the program's dependency graph; the hardness proof requires cycle lengths that are \emph{unbounded in the treewidth}, i.e., cycles involve instance-size many atoms. 
Interestingly, this is in stark contrast to (ii) \SAT, which can be decided in time~$2^{{\mathcal{O}(k)}} \cdot \poly(n)$.
The classical reduction of \SAT to \ASP, while preserving treewidth, does not introduce any cycles in the encoding (see Example~\ref{ex:encoding}). Thus, the question arises if one can construct cyclic programs to reduce \SAT while decreasing treewidth.
In more details, this paper asks:
%
%
%
%
\begin{itemize}
	\item How can we encode \SAT in (normal) \ASP, thereby decreasing the treewidth by the amount that reflects the runtime difference between (i) and (ii)?
	\item Given the gap between unbounded cycle lenghts in (i) and no cycles in (ii), what is the  difference in cycle length (SCC size) of the complexity between normal \ASP and \SAT? 
	Can we bound the cycle length in the treewidth that still makes
	normal \ASP hard? 
	\item Can we draw further runtime consequences and lower bounds for other fragments or related extensions of \ASP?
\end{itemize}

\smallskip\noindent\textbf{Contributions.} We address these questions via a novel reduction that closes the gap to existing complexity results and lower bounds. Concretely, we provide the following results.
\begin{itemize}
%
	\item First, we establish a novel reduction from \SAT to normal \ASP that in contrast to existing transformations 
fully utilizes the power of reachability along cycles, thereby \emph{decreasing treewidth} from~$k$ to~$\mathcal{O}(\frac{k}{\log(k)})$.
Unless ETH fails, it is not expected that this reduction can be significantly improved,
i.e., further major treewidth decreases are unexpected. 
To the best of our knowledge, this is the first reduction fully utilizing the structural power of normal \ASP.
Then, we also study the largest cycles (SCC sizes) of the dependency graph of the constructed program.
\item 
Interestingly, the constructed cycles (SCC sizes) of the dependency graph are of size at most~$2^{\mathcal{O}(k\cdot\log(k))}$.
This is a major improvement compared to the largest SCC sizes of the recent hardness result, which is unbounded in the treewidth. 
Then, we show that for the class of $\iota$-tight programs,  the SCC sizes can be even decreased to~$2^{\mathcal{O}(k\cdot\log(\iota))}$, while still preserving hardness for treewidth.
%
	\item Finally, we show that our reduction has immediate further implications in terms of computational complexity. 
%
 We establish 
  for the class of $\iota$-tight programs a corresponding \emph{ETH-tight lower bound}. Further, 
%
counting answer sets of a normal program with respect to a projection of interest is expected to be slightly harder
than counting answer sets of disjunctive programs. Notably, both problems are complete
for the same (classical) complexity class, but are surprisingly of different hardness for treewidth.
%
\end{itemize}

\smallskip\noindent\textbf{Related Work.} 
Programs of bounded even or odd cycles have been analyzed~\cite{LinZhao04}. 
%
%
%
%
Further, the 
feedback width has been studied, which depends on the atoms required to break large SCCs~\cite{GottlobScarcelloSideri02}.
There have been improvements for so-called~$\iota$-tight programs~\cite{FandinnoHecher21} with~$\iota$ being smaller than treewidth~$k$, 
which allow for runtimes of~$2^{{\mathcal{O}(k\cdot\log(\iota))}} \cdot \poly(n)$.
For normal and HCF programs, slightly superexponential algorithms in the treewidth~\cite{FichteHecher19} for solving consistency are known.
For disjunctive ASP
algorithms have been proposed~\cite{JaklPichlerWoltran09,PichlerEtAl14} 
running in time linear in the instance size, but double exponential
in the treewidth. 
Hardness of further problems has been studied by means of runtime dependency in the treewidth, e.g., levels of 
exponentiality, where triple-exponential algorithms are known~\cite{
MarxMitsou16,FichteHecherPfandler20}.
%
%
%
%
%
%
\futuresketch{The proposed algorithm 
was used for 
counting answer sets involving projection~\cite{GebserKaufmannSchaub09a}, 
which is at least double exponential~\cite{FichteEtAl18} in the treewidth. 
However, for plain counting (single exponential), it can overcount due to lacking unique level mappings (orderings). 
}

Numerous reductions from \ASP to \SAT are known~\cite{Clark77,Ben-EliyahuDechter94,LinZhao03,Janhunen06,AlvianoDodaro16,BomansonJanhunen13,Bomanson17}.
%
%
%
%
These reductions focus on
the resulting formula size and number of auxiliary variables,
where a sub-quadratic blow-up is unavoidable~\cite{LifschitzRazborov06}.
Unless ETH fails, a sub-quadratic blow-up in the treewidth cannot be circumvented as well ~\cite{Hecher22}.
%
%
%
%
%
For \SAT, empirical results~\cite{AtseriasFichteThurley11}
involving resolution-width 
and treewidth yield efficient \SAT solver runs on instances of small treewidth.

\section{Preliminaries}

We assume familiarity with graph terminology, cf.,~\cite{Diestel12}.
Let~$G=(V,E)$ be a directed graph. Then, a set~$C\subseteq V$ of vertices of~$G$
is a \emph{strongly-connected component (SCC)} of~$G$ if
$C$ is a~$\subseteq$-largest set such that 
for every two distinct vertices~$u,v$ in~$C$ there is a directed path from~$u$ to~$v$ in~$G$.
%

\futuresketch{
\paragraph{Basics and Combinatorics.}
For a set~$X$, let $\ta{X}$ be the \emph{power set of~$X$}
consisting of all subsets~$Y$ with $\emptyset \subseteq Y \subseteq X$.
Let $\vec s$ be a sequence of elements of~$X$. When we address the
$i$-th element of the sequence~$\vec s$ for a given positive
integer~$i$, we simply write $\vec s_{(i)}$. The sequence~$\vec s$
\emph{induces} an \emph{ordering~$<_{\vec s}$} on the elements in~$X$
by defining the
relation~$<_{\vec s} \eqdef \SB (\vec s_{(i)},\vec s_{(j)}) \SM 1 \leq
i < j \leq \Card{\vec s}\SE$.
Given some integer~$n$ and a family of finite subsets~$X_1$, $X_2$,
$\ldots$, $X_n$. Then, the generalized combinatorial
inclusion-exclusion principle~\cite{GrahamGrotschelLovasz95a} states
that the number of elements in the union over all subsets is
$\Card{\bigcup^n_{j = 1} X_j} = \sum_{I \subseteq \{1, \ldots, n\}, I
  \neq \emptyset} (-1)^{\Card{I}-1} \Card{\bigcap_{i \in I} X_i}$.

\paragraph{Computational Complexity.}
We assume familiarity with standard notions in computational
complexity~\cite{Papadimitriou94}
and use counting complexity classes as defined
by~\citex{DurandHermannKolaitis05}.
%
%
%
For parameterized complexity, we refer to standard
texts~\cite{CyganEtAl15}. 
%
%
We recall some basic notions.
Let $\Sigma$ and $\Sigma'$ be some finite alphabets.  We call
$I \in \Sigma^*$ an \emph{instance} and $\CCard{I}$ denotes the size
of~$I$.  
%
Let $L \subseteq \Sigma^* \times \Nat$ and
$L' \subseteq {\Sigma'}^*\times \Nat$ be two parameterized problems. An
\emph{fpt-reduction} $r$ from $L$ to $L'$ is a many-to-one reduction
from $\Sigma^*\times \Nat$ to ${\Sigma'}^*\times \Nat$ such that for all
$I \in \Sigma^*$ we have $(I,k) \in L$ if and only if
$r(I,k)=(I',k')\in L'$ such that $k' \leq g(k)$ for a fixed computable
function $g: \Nat \rightarrow \Nat$, and there is a computable function
$f$ and a constant $c$ such that $r$ is computable in time
$O(f(k)\CCard{I}^c)$. If additionally~$g$ is a linear function,
then~$r$ is referred to as~\emph{fpl-reduction}.
%
%
%
A \emph{witness function} is a
function~$\mathcal{W}\colon \Sigma^* \rightarrow 2^{{\Sigma'}^*}$ that
maps an instance~$I \in \Sigma^*$ to a finite subset
of~${\Sigma'}^*$. We call the set~$\WWW(I)$ the \emph{witnesses}. A
\emph{parameterized counting
  problem}~$L: \Sigma^* \times \Nat \rightarrow \Nat_0$ is a
function that maps a given instance~$I \in \Sigma^*$ and an
integer~$k \in \Nat$ to the cardinality of its
witnesses~$\Card{\WWW(I)}$.
Let $\mtext{C}$ be a decision complexity class,~e.g., \P. Then,
$\cntc\mtext{C}$ denotes the class of all counting problems whose
witness function~$\WWW$ satisfies (i)~there is a
function~$f: \Nat_0 \rightarrow \Nat_0$ such that for every
instance~$I \in \Sigma^*$ and every $W \in \WWW(I)$ we have
$\Card{W} \leq f(\CCard{I})$ and $f$ is computable in
time~$\bigO{\CCard{I}^c}$ for some constant~$c$ and (ii)~for every
instance~$I \in \Sigma^*$ the decision problem~$\WWW(I)$ belongs to
the complexity class~$\mtext{C}$.
Then, $\cntc\P$ is the complexity class consisting of all counting
problems associated with decision problems in \NP.
Let $L$ and $L'$ be counting problems with witness functions~$\WWW$
and $\WWW'$. A \emph{parsimonious reduction} from~$L$ to $L'$ is a
polynomial-time reduction~$r: \Sigma^* \rightarrow \Sigma'^*$ such
that for all~$I \in \Sigma^*$, we
have~$\Card{\WWW(I)}=\Card{\WWW'(r(I))}$. It is easy to see that the
counting complexity classes~$\cntc\mtext{C}$ defined above are closed
under parsimonious reductions. It is clear for counting problems~$L$
and $L'$ that if $L \in \cntc\mtext{C}$ and there is a parsimonious
reduction from~$L'$ to $L$, then $L' \in \cntc\mtext{C}$.
%
%

}

\smallskip
\noindent\textbf{Tree Decompositions (TDs).} %
A \emph{tree decomposition (TD)}~\cite{RobertsonSeymour86} 
of a given graph~$G{=}(V,E)$ is a pair
$\TTT{=}(T,\chi)$ where $T$ is a tree rooted at~$\rootOf(T)$ and $\chi$ 
assigns to each node $t$ of~$T$ a set~$\chi(t)\subseteq V$,
called \emph{bag}, such that (i) $V=\bigcup_{t\text{ of }T}\chi(t)$, (ii)
$E\subseteq\SB \{u,v\} \SM t\text{ in } T, \{u,v\}\subseteq \chi(t)\SE$,
and (iii) for each $r, s, t\text{ of } T$, such that $s$ lies on the path
from~$r$ to $t$, we have $\chi(r) \cap \chi(t) \subseteq \chi(s)$.
For every node~$t$ of~$T$, we denote by $\children(t)$ the \emph{set of child nodes of~$t$} in~$T$.
%
We
let $\width(\TTT) {\eqdef} \max_{t\text{ of } T}\Card{\chi(t)}-1$.
The
\emph{treewidth} $\tw{G}$ of $G$ is the minimum $\width({\TTT})$ over
all TDs $\TTT$ of $G$. \futuresketch{TDs can be 5-approximated in \emph{single exponential time}~\cite{BodlaenderEtAl13} in the treewidth.} 
For a node~$t \text{ of } T$, we say that $\type(t)$ is $\leaf$ if 
$t$ has
no children
; $\join$ if $t$ has exactly two children~$t'$ and $t''$ with
$t'\neq t''$; 
$\inner$ if~$t$ has a single child.
%
If for
every node $t\text{ of } T$, %
$\type(t) \in \{ \leaf, \join, \inner\}$, 
the TD is called \emph{nice}.
%
A TD can be turned into a nice TD~\cite{Kloks94a}[Lem.\ 13.1.3] \emph{without width-increase} in linear~time.
Without loss of generality, we assume that \emph{bags of nice TDs are distinct}. 

%

\smallskip\noindent\textbf{Answer Set Programming (ASP).}
%
We assume familiarity with propositional satisfiability (\SAT)~\cite{BiereHeuleMaarenWalsh09,KleineBuningLettman99},
where we use clauses, formulas, and assignments in the usual meaning.
Two assignments~$I: X \rightarrow \{0,1\}$, $I': X'\rightarrow \{0,1\}$ are 
 \emph{compatible},
whenever for every~$x\in X\cap X'$ we have that~$I(x)=I'(x)$.
 
We follow standard definitions of propositional ASP~\cite{BrewkaEiterTruszczynski11,JanhunenNiemela16a}.
%
Let $\ell$, $m$, $n$ be non-negative integers such that
$\ell \leq m \leq n$, $a_1$, $\ldots$, $a_n$ be distinct propositional
atoms. Moreover, we refer by \emph{literal} to an atom or the negation
thereof.
%
A \emph{program}~$\prog$ is a set of \emph{rules} of the form
%
\(
a_1\por \cdots \por a_\ell \hsep a_{\ell+1}, \ldots, a_{m}, \neg
a_{m+1}, \ldots, \neg a_n.
\)
%
%
%
%
%
%
%
%
%
%
For a rule~$r$, we let $H_r \eqdef \{a_1, \ldots, a_\ell\}$,
$B^+_r \eqdef \{a_{\ell+1}, \ldots, a_{m}\}$, and
$B^-_r \eqdef \{a_{m+1}, \ldots, a_n\}$.
%
%
%
We denote the sets of \emph{atoms} occurring in a rule~$r$ or in a
program~$\prog$ by $\at(r) \eqdef H_r \cup B^+_r \cup B^-_r$ and
$\at(\prog)\eqdef \bigcup_{r\in\prog} \at(r)$.
%
%
%
Program~$\prog$ is \emph{normal} if $\Card{H_r} \leq 1$ for
every~$r \in \prog$.
The \emph{dependency graph}~$D_\prog$ of $\prog$ is the
directed graph defined on the atoms
from~$\bigcup_{r\in \prog}H_r \cup B^+_r$, where for every
rule~$r \in \prog$ two atoms $a\in B^+_r$ and~$b\in H_r$ are joined by
an edge~$(a,b)$.
\futuresketch{A head-cycle of~$D_\prog$ is an $\{a, b\}$-cycle\footnote{Let
  $G=(V,E)$ be a digraph and $W \subseteq V$. Then, a cycle in~$G$ is
  a $W$-cycle if it contains all vertices from~$W$.} for two distinct
atoms~$a$, $b \in H_r$ for some rule $r \in \prog$. 
Program~$\prog$ is
\emph{head-cycle-free} if $D_\prog$ contains no
head-cycle~\cite{Ben-EliyahuDechter94}.}
%
%

An \emph{interpretation} $I$ is a set of atoms. $I$ \emph{satisfies} a
rule~$r$ if $(H_r\,\cup\, B^-_r) \,\cap\, I \neq \emptyset$ or
$B^+_r \setminus I \neq \emptyset$.  $I$ is a \emph{model} of $\prog$
if it satisfies all rules of~$\prog$, in symbols $I \models \prog$. 
For brevity, we view propositional formulas 
as sets of formulas (e.g., clauses) that need to be satisfied, and
use the notion of interpretations, models, and satisfiability analogously. 
%
The \emph{Gelfond-Lifschitz
  (GL) reduct} of~$\prog$ under~$I$ is the program~$\prog^I$ obtained
from $\prog$ by first removing all rules~$r$ with
$B^-_r\cap I\neq \emptyset$ and then removing all~$\neg z$ where
$z \in B^-_r$ from the remaining
rules~$r$~\cite{GelfondLifschitz91}. %
$I$ is an \emph{answer set} of a program~$\prog$, denoted~$I\models \prog$, if $I$ is a minimal
model of~$\prog^I$. %
%
The problem of deciding whether an \ASP program has an answer set is called
\emph{consistency}, which is \SIGMA{2}{P}-complete~\cite{EiterGottlob95}. 
If the input is restricted to normal programs, the complexity drops to
\NP-complete~
\cite{MarekTruszczynski91}.
\futuresketch{A head-cycle-free program~$\prog$ 
can be translated into a normal program in polynomial
time~\cite{Ben-EliyahuDechter94}.}
%
%
%
%
%
%
%
%
%
%
%
%
%
%
%

The following characterization of answer sets is often
invoked for  normal programs~\cite{LinZhao03}.
Let~$A\subseteq\at(\Pi)$ be a set of atoms. Then, a function~$\varphi: A \rightarrow \{0,\ldots,\Card{A}-1\}$ is an \emph{ordering} over~$\dom(\varphi)\eqdef A$.
Let~$I$ be a model of a normal program~$\prog$ and~$\varphi$ be an ordering over~$I$. An atom~$a\in I$ is \emph{proven}
if there is a rule~$r\in\prog$ \emph{proving~$a$}, where $a\in H_r$ with (i)~$B^+_r\subseteq I$,
(ii)~$I \cap B^-_r = \emptyset$ and
$I \cap (H_r \setminus \{a\}) = \emptyset$,
and (iii)~$\varphi(b) < \varphi(a)$ for every~$b\in B_r^+$. Then, $I$ is an
\emph{answer set} of~$\prog$ if (i)~$I$ is a model of~$\prog$, and
(ii) \emph{$I$ is proven}, i.e., every~$a \in I$ is proven.
For an ordering~$\varphi$ and two atoms~$a,b\in\at(\Pi)$, we
write~$a \prec_\varphi b$ whenever \emph{$b$ directly succeeds~$a$},
i.e.,~$\varphi(b)=\varphi(a)+1$.
The empty ordering~$\varphi$ with~$\dom(\varphi)=\emptyset$ is abbreviated by~$\varnothing$.
\futuresketch{This characterization vacuously extends to head-cycle-free
programs by results of~\citex{Ben-EliyahuDechter94}.}
%
%
%
%
%
\futuresketch{Given a program~$\prog$. It is folklore that an atom~$a$ of any answer
set of $\prog$ has to occur in some head of a rule
of~$\prog$~\cite[Ch~2]{GebserKaminskiKaufmannSchaub12}, which we 
hence assume in the following. 
}
%
%
%
%
%
%
%
\longversion{%
} 
%

%
%
%
%
%
%
%
%
%
%
%
%
%
%
%
%
%
%
\longversion{%
}%
\longversion{%
}%

\smallskip
\noindent\textbf{Primal Graph.} We need graph representations to use treewidth for ASP~\cite{JaklPichlerWoltran09}.
The \emph{primal graph 
}$\mathcal{G}_\prog$
of program~$\prog$ has the atoms of~$\prog$ as vertices and an
edge~$\{a,b\}$ if there exists a rule~$r \in \prog$ and $a,b \in \at(r)$.
The primal graph~$\mathcal{G}_F$ of a Boolean Formula~$F$ (in CNF) %
uses variables of~$F$ as vertices and adjoins two vertices~$a,b$ by an edge, if there is a clause in~$F$ %
containing~$a,b$.
Let~$\mathcal{T}=(T,\chi)$ be a TD of~$\mathcal{G}_F$. Then, 
for every node~$t$ of~$T$, 
we define the \emph{bag clauses}~$F_t\eqdef \{c\in F\mid \var(c)\subseteq\chi(t)\}$.

\begin{example}\label{ex:run}
Consider formula~$F\eqdef \{c_1, c_2, c_3\}$,
where $c_1\eqdef (a \vee \neg b)$, $c_2\eqdef (\neg a \vee c \vee d)$, $c_3\eqdef (\neg c \vee \neg d)$.
Figure~\ref{fig:graph-td} (left) depicts the primal graph~$\mathcal{G}_F$ and Figure~\ref{fig:graph-td} (right) shows a TD of~$\mathcal{G}_F$.
Then, observe that~$F_{t_1}=\{c_1\}$, $F_{t_2}=\{c_2,c_3\}$, and~$F_{t_3}=\emptyset$.
%
\end{example}

\smallskip\noindent\textbf{$\iota$-Tightness.}
For a program~$\Pi$ and an atom~$a\in\at(\Pi)$ we denote the \emph{SCC of atom~$a$} in~$D_\Pi$  by~$\scc(a)$.
Then, given a TD~$\mathcal{T}=(T,\chi)$ of~$\mathcal{G}_\Pi$, the
\emph{tightness width} is $\max_{t\text{ of }T}\max_{x\in\chi(t)}\Card{\chi(t) \cap \scc(x)}$.
The \emph{tightness treewidth~$\iota$} of~$\Pi$ is the smallest tightness width among every TD of width in~$\mathcal{O}(\tw{\mathcal{G}_\Pi})$;
 in this case we say~$\Pi$ is~\emph{$\iota$-tight}.

\begin{proposition}[\cite{FandinnoHecher21}]\label{thm:almosttightness}\label{prop:iota}
Assume a normal, $\iota$-tight program~$\Pi$; 
the treewidth of~$\mathcal{G}_\Pi$ is~$k$.
Then, consistency of~$\Pi$ can be decided
in time~$2^{\mathcal{O}(k\cdot \log(\iota))} \cdot \poly(\Card{\at(\Pi)})$. 
\end{proposition}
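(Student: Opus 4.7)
The plan is to extend the standard treewidth-based dynamic programming (DP) for normal \ASP consistency so that the ordering component of the state is refined per SCC rather than being a single global ordering on each bag. First I would compute (or assume given, since it is a structural parameter) a nice tree decomposition $\mathcal{T}=(T,\chi)$ of $\mathcal{G}_\Pi$ of width in $\mathcal{O}(k)$ whose tightness width realises $\iota$; such a TD exists by the definition of $\iota$-tightness. Then I would run a bottom-up DP along $T$, where the table entries at bag $t$ are records of the form $(M, \{\varphi_C\}_C, P)$, with $M \subseteq \chi(t)$ a candidate interpretation restricted to the bag, $\varphi_C$ an ordering on $M \cap C$ for each SCC $C$ of $D_\Pi$ that meets $\chi(t)$, and $P \subseteq M$ the atoms already established as proven in the subtree rooted at $t$.

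The correctness rests on a refinement of the Lin--Zhao characterization. Since the quotient $D_\Pi/{\scc}$ is acyclic, any family $\{\varphi_C\}_C$ of per-SCC orderings extends to a global ordering $\varphi$ by first applying a topological order on the SCCs and then using $\varphi_C$ inside each $C$. Conversely, for a rule $r$ with $a \in H_r$, either some $b \in B^+_r$ lies in a strictly earlier SCC (and hence is automatically ordered before $a$ regardless of the per-SCC choice) or $b \in \scc(a)$ (so the required strict inequality $\varphi(b) < \varphi(a)$ is determined by $\varphi_{\scc(a)}$). Hence, tracking only the per-SCC projections of an ordering in the bag is sound and complete for the proof condition.

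With this refined state, I would implement the standard introduce/forget/join updates: an introduce node guesses the truth value of the new atom $a$ and its position inside $\varphi_{\scc(a)}\!\restriction\!\chi(t)$; a forget node verifies that a leaving atom has been placed in $P$ and then projects the orderings; a join node merges two child records with compatible interpretations and componentwise-compatible per-SCC orderings, unioning their $P$-sets. The number of admissible states per bag is bounded by $\prod_{C}\bigl(|\chi(t) \cap C|\bigr)!$. Writing $x_C := |\chi(t) \cap C|$, we have $\sum_C x_C \leq k+1$ and $x_C \leq \iota$ by the tightness width, so $\sum_C \log(x_C!) \leq \sum_C x_C \log(\iota) \leq (k+1)\log(\iota)$, yielding $2^{\mathcal{O}(k\log\iota)}$ states per bag and thus the claimed total runtime $2^{\mathcal{O}(k\cdot\log(\iota))}\cdot\poly(|\at(\Pi)|)$.

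The main obstacle I expect is the join-node update: two child tables must be combined subject to consistency of both interpretations and per-SCC orderings on the shared bag, while correctly merging the proven sets. One must argue that it suffices to enforce agreement on the bag-local orderings rather than on full SCCs, using the TD property that any SCC atom that already left both subtrees must have been forgotten -- and therefore proven -- so its relative position in $\varphi_C$ is irrelevant for subsequent proofs. Getting this bookkeeping right, and in particular verifying that the per-SCC projection carried at each bag is sufficient for both the model check and the Lin--Zhao foundedness condition, is where the bulk of the technical effort lies.
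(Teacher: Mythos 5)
Note first that this proposition is not proved in the paper at all: it is stated as an imported result of Fandinno and Hecher (2021), so there is no in-paper proof to compare against. Judged on its own, your sketch is essentially sound, but it takes a genuinely different route from the cited source. Fandinno--Hecher obtain the bound by a treewidth-aware compilation of the normal program into \SAT: level mappings are kept local to each SCC, so each bag atom only needs a level value among $\{0,\dots,\iota\}$, i.e.\ $\log(\iota)$ bits, which inflates the treewidth from $k$ to $\mathcal{O}(k\cdot\log(\iota))$, after which a single-exponential \SAT algorithm finishes the job. You instead run a direct dynamic program over the tree decomposition whose states carry a per-SCC ordering of the bag atoms; your counting $\prod_C |\chi(t)\cap C|!\le 2^{(k+1)\cdot\log(\iota)}$ is correct, and the interpretation and proven-set components of a state only add a further $2^{\mathcal{O}(k)}$ factor, which is harmless for $\iota\ge 2$ (the stated bound should be read with $\iota\ge 2$, as the paper itself does elsewhere). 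Both arguments hinge on the same key insight, namely that Lin--Zhao orderings need only be enforced inside SCCs because the condensation of $D_\Pi$ is acyclic, so the conceptual content agrees. The points you flag as delicate are indeed where the remaining work lies: the join update, and more fundamentally the claim that bag-local per-SCC orders glue to a global ordering --- this does hold (pairwise order constraints propagate along the connected subtree of occurrences of each atom, and the Helly property of subtrees of a tree excludes cyclic conflicts that are not already witnessed inside a single bag), but it has to be argued, and you should also state how a decomposition simultaneously realizing width $\mathcal{O}(k)$ and tightness width $\iota$ is obtained rather than merely assumed. The compilation route buys a reusable encoding that plugs directly into \SAT machinery (and is the upper bound this paper's reduction $\mathcal{R}'$ is designed to match from below), whereas your direct dynamic program is self-contained and avoids the detour through \SAT at the price of redoing the correctness bookkeeping by hand.
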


\begin{figure}[t]%
  \centering
  \shortversion{ %
    \begin{tikzpicture}[node distance=7mm,every node/.style={fill,circle,inner sep=2pt}]
\node (a) [label={[text height=1.5ex,yshift=0.0cm,xshift=0.05cm]left:$e$}] {};
\node (b) [right of=a,label={[text height=.85ex,xshift=0.25em]left:$a$}] {};
\node (e) [below of=a, label={[text height=1.5ex,xshift=-.34em,yshift=.42em]right:$d$}] {};
\node (d) [below of=b,label={[text height=1.5ex,yshift=0.09cm,xshift=-0.07cm]right:$b$}] {};
\node (c) [left of=e,label={[text height=1.5ex,yshift=0.09cm,xshift=0.05cm]left:$c$}] {};
\draw (a) to (c);
\draw (b) to (d);
\draw (c) to (e);
\draw (d) to (e);
\draw (e) to (a);
\draw (d) to (a);
\end{tikzpicture}%
    \includegraphics{graph0/td}
    \vspace{-.4em}
    \caption{Graph~$G$ (left) and a TD~$\mathcal{T}$ of~$G$ (right).}
  }%
  \longversion{%
    \begin{subfigure}[c]{0.47\textwidth}
      \centering%
      \begin{tikzpicture}[node distance=7mm,every node/.style={fill,circle,inner sep=2pt}]
\node (a) [label={[text height=1.5ex,yshift=0.0cm,xshift=0.05cm]left:$e$}] {};
\node (b) [right of=a,label={[text height=.85ex,xshift=0.25em]left:$a$}] {};
\node (e) [below of=a, label={[text height=1.5ex,xshift=-.34em,yshift=.42em]right:$d$}] {};
\node (d) [below of=b,label={[text height=1.5ex,yshift=0.09cm,xshift=-0.07cm]right:$b$}] {};
\node (c) [left of=e,label={[text height=1.5ex,yshift=0.09cm,xshift=0.05cm]left:$c$}] {};
\draw (a) to (c);
\draw (b) to (d);
\draw (c) to (e);
\draw (d) to (e);
\draw (e) to (a);
\draw (d) to (a);
\end{tikzpicture}%
      \input{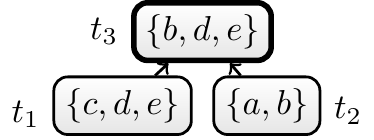}%
      \caption{Graph~$G_1$ and a tree decomposition of~$G_1$.}
      \label{fig:graph-td}
    \end{subfigure}
    \begin{subfigure}[c]{0.5\textwidth}
      \centering \begin{tikzpicture}[node distance=7mm,every node/.style={fill,circle,inner sep=2pt}]
\node (a) [label={[text height=1.5ex,yshift=0.0cm,xshift=0.12cm]left:$d$}] {};
\node (b) [right = 0.5cm of a,label={[text height=1.5ex,xshift=0.12cm]left:$a$}] {};
\node (c) [right = 0.5cm of b,label={[text height=1.5ex,xshift=0.05cm]left:$b$}] {};
\node (d) [right = 0.5cm of c,label={[text height=1.5ex,xshift=-0.1cm]right:$c$}] {};
\node (e) [right = 0.5cm of d,label={[text height=1.5ex,xshift=-0.1cm]right:$e$}] {};
\node (r3) [below = 0.5cm of a,label={[text height=1.5ex,xshift=-0.1cm]right:${r_3}$}] {};
\node (r1) [below = 0.5cm of b,label={[text height=1.5ex,xshift=-0.05cm]right:${r_1}$}] {};
\node (r2) [below = 0.5cm of c,label={[text height=1.5ex,xshift=-0.12cm]right:${r_2}$}] {};
\node (r4) [below = 0.5cm of d,label={[text height=1.5ex,xshift=-0.05cm]right:${r_4}$}] {};
\draw (a) to (r3);
\draw (c) to (r3);
\draw (b) to (r1);
\draw (c) to (r1);
\draw (b) to (r2);
\draw (c) to (r2);
\draw (d) to (r2);
\draw (d) to (r4);
\draw (e) to (r4);
\end{tikzpicture}%
      \begin{tikzpicture}[node distance=0.5mm]
\tikzset{every path/.style=thick}

\node (leaf0) [tdnode,label={[]left:$t_1$}] {$\{a,b, {r_1}, {r_2}\}$};
\node (leaf1) [tdnode,label={[xshift=0em]left:$t_2$}, above = 0.2cm of leaf0] {$\{b, d,{r_2}, {r_3}\}$};
\node (leaf2) [tdnode,label={[xshift=0em]right:$t_3$}, right = 0.2cm of leaf0]  {$\{c, e, {r_4}\}$};
\coordinate (middle) at ($ (leaf1.north east)!.5!(leaf2.north west) $);
\node (join) [tdnode,ultra thick,label={[xshift=-0.25em]right:$t_4$}, right = 0.25cm of leaf1] {$\{c, r_2\}$}; 

\draw [->] (leaf0) to (leaf1);
\draw [<-] (join) to (leaf1);
\draw [<-] (join) to (leaf2);
\end{tikzpicture}%
      \caption{Graph~$G_2$ and a tree decomposition of~$G_2$.}
      \label{fig:graph-td2}%
    \end{subfqigure}
    \caption{Graphs~$G_1, G_2$ and two corresponding tree
      decompositions.}
  }%
  \label{fig:graph-td}%
\end{figure}

\futuresketch{
\begin{example}
\label{ex:running1}\label{ex:running}
Consider the following program
\vspace{-0.1em}
$\prog\eqdef$
%
$\SB\overbrace{ a \lor b \hsep}^{r_1};\, %
\overbrace{c \lor e \hsep d}^{r_2};\, %
\overbrace{d \lor e \hsep b}^{r_3};\, %
\overbrace{b \hsep e, \neg d}^{r_4};\, %
\overbrace{d \hsep \neg b}^{r_5} %
\SE$.
%
%
Observe that $\prog$ is head-cycle-free.
Then, $I\eqdef\{b, c, d\}$ is an answer set of~$\prog$,
since~$I\models\Pi$, and we can prove with ordering
$\varphi \eqdef\{b\mapsto 0, d\mapsto 1, c\mapsto 2\}$
atom~$b$ by rule~$r_1$, 
atom~$d$ by rule~$r_3$, and
atom~$c$ by rule~$r_2$.
Further answer sets are $\{b,e\}$,
$\{a,c,d\}$, and~$\{a,d,e\}$.


\end{example}}%
%


\futuresketch{
The following result for QBFs is known, where
it turns out that deciding $\QBFSAT$ remains $\ell$-fold exponential in the treewidth
of the primal graph (even when restricting the graph 
to the variables of the inner-most quantifier block).

\begin{proposition}[\cite{FichteHecherPfandler20}]\label{prop:lb}
Given a QBF~$Q=\exists V_1. \forall V_2, \ldots, \forall V_\ell. F$ of quantifier depth~$\ell$.
Then, unless ETH fails, the validity of~$Q$ cannot be decided
in time~$\tower(\ell, o(k))\cdot\poly(\var(Q))$, where~$k$
is the treewidth of the primal graph (even when restricted to vertices in~$V_\ell$).
\end{proposition}}

\futuresketch{
\subsection{Reduction of SAT to normal ASP}

Note that~$w!/w^w=e^{-\mathcal{O}(k)}$ via Sterling's formula~\cite{LokshtanovMarxSaurabh11}.

Given a SAT formula~$F$ and a tree decomposition~$\mathcal{T}=(T,\chi)$ of the primal graph of~$F$ such that~$T=(N,A,n)$.
We reduce~$F$ to a normal program~$\Pi$ such that
$F$ is satisifiable if and only if~$\Pi$ has an answer set.
The reduction itself is guided by~$\mathcal{T}$ and the width of~$\mathcal{T}'$ is 
$\bigO{w/\log(w)}=\bigO{w/\log(w\cdot\log(w))}=\bigO{w/[\log(w)+\log(\log(w))]}$, 
where~$w$ is the width of~$\mathcal{T}$.
In total, with ASP, we can assign up to~$2^{\bigO{s\cdot\log(s)}}$ many states
for each bag~$\chi(t')$ of~$\mathcal{T}'$ of cardinality at most~$s$.
Concretely, we have exactly~$\Sigma_{I \subseteq \chi(t')} \Card{I}!$ many states.
We will use these states to simulate SAT, but thereby ``save'' in the treewidth.

Given any two bags~$\chi(t_1),\chi(t_2)$ of~$\mathcal{T}$ such that~$t_1$ is a children of~$t_2$, we construct the following ASP rules in~$\Pi$.
We assume an arbitrary total ordering~$\prec_{t_1}$ of the assignments~$I\in 2^{\chi(t_1)}$
in bag~$t_1$.
Further, we assume an arbitrary total ordering~$\prec_{t'_1}$ among $2^{\chi(t'_1)} \times ord(\chi(t'_1))$.
We use a mapping~$m: 2^{\chi(t_1)} \rightarrow [2^{\chi(t'_1)} \times ord(\chi(t'_1))]$,
which is a bijection between an assignment~$I$ in~$\chi(t_1)$ and an assignment~$I'$ over~$\chi(t'_1)$ together with
an ordering among those atoms in~$I'$.
The mapping~$m$ is naturally defined by assigning the elements in~$\prec_{t_1}$ to~$\prec_{t'_1}$ according to
the ordinal of the element.
Then, we add for each element~$I\in2^{\chi(t_1)}$ the rules~$0 { a_I } 1.$, but ensure an ``at most one'' behavior.
This can be done later on top of the TD. Further, we add~$v_j \leftarrow v_i, a_I.$
For each element~$I\in2^{\chi(t_1)}$ we add a parent~$t'_{1.i}$ to~$t'_1$.

\subsection{Reduction of normal ASP to PASP using trick of exponential compression as used in the hierarchy}

TBD: Just reduce normal ASP to InvPASP problem, using the same ideas as in the hierarchy, exponential compression and such, then easily reduce InvPASP to PASP.
We have now a lower bound for PASP using normal ASP.
}

\section{Decreasing Treewidth of \SAT  via \ASP}\label{sec:main}
In this section
we show how to translate a Boolean formula into a logic program,
thereby decreasing the treewidth and explicitly utilizing the structural power of \ASP.
Thereby, 
in contrast to the 
 standard translation as sketched in Example~\ref{ex:encoding}, we explicitly utilize cycles and the power of reachability the \ASP formalism provides. 


\futuresketch{
We proceed in two steps. First, we translate the Boolean formula into a formula,
where it is guaranteed that there exists a nice tree decomposition of the formula's primal graph
such that every variable of the formula appears only in constantly many bags.

\subsection{Elimination of Decomposition-Global Variables}
In the first step we show how one can eliminate decomposition-global variables
of a Boolean formula~$F$, i.e., we reduce to a new formula~$F'$,
where variables are only in at most three decomposition bags.

Let therefore
$\mathcal{T}=(T,\chi)$ be a nice tree decomposition of~$\mathcal{G}_{F}$. 
The idea is as follows: Every variable~$v$ in a bag~$\chi(t)$ for a node~$t$ of~$T$
is bound to~$t$, i.e., we require a new variable~$v_t$.
So the new formula~$F'$ has as variables the fresh set~$\{v_t \mid t\text{ in }T, v\in \chi(t)\}$
of variables. Then, for every node~$t$ of~$T$ with~$t'\in \children(t)$ and~$x\in \chi(t)\cap\chi(t')$,
we ensure equivalence: \inlineequation[botup:prop]{x_t \longleftrightarrow x_{t'}}. 
Finally, for every node~$t$ of~$T$ and clause~$(l_1 \vee \ldots \vee l_o)\in \{c \mid c\in F, \var(c)\subseteq \chi(t) \}$, we take care that the clause is satisfied in~$t$: \inlineequation[botup:clause]{l_1^t \vee \ldots \vee l_o^t}, 
where function~$\cdot^t$ takes a literal over some variable~$x$ and replaces the occurrence of~$x$ by~$x_t$.

Observe that indeed the resulting formula~$F'$ is satisfiable if and only if~$F$ is satisfiable and there is even a bijective correspondence between models of~$F$ and models of~$F'$. Moreover, we can easily construct a tree decomposition~$\mathcal{T}'\eqdef (T,\chi')$ of~$\mathcal{G}_{F'}$ such that every variable of~$F'$ occurs in at most three bags of~$\mathcal{T}'$. Thereby, the tree~$T$ of~$\mathcal{T}'$  is used as before, we only need to define~$\chi'$, which we give for every node~$t$ of~$T$ as follows: $\chi'(t)\eqdef \{x_t \mid x\in\chi(t)\} \cup \{x_{t^*} \mid \chi(t)\neq\emptyset,x\in \chi(t^*), t^* \text{ is the parent of }t\text{ in }T\}$.
The construction of~$\mathcal{T}'$ ensures that~$\mathcal{T}'$ is a TD of~$\mathcal{G}_{F'}$. Further, the width of~$\mathcal{T}'$ is at most doubled and every variable of~$F'$ appears in at most three different bags of~$\mathcal{T}'$. 

\smallskip
In the following, we \emph{only} assume such a formula~$F'$ and a nice TD~$\mathcal{T}''$ of~$\mathcal{G}_{F'}$, which is obtained from~$\mathcal{T}'$ above, by adding an auxiliary node~$t'$ for every node~$t$ that has two child nodes and making~$t$ the parent of~$t'$, i.e., $t'$ serves as a fresh join node that is placed between~$t$ and their child nodes.  
As a result, every variable of~$F'$ then appears in at most four bags; we refer to~$F'$ as \emph{local formula}
and~$\mathcal{T}''$ as \emph{join-local tree decomposition (of~$\mathcal{G}_F$)}.
This first step is a preprocessing step, providing a normal form of formulas and TDs, making it easier to present and grasp the second part of our~reduction. 
}


\smallskip
The concrete decrease of treewidth of our reduction of the next subsection will be tightly linked to the following observation,
which expresses that the factorial~$k!$ of a number~$k\in \Nat$ is bounded from below by~$k^{{\Omega}(k)}$.

\begin{observation}\label{obs:klogk}
Let~$k\in \Nat$. Then, $k!$ is in~$2^{\Omega(k\cdot\log(k))}$.
\end{observation}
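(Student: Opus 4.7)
The plan is to invoke the standard lower bound on factorials, for which the cleanest elementary route avoids Stirling and relies only on a trivial product bound. I would proceed as follows.

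First, I would restrict attention to $k \geq 2$ (the case $k \leq 1$ is vacuous because $\Omega(k\log k)$ absorbs any finite number of small cases into its hidden constant). Then I would split the factorial at the midpoint: writing
\[
k! \;=\; \prod_{i=1}^{k} i \;\geq\; \prod_{i=\lceil k/2\rceil}^{k} i \;\geq\; \bigl(k/2\bigr)^{\lceil k/2\rceil},
\]
since every factor in the second product is at least $k/2$ and there are at least $k/2$ such factors. Taking the binary logarithm on both sides gives
\[
\log_2(k!) \;\geq\; \frac{k}{2}\bigl(\log_2 k - 1\bigr),
\]
which is $\Omega(k \log k)$ as $k \to \infty$. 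Exponentiating back yields $k! \in 2^{\Omega(k \cdot \log k)}$, as claimed.

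There is no real obstacle here: the argument is a one-line product estimate, and the only thing to be slightly careful about is making explicit that the constant hidden in the $\Omega$ absorbs the $-1$ term and the small-$k$ boundary cases. If one prefers, Stirling's formula $k! \geq (k/e)^k$ gives $\log_2(k!) \geq k \log_2 k - k \log_2 e$ directly, which is also $\Omega(k \log k)$; I would mention this as an alternative but use the elementary split-at-midpoint bound in the proof proper, since it needs no analytic machinery.
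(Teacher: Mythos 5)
Your proof is correct, but it takes a different route from the paper. The paper's argument goes through Stirling's formula (citing the form $\frac{k!}{k^k}=e^{-\mathcal{O}(k)}$ as used by Lokshtanov, Marx, and Saurabh) and then rewrites $k!$ as $2^{k\cdot\log(k)-\log(e)\cdot\mathcal{O}(k)}$, absorbing the linear-in-$k$ correction into the $\Omega(k\cdot\log(k))$ term. You instead use the elementary split-at-midpoint estimate $k!\geq \prod_{i=\lceil k/2\rceil}^{k} i \geq (k/2)^{\lceil k/2\rceil}$, which yields $\log_2(k!)\geq \frac{k}{2}(\log_2 k-1)=\Omega(k\log k)$ with no analytic machinery at all; you correctly note that the $-1$ term and small $k$ are swallowed by the hidden constant. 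The trade-off is minor: your bound is self-contained and avoids citing Stirling, while the paper's version keeps the constant in the exponent closer to $1$ (i.e., $k\log k - \mathcal{O}(k)$ rather than roughly $\frac{1}{2}k\log k$), which is immaterial for the asymptotic statement being proved and for its later uses (Theorems 1 and the $\iota$-tight generalization in Observation 2), since only the order of magnitude $2^{\Omega(k\cdot\log(k))}$ is ever needed. Either argument suffices; yours is arguably the more elementary and equally rigorous.
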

\begin{proof}
We have~$\frac{k!}{k^k}=e^{-\mathcal{O}(k)}$ by using Stirling's formula, see, e.g.,~\cite{LokshtanovMarxSaurabh11}.
As a result, we derive that~$k!$ corresponds to~$\frac{2^{k\cdot\log(k)}}{e^{\mathcal{O}(k)}}$ = $\frac{2^{k\cdot\log(k)}}{2^{\log(e)\cdot\mathcal{O}(k)}}$= ${2^{k\cdot\log(k)-\log(e)\cdot\mathcal{O}(k)}}$= ${2^{\Omega(k\cdot\log(k))}}$.
\end{proof}

This observation immediately implies that~$k!$ is of the same order of magnitude as~$k^{{\Theta}(k)}$,
as obviously~$k!$ is in~$k^{\mathcal{O}(k)}$.

\subsection{Decreasing Treewidth by the Power of Reachability}

The idea of our main reduction~$\mathcal{R}$ is as follows.
We take an instance~$F$ of \SAT, i.e., a Boolean formula and a nice tree decomposition~$\mathcal{T}=(T,\chi)$ of~$\mathcal{G}_F$ of width~$k$.
Then, we simulate for each node of~$T$, the up to~$2^k$ many assignments via~$k'!$ many orderings, where~$k'$ shall be sufficiently smaller than~$k$.
More precisely, we decrease the treewidth from~$k$ to~$k'$ such that~$k'! \geq 2^k$.
Then, since $k'!$ is in~$2^{\Omega(k'\cdot\log(k'))}$ (see Observation~\ref{obs:klogk}), we have that~$2^{\Omega(k'\cdot\log(k'))}$ is at least~$2^k$ and therefore~$k'=\mathcal{O}(\frac{k}{\log(k')})=\mathcal{O}(\frac{k}{\log(k)})$. 
As a result, our approach allows us to 
slightly reduce treewidth, thereby efficiently utilizing
the power of \ASP and positive cycles in order to solve \SAT with less structural overhead.
While this seems surprising, it is in line with the known hardness result of \ASP, cf., Proposition~\ref{thm:almosttightness} for~$\iota=k$.

Formally, we determine~$k'$ by taking the smallest integer~$k'$ such that~$k'! \geq 2^k$.
We define such a value for every node~$t$ of~$T$, where~$k'_t$ is the \emph{smallest integer} such that~$k'_t! \geq 2^{\Card{\chi(t)}}$.
Then, we define a set~$V_t$ of \emph{ordering vertices} consisting of~$k'_t$ many fresh vertices that are uniquely
determined by the bag~$\chi(t)$, i.e., for any TD nodes~$t,t'$ we have~$\chi(t)=\chi(t')$ if and only if~$V_t=V_{t'}$. 
Using this set~$V_t$, we refer to  the resulting \emph{set of at least~$2^{\Card{\chi(t)}}$ 
many orderings} among elements in~$V_t$ by~$\ord(t)$. 
Further, for every node~$t$ of~$T$, we refer to the bijective \emph{mapping from a subset~$X\subseteq\ord(t)$ of orderings to assignments} by~$\mathcal{I}_t: X \rightarrow 2^{\chi(t)}$.
More precisely, for every node~$t$ and ordering~$\varphi\in\ord(t)$, the corresponding \emph{unique assignment of~$\varphi$ over atoms in~$\chi(t)$} is given by~$\mathcal{I}_t(\varphi)$ (if exists).
Note that~$\mathcal{I}_t$ is any arbitrary, but fixed bijection, i.e., it might be undefined for some unused orderings in~$\ord(t)$. 
%
%
%
%
%
%
\begin{example}\label{ex:v}
Recall formula~$F$ and TD~$\mathcal{T}=(T,\chi)$ of Example~\ref{ex:run}.
By definition, we require that~$\Card{V_{t_1}}!\geq 2^2$, 
~$\Card{V_{t_2}}!\geq 2^3$, and~$\Card{V_{t_3}}!\geq 2$.
As a result, we need to choose~$\Card{V_{t_1}}=3$, $\Card{V_{t_2}}=4$, and $\Card{V_{t_3}}=2$.
Consequently, there are orderings~$\alpha\in\ord(t_1)$, $\beta\in\ord(t_2)$, where~$\mathcal{I}_{t_1}(\alpha)$ and $\mathcal{I}_{t_2}(\beta)$ is not defined.
By convention, we refer to the elements in~$V_{t_1}$ by~$v_1^j$, to those in~$V_{t_2}$ by~$v_2^j$ and to those in~$V_{t_3}$ by~$v_3^j$.
\end{example}

\paragraph{Ordering-Augmented Tree Decompositions.}
Let~$\mathcal{T}=(T,\chi)$ be a nice tree decomposition of~$\mathcal{G}_\Pi$.
In order to decouple the~$k'!$ many assignments, we need to get access to any of the simulated orderings one-by-one.
To this end, we define an \emph{ordering-augmented tree decomposition}. 

\begin{definition}
\label{def:ord}
Let~$F$ be a Boolean formula and $\mathcal{T}=(T,\chi)$ be a nice TD of~$\mathcal{G}_F$. 
Then, we construct an ordering-augmented TD~$\mathcal{T}'=(T',\chi', \varphi,\psi)$ of~$\mathcal{G}_F$ from~$\mathcal{T}$ as follows, where~$(T',\chi')$ is a TD and $\varphi,\psi$ are mappings from nodes to orderings. 
%
For every~$t$ of~$T$, let $\chi'(t){\eqdef} \chi(t)$, $\varphi_t{\eqdef} \varnothing$, $\psi_t {\eqdef} \varnothing$.
For every two neighboring nodes~$t$, $t'$ of~$T$ with~$t'\in\children(t)$ and $\ord(t)\times\ord(t')=\{(\alpha_1, \beta_1), \ldots, (\alpha_\ell, \beta_\ell)\}$, 
%
we add a sequence of fresh nodes~$t_1, \ldots, t_\ell$ between~$t$ and~$t'$, such that for every~$1\leq i\leq \ell$ we define~$\chi'(t_i){\eqdef}\chi(t)$, $\varphi_{t_i}{\eqdef}\alpha_i$, and~$\psi_{t_i}{\eqdef}\beta_i$. 
For every leaf node~$t$ of~$T$ with~$\ord(t)=\{\alpha_1, \ldots, \alpha_{\ell}\}$, 
in~$T'$ we chain copy nodes~$t_1,\ldots,t_\ell$ below~$t$, where for every~$1\leq i\leq \ell$, $\chi'(t_i){\eqdef} \chi(t)$, $\varphi_{t_i}{\eqdef}\alpha_i$, $\psi_{t_i}{\eqdef}\varnothing$.
%
%
\end{definition}
%
%

Observe that this definition provides the basis to analyze assignments in the form of different orderings, individually and one-by-one. 
Further, by comparing every pair of orderings of neighboring TD nodes, our reduction will later synchronize neighboring orderings and ensure compatibility. 

\begin{example}\label{ex:ord}
Recall formula~$F$ and TD~$\mathcal{T}=(T,\chi)$ from the previous example.
Then, TD~$\mathcal{T}$ can be turned into an ordering-augmented TD~$\mathcal{T}'=(T',\chi',\varphi,\psi)$ according to Definition~\ref{def:ord}.
Thereby we add a sequence (path) of child nodes to~$t_1$;
each of these nodes~$t_i$ handles one ordering~$\varphi_{t_i}$ over
$V_{t_1}$.
Similarly, this is carried out for node~$t_2$.
Between~$t_3$ and~$t_1$ we also add a path of fresh nodes,
where each node covers a combination (pair) of orderings~$\varphi_{t_3}, \psi_{t_1}$,
which will be essentially used for synchronization.
Analogously, this is done between~$t_3$ and~$t_2$.
\end{example}

\paragraph{Involved Atoms.}
In order to guess among those~$\Card{\ord(t)}$ many orderings per node~$t$ of~$T$, 
we characterize each ordering~$\alpha\in \ord(t)$ by means of \emph{atoms modeling ordering edges} of the form~$e_{x,y}$  (and its negation~$\hat e_{x,y}$) to indicate whether for every two different vertices~$x,y\in V_t$, we have~$x \prec_\alpha y$. 
Further, we require \emph{atoms of the form~$r_x$} for every~$x\in V_t$, which stores whether~$x$ is reached or not. For~$V_t$ itself (which might span over several TD nodes) and~$y\in V_t$ we require 
an additional (source) reachability atom of the form~$r_{s_{V_t}}$ and edge atom~$e_{s_{V_t},y}$.
Then, we  also use (destination) reachability atom~$r_{d_{V_t}}$ and edge atoms~$e_{y,d_{V_t}}$ for every~$y\in V_{t}$.

We will check, whether for every~$x\in V_t$ of every node~$t$ of~$T$, there is at \emph{most one outgoing edge}, i.e., an atom~$e_{x,y}$ contained in an answer set.
To ensure this, we guide information of at most one outgoing edge along the tree decomposition, whereby for every node~$t$ of~$T$ and~$x\in V_t$, we use an \emph{auxiliary atom~$o_t^x$}.
This is also required for the source~$s_{V_t}$, i.e., we also use \emph{auxiliary atom}~$o_t^{s_{V_t}}$ for every node~$t$.
Observe that this includes atoms~$o_{t'}^{s_{V_{t}}}$ for nodes~$t'$, if~$V_t=V_{t'}$.

In order to compare orderings, for every tree decomposition node~$t$ of~$T$
and element~$x$ contained in ordering~$\alpha$ with~$\alpha\in \{\varphi_t, \psi_t\}$,
we use an additional \emph{ordering query atom~$q_t^x$}. If this query atom holds,
the ordering~$\alpha$ is at least fulfilled starting from the first atom contained in~$\alpha$ up to the atom~$x$.
Intuitively, if~$q_t^y$ holds for the last atom~$y$ contained in~$\alpha$, we will ultimately be able to determine whether~$\alpha$ holds.
These query atoms are supported by means of additional \emph{testing points~$p_t^x$} as well as \emph{initial testing points~$p_\epsilon^x$},
which we use for every atom~$x$ contained in every~$\alpha\in\{\varphi_t,\psi_t\}$ for every node~$t$ of~$T$. 
The testing points allow us to check orderings with queries in every node.
For every node~$t$ of~$T$ and atom~$x\in V_t$, we refer by
$\pred(x,t)$ to the \emph{testing points of~$x$ preceding~$t$}. Formally, $\pred(x,t)\eqdef \children(t) \text{ if } x\in V_{t'} \text{ for some }t'\in \children(t)\text{, otherwise }\{\epsilon\}$.
%
%

\paragraph{The Reduction.}
The reduction~$\mathcal{R}$ takes a Boolean formula~$F$ and an ordering-augmented TD~$\mathcal{T}=(T,\chi,\varphi,\psi)$.
Before we commence with the description of~$\mathcal{R}$, we require the following definition.
For every node~$t$ of~$T$, we let the \emph{set~$E_t$ of ordering edges} be any arbitrary fixed subset of
direct successors of~$\varphi_t$.  More precisely, for the first element~$a\in\dom(\varphi_t)$ and the last element~$b\in\dom(\varphi_t)$, i.e., $a$ has no~$\prec_{\varphi_t}$ predecessor and $b$ has no~$\prec_{\varphi_t}$ successor, 
$E_t$ is the largest subset with~$E_t\subseteq \{e_{x,y} \mid x \prec_{\varphi_t} y\} \cup \{e_{s_{V_t},a}, e_{b,d_{V_t}}\}$ 
such that for any node~$t'$ of~$T$ with~$t'\neq t$ we have~$E_t\cap E_{t'}{=}\emptyset$.
%
%
%
%
%

\begin{example}\label{ex:edges}
Consider again formula~$F$ and ordering-augmented TD~$\mathcal{T}'=(T',\chi',\varphi,\psi)$ from above.
Observe that the definition of~$E_t$ is rather open,
but it essentially requires that every ordering edge is encountered in exactly
one node of~$T'$.
For example, the ordering~$\varphi_{t^*}=\{v_1^1 \mapsto 0, v_1^2 \mapsto 1, v_1^2 \mapsto 2\}$
is handled in a node~$t^*$ below~$t_1$.
We could set $E_{t^*}=\{e_{s_{V_{t_1}},v_1^1}, e_{v_1^1,v_1^2}, e_{v_1^2,v_1^3}, e_{v_1^3,d_{V_{t_1}}}\}$.
Assume a different ordering~$\varphi_{t'}=\{v_1^1 \mapsto 0, v_1^3 \mapsto 1, v_1^2 \mapsto 2\}$ over~$V_{t_1}$
in the child node~$t'$ of~$t^*$.
Then, considering~$E_{t^*}$, i.e., avoiding overlapping edges, we could set $E_{t'}=\{e_{v_1^1,v_1^3}, e_{v_1^3,v_1^2}, e_{v_1^2,d_{V_{t_1}}}\}$.
Similarly every remaining edge is covered uniquely in a node.
\end{example}

%
%

Overall, the reduction~$\mathcal{R}$ as given in Figure~\ref{fig:red} consists of five blocks, where the first block of Formulas~(\ref{red:start})--(\ref{red:linkroot}) concerns the \emph{choice of orderings} (via edge atoms~$e_{y,x}$) and enforces that every reachability atom~$r_x$ holds. Then, the second block of Formulas~(\ref{red:fintrans})--(\ref{red:fin}) takes care that for every node~$y\in V_t$ of every node~$t$ of~$T$, there is \emph{at most one outgoing edge} of the form~$e_{y,x}$ in an answer set.
The third block of Formulas~(\ref{red:copy})--(\ref{red:linkbypass2}) ensures that \emph{testing points} are maintained. 
This then allows us to properly \emph{define ordering queries} in the forth block of Formulas~(\ref{red:linknopred})--(\ref{red:linkpq}). 
Finally, the last block of Formulas~(\ref{red:probecons})--(\ref{red:checkunused}) takes care that chosen \emph{orderings are compatible} and that every clause in~$F$ is satisfied.

\begin{figure*}[t]
{
\begin{flalign}
	&\textbf{Block 1: Orderings \& Reachability}\hspace{-10em}\notag\\[-.4em]
	\label{red:start}&r_{s_{V_t}}\leftarrow & \text{for every }t\text{ in }T\\
	\label{red:reached}&\leftarrow \neg r_y & \text{ for every }t\text{ in }T, y\in V_t \cup \{d_{V_t}\}\\
	\label{red:choose}&e_{y,x} \vee \hat e_{y,x} \leftarrow r_y & \text{ for every }t\text{ in }T, 
e_{y,x}\in E_t\\
	\label{red:reach}&{p_{\epsilon}^x} \leftarrow e_{y,x}  & \text{ for every }t\text{ in }T, e_{y,x}\in E_t, \{x,y\}\subseteq V_t\\
	\label{red:reachfin}&{r_{d_{V_t}}} \leftarrow e_{y,d_{V_t}}  & \text{ for every }t\text{ in }T, e_{y,d_{V_t}}\in E_t\\
%
%
	\label{red:linkplast}&r_x \leftarrow p_{t'}^x &\text{for every }t\text{ in }T, \varphi_t=\varnothing, t'\in\children(t), x\in \dom(\psi_{t'})\setminus \dom(\varphi_{t'})\\ 
	\label{red:linkroot}&r_x \leftarrow p_{\rootOf(T)}^x &\text{for every }x\in V_{\rootOf(T)}\\[.05em] 
%
%
	&\textbf{Block 2: ${\leq}1$ Outgoing Edge}\hspace{-10em}\notag\\[-.4em]
	\label{red:fintrans}&o_t^y\leftarrow o_{t'}^y &\text{ for every }t\text{ in }T, t'\in\children(t), y\in (V_t\cap V_{t'}) \cup \{s_{V_t}\mid V_t=V_{t'}\}\\
	\label{red:finnew}&o_t^y\leftarrow e_{y,x} &\text{ for every }t\text{ in }T, e_{y,x}\in E_t, y\in V_t\\
	\label{red:fin}&\leftarrow o_{t'}^y, e_{y,x} &\text{ for every }t\text{ in }T, e_{y,x}\in E_t, t'\in\children(t),y\in V_{t'}\cup \{s_{V_t}\mid V_t=V_{t'}\}\\[.05em]
	&\textbf{Block 3: Testing Points}\hspace{-10em}\notag\\[-.35em]
	\label{red:copy}&p_{t}^x \leftarrow p_{t_1}^x, \ldots, p_{t_o}^x & \text{for every }t\text{ in }T, x\in V_t, \varphi_t=\varnothing, 
	\pred(x,t)=\{t_1,\ldots,t_o\},\\
\label{red:linkbypass}&p_{t}^x \leftarrow p_{t'}^x, \neg q_{t}^y & \text{for every }t\text{ in }T, \alpha\in\{\varphi_t,\psi_t\}, \pred(x,t)=\{t'\}, \{x,y\}\subseteq\dom(\alpha), y\prec_{\alpha} x\\  
	\label{red:linkbypass2}&p_{t}^x \leftarrow p_{t'}^x, \neg q_{t}^x & \text{for every }t\text{ in }T, \alpha\in\{\varphi_t,\psi_t\}, \pred(x,t)=\{t'\}, x\text{ has no}\prec_{\alpha}\text{ successor }\\[.05em]  
	&\textbf{Block 4: Ordering Queries}\hspace{-10em}\notag\\[-.4em]
	\label{red:linknopred}&q_{t}^x \leftarrow p_{t'}^x & \text{ for every }t\text{ in }T, \pred(x,t)=\{t'\},\alpha\in\{\varphi_t,\psi_t\}, x\in\dom(\alpha), x \text{ has no }\prec_{\alpha}\text{ predecessor} \\
	\label{red:linkprevious}&q_{t}^x \leftarrow p_{t'}^x, q_{t}^y & \text{for every }t\text{ in }T, \alpha\in\{\varphi_t,\psi_t\}, \pred(x,t)=\{t'\}, \{x,y\}\subseteq\dom(\alpha), y\prec_{\alpha} x \\
	\label{red:linkpq}&p_{t}^x \leftarrow q_{t}^x & \text{for every }t\text{ in }T, \alpha\in\{\varphi_t,\psi_t\}, x\in\dom(\alpha)\\[.05em]
	&\textbf{Block 5: Compatibility \& \SAT}\hspace{-10em}\notag\\[-.4em]
	\label{red:probecons}&\leftarrow q_{t}^x, q_{t}^y &\text{for every }t\text{ in }T, \alpha\in\{\varphi_t,\psi_t\}, x\in\dom(\alpha) \text{ has no}\prec_{\alpha}\text{successor}, \\
&&\notag y\in\dom(\psi_{t}) \text{ has no}\prec_{\psi_{t}}\text{successor}, \mathcal{I}_t(\varphi_t)\text{ and }\mathcal{I}_t(\psi_{t})\text{ incompatible}\\ 
	\label{red:check}&\leftarrow q_{t}^x &\text{for every }t\text{ in }T, x\in\dom(\varphi_t) \text{ has no }\prec_{\varphi_t}\text{successor}, \mathcal{I}_t(\varphi_t)\not\models F_t\\ 
	%
	\label{red:checkunused}&\leftarrow q_{t}^x &\text{for every }t\text{ in }T, x\in\dom(\varphi_t) \text{ has no}\prec_{\varphi_t}\text{successor}, \mathcal{I}_t(\varphi_t)\text{ not defined}
\end{flalign}
\vspace{-1.75em}
}\caption{The reduction~$\mathcal{R}$ that takes a formula~$F$ and a corresponding ordering-augmented TD~$\mathcal{T}=(T,\chi,\varphi,\psi)$ of~$\mathcal{G}_F$.}\vspace{-.35em}\label{fig:red}
\end{figure*}

\paragraph{Block 1: Choice of Orderings, Formulas~(\ref{red:start})--(\ref{red:linkroot}).} 
The first block concerns choosing orderings. Note that the disjunction of Formulas~(\ref{red:choose})
is head-cycle-free and can be simply converted to normal rules by shifting~\cite{Ben-EliyahuDechter94}.
Then, Formulas~(\ref{red:start}) set reachability of source vertices and Formulas~(\ref{red:reached}) ensure reachability of all the vertices in~$V_t$ as well as the destination vertex for~$V_t$, for every node~$t$ of~$T$. Formulas~(\ref{red:choose}) require to choose outgoing edges (at least one by Formulas~(\ref{red:reached})) from every reachable vertex~$y$ to some vertex~$x$. This then yields initial testing points for~$x$ by Formulas~(\ref{red:reach}).
For the destination vertices of sets~$V_t$, such testing points are not needed, so we immediately obtain reachability by Formulas~(\ref{red:reachfin}).
The connection and propagation between testing points will be achieved by Block 3.
In the end, the last testing point for a vertex~$x$ yields reachability of~$x$. This is ensured by Formulas~(\ref{red:linkplast}), whenever~$x$ does not appear in an ordering for a successor node of~$t'$, or by Formulas~(\ref{red:linkroot}), if~$x$ appears in an ordering of the root node.

\smallskip
\noindent\textbf{Block 2: $\leq 1$ Outgoing Edge, Formulas~(\ref{red:fintrans})--(\ref{red:fin}).}
This block ensures at most one outgoing edge per vertex~$y$,
where~$y$ can be also the source vertex~$s_{V_t}$ for a set~$V_t$ of ordering vertices.
The information of whether~$y$ has decided an outgoing
edge up to a node is propagated from a node~$t'$ to its parent node~$t$ by Formulas~(\ref{red:fintrans}).
Then, whenever in a node~$t$ an outgoing edge for~$y$ is chosen,
$o_t^y$ has to hold by Formulas~(\ref{red:finnew}).
Finally, Formulas~(\ref{red:fin}) prevent choosing outgoing edges for an atom~$y$ in a node~$t$, if already chosen in a child node~$t'$. 

\smallskip
\noindent\textbf{Block 3: Propagate Testing Points, Formulas~(\ref{red:copy})--(\ref{red:linkbypass2}).}
The third block concerns about propagation of testing points,
if certain queries do not hold.
For the case of the empty ordering, i.e., $\varphi_t=\varnothing$, 
in a node~$t$, Formulas~(\ref{red:copy}) directly propagate testing points for every atom~$x\in V_t$ from the evidence of testing points for~$x$ in every child node of~$t$ (or from~$p_\epsilon^x$ if~$\prev(x,t)=\{\epsilon\}$).
Further, whenever a certain ordering relation~$y\prec_\alpha x$
for either~$\alpha=\varphi_t$ or predecessor~$\alpha=\psi_t$
does not hold, we still need to derive the corresponding testing point,
see Formulas~(\ref{red:linkbypass}), as this testing point is required for further queries or for deriving reachability in the end, cf., Formulas~(\ref{red:linkplast}), (\ref{red:linkroot}). This also holds for the very last element of~$\alpha$, see Formulas~(\ref{red:linkbypass2}).
The reason why we need to cover both orderings~$\varphi_t$ as
well as~$\psi_t$, is that neighboring orderings require compatibility, which will be discussed below Block~5.

\smallskip
\noindent\textbf{Block 4: Define Ordering Queries, Formulas~(\ref{red:linknopred})--(\ref{red:linkpq}).}
This block focuses on deriving query atoms, which
ensure that certain orderings hold.
The first element of any ordering~$\alpha\in\{\varphi_t,\psi_t\}$ 
is derived from the previous testing point, as given by Formulas~(\ref{red:linknopred}).
Then, whenever~$y\prec_\alpha x$ is met, Formulas~(\ref{red:linkprevious})  enable to derive
query atom~$q_t^x$, which depends on the previous testing point for~$x$ as well as on~$q_t^y$. This thereby ensures that the order~$y\prec_\alpha x$ is indeed preserved, which is in contrast to Formulas~(\ref{red:linkbypass}) and~(\ref{red:linkbypass2}) above.
Finally, Formulas~(\ref{red:linkpq}) immediately yield the corresponding testing point~$p_t^x$ in case query atom~$q_t^x$ holds.

\smallskip
\noindent\textbf{Block 5: Compatibility of Orderings \& Satisfiability, Formulas~(\ref{red:probecons})--(\ref{red:checkunused}).}
The last block takes care of compatibility and
satisfiability of every clause of the formula~$F$
by excluding orderings, whose corresponding 
assignments do not satisfy some clause.
To this end, Formulas~(\ref{red:probecons}) excludes those cases of incompatible~$\varphi_t$ and~$\psi_t$, i.e., to prevent inconsistencies, it is prohibited that query atoms~$q_t^x, q_t^y$ for the last element~$x$ of~$\varphi_t$ and the last element~$y$ of~$\psi_t$ hold.
Most importantly, Formulas~(\ref{red:check}) ensure that the corresponding assignment of ordering~$\varphi_t$ satisfy clauses in~$F_t$, in case the query atom~$q_t^x$ for the last element~$x$ of~$\varphi_t$ holds.
Finally, Formulas~(\ref{red:checkunused}) avoids corner cases, where unused orderings could be taken, which would 
enable bypassing satisfiability. 

\begin{example}
Recall formula~$F$, ordering-augmented TD $\mathcal{T}'{=}(T',\chi',\varphi,\psi)$, 
as well as~$\varphi_{t^*}$ and $E_{t^*}$ 
from Example~\ref{ex:edges}.
We briefly sketch the rules generated for node~$t^*$.
\smallskip
\noindent\begin{tabular}{@{\hspace{0.15em}}l@{\hspace{0.15em}}@{\hspace{0.15em}}l@{\hspace{0.0em}}}
\toprule
(\ref{red:start})& $r_{s_{V_{t^*}}}\leftarrow$\\
(\ref{red:reached})& $\leftarrow r_{v_1^1}$;\quad $\leftarrow r_{v_1^2}$;\quad $\leftarrow r_{v_1^3}$;\quad $\leftarrow \neg r_{V_{d_{V_{t^*}}}}$\\
%
(\ref{red:choose})& $e_{s_{V_{t^*}}, v_1^1} \vee \hat e_{s_{V_{t^*}}, v_1^1} \leftarrow r_{s_{V_{t^*}}}$;\quad 
$e_{v_1^1, v_1^2} \vee \hat e_{v_1^1, v_1^2} \leftarrow r_{v_1^1}$;\\
& $e_{v_1^2, v_1^3} \vee \hat e_{v_1^2, v_1^3} \leftarrow r_{v_1^2}$;\quad $e_{v_1^3, d_{V_{t^*}}} \vee \hat e_{v_1^3, d_{V_{t^*}}} \leftarrow r_{v_1^3} $\\
(\ref{red:reach}) & $p_\epsilon^{v_1^1} \leftarrow e_{s_{V_{t^*}},v_1^1}$;\quad $p_\epsilon^{v_1^2} \leftarrow e_{v_1^1,v_1^2}$;\quad  $p_\epsilon^{v_1^3} \leftarrow e_{v_1^2,v_1^3}$ \\%
(\ref{red:reachfin}) & $r_{v_1^3} \leftarrow e_{v_1^3,d_{V_{t^*}}}$\\
\midrule
(\ref{red:fintrans}) & $o_{t^*}^{v_1^1} \leftarrow o_{t'}^{v_1^1}$;\; $o_{t^*}^{v_1^2} \leftarrow o_{t'}^{v_1^2}$;\; $o_{t^*}^{v_1^3} \leftarrow o_{t'}^{v_1^3}$;\; $o_{t^*}^{d_{V_{t^*}}} \leftarrow o_{t'}^{d_{V_{t^*}}}$ \\
(\ref{red:finnew}) & $o_{t^*}^{s_{V_{t^*}}} \leftarrow e_{s_{V_{t^*}}, v_1^1}$;\quad $o_{t^*}^{v_1^1} \leftarrow e_{v_1^1, v_1^2}$;\quad $o_{t^*}^{v_1^2} \leftarrow e_{v_1^2, v_1^3}$;\\
%
%
%
&$o_{t^*}^{v_1^3} \leftarrow e_{v_1^3, d_{V_{t^*}}}$\\
%
%
%
(\ref{red:fin}) & $\leftarrow o_{t'}^{s_{V_{t^*}}}, e_{s_{V_{t^*}},v_1^1}$;\quad $\leftarrow o_{t'}^{v_1^1}, e_{v_1^1, v_1^2}$;\quad $\leftarrow o_{t'}^{v_1^2}, e_{v_1^2, v_1^3}$;\\%
& $\leftarrow o_{t'}^{v_1^3},e_{v_1^3, d_{V_{t^*}}}$\\
\midrule
(\ref{red:copy}) & $p_{t^*}^{v_1^1} \leftarrow p_{t'}^{v_1^1}$;\quad $p_{t^*}^{v_1^2} \leftarrow p_{t'}^{v_1^2}$;\quad $p_{t^*}^{v_1^3} \leftarrow p_{t'}^{v_1^3}$ \\
(\ref{red:linkbypass}) & $p_{t^*}^{v_1^2} \leftarrow p_{t'}^{v_1^2}, \neg q_{t^*}^{v_1^1}$;\quad $p_{t^*}^{v_1^3} \leftarrow p_{t'}^{v_1^3}, \neg q_{t^*}^{v_1^2}$  \\
(\ref{red:linkbypass2}) & $p_{t^*}^{v_1^3} \leftarrow p_{t'}^{v_1^3}, \neg q_{t^*}^{v_1^3}$\\

\midrule

(\ref{red:linknopred}) & $q_{t^*}^{v_1^1} \leftarrow p_{t'}^{v_1^1} $ \\
(\ref{red:linkprevious}) & $q_{t^*}^{v_1^2} \leftarrow p_{t'}^{v_1^2}, q_{t^*}^{v_1^1} $;\quad  $q_{t^*}^{v_1^3} \leftarrow p_{t'}^{v_1^3}, q_{t^*}^{v_1^2} $ \\
(\ref{red:linkpq}) & $p_{t^*}^{v_1^1} \leftarrow q_{t^*}^{v_1^1}$;\quad $p_{t^*}^{v_1^2} \leftarrow q_{t^*}^{v_1^2}$;\quad $p_{t^*}^{v_1^3} \leftarrow q_{t^*}^{v_1^3}$\\
%
%
\bottomrule
\end{tabular}%

\noindent Note that if~$\mathcal{I}_{t^*}(\varphi_{t^*})\not\models F_{t^*}$ or~$\mathcal{I}_{t^*}(\varphi_{t^*})$ is undefined (unused ordering), Formulas~(\ref{red:check}) and~(\ref{red:checkunused}) 
generate~$\leftarrow q_{t^*}^{v_1^3}$.
\end{example}

%
%
\def\rddots#1{\cdot^{\cdot^{\cdot^{#1}}}}
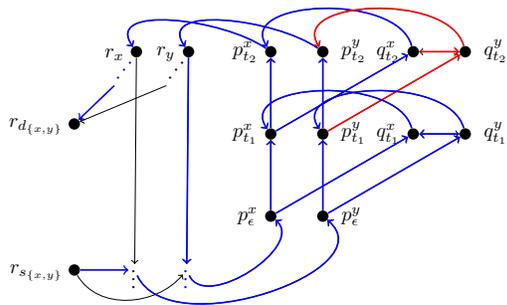
\begin{figure}[t]
\centering%
    \resizebox{.85\linewidth}{!}{%
	\begin{tikzpicture}[node distance=7mm,every node/.style={fill,circle,inner sep=2pt}]%
		\node (s1) [label={[text height=1.5ex,yshift=0.0cm,xshift=0.05cm]left:$q_{t_1}^x$}] {};
		\node (s2) [right=of s1,label={[text height=1.5ex,yshift=0.0cm,xshift=0.05cm]right:$q_{t_1}^y$}] {};
%
%
	%
		\node (ps1) [left=6.5em of s1,label={[text height=1.5ex,yshift=0.0cm,xshift=0.05cm]left:$p_{t_1}^x$}] {};
		\node (ps2) [right=of ps1,label={[text height=1.5ex,yshift=0.0cm,xshift=0.05cm]right:$p_{t_1}^y$}] {};
		%
		%
		\node (d1) [above=3.5em of s1,label={[text height=1.5ex,yshift=0.0cm,xshift=0.05cm]left:$q_{t_2}^x$}] {};
		\node (d2) [right=of d1,label={[text height=1.5ex,yshift=0.0cm,xshift=0.05cm]right:$q_{t_2}^y$}] {};
%
%
		\node (pd1) [left=6.5em of d1,label={[text height=1.5ex,yshift=0.0cm,xshift=0.05cm]left:$p_{t_2}^x$}] {};
		\node (pd2) [right=of pd1,label={[text height=1.5ex,yshift=0.0cm,xshift=0.05cm]right:$p_{t_2}^y$}] {};
%
		\node (e1) [below=3.5em of ps1,label={[text height=1.5ex,yshift=0.0cm,xshift=0.05cm]left:$p_\epsilon^x$}] {};
		\node (e2) [right=of e1,label={[text height=1.5ex,yshift=0.0cm,xshift=0.05cm]right:$p_\epsilon^y$}] {};

%
%
		\node (f2) [left=3.5em of pd1,label={[text height=1.5ex,yshift=0.0cm,xshift=0.05cm]left:$r_y$}] {};
		\node (f1) [left= of f2,label={[text height=1.5ex,yshift=0.0cm,xshift=0.05cm]left:$r_x$}] {};
		\node (fg1) [white,below left=.25em of f1,yshift=1.35em,xshift=-.5em,label={[text height=1.5ex,yshift=-0.2cm,xshift=0.9em,rotate=90]left:\textcolor{blue}{$\ddots{}$}}] {};
		\node (fg2) [white,below left=.25em of f2,yshift=1.35em,xshift=-.5em,label={[text height=1.5ex,yshift=-0.2cm,xshift=0.9em,rotate=90]left:{$\ddots{}$}}] {};
		\node (rd2) [below=3em of f1,xshift=-3.1em,label={[text height=1.5ex,yshift=0.0cm,xshift=0.05cm]left:$r_{d_{\{x,y\}}}$}] {};
		\node (g1) [white,below left=9em of ps1,label={[text height=1.5ex,yshift=-0.2cm,xshift=0.9em]left:$\vdots$}] {};
		\node (g2) [white,right= of g1,label={[text height=1.5ex,yshift=-0.2cm,xshift=0.9em]left:\textcolor{blue}{$\vdots$}}] {};
		\node (x2) [left=3em of g2,xshift=-2em,label={[text height=1.5ex,yshift=0.0cm,xshift=0.05cm]left:$r_{s_{\{x,y\}}}$}] {};
		
		\draw [thick,blue,->] (x2) to (g1);
		\draw [->,out=-60,in=230] (x2) to (g2);

		\draw [->] (e1) to (s1);
		\draw [->] (e2) to (s2);
		\draw [thick,blue,->] (e2) to (s2);
		
		\draw [->] (e1) to (ps1);
		\draw [->] (e2) to (ps2);

		\draw [thick,blue,->] (e1) to (ps1);
		\draw [thick,blue,->] (e2) to (ps2);

		\draw [out=145,in=125,->] (pd1) to (f1);
		\draw [out=145,in=125,->] (pd2) to (f2);
		\draw [thick,blue,out=145,in=125,->] (pd1) to (f1);
		\draw [thick,blue,out=145,in=125,->] (pd2) to (f2);
		\draw [->] (f1) to (g1);
		\draw [->] (f2) to (g2);
		\draw [thick,blue,->] (f2) to (g2);
		\draw [thick,blue,->] (e1) to (s1);
		\draw [->,out=-65,in=-40] (g2) to (e1);
		\draw [->,out=-65,in=-40] (g1) to (e2);
		\draw [thick,blue,->,out=-65,in=-40] (g1) to (e2);
		\draw [thick,blue,->,out=-65,in=-40] (g2) to (e1);
%
		\draw [thick,blue,->] ($(fg1.south) + (0,-.75)$) to (rd2);
		\draw [->] ($(fg2.south) + (0,-.75)$) to (rd2);
		\draw [<->] (d1) to (d2);
		\draw [red,thick,->] (d1) to (d2);
		%
		\draw [<->](s1) -- (s2);
		\draw [thick,blue,<-](s1) -- (s2);
%
%
		\draw [->] (ps1) to (pd1);
		\draw [thick,blue,->] (ps1) to (pd1);
	%
		\draw [->] (ps2) to (pd2);
		\draw [thick,blue,->] (ps2) to (pd2);
%

		\draw [->] (ps1) to (d1);
		\draw [->] (ps2) to (d2);
		\draw [thick,red,->] (ps2) to (d2);

		\draw [thick,blue,->] (ps1) to (d1);
		
		\draw [out=105,in=125,->] (d1) to (pd1);
		\draw [thick,blue,out=105,in=125,->] (d1) to (pd1);
		\draw [out=105,in=125,->] (d2) to (pd2);
		\draw [red,thick,out=105,in=125,->] (d2) to (pd2);
		%
		%
		\draw [out=105,in=125,->] (s1) to (ps1);
		\draw [thick,blue,out=105,in=125,->] (s1) to (ps1);
		\draw [out=105,in=125,->] (s2) to (ps2);
		\draw [thick,blue,out=105,in=125,->] (s2) to (ps2);
	\end{tikzpicture}}\vspace{-2.5em}
\caption{Positive dependency graph of the reduction for two TD nodes~$t_1$, $t_2$ over ordering vertices~$V_{t_1}{=}V_{t_2}{=}\{x,y\}$. These blocks potentially cause cycles (larger SCCs). The parts highlighted in blue are the atoms in their order of derivation, assuming that~$y$ precedes~$x$, i.e., $y \prec_{\varphi_{t_1}} x$. Edges highlighted in red cannot be taken, i.e., $q_2^y$ cannot be derived, since this implies~$x \prec_{\varphi_{t_2}} y$, contradicting $y \prec_{\varphi_{t_1}} x$; causing a cycle of unproven atoms involving~$r_y$.}\label{fig:sketch}
\end{figure}

\futuresketch{
\clearpage
\begin{example}
Recall program~$\Pi$ from Example~\ref{ex:running1}, and TD~$\mathcal{T}$ of $\mathcal{G}_\Pi$ given in Figure~\ref{fig:graph-td}.
We briefly show Formula~$F$ for node~$t_3$. 

\smallskip
\noindent\begin{tabular}{@{\hspace{0.15em}}l@{\hspace{0.15em}}|@{\hspace{0.15em}}l@{\hspace{0.0em}}}
Formulas & Formula $F$\\
\hline
(\ref{red:checkrules})& $\neg b \vee d \vee e$; $\neg e \vee d \vee b$; $d \vee b$\\
(\ref{red:prop})& $(d \prec_{t_1} e) \leftrightarrow (d \prec_{t_3} e)$; $(e \prec_{t_1} d) \leftrightarrow (e \prec_{t_3} d)$\\
%
(\ref{red:checkremove})& $c \rightarrow p^c_{<t_1}$; $a \rightarrow p^a_{<t_2}$\\
(\ref{red:checkremove2}) & $b \rightarrow p^b_{<t_3}$; $d \rightarrow p^d_{<t_3}$; $e \rightarrow p^e_{<t_3}$\\ 
(\ref{red:check}) & $p^b_{<t_3} \leftrightarrow (p^b_{t_3} \vee p^b_{<t_2})$;\\
 		  & $p^d_{<t_3} \leftrightarrow (p^d_{t_3} \vee p^d_{<t_1})$; $p^e_{<t_3} \leftrightarrow (p^e_{t_3} \vee p^e_{<t_1})$\\ 
(\ref{red:checkfirst})& $p^b_{t_3} \leftrightarrow [e \wedge b \wedge (e \prec_{t_3} b) \wedge \neg d]$;\\ 
		      & $p^d_{t_3} \leftrightarrow [(b \wedge d \wedge (b \prec_{t_3} d) \wedge \neg e) \vee (d \wedge \neg b)]$;\\
		      & $p^e_{t_3} \leftrightarrow [b \wedge e \wedge (b \prec_{t_3} e) \wedge \neg d]$
\end{tabular}%
\vspace{.25em}
\end{example}
}

\vspace{-.1em}
\subsection{Properties and Consequences of the Reduction}

Next, we show that the reduction indeed utilizes the structural parameter treewidth,
i.e., the treewidth is decreased.

\begin{theorem}[Treewidth-Awareness]\label{thm:tw}
The reduction from a Boolean formula~$F$ and an ordering-augmented TD~$\mathcal{T}{=}(T,$ $\chi,\varphi,\psi)$ of~$\mathcal{G}_F$ to normal program~$\Pi$ consisting of Formulas~(\ref{red:start}) to~(\ref{red:checkunused}) slightly decreases treewidth.
Precisely, if~$k$ is the width of $\mathcal{T}$,
the treewidth of~$\mathcal{G}_\Pi$ is in~$\mathcal{O}(\frac{k}{\log(k)})$.
\end{theorem}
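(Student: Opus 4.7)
The plan is to construct a tree decomposition $\mathcal{T}^* = (T^*, \chi^*)$ of $\mathcal{G}_\Pi$ explicitly, using the ordering-augmented tree as the underlying structure, and to bound its width by $\mathcal{O}(k/\log(k))$.

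First, I establish that $\Card{V_t} = \mathcal{O}(k/\log(k))$ for every node $t$ of $T$. By construction, $\Card{V_t} = k'_t$ is the least integer with $k'_t! \geq 2^{\Card{\chi(t)}}$; since $\Card{\chi(t)} \leq k+1$ and Observation~\ref{obs:klogk} gives $k'_t! \in 2^{\Theta(k'_t \cdot \log(k'_t))}$, solving $k'_t \cdot \log(k'_t) = \Theta(k)$ yields $k'_t = \mathcal{O}(k/\log(k))$. The width of the ordering-augmented TD itself equals~$k$, since Definition~\ref{def:ord} only inserts fresh nodes reusing existing bags.

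Next, I set $T^* \eqdef T$ and, for each node $t$ of $T^*$, assemble $\chi^*(t)$ to contain precisely the atoms that co-occur in some rule generated at $t$: reachability atoms $r_y$ for $y \in V_t \cup \bigcup_{t'\in\children(t)} V_{t'} \cup \{s_{V_t}, d_{V_t}\}$; the (pairwise disjoint) edge atoms $e_{y,x}$ and $\hat e_{y,x}$ for $e_{y,x} \in E_t$; initial testing points $p_\epsilon^x$ for $x \in V_t$; node-local testing points $p_t^x$ and $p_{t'}^x$ for $t' \in \children(t)$ and $x$ ranging over the relevant ordering domains; query atoms $q_t^x$ for $x \in \dom(\varphi_t) \cup \dom(\psi_t)$; and outgoing-edge markers $o_t^y, o_{t'}^y$ for $y \in V_t \cup V_{t'} \cup \{s_{V_t}\}$. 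Since $\Card{E_t} \leq \Card{V_t} + 1$ and each node of a nice TD has at most two children, this yields $\Card{\chi^*(t)} = \mathcal{O}(\Card{V_t} + \max_{t' \in \children(t)} \Card{V_{t'}}) = \mathcal{O}(k/\log(k))$.

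Finally, I verify the three TD conditions. Vertex and edge coverage follow by mechanical inspection of the five blocks of Figure~\ref{fig:red}: every rule places its atoms inside $\chi^*(t)$ for the node $t$ at which the rule is generated. The non-trivial condition is subtree-connectedness, which I check by category. Atoms strictly local to one node, such as $q_t^x$ and the edge atoms $e_{y,x}, \hat e_{y,x}$, appear in at most $\chi^*(t)$ and $\chi^*(t^\uparrow)$ where $t^\uparrow$ is the parent in $T^*$ (due to consumption by parent rules such as~(\ref{red:fintrans}) or~(\ref{red:copy})), hence in two adjacent bags. Atoms persisting across many ordering copies of a fixed base-TD bag, notably $r_x$ and $p_\epsilon^x$ with $x \in V_t$, appear in exactly those nodes of $T^*$ whose $\chi'$-bag equals $\chi(t)$; by Definition~\ref{def:ord} these are the original node $t$ together with the chains of fresh ordering-copies inserted towards each of its neighbors, forming a connected star-like subtree.

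The main obstacle is handling the fresh ordering nodes $t_i$ inserted between a base-TD node $t$ and its child $t^*$: at such $t_i$ both $\varphi_{t_i} \in \ord(t)$ and $\psi_{t_i} \in \ord(t^*)$ are active, and in particular rule~(\ref{red:linkplast}) co-mentions $V_{t^*}$-reachability atoms with the chain-local testing points $p_{t_i}^x$. This forces $\chi^*(t_i)$ to host atoms from both $V_t$ and $V_{t^*}$ simultaneously, and connectedness of those $V_{t^*}$-atoms is preserved because the fresh chain lies directly between the $V_{t^*}$-copies and the $V_t$-copies in $T^*$. Since this only raises the bag size by an additive $\mathcal{O}(\Card{V_{t^*}}) = \mathcal{O}(k/\log(k))$, combining all contributions over the constantly many atom categories gives $\width(\mathcal{T}^*) = \mathcal{O}(k/\log(k))$, as claimed.
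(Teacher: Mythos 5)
Your proposal is correct and follows essentially the same route as the paper's own proof: you build a tree decomposition of $\mathcal{G}_\Pi$ over the same tree, with bags collecting the reachability, edge, testing-point, query, and outgoing-edge atoms relevant to each node (matching the sets $R(t)$, $E(t)$, $P(t)$, $O(t)$ in the paper), bound each bag by $\mathcal{O}(\Card{V_t})$, and invoke Observation~\ref{obs:klogk} to conclude $\Card{V_t}=\mathcal{O}(\frac{k}{\log(k)})$. Your explicit verification of the connectedness condition is more detailed than the paper's sketch, but it is not a different argument.
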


\begin{proof}[Proof (Sketch)]
\vspace{-.55em}
We construct a TD~$\mathcal{T}'=(T,\chi')$ of~$\mathcal{G}_\Pi$ to show that the width of~$\mathcal{T}'$ increases only slightly (compared to~$k$).
To this end, let~$t$ be a node of~$T$ with~$\children(t)=\langle t_1, \ldots, t_\ell \rangle$
and let~$\hat t$ be the parent of~$t$ (if exists).
We define (i)~$R(t)\eqdef \{r_x \mid x\in V_t\}\cup \{r_x \mid x\in V_{t'}, t'\in\children(t)\}\cup\{r_{s_{V_t}}, r_{d_{V_t}}\}$, 
(ii)~$E(t)\eqdef \{e_{x,y}, \hat e_{x,y}\mid e_{x,y}\in E_t\}$, 
(iii)~$P(t)\eqdef \{q_t^x, p_t^x, p_{t'}^x, p_\epsilon^x \mid x\in \dom(\varphi_t)\cup\dom(\psi_t), t'\in\prev(x,t)\}$, and
(iv)~$O(t)\eqdef \{o_t^x, o_t^{s_{V_t}} \mid x\in V_t\}\cup\{ o_{t'}^x, o_{t'}^{s_{V_{t'}}} \mid t'\in\children(t), x\in V_{t'}\cap V_t\}$.
Then, we let
$\chi'(t) \eqdef R(t) \cup E(t) \cup P(t) \cup O(t)$.
%
%
Observe that~$\mathcal{T}'$ is a TD of~$\mathcal{G}_\Pi$ and by construction~$\Card{\chi'(t)}$ is in $\mathcal{O}(\Card{V_t})$.
%
By definition of~$V_t$, $\Card{V_t}\leq k'$, where~$k'! \geq 2^k$.
Then, since $k'!$ is in~$2^{\Omega(k'\cdot\log(k'))}$ (see Observation~\ref{obs:klogk}), we have that~$2^{\Omega(k'\cdot\log(k'))}$ is at least~$2^k$ and therefore~$k'=\mathcal{O}(\frac{k}{\log(k')})=\mathcal{O}(\frac{k}{\log(k)-\log(\log(k'))})=\mathcal{O}(\frac{k}{\log(k)})$. 
\futuresketch{
by constructing~$2^{\mathcal{O}(k'\cdot\log(k'))}$ many interpretations for some~$k'$.
In order to quantify~$k'$, we take the $\log$ on both sides and end up
with $k$ being in $\mathcal{O}(k'\cdot\log(k'))$.
As a result, we have that~$k'$ is bound by~$\mathcal{O}(\frac{k}{\log(k')})$.
Since~$\log(k')$ is in $\Omega(\log(k))$, we end up with~$k'$ being bounded by~$\mathcal{O}(\frac{k}{\log(k)})$.
}
%
%
%
%
\end{proof}


Interestingly, our reduction cannot be significantly improved, making
further treewidth decreases unlikely. 

\begin{theorem}[Treewidth Decrease is Optimal]\label{thm:runtime1}
Assume a reduction 
from a formula~$F$ 
to a normal program~$\Pi$, running in time~$2^{o(k)}\cdot\poly(\Card{\var(F)})$, where~$k=\tw{\mathcal{G}_F}$.
Then, unless ETH fails, the treewidth of~$\mathcal{G}_\Pi$
cannot be in~$o(\frac{k}{\log(k)})$. 
\end{theorem}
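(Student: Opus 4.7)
The plan is to argue by contradiction, composing the hypothetical reduction with the best known upper bound for normal \ASP under treewidth (Proposition~\ref{thm:almosttightness}) to obtain a \SAT{} algorithm that violates ETH. Recall that under ETH, \SAT{} parameterized by the treewidth~$k$ of the primal graph cannot be decided in time~$2^{o(k)}\cdot\poly(n)$; this is a standard consequence of ETH via the sparsification lemma.

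First, I would assume for contradiction that a reduction as described exists, producing a normal program~$\Pi$ with~$\tw{\mathcal{G}_\Pi} = k' \in o\!\left(\frac{k}{\log(k)}\right)$, in time~$2^{o(k)}\cdot\poly(\Card{\var(F)})$. Since~$\Pi$ is normal, its tightness treewidth satisfies~$\iota \leq k'$, so Proposition~\ref{thm:almosttightness} decides consistency of~$\Pi$ in time~$2^{\mathcal{O}(k'\cdot\log(k'))} \cdot \poly(\Card{\at(\Pi)})$. Composing the two steps yields a \SAT{} algorithm whose total running time is bounded by~$2^{o(k)}\cdot \poly(\Card{\var(F)}) + 2^{\mathcal{O}(k'\cdot\log(k'))}\cdot\poly(\Card{\at(\Pi)})$.

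Next, I would verify that the composed runtime is~$2^{o(k)}\cdot\poly(\Card{\var(F)})$. The size of~$\Pi$ is polynomial in~$\Card{\var(F)}$ because the reduction runs in time~$2^{o(k)}\cdot\poly(\Card{\var(F)}) \leq \poly(\Card{\var(F)}) \cdot 2^{o(k)}$; it suffices to show~$k'\cdot\log(k') = o(k)$. From~$k' \in o(k/\log(k))$, for every~$\varepsilon > 0$ we eventually have~$k' \leq \varepsilon\cdot k/\log(k)$, hence~$\log(k') \leq \log(k)$ and therefore~$k'\cdot\log(k') \leq k'\cdot\log(k) \leq \varepsilon\cdot k$. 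Since~$\varepsilon$ was arbitrary, $k'\cdot\log(k') = o(k)$, so the whole composition is~$2^{o(k)}\cdot\poly(\Card{\var(F)})$.

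Finally, this contradicts the ETH-based lower bound for \SAT{} under primal treewidth, completing the proof. The main subtlety in carrying this out is the asymptotic comparison between~$k'\cdot\log(k')$ and~$k$, which must respect that~$o$-bounds are uniform in the sense above; the remainder is a routine composition argument. Note that one must also be mindful of the polynomial overhead coming from~$\Card{\at(\Pi)}$, but as~$\Card{\at(\Pi)} \leq \poly(\Card{\var(F)}) \cdot 2^{o(k)}$, this is absorbed into the~$2^{o(k)}\cdot\poly(\Card{\var(F)})$ bound.
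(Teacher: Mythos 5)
Your proposal is correct and follows essentially the same route as the paper: assume the reduction exists, run the known $2^{\mathcal{O}(k'\cdot\log(k'))}$ consistency algorithm for normal programs on~$\Pi$ (the paper cites the result of Hecher directly, which coincides with Proposition~\ref{thm:almosttightness} for~$\iota$ equal to the treewidth), and conclude a $2^{o(k)}$ algorithm for \SAT, contradicting ETH. Your additional bookkeeping (showing $k'\cdot\log(k')=o(k)$ and bounding $\Card{\at(\Pi)}$) just makes explicit steps the paper leaves implicit.
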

\begin{proof}
\vspace{-.5em}
Assume towards a contradiction that such a reduction, call it~$\mathcal{R}^*$, exists.
Then, we apply this reduction on any~$F$, resulting in program~$\Pi=\mathcal{R}^*(F)$.
Then, we know that~$\Pi$ can be decided~\cite{Hecher22} in time~$2^{\mathcal{O}(k'\cdot\log(k'))}\cdot\poly(\Card{\var(F)})$, 
where~$k'$ is in~$o(\frac{k}{\log(k)})$. 
As a result, we have that~$\Pi$ and therefore~$F$ can be decided in time~$2^{o(\frac{k}{\log(k)}\cdot\log(k))}\cdot\poly(\Card{\var(F)})$, which is in~$2^{o(k)}\cdot\poly(\Card{\var(F)})$, contradicting the ETH. 
%
\end{proof}

\futuresketch{
Later we will see the lower bound for deciding $\iota$-tight programs based on a slightly adapted version of the reduction,
which implies that we cannot expect to significantly improve this decrease of treewidth. 
Next, we present consequences for auxiliary variables and runtime.

\begin{corollary}[Runtime]
The reduction from a HCF program~$\Pi$ and a nice TD~$\mathcal{T}$ 
of~$\mathcal{G}_\Pi$ to SAT formula~$F$ 
consisting of Formulas~(\ref{red:checkrules}) to~(\ref{red:checkfirst}) 
uses at most~$\mathcal{O}(k\cdot\log(k)\cdot h)$ many variables
and runs in time~$\mathcal{O}(k\cdot \log(k)\cdot h + \Card{\Pi})$,
where~$k$ and $h$ form the width and the number of nodes of~$\mathcal{T}$, respectively.
\end{corollary}
\begin{proof}
The result follows from Theorem~\ref{thm:runtime1}.
Linear time in~$\Pi$ can be obtained 
by slightly modifying Formulas~(\ref{red:checkrules}) and~(\ref{red:checkfirst})
such that each rule~$r\in \Pi$ is used in only one node~$t$,
where~$r\in\Pi_{t'}$, but~$r\notin\Pi_t$, for some~$t'\in\children(t)$.
\end{proof}
Note that a nice TD of~$\mathcal{G}_\Pi$ of width~$k=\tw{\mathcal{G}_\Pi}$, 
having only~$h=\mathcal{O}(\Card{\at(\Pi)})$ many nodes~\cite{Kloks94a}[Lem.\ 13.1.2] always exists. 
Further, since $k\cdot\log(k)$ might be much smaller than~$\log(\Card{\at(\Pi)})$,
for some programs this reduction might pay off compared to global or component-based
orderings used in tools like lp2sat~\cite{Janhunen06} or lp2acyc~\cite{GebserJanhunenRitanen14,BomansonEtAl16}.}

%
\futuresketch{
Before we show correctness, we need to formally define the concepts of level mappings over a set of atoms.
Given a program~$\Pi$, and a set~$A\subseteq\at(\Pi)$ of atoms. Then, a function~$\varphi: A \rightarrow \{0,\ldots,\Card{A}\}$ is an \emph{ordering} for~$\Pi$ over~$A$.
\begin{definition}\label{def:provinglvmapping}
Given a level mapping~$\varphi$ for~$\Pi$ over~$\at(A)$.
Then, $\varphi$ is \emph{proving}
if there is an answer set~$M$ of~$\Pi$, called \emph{proven} answer set, such that $a\in M$ if and only if
there is a rule~$r\in \Pi$, called \emph{justifying rule for~$a$ of~$\varphi$}, with: (1) $a\in H_r$, (2) $M\cap (H_r\setminus\{a\}\cup B_r^-)=\emptyset$, and (3) for each~$b\in B_r^+$ we have~$\varphi(b)<\varphi(a)$.
\end{definition}

For each answer set of a program, there exists a level mapping.

\begin{proposition}[\cite{Janhunen06}]\label{prop:mapping}
Given a program~$\Pi$. Then, for each answer set~$M$ of~$\Pi$ there is a level mapping~$\varphi$ over~$\at(\Pi)$ proving~$M$.
\end{proposition}}
Correctness
establishes that the reduction~$\mathcal{R}$
encoded by Formulas~(\ref{red:start}) to~(\ref{red:checkunused})  indeed characterizes the satisfying assignments of a Boolean formula.


\begin{theorem}[Correctness]\label{thm:corr}
The reduction from a Boolean formula~$F$ and a TD~$\mathcal{T}=(T,\chi)$ of~$\mathcal{G}_\Pi$ to a logic program~$\Pi$, consisting of Formulas~(\ref{red:start}) to~(\ref{red:checkunused}), is correct.
Concretely, for every model of~$F$, there is an answer set of~$\Pi$. Vice versa, for every answer set of~$\Pi$, there is a unique model of~$F$. 
\end{theorem}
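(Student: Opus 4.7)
The plan is to establish correctness by constructing an explicit bijective correspondence between the models of~$F$ and the answer sets of~$\Pi$, proceeding direction-by-direction and using the block structure of the reduction to guide the argument.

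For the forward direction, I would start from any model~$M$ of~$F$. For each TD node~$t$, the restriction $M|_{\chi(t)}$ is an assignment over $\chi(t)$, and since $\mathcal{I}_t$ is a bijection onto $2^{\chi(t)}$ there is a unique ordering $\alpha_t \in \ord(t)$ with $\mathcal{I}_t(\alpha_t) = M|_{\chi(t)}$. I would then build a candidate interpretation $I$ by activating, in each ordering-augmented node whose $\varphi_\cdot$ (or $\psi_\cdot$) matches the selected $\alpha_\cdot$, exactly the edge atoms $e_{y,x}\in E_\cdot$ that encode that ordering, and then closing $I$ under Blocks 1--4 (reachability atoms $r_x$, at-most-one trackers $o_t^y$, testing points $p_t^x$, queries $q_t^x$). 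The key subclaims are that Formulas~(\ref{red:probecons}) never trigger (all chosen $\alpha_t$ originate from the same~$M$, so $\mathcal{I}_t(\varphi_t)$ and $\mathcal{I}_t(\psi_t)$ are compatible), that Formulas~(\ref{red:check})--(\ref{red:checkunused}) never trigger (since $M|_{\chi(t)} \models F_t$ and each $\mathcal{I}_t(\alpha_t)$ is defined), and that $I$ is the minimal model of $\Pi^I$, which I would verify by exhibiting a stratified derivation that processes edge atoms before reachability and testing points before queries.

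For the backward direction, I would start from an answer set $I$ of $\Pi$ and extract the encoded model. Formulas~(\ref{red:reached}) force every $r_x$ with $x\in V_t\cup\{d_{V_t}\}$ to be derived, and the only positive supports originate from the source $r_{s_{V_t}}$ via the chain in Block~1. Combined with the at-most-one mechanism of Block~2, this pins down a unique directed path from $s_{V_t}$ to $d_{V_t}$ covering $V_t$, i.e., a unique ordering $\alpha_t \in \ord(t)$ per node~$t$. Block~4 then propagates $q_t^x$ along this ordering exactly at nodes where $\varphi_t = \alpha_t$ (and symmetrically for $\psi_t$), while Formulas~(\ref{red:probecons}) enforce compatibility of $\mathcal{I}_t(\varphi_t)$ with $\mathcal{I}_t(\psi_t)$. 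Since $\psi_t$ in an auxiliary node mirrors a $\varphi_{t'}$ in its neighbor, compatibility propagates along~$T$ and yields a single global assignment $M \subseteq \at(F)$. Formulas~(\ref{red:check})--(\ref{red:checkunused}) guarantee $M|_{\chi(t)} \models F_t$ for every $t$, and since each clause of~$F$ lies in some $F_t$ by the TD property, $M \models F$. Uniqueness of $M$ follows because $\mathcal{I}_t$ is injective.

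The main obstacle I expect is proving that the positive cycles visible in the dependency graph of Figure~\ref{fig:sketch} are broken precisely in the compatible case, i.e., that no minimal model of $\Pi^I$ can encode incompatible neighboring orderings. Concretely, if $I$ represented such an incompatibility, tracing the chain $r_x \to p_t^x \to q_t^x \to p_{t'}^x$ backward through Block~3 and Block~4 would force one of the query atoms $q_t^y$ to be self-supporting along a positive cycle passing through some $r_y$ that has no external proof, contradicting minimality in $\Pi^I$ (this is exactly the red-edge phenomenon highlighted in the figure). Formalizing this requires a careful induction on the tree~$T$ together with a fixed-point analysis of the derivation operator on $\Pi^I$, ensuring that every derived reachability atom has a well-founded justification traceable back to $r_{s_{V_t}}$. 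Once this is established, both the forward and backward constructions become bijective in the sense claimed.
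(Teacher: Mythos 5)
Your plan follows essentially the same route as the paper's proof: an explicit construction of the answer set from a given model in one direction, and for the converse the extraction of exactly one ordering per ordering-vertex set from the exactly-one-outgoing-edge structure, compatibility enforced via Formulas~(\ref{red:probecons}) together with the unfounded-positive-cycle argument of Figure~\ref{fig:sketch}, satisfaction via Formulas~(\ref{red:check})--(\ref{red:checkunused}), and uniqueness from injectivity of the encoding --- the two steps you flag as the main obstacle are exactly what the paper isolates as its Decidability and Compatibility lemmas (Lemmas~\ref{lem:compat1} and~\ref{lem:compat}), proved by the same well-foundedness/proof-ordering contradiction you describe. One small caveat: your minimality check cannot use a stratification with ``edges before reachability'' and ``testing points before queries'', since the edge choice for $y$ requires $r_y$ and the $p$/$q$ atoms are mutually dependent (this is precisely why the SCCs are large); instead one exhibits the interleaved proof ordering along the chosen path, as highlighted in blue in Figure~\ref{fig:sketch}.
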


\begin{example}
Figure~\ref{fig:sketch} sketches the dependency graph over the rules of Blocks 1,3, and 4 on two simple TD nodes; showing
how incompatible queries would cause cyclic dependencies
that are unproven, which can therefore not occur. 
\end{example}

\futuresketch{
\begin{definition}
Given a level mapping~$\varphi$ for~$\Pi$ over~$A$.
Then, $\varphi$ is \emph{proving}
if there is an answer set~$M$, called \emph{proving} answer set, of~$\Pi$ such that $a\in M$ if and only if
there is a rule~$r\in \Pi$, called \emph{justifying rule for~$a$ of~$\varphi$}, with: (1) $a\in H_r$, (2) $M\cap (H_r\setminus\{a\}\cup B_r^-)=\emptyset$, and (3) for each~$b\in B_r^+$ we have~$\varphi(b)<\varphi(a)$.
\end{definition}

\begin{definition}
Given a level mapping~$\varphi$ for~$\Pi$ over~$A$.
Then, $\varphi$ is \emph{canonical} if there exists an answer set~$M$ of~$\Pi$ such that
(1) $a\in M$ if and only if $\varphi(a)>0$, (2) for every~$a\in A$ with~$\varphi(a) > 0$ there exists~$a'\in A$ with $\varphi(a') - \varphi(a) = 1$, and (3) for every~$a\in M$, and every justifying rule~$r\in\Pi$ for~$a$ of~$\varphi$, there is some~$b\in B_r^+$ with~$\varphi(b) - \varphi(a) = 1$.
\end{definition}

\begin{proposition}\label{prop:mappingunique}
Given a program~$\Pi$. Then, for each proving, canonical level mapping~$\varphi$ for~$\Pi$ over~$\at(\Pi)$ there is one unique proving answer set~$M$ of~$\Pi$ for~$\varphi$ and vice versa.
\end{proposition}
\begin{proof}
$\Rightarrow$: Assume towards a contradiction that there are two different answer sets~$M,M'$ of~$\Pi$ for~$\varphi$. Since~$M, M'$ are both subset-minimal,
there is some~$a\in M$, such that $a\notin M'$. 
Then, by canonicity, $\varphi(a)=0$ since $a\notin M'$, but~$\varphi(a)\neq 0$ due to~$M$.

$\Leftarrow$: Assume that for an answer set~$M$ there are two canonical level mappings~$\varphi,\varphi'$ such that~$M$ is proving for~$\varphi,\varphi'$.
By (1) of canonicity, for~$a\in M$, we have $\varphi(a)>0$.
Assume towards a contradiction that there is some~$a\in M$ with~$\varphi(a) < \varphi'(a)$.
But then $\varphi'$ cannot be canonical, as it violates (3) of the canonicity definition, as at least one justifying rule exists since~$\varphi$ is proving.
\end{proof}


Given a TD~$\mathcal{T}=(T,\chi)$ of~$\mathcal{G}_\Pi$, and a node~$t$ of~$T$.
We refer to a canonical level mapping of~$\Pi_t$ over~$\chi(t)$ by \emph{$t$-local level mapping~$\varphi_t$}. 
Then, we refer by \emph{$\mathcal{T}$-local level mappings} to the set~$\mathcal{M}$ consisting of one~$t$-local level mapping~$\varphi_t$ for each~$t$ of~$T$
such that we have \emph{compatibility} as follows: For each nodes~$t,t'$ of~$T$ and every~$a,b\in\chi(t)\cap\chi(t')$, we have (1) $\varphi_t(a) = 0$ if and only if $\varphi_{t'}(a)=0$, (2) whenever $\varphi_t(a) < \varphi_t(b)$  then~$\varphi_{t'}(a) < \varphi_{t'}(b)$, and (3) for each justifying rule~$r\in\Pi_{t}$ for~$a$ respecting~$\varphi_{t}$, we require that~$\varphi_t$ gives the maximum level for~$a$ in~$T$, i.e., $\varphi_t(a)=\max_{t''\text{ of } T, a\in\chi(t'')}(\varphi_{t''}(a))$.

Further, $\mathcal{M}$ requires \emph{provability}, where for each $a\in\at(\Pi)$, there exists a node~$t$ of~$T$ s.t.\ either~$\varphi_t(a)=0$, or there is a justifying rule~$r\in\Pi_{t}$ for~$a$ respecting~$\varphi_{t}$.
%
%
%

%
\begin{lemma}[Equivalence of level mappings]
Given a program~$\Pi$, and a TD~$\mathcal{T}=(T,\chi)$ of~$\mathcal{G}_\Pi$.
Then, 
there is a bijective correspondence between a set~$\mathcal{M}$ of~$
\mathcal{T}$-local level mappings of~$\varphi$ and a proving, minimal level mapping~$\varphi$.
Concretely, for each such set~$\mathcal{M}$ there is a unique mapping~$\varphi$ (and vice versa) assuming that for each~$a\in\at(\Pi)$, $\varphi(a) = 0$ if and only if there is $\varphi_t\in\mathcal{M}$ with~$\varphi_t(a) = 0$.
\end{lemma}
\begin{proof}
$\Rightarrow$: Given the set~$\mathcal{M}$ of~$\mathcal{T}$-local level mappings. 
Then, we construct a proving, canonical level mapping~$\varphi$ as follows.
We set~$\varphi(a)\eqdef 0$ for each~$a\in \at(\Pi)$, where
there exists a node~$t$ of~$T$ with~$\varphi_t(a)=0$.
Then, we set~$\varphi(a)\eqdef 1$ for each~$a\in \at(\Pi)$,
where there is no node~$t$ of~$T$ with~$\varphi_t(b) < \varphi_t(a)$ for some~$b$
with~$\varphi(b)=0$, and so on.
In turn, we construct~$\varphi$ in rounds, where each round assigns an increasing value of~$2,3,\ldots$.
Observe that $\varphi$ is well-defined, i.e., each atom~$a\in\at(\Pi)$ gets a unique value since~$\mathcal{M}$ only contains canonical mappings and by compatibility of~$\mathcal{M}$.

Next, we show that~$\varphi$ is a proving, canonical level mapping of~$\Pi$.
Indeed, $\varphi$ is canonical by canonicity of each~$t$-local level mapping within~$\mathcal{M}$ and compatibility of~$\mathcal{M}$.
In particular, by (3) of compatibility of~$\mathcal{M}$,
for each atom~$a$, every nodes~$t',t''$ of~$T$,
and every justifying rule for~$a$ respecting~$\varphi_{t'}, \varphi_{t''}$, respectively, we have~$\varphi_{t'}(a)=\varphi_{t''}(a)$.
Further, $\varphi$ is proving by provability of~$\mathcal{M}$ and since the properties of TDs ensure that for TD~$\mathcal{T}$ for each rule~$r\in\Pi$ there is a node~$t$ of~$T$ with~$r\in\Pi_t$.

Towards a contradiction assume that there is a proving, canonical level mapping~$\varphi'$ 
for~$\Pi$ that is incomparable to~$\varphi$ with~$\varphi(a)=0$ if and only if~$\varphi'(a)=0$ for each~$a\in\Pi_t$.
However, both~$\varphi,\varphi'$ lead by construction to the same proving answer set~$M$. 
Consequently, by Proposition~\ref{prop:mappingunique}, $\varphi=\varphi'$, which leads to a contradiction.

$\Leftarrow$:
Assuming proving, canonical level mapping $\varphi$, and the proving answer set~$M$ of~$\varphi$.
Then, we define the \emph{canonical $t$-local level mapping~$\hat\varphi_t$} for each~$t$ of~$T$ as follows.
\noindent$\hat\varphi_t(a) {\eqdef} \begin{cases}0 & \text{for each } a\in \chi(t) \text{ with } \varphi(a) = 0,\\
\rank_{\chi(t)}(a,\varphi) &\text{for each } a\in \chi(t) \text{ with } \varphi(a) \neq 0,\end{cases}$
where~$\rank_A(a,\varphi)$ is the ordinal number of $a$ according to~$\varphi(a)$ among all elements in~$A$.
The resulting set~$\mathcal{M}$ of~$\mathcal{T}$-local level mappings consisting of~$\hat\varphi_{t}$ for each~$t$ of~$T$ is well-defined.
Observe that any~$\hat\varphi_t$ is canonical, as for~$M$, (1) and (3) of canonicity remains satisfied since the relation between level numbers remains as in~$\varphi$ (by construction of~$\hat\varphi_t$). Also Condition (2) is satisfied by construction of~$\hat\varphi_t$, since the absolute numbers are within~$1,\ldots, \Card{\chi(t)}$ and~$\varphi$ is canonical. Further, compatibility of~$\mathcal{M}$ holds, as well as provability, as by properties of a TD any rule~$r\in\Pi$ is in some node~$t$, i.e., $r\in\Pi_t$.

Next, we show that the set~$\mathcal{M}$ of~$\mathcal{T}$-local level mappings consisting of~$\hat\varphi_{t}$ for each~$t$ of~$T$ is unique.
Assume towards a contradiction that there is an alternative set~$\mathcal{M}'$ of~$\mathcal{T}$-local level mappings~$\psi_t$ for each~$t$ of~$T$. 
Then, by compatibility of~$\mathcal{M}'$ and assumptions of this lemma, $\psi_t(a)=0$ if and only if~$\hat\varphi_t(a)=0$.
Consequently, there must be an atom~$a\in\at(\Pi_t)$ with~$\psi_t(a) \neq \varphi_t(a)$ such that~$\psi_t(a) > 0$ and~$\varphi_t(a) > 0$.
We distinguish the cases (a) $\psi_t(a) > \varphi_t(a)$ and (b) $\varphi_t(a) > \psi_t(a)$. 

\noindent Case (a):~$\psi_t(a) > \varphi_t(a)$. if~$\psi_t(a) > \max_{t'\text{ of }T, a\in\chi(t')}\psi_{t'}(a)$,
then (3) of compatibility of~$\mathcal{M}'$ is dissatisfied.
Then, there is by canonicity~$b\in\chi_t$ where~$\psi(a)>\psi(b)$, but~$\varphi_t(a)\leq\varphi_t(b)$.
There cannot be a node~$t'$ of~$T$, with~$t\neq t'$ where~$b\in\chi_{t'}$,
otherwise~$\mathcal{M}'$ does not have compatibility. Consequently, since~$b$ was arbitrarily chosen, this has to hold for every such atom~$b$.
Therefore

\noindent Case (b):~$\varphi_t(a) > \psi_t(a)$. Then, there cannot be a rule~$r\in \Pi$
\end{proof}

\begin{theorem}
The reduction~$R$ from a HCF program~$\Pi$ and a TD~$\mathcal{T}=(T,\chi)$ of~$\mathcal{G}_\Pi$ to SAT formula~$F$ consisting of Formulas~(\ref{red:checkrules}) to~(\ref{red:cnt:checkfirst}) is correct.
Concretely, for each answer set of~$\Pi$ there exists exactly one model of~$F$ and vice versa. TODO: make it precise!
\end{theorem}

\begin{proof}
Given any answer set~$M$ of~$\Pi$. 
Then, by Proposition~\ref{prop:mappingunique} there has to be a unique, minimal level mapping~$\varphi: \at(\Pi) \rightarrow \Nat$, proven by~$M$.
We construct a model~$I$ of~$F$ as follows.
For each~$x\in\at(\Pi)$ we let $I(x)\eqdef 1$ if $x\in M$ and~$I(x)\eqdef 0$ otherwise.
For each node~$t$ of~$T$, 
and~$x\in\chi(t)$ we assign the following variables:
(1) For every~$l\in \bvali{x}{t}{i}$ where~$i=\rank_{\chi(t)}(x,\varphi)$, we set $I(b)\eqdef 1$ if~$l$ is an atom and~$I(b)\eqdef 0$ if~$l$ is not an atom. 
(2) If there is a justifying rule~$r$ for~$x$ respecting~$\varphi$, we set~$I(p_{x_t})\eqdef I(pf_{x_t})\eqdef 1$.
(3) If~$I(p_{x_{t'}})=1$ for~$t'\in\children(t)$, then we set~$I(p_{x_t})\eqdef 1$.
%

%
\end{proof}

\subsubsection{Parameterized Algorithm for Counting and Enumeration.}
The reduction presented in the previous section immediately gives rise to an algorithm for counting and enumerating answer sets of an \ASP program.

\begin{proposition}
Runtime for counting and enumerating with linear delay.
\end{proposition}

\begin{corollary}
Counting by reduction
\end{corollary}

\begin{corollary}
Enumeration by reduction, mention linear delay
\end{corollary}
}

\section{Further Insights Into Hardness of \ASP} 
In this section, we provide deeper insights into the characterization of hardness for normal logic programs.
First, we discuss consequences on the length of the largest SCC.

\subsection{Are Unbounded Cycles (SCCs) Vital for Hardness?} 

By the construction of the reduction in Section~\ref{sec:main} and the observation that~$\mathcal{R}$ causes cycles (SCCs) in the program's dependency graph of size~$2^{\mathcal{O}(k\cdot\log(k))}$,
we obtain the following new hardness and precise lower bound result for deciding the consistency 
of normal programs.

\begin{corollary}[LB Largest SCC]\label{cor:hcycle}
Let~$\Pi$ be a normal logic program, where the treewidth of~$\mathcal{G}_\Pi$ is~$k$
such that the largest SCC size of~$D_\Pi$ is in~$2^{\mathcal{O}(k\cdot\log(k))}$.
Then, unless ETH fails, 
the consistency of~$\Pi$ can not be decided in time~$2^{o(k\cdot\log(k))}\cdot\poly(\Card{\at(\Pi)})$.
\end{corollary}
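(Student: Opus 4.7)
The plan is to derive a contradiction with the ETH by feeding the reduction~$\mathcal{R}$ of Section~\ref{sec:main} into the hypothesised fast algorithm. Concretely, I would assume, towards a contradiction, the existence of an algorithm~$A$ that, for every normal program~$\Pi$ whose largest SCC of~$D_\Pi$ has size~$2^{\mathcal{O}(k\cdot\log k)}$ with $k{=}\tw{\mathcal{G}_\Pi}$, decides consistency of~$\Pi$ in time~$2^{o(k\cdot\log k)}\cdot\poly(\Card{\at(\Pi)})$.

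Given an arbitrary Boolean formula~$F$ together with a nice TD of~$\mathcal{G}_F$ of width~$w$ (one up to constant factors is computable in single-exponential fpt-time), I would apply~$\mathcal{R}$ to obtain a normal program~$\Pi$. By Theorem~\ref{thm:tw}, $k'\eqdef\tw{\mathcal{G}_\Pi}$ lies in~$\mathcal{O}(w/\log w)$; combining this upper bound with the definition of~$k'_t$ (smallest integer with $k'_t!\geq 2^{\Card{\chi(t)}}$) and Observation~\ref{obs:klogk} yields the crucial identity~$k'\cdot\log k'=\Theta(w)$.

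The main obstacle, and the key step, is to certify that every SCC of~$D_\Pi$ has size at most~$2^{\mathcal{O}(k'\cdot\log k')}$. Inspecting Figure~\ref{fig:red}, the only positive dependencies that can close cycles involve reachability atoms~$r_x$ (together with their source and destination variants), testing points~$p_t^x$ and~$p_\epsilon^x$, and query atoms~$q_t^x$ attached to a single set~$V_t$ of ordering vertices; the cycle schema of Figure~\ref{fig:sketch} is prototypical. Since~$V_t$ is a fresh vertex set, disjoint from~$V_{t'}$ whenever~$\chi(t)\neq\chi(t')$, and since the disjointness of the edge sets~$E_t$ forces each edge atom~$e_{y,x}$ to occur in exactly one node of~$T$, any SCC is confined to the maximal chain of copy nodes sharing the same~$V_t$. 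The atom count over such a chain is bounded by~$\mathcal{O}(k' + k'\cdot\Card{\ord(t)}\cdot(\Card{\ord(t')}+\Card{\ord(\hat t)}))$, which falls in~$2^{\mathcal{O}(k'\cdot\log k')}$ because~$\Card{\ord(t)}=k'_t!\leq k'!$ and~$k'_t\leq k'$.

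With the SCC bound in hand, I would finally apply~$A$ to~$\Pi$. Its claimed running time~$2^{o(k'\cdot\log k')}\cdot\poly(\Card{\at(\Pi)})$ translates, via~$k'\cdot\log k'=\Theta(w)$ and the polynomial size blow-up of~$\mathcal{R}$, into a running time of~$2^{o(w)}\cdot\poly(\Card{\var(F)})$ for deciding~$F$. Since~$F$ was arbitrary, this would decide SAT in sub-single-exponential time in the treewidth~$w$ of~$\mathcal{G}_F$, contradicting the ETH.
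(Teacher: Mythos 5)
Your proposal is correct and takes essentially the same route as the paper's proof: assume towards a contradiction an algorithm running in time $2^{o(k\cdot\log(k))}\cdot\poly(\Card{\at(\Pi)})$, feed it the program produced by the reduction~$\mathcal{R}$ from an arbitrary formula, combine the treewidth bound of Theorem~\ref{thm:tw} with a bound on the largest SCC of~$D_\Pi$, and translate the hypothetical runtime into a sub-single-exponential (in the formula's treewidth) SAT algorithm, contradicting ETH. The SCC-size argument you derive inline is precisely the content of the paper's Lemma~\ref{cor:cycle}, which the paper's proof simply cites together with correctness (Theorem~\ref{thm:corr}); the only caveat is that your aside about a ``polynomial size blow-up'' of~$\mathcal{R}$ is an overstatement, since the ordering-augmented TD introduces $\Card{\ord(t)}\cdot\Card{\ord(t')}$ nodes per edge, but the paper's own proof performs the same final substitution without dwelling on this factor.
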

Corollary~\ref{cor:hcycle} gives insights into the SCC size required for hardness, which is in contrast to known lower bounds, which could not bound the cycle length or SCC size in the treewidth~\cite{Hecher22}.
Interestingly, this corollary is 
in line with the corresponding upper bound of Proposition~\ref{prop:iota} (for~$\iota=k$). 
We do not expect that Corollary~\ref{cor:hcycle} can be 
significantly strengthened, but we show below how for $\iota$-tight programs the SCC size can be decreased to~$2^{o(k\cdot\log(\iota))}$. 
\futuresketch{
If the largest SCC size was indeed significantly smaller, say polynomial, then actually runtimes~$2^{o(k\cdot\log(k))}\cdot\Card{\at(\Pi)}$ were possible, since only~$o(\log(k))$ bits per bag atom were required for solving. 
}

This also provides the corresponding lower bound for projected answer set counting,
which was left open~\cite{FichteHecher19}. Our reduction allows us to close the gap to the  upper bound, showing that it is expected for the problem to be harder than plain counting on disjunctive programs.

\begin{theorem}[LB Projected Counting]\label{cor:pcount}
Let~$\Pi$ be a normal logic program, $P{\subseteq}\at(\Pi)$ be atoms, and~$k$ be the treewidth of~$\mathcal{G}_\Pi$, such that the largest SCC size of~$D_\Pi$ is in~$2^{\mathcal{O}(k\cdot\log(k))}$.
Then, under ETH, the cardinality~$\Card{\{M\cap P \mid M\models \Pi\}}$ cannot be computed in time~$2^{2^{o(k\cdot\log(k))}}\cdot\poly(\Card{\at(\Pi)})$.
\end{theorem}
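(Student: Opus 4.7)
The plan is to lift the known doubly-exponential ETH lower bound for projected model counting on SAT (cf.~\cite{FichteEtAl18}) through the treewidth-decreasing reduction $\mathcal{R}$ of Section~\ref{sec:main}, exploiting both the treewidth reduction and the SCC-size control established in Theorem~\ref{thm:tw} and Corollary~\ref{cor:hcycle}. Concretely, I would assume for contradiction that such an algorithm, running in time $2^{2^{o(k\cdot\log(k))}}\cdot\poly(\Card{\at(\Pi)})$, exists, and use it to build an algorithm contradicting the projected-\SAT lower bound under ETH.

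First, I would invoke the known ETH-tight lower bound for projected model counting on CNF formulas: there is no algorithm computing $\Card{\{I\cap X \mid I \models F\}}$ in time $2^{2^{o(k_F)}}\cdot\poly(\Card{\var(F)})$, where $k_F = \tw{\mathcal{G}_F}$ and $X\subseteq\var(F)$. Next, I would apply the reduction~$\mathcal{R}$ of Section~\ref{sec:main} to an arbitrary such instance $(F,X)$ together with a nice TD of width $k_F$, obtaining a normal program~$\Pi$ of treewidth $k=\mathcal{O}(k_F/\log(k_F))$ by Theorem~\ref{thm:tw}, whose dependency graph~$D_\Pi$ has largest SCC of size~$2^{\mathcal{O}(k\cdot\log(k))}$ (cf.\ Corollary~\ref{cor:hcycle}). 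Crucially, $k\cdot\log(k) = \mathcal{O}(k_F)$.

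Next, I would define the projection set $P\subseteq\at(\Pi)$ so that there is a bijection between $\{I\cap X \mid I \models F\}$ and $\{M\cap P \mid M\models \Pi\}$. This is the technical heart of the argument: although $\mathcal{R}$ gives a correspondence between models of $F$ and answer sets of $\Pi$ (Theorem~\ref{thm:corr}), the SAT variables are encoded only implicitly via ordering atoms and their mapping $\mathcal{I}_t$. I would therefore lightly extend $\mathcal{R}$ by adding, for each $v\in X$, a fresh atom $v_F$ together with rules attached only to the root bag that force $v_F$ to hold exactly when the chosen ordering at the root encodes the valuation $v=1$ (using the same query atoms $q_{\rootOf(T)}^x$ already present in Block~4). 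Such an extension only touches the root bag and adds constantly many atoms per $v\in X$, so it does not asymptotically change either the treewidth or the largest SCC size; projecting onto $P\eqdef\{v_F \mid v\in X\}$ then yields the required bijection.

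Finally, applying the hypothetical projected-counting algorithm to $\Pi$ and $P$ runs in time $2^{2^{o(k\cdot\log(k))}}\cdot\poly(\Card{\at(\Pi)}) = 2^{2^{o(k_F)}}\cdot\poly(\Card{\var(F)})$, contradicting the projected-\SAT lower bound under ETH. The main obstacle is the projection bijection: one has to certify that the root-attached extension neither enlarges the treewidth beyond $\mathcal{O}(k/\log(k))$ nor creates new SCCs exceeding the $2^{\mathcal{O}(k\cdot\log(k))}$ bound, both of which follow because the extension is local to a single bag and its rules are acyclic.
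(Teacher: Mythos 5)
Your overall strategy (push a known double-exponential lower bound through the treewidth-decreasing reduction~$\mathcal{R}$) is the right idea, but the step you yourself flag as ``the technical heart'' --- defining the projection atoms $v_F$ --- does not work as described, and fixing it is exactly where the real difficulty lies. First, a gadget ``attached only to the root bag'' is not even well-defined: the ordering $\varphi_{\rootOf(T)}$ and its query atoms $q_{\rootOf(T)}^x$ encode, via $\mathcal{I}_{\rootOf(T)}$, an assignment over $\chi(\rootOf(T))$ only, so for a projection variable $v\in X$ that does not occur in the root bag there is simply no information at the root about whether $v$ is set to true. The value of $v$ is only visible at the (copy) nodes $t$ of the ordering-augmented TD with $v\in\chi(t)$, so the rules defining $v_F$ must be placed there, and consequently $v_F$ must be added to the bags of all copy nodes in the corresponding chain of Definition~\ref{def:ord}. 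Second, once you do that, the treewidth claim collapses in general: a single bag $\chi(t)$ of the original TD may contain up to $k_F+1$ projection variables, so the bags of the TD constructed in Theorem~\ref{thm:tw} would gain $\Theta(k_F)$ fresh atoms, and the width jumps back to $\Omega(k_F)$ instead of $\mathcal{O}(k_F/\log k_F)$ --- which destroys the contradiction, since the hypothetical algorithm then only gives time $2^{2^{o(k_F\cdot\log k_F)}}$-ish bounds measured in the \emph{new} parameter, not $2^{2^{o(k_F)}}$. The generic projected model counting lower bound you invoke gives no control over how the projection variables are spread over the bags, so this obstacle cannot be waved away as ``local to a single bag''.

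This is precisely why the paper does not reduce from generic projected \SAT{} counting but from the restricted $\forall V_1.\exists V_2.F$ instances behind Proposition~\ref{prop:lb} \cite{FichteHecherPfandler20}: those instances come with a nice TD in which every variable of $V_1$ occurs in a \emph{unique} bag (``Assumption (A)'' in the paper's proof), i.e., at most one projection-relevant variable per bag and hence per chain of copy nodes. The paper then adds, per $v\in V_1$, the choice rules $v\leftarrow\neg\hat v$ and $\hat v\leftarrow\neg v$ together with synchronization constraints $\leftarrow v, q_t^x$ (resp.\ $\leftarrow \hat v, q_t^x$) at the copy nodes whose ordering encodes $v=0$ (resp.\ $v=1$), adds $v,\hat v$ only to those bags, and the width of the TD from Theorem~\ref{thm:tw} grows by at most $2$; validity of the QBF then corresponds to the projected count being exactly $2^{\Card{V_1}}$. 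So to make your route sound you would either have to prove that the projected-\SAT{} hard instances can be assumed to have this ``one projection variable per bag'' structure (which essentially re-derives the QBF-based argument), or switch, as the paper does, to reducing from the structured $\forall\exists$ lower bound directly.
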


%

\subsection{An ETH-Tight Lower Bound for $\iota$-Tight Programs} 

Recall the fragment of~$\iota$-tight programs, which is motivated by the idea of providing almost tight programs that are simpler to solve than normal programs. 
%
%
Every formula~$F$ can be compiled into a~($1$-)tight program, cf., Example~\ref{ex:encoding}. 

As a result, in the following, we generalize our reduction~$\mathcal{R}$ of the previous section from~$\iota=k$ to the case~$\iota \geq 2$, 
resulting in~$\mathcal{R}'$.
Intuitively, this reduction~$\mathcal{R}'$ allows us to decrease the treewidth~$k$, but not necessarily to the maximum of~$\mathcal{O}(\frac{k}{\log(k)})$ of normal programs. 
Instead, $\iota$ provides a precise handle on the tightness,
thereby decreasing treewidth to~$\mathcal{O}(\frac{k}{\log(\iota)})$.

\paragraph{Adapted Reduction.} We adapt the construction of reduction~$\mathcal{R}$ and obtain~$\mathcal{R}'$.
\futuresketch{where instead of 
%
SCC sizes of the positive dependency graph up to~$2^{\mathcal{O}(k\cdot\log(k))}$,
the adapted approach causes SCC sizes up to~$2^{\mathcal{O}(k\cdot\log(\iota))}$ for some integer~$\iota$, $2\leq \iota\leq k$. 
}
%
%
To this end, we take an instance~$F$ of \SAT, i.e., a Boolean formula, and an ordering-augmented TD~$\mathcal{T}=(T,\chi,\varphi,\psi)$ of~$\mathcal{G}_F$ of width~$k$.
Then, we simulate for each node of~$T$, the up to~$2^k$ many assignments via~$(\iota!)^{\frac{k'}{\iota}}$ many orderings, where~$\iota$ is any fixed~$2\leq \iota \leq k'$.
So we decrease the treewidth from~$k$ to~$k'$ such that~$(\iota!)^{\frac{k'}{\iota}} \geq 2^k$, where the special case of the previous section  corresponds to~$\iota=k'$.
Consequently, since there are still up to~$k'$ many elements per bag, but we only have~$\iota$ many positions, we require~$\frac{k'}{\iota}$ many SCCs per bag.
%
As a result, the orderings~$\ord(t)$ for a node~$t$ are not total.
So, instead of one source and destination vertex for ordering set~$V_t$, we require up to $\frac{k'}{\iota}$ reachability atoms of the form~$r_{s_{V_t}^j}, r_{d_{V_t}^j}$ for every~$1\leq j\leq\frac{k'}{\iota}$. This requires minor adaptions in the definition of~$E_t$, Formulas~(\ref{red:start}), (\ref{red:reached}), (\ref{red:reachfin}), as well as~(\ref{red:fintrans}) and~(\ref{red:fin}), as sketched in the appendix. 

\smallskip
\noindent One can show a generalization of Observation~\ref{obs:klogk},
where~$\iota < k$. 

\begin{observation}\label{obs:logk}
Let 
$2\leq \iota\leq k$. Then, $(\iota!)^\frac{k}{\iota}$ is in~$2^{\Omega(k\cdot\log(\iota))}$.
\end{observation}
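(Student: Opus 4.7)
The plan is to derive this directly from Observation~\ref{obs:klogk} by substitution and exponentiation, since it is essentially the statement of that observation applied to $\iota$ rather than $k$, and then lifted by the outer exponent $k/\iota$.

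First, I would invoke Observation~\ref{obs:klogk} with parameter $\iota$ in place of $k$; this is permitted since $\iota \geq 2$ so Stirling's asymptotics apply. This gives $\iota! \in 2^{\Omega(\iota\cdot\log(\iota))}$, i.e., there is a constant $c>0$ such that $\iota! \geq 2^{c\cdot\iota\cdot\log(\iota)}$ for all sufficiently large $\iota$. Next, I would raise both sides to the power $k/\iota$, obtaining
\[
(\iota!)^{k/\iota} \;\geq\; \bigl(2^{c\cdot\iota\cdot\log(\iota)}\bigr)^{k/\iota} \;=\; 2^{c\cdot k\cdot\log(\iota)},
\]
which is exactly $2^{\Omega(k\cdot\log(\iota))}$, as required.

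The only subtlety to handle is the corner case of small $\iota$, in particular $\iota=2$ where $\log(\iota)=1$ and the Stirling-based bound of Observation~\ref{obs:klogk} does not give a useful constant. For this boundary case I would simply check the claim directly: $(\iota!)^{k/\iota} = 2^{k/2} = 2^{\Omega(k)} = 2^{\Omega(k\cdot\log(2))}$, which matches the claimed bound. For all $\iota\geq 3$, the asymptotic argument from the previous paragraph applies verbatim. I do not anticipate any real obstacle here; the observation is a straightforward exponent manipulation, and the main role of stating it explicitly is to make the subsequent treewidth/SCC-size accounting for the $\iota$-tight variant of the reduction transparent, analogously to how Observation~\ref{obs:klogk} supported Theorem~\ref{thm:tw}.
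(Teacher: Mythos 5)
Your proposal is correct and follows essentially the same route as the paper's own proof: apply Observation~\ref{obs:klogk} with $\iota$ in place of $k$ to get $\iota!\in 2^{\Omega(\iota\cdot\log(\iota))}$, then raise to the power $\frac{k}{\iota}$ so the exponent becomes $\Omega(k\cdot\log(\iota))$. Your explicit treatment of the small-$\iota$ boundary case is a minor extra precaution the paper omits, but the core argument is identical.
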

\futuresketch{\begin{proof}\vspace{-.5em}
By Observation~\ref{obs:klogk}, $\iota!$ is bounded by ~$2^{\Omega(\iota\cdot\log(\iota))}$. As a result, $(\iota!)^\frac{k}{\iota}$ is then in~$2^{{\Omega(\iota\cdot\log(\iota))}^\frac{k}{\iota}}$, which is in~$2^{{\frac{k}{\iota}\cdot\Omega(\iota\cdot\log(\iota))}}$, yielding the result: $(\iota!)^\frac{k}{\iota}$ is in~$2^{{\Omega(k\cdot\log(\iota))}}$.
\end{proof}}

%
By Observation~\ref{obs:logk}, $(\iota!)^{\frac{k'}{\iota}}$ is in~$2^{\Omega(k'\cdot\log(\iota))}$, so we have that~$2^{\Omega(k'\cdot\log(\iota))}$ is at least~$2^k$ and therefore~$k'=\mathcal{O}(\frac{k}{\log(\iota)})$. 
As a result, $\mathcal{R}'$
slightly reduces treewidth to $\mathcal{O}(\frac{k}{\log(\iota)})$. 
%
%
%
%
%
%
%
%
%
%
Consequently,   for the result as given in Proposition~\ref{prop:iota}
it is unexpected that it can be significantly improved (under ETH). 
More precisely, we obtain the following lower bound result.

\begin{theorem}[LB $\iota$-Tightness]\label{cor:hscc}
Let~$\Pi$ be a $\iota$-tight logic program, where the treewidth of~$\mathcal{G}_\Pi$ is~$k$
such that the largest SCC size of~$D_\Pi$ is in~$2^{\mathcal{O}(k\cdot\log(\iota))}$.
Then, under ETH, the consistency of~$\Pi$ cannot be decided in~$2^{o(k\cdot\log(\iota))}\cdot\poly(\Card{\at(\Pi)})$.
\end{theorem}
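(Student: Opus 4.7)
The plan is to mirror the structure of Theorem~\ref{thm:runtime1}, but using the adapted reduction~$\mathcal{R}'$ in place of~$\mathcal{R}$, and carefully tracking the $\iota$-parameter through the argument. Concretely, I would assume toward a contradiction that there is an algorithm deciding consistency of any $\iota$-tight program~$\Pi$ in time~$2^{o(k'\cdot\log(\iota))}\cdot\poly(\Card{\at(\Pi)})$, where~$k'$ is the treewidth of~$\mathcal{G}_\Pi$, and compose it with~$\mathcal{R}'$ applied to an arbitrary \SAT instance.

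First, I would take a Boolean formula~$F$ together with an ordering-augmented TD~$\mathcal{T}=(T,\chi,\varphi,\psi)$ of~$\mathcal{G}_F$ of width~$k$, and apply the adapted reduction~$\mathcal{R}'$ described just before the theorem. By the generalization of the treewidth analysis in Theorem~\ref{thm:tw}, the resulting program~$\Pi=\mathcal{R}'(F)$ has primal treewidth~$k'\in\mathcal{O}(k/\log(\iota))$, and by construction every SCC of the positive dependency graph of~$\Pi$ sits inside a single bucket~$V_t$ of size~$\iota$ (the local chunks introduced in the adaptation), so the SCC sizes remain bounded by~$2^{\mathcal{O}(k'\cdot\log(\iota))}$ and $\Pi$ is $\iota$-tight. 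I would also briefly verify that the correctness argument of Theorem~\ref{thm:corr} carries over to~$\mathcal{R}'$, since the generalization only splits the single source/destination pair~$(s_{V_t},d_{V_t})$ into~$\frac{k'}{\iota}$ independent pairs, which merely decouples the reachability and ``${\leq}1$-outgoing-edge'' gadgets into~$\frac{k'}{\iota}$ disjoint copies.

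Next, I would chain the hypothetical algorithm with~$\mathcal{R}'$. Invoking the assumed algorithm on~$\Pi$ yields runtime
\[
2^{o(k'\cdot\log(\iota))}\cdot\poly(\Card{\at(\Pi)}) \;=\; 2^{o(\frac{k}{\log(\iota)}\cdot\log(\iota))}\cdot\poly(\Card{\var(F)}) \;=\; 2^{o(k)}\cdot\poly(\Card{\var(F)}),
\]
for deciding satisfiability of~$F$. Since under ETH, \SAT on formulas of primal treewidth~$k$ cannot be decided in time~$2^{o(k)}\cdot\poly(\Card{\var(F)})$~\cite{ImpagliazzoPaturiZane01}, this yields the desired contradiction and establishes the lower bound.

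The main obstacle I expect is not the composition step itself, which is a standard ETH-style reduction, but rather verifying that the adapted reduction~$\mathcal{R}'$ preserves both (i) $\iota$-tightness with the claimed SCC bound~$2^{\mathcal{O}(k\cdot\log(\iota))}$ and (ii) the treewidth decrease to~$\mathcal{O}(k/\log(\iota))$. The subtle point is Observation~\ref{obs:logk}: one must confirm that the $\frac{k'}{\iota}$ independent source/destination gadgets introduced in the adaptation indeed span disjoint SCCs of the positive dependency graph, so that the largest SCC has size only~$2^{\mathcal{O}(k'\cdot\log(\iota))}$ rather than the full~$2^{\mathcal{O}(k'\cdot\log(k'))}$ of~$\mathcal{R}$. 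Once this structural bookkeeping is in place, the ETH-contradiction is routine.
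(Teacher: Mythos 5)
Your plan is correct and is essentially the paper's own proof: assume a $2^{o(k\cdot\log(\iota))}$ algorithm, apply $\mathcal{R}'$ to an arbitrary \SAT instance, use the treewidth decrease to $\mathcal{O}(k/\log(\iota))$ together with the $\iota$-tightness and SCC bound of the constructed program (the paper packages these as Theorem~\ref{thm:tw2} and Lemma~\ref{cor:scc2}), and conclude a $2^{o(k)}$ \SAT algorithm contradicting ETH. The only cosmetic difference is that you fold the verification of these supporting properties (and of correctness carrying over from Theorem~\ref{thm:corr}) into the proof itself, whereas the paper cites them as separate statements.
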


%
%

\futuresketch{
\subsection{Why Epistemic Normal \ASP is Harder than Disjunctive \ASP}

Having the basic concept of dynamic programming in mind,
we use this idea to design a reduction of a HCF program~$\Pi$ to a \SAT formula~$F$,
which is treewidth-aware.
The reduction is inspired by ideas of a DP algorithm for consistency of \HCF programs~\cite{FichteHecher19} and the idea of level mappings~\cite{Janhunen06}.
Intuitively, \emph{global} orderings can cause already huge blowup in the treewidth,
e.g., reductions, where all atoms are ordered at once, often cause long rules with more than treewidth many atoms.
As a result, we apply these numbers only locally within the bags of a TD.
More concretely, our reduction is \emph{guided} by a TD~$\mathcal{T}=(T,\chi)$ of primal graph~$\mathcal{G}_\Pi$
and uses core ideas of dynamic programming along TD~$\mathcal{T}$
to ensure only a slight increase in treewidth of the resulting \SAT formula.
Intuitively, thereby the aforementioned reduction takes care to keep the increase of width
local, i.e., the increase of width happens \emph{within} the bags of~$\mathcal{T}$.
Concretely, if~$\width(\mathcal{T})$ is bounded by some value~$\mathcal{O}(k)$,
the treewidth of the resulting formula~$F$ is at most~$\mathcal{O}(k\cdot\log(k))$.



For encoding orderings along a TD, we need the following notation.
Given a TD~$\mathcal{T}=(T,\chi)$ of~$\mathcal{G}_\Pi$, and a node~$t$ of~$T$.
We refer to an ordering over~$\chi(t)$ by \emph{$t$-local ordering}. 
\begin{definition}
A \emph{$\mathcal{T}$-local ordering} is a set containing one~$t$-local ordering~$\varphi_t$ for every~$t$ of~$T$
such that there is an interpretation~$I$ with 
(1) \emph{satisfiability}: $I\models\Pi_t$ for every node~$t$ of~$T$, 
(2) \emph{provability}: for every $a\in I$, there is a node~$t$ of~$T$ and a rule~$r\in\Pi_{t}$ proving~$a$, and
(3) \emph{compatibility}: for every nodes~$t,t'$ of~$T$ and every~$a,b\in\chi(t)\cap\chi(t')$, 
whenever $\varphi_t(a) < \varphi_t(b)$ then~$\varphi_{t'}(a) < \varphi_{t'}(b)$. 
%
%
\end{definition}
For an ordering~$\varphi$, we use the \emph{canonical $t$-local ordering~$\hat\varphi_t$} for each~$t$ of~$T$ as follows. 
Intuitively, atoms~$a\in\chi(t)$ with smallest ordering position $\varphi(a)$
among all atoms in~$\chi(t)$ get~$\hat\varphi_t(a)=0$, second-smallest get value~$1$, and so on.
%
%
Formally, we define $\hat\varphi_t(a) \eqdef 
\rank_{t}(a,\varphi) - 1 \text{ for each } a\in \chi(t)$, 
where~$\rank_t(a,\varphi)$ is the ordinal number (rank) of $a$ according to smallest ordering position~$\varphi(a)$ among~$\chi(t)$.

\begin{example}\label{ex:localorderings}
Consider program~$\Pi$,
answer set $I=\{b, c, d\}$, and ordering
$\varphi =\{b\mapsto 0, d\mapsto 1, c\mapsto 2\}$
 of Example~\ref{ex:running}.
Ordering~$\varphi$ can easily be extended to ordering
$\varphi' \eqdef\{a\mapsto 0, e\mapsto 0,b\mapsto 0, d\mapsto 1, c\mapsto 2\}$
over~$\at(\Pi)$. 
Then, using TD~$\mathcal{T}$ of~${\mathcal{G}}_{\Pi}$,
we can construct~$\mathcal{T}$-local ordering $\mathcal{M}\eqdef\{\hat\varphi_{t_1}, \hat\varphi_{t_2}, \hat\varphi_{t_3}\}$ of~$\varphi'$, where $\hat\varphi_{t_1}= \{e\mapsto 0,d\mapsto 1, c\mapsto 2\}$, $\hat\varphi_{t_2}= \{a\mapsto 0, b\mapsto 0\}$, and $\hat\varphi_{t_3}= \{e\mapsto 0,b\mapsto 0, d\mapsto 1\}$.
Consider a TD~$\mathcal{T}'$ of~$\mathcal{G}_{\Pi}$,
which is similar to~$\mathcal{T}$, but $t_1$ has a child node~$t'$, whose bag is $\{c,e\}$. Then, $\mathcal{M}\cup\{\hat\varphi_{t'}\}$ with $\hat\varphi_{t'}=\{e\mapsto 0,c\mapsto 1\}$ is a $\mathcal{T}'$-local ordering.
\end{example}

In our reduction, we use the following propositional variables.
For each atom~$x\in\at(\Pi)$, we use~$x$ also as propositional variable.
For each atom~$x\in\chi(t)$ of each node~$t$ of~$T$, we use~$\ceil{\log(\Card{\chi(t)})}$
many variables of the form~$b^i_{x_t}$ forming the $i$-th bit of the $t$-local ordering position (in binary) of~$x$.
By the shortcut notation~$\bvali{x}{t}{j}$, we refer to the \emph{conjunction of literals over bits~$b^i_{x_t}$} for~$1\leq i\leq \ceil{\log(\Card{\chi(t)})}$ according to the representation of the number~$j$ in binary.
%
For atoms~$x,x'\in\chi(t)$ of node~$t$ of~$T$, we use
the following notation to indicate that atom~$x$ is ordered before atom~$x'$:
\vspace{-.25em}
$$x \prec_t x' \eqdef \hspace{-1.6em}\bigvee_{1\leq i \leq \ceil{\log(\Card{\chi(t)})}}\hspace{-1em}(b_{x'_t}^i \wedge \neg b_{x_t}^i \wedge \bigwedge_{i < j \leq \ceil{\log(\Card{\chi(t)})}}\hspace{-1.6em} (b_{x_t}^j \longrightarrow b_{x'_t}^j)).$$

\begin{example}
Consider Example~\ref{ex:localorderings}
and the~$\mathcal{T}$-local ordering~$\mathcal{M}=\{\varphi_{t_1}, \varphi_{t_2}, \varphi_{t_3}\}$.
One could encode ordering position~$\varphi_{t_1}(e)=0$ using two bit variables $b^1_{e_{t_1}},b^2_{e_{t_1}}$ and forcing it to false.
This results in formula~$\bvali{e}{t_1}{0}=\neg b^1_{e_{t_1}} \wedge \neg b^0_{e_{t_1}}$.
Then, we formulate~$\varphi_{t_1}(d)=1$ by~$\bvali{d}{t_1}{1}=\neg b^1_{d_{t_1}} \wedge b^0_{d_{t_1}}$, and~$\varphi_{t_1}(c)=2$ by~$\bvali{c}{t_1}{2}=b^1_{c_{t_1}} \wedge \neg b^0_{c_{t_1}}$. 
For the whole resulting formula, $e\prec_{t_1}d$, $d\prec_{t_1}c$ as well as $e\prec_{t_1}c$ hold.
%
\end{example}


\subsubsection{Reduction for Consistency guided by a TD.}

\noindent 
For solving consistency, we require to construct the following Formulas~(\ref{red:checkrules})--(\ref{red:checkfirst}) below \emph{for each TD node~$t$} of~$T$
having child nodes~$\children(t)=\{t_1, \ldots, t_\ell\}$.
Thereby, these formulas aim at constructing $\mathcal{T}$-local orderings along the TD~$\mathcal{T}$, where Formulas~(\ref{red:checkrules}) ensure satisfiability,
Formulas~(\ref{red:prop}) take care of compatibility along the TD,
and Formulas~(\ref{red:checkfirst}) enforce provability within a node, which is then guided along the TD by Formulas~(\ref{red:checkremove}) to~(\ref{red:check}).

\noindent Concretely, Formulas~(\ref{red:checkrules}) ensure that
the variables of the constructed \SAT formula~$F$ are such that
all (bag) rules are satisfied.
Then, whenever in node~$t$ an atom~$x$ has a smaller ordering position than an atom~$x'$ (using $\prec_{t}$), 
this must hold also for the parent node of~$t$ and vice versa,
cf., Formulas~(\ref{red:prop}).
Formulas~(\ref{red:checkremove}) guarantee, for nodes~$t$
removing bag atom~$x$, i.e., $x\in\chi(t)\setminus\chi(t')$, 
that~$x$ is proven if~$x$ is set to true.
Similarly, this is required for atoms~$x\in\chi(n)$ that are in the root node~$n=\rootOf(T)$
and therefore never forgotten, cf., Formulas~(\ref{red:checkremove2}).
At the same time we ensure by Formulas~(\ref{red:check}) that an atom~$x$ is proven up to node~$t$
if and only if it is proven up to some child node of~$t$ or freshly proven in node~$t$.
Finally, Formulas~(\ref{red:checkfirst}) take care that an atom~$x$ is freshly proven in node~$t$
if and only if there is at least one rule~$r\in \Pi_t$ proving~$x$.
%
%
%
%
%

First, we show the lower bound result for inner treewidth.
Then, since treewidth is larger or equal to the inner treewidth,
the corresponding lower bound result for treewidth follows as a corollary.

$\exists V_1. \forall V_2. D$

$\exists V_1. \neg \exists V_2. \neg C$

\begin{align}
	\label{red:epc}  &\leftarrow \text{not }red\\
	\label{red:epswitch}  &red \leftarrow \neg \bar{red}\\
	\label{red:epswitch2}  &\bar{red} \leftarrow \neg red\\
	\label{red:epistemic} &a\leftarrow \text{not } \bar{a} &\text{for every }a\in V_1\\
	\label{red:epistemic2} &\bar{a}\leftarrow \text{not } {a} &\text{for every }a\in V_1\\
	\label{red:check2}&\leftarrow \neg red,q_{t}^x, a &\text{for every }t\text{ in }T, \alpha\in\{\varphi_t,\psi_t\}, x\in\dom(\alpha) \text{ has no }\prec_{\alpha}\text{successor}, a\in V_1, [\mathcal{I}_t(\alpha)](a)=0\\ 
	\label{red:check2b}&\leftarrow \neg red,q_{t}^x, \neg a &\text{for every }t\text{ in }T, \alpha\in\{\varphi_t,\psi_t\}, x\in\dom(\alpha) \text{ has no }\prec_{\alpha}\text{successor}, a\in V_1, [\mathcal{I}_t(\alpha)](a)=1
\end{align}
}

\section{Discussion and Conclusion}
The complexity of \ASP has already been studied in different facets and flavors. Recently, it has been shown that under the exponential time hypothesis (ETH), the evaluation of normal logic programs is expected to be \emph{slightly harder} for the structural parameter treewidth, than deciding satisfiability (\SAT) of Boolean formulas.
However, the hardness proof relies on \emph{large cycles (SCCs)}, unbounded in the treewidth. Further, compared to standard reductions, see Example~\ref{ex:encoding}, 
utilizing the ``hardness'' of normal \ASP remained unclear.

In this paper, we address both shortcomings. 
%
%
The idea is to reduce from \SAT to normal \ASP, thereby actively decreasing structural dependency in the form of treewidth. 
We design such a reduction that \emph{reduces treewidth} from~$k$ to~$\frac{k}{\log(k)}$. We show that under ETH, this decrease cannot be significantly improved.
Even further, with the help of the reduction, the existing hardness result for normal programs and treewidth can be improved: 
The constructed cycles (SCCs) are not required to be unbounded in the treewidth; indeed, hardness is preserved in case of a \emph{single-exponential bound in the treewidth}.
Then, we further improve this bound for the class of~$\iota$-tight programs, which allows us to \emph{close the gap} to the known upper bound, which our results render ETH-tight.

Finally, we 
apply our reduction for 
establishing further ETH-tight lower bounds on normal logic programs.
We hope that the reduction of this work enables further consequences and insights on hardness for problems on normal logic programs. 
In the light of a known result~\cite{AtseriasFichteThurley11} on the correspondence of treewidth and resolution width applied in \SAT solving, this might pave the way towards such insights for \ASP.
%
Currently, we are working on the comparison of different reductions from \SAT to \ASP and how they perform in practice.
Given the unsuccessful application of directed measures for \ASP~\cite{BliemWoltranOrdyniak16}, structural parameters between treewidth and directed variants could lead to new insights. 

\futuresketch{
The work in this paper gives rise to plenty of future work.
On the one hand, we are currently working on the comparison of different treewidth-aware reductions to \SAT and variants thereof, and how these variants perform in practice.
Further, we are curious about treewidth-aware reductions to \SAT, which preserve answer sets bijectively or are modular~\cite{Janhunen06}.

Also investigating further structural parameters ``between'' treewidth and directed variants of treewidth could lead to new insights,
since for \ASP directed measures~\cite{BliemWoltranOrdyniak16} often do not yield efficient algorithms.

The curiosity of studying and determining the hardness of \ASP and the underlying reasons has attracted the attention of the KR community for a long time.
This paper discusses this question from a different angle,
which hopefully will provide new insights into the hardness of \ASP and foster follow-up work.
The results in this paper indicate that, at least from a structural point of view, deciding the consistency of \ASP is already harder than 
\SAT, since \ASP programs might compactly represent structural dependencies within the formalism.
More concretely, compiling the hidden structural dependencies of a program to a \SAT formula, measured in terms of the well-studied parameter treewidth, most certainly causes a blow-up of the treewidth of the resulting formula.
In the light of a known result~\cite{AtseriasFichteThurley11} on the correspondence of treewidth and the resolution width applied in \SAT solving, this reveals that \ASP consistency might be indeed harder than solving \SAT.
We further presented a reduction from \ASP to \SAT that is aware of the treewidth in the sense that the reduction causes not more than this inevitable blow-up of the treewidth in the worst-case.
Currently, we are working on the comparison of different reductions from \SAT to \ASP and how they perform in practice.
The study of further structural parameters ``between'' treewidth and directed variants of treewidth could lead to new insights,
since for \ASP directed measures~\cite{BliemWoltranOrdyniak16} often do not yield efficient algorithms.
}



\clearpage

\subsubsection{Acknowledgments.}
This work has been supported by the Austrian Science Fund (FWF),
Grants J 4656 and P 32830, as well as the Vienna Science and
Technology Fund, Grant WWTF ICT19-065.

\bibliography{references}

\appendix
\clearpage

\section{Proof of Correctness}

Let~$t$ be a node of~$T$. Then, we define~$Q(t)\eqdef \{q_t^x \mid x\in V_t\}$ to be the complete \emph{set of query markers (for~$t$)}. 

\begin{lemma}[Decidability]\label{lem:compat1}
The reduction from a Boolean formula~$F$ and a TD~$\mathcal{T}=(T,\chi)$ of~$\mathcal{G}_\Pi$ to a logic program~$\Pi$, consisting of Formulas~(\ref{red:start}) to~(\ref{red:checkunused}), ensures \emph{decidability}:
For every answer set~$M$ of~$\Pi$ and every node~$t$ of~$T$, there is at least one node~$t'$ of~$T$ with~$V_{t'}=V_t$ and $Q(t')\subseteq M$. 
\end{lemma}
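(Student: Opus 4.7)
The plan is to prove the lemma in three conceptual steps: extract from the answer set the Hamiltonian path forced by Block~1 and Block~2, identify a node of the augmented TD whose $\varphi$ matches that path, and finally show that at that node all $q$-atoms must indeed be derived.

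First, I would analyze the edge atoms of $M$ restricted to $V_t\cup\{s_{V_t},d_{V_t}\}$. The constraints in~(\ref{red:reached}) force $r_y\in M$ for every $y\in V_t\cup\{d_{V_t}\}$, and the only rules deriving $r_y$ are~(\ref{red:linkplast}),~(\ref{red:linkroot}) (via testing points which ultimately trace back to some $p_\epsilon^y$ introduced by~(\ref{red:reach})) or~(\ref{red:reachfin}) for the destination. Hence for every such $y$ an \emph{incoming} edge $e_{?,y}$ lies in $M$. The disjunction~(\ref{red:choose}) together with the requirement $r_{s_{V_t}}\in M$ from~(\ref{red:start}) forces an outgoing edge from $s_{V_t}$, and Block~2 (rules~(\ref{red:fintrans})--(\ref{red:fin})) enforces at most one outgoing edge per vertex globally across the TD. Combined with the fact that $d_{V_t}$ has no outgoing-edge rules, this determines a unique Hamiltonian path $\pi$ from $s_{V_t}$ through all of $V_t$ to $d_{V_t}$.

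Next, $\pi$ induces a total ordering over $V_t$, so by Definition~\ref{def:ord} the augmented TD contains a node $t^\ast$ with $\chi'(t^\ast)=\chi(t)$ (hence $V_{t^\ast}=V_t$) and $\varphi_{t^\ast}=\pi$. I would then prove by induction along $\pi$ that every $q_{t^\ast}^x$ for $x\in V_{t^\ast}$ lies in $M$. The base case uses~(\ref{red:linknopred}) to derive $q_{t^\ast}^{x_1}$ from $p_{\pred(x_1,t^\ast)}^{x_1}$; the inductive step uses~(\ref{red:linkprevious}) to derive $q_{t^\ast}^{x_i}$ from $q_{t^\ast}^{x_{i-1}}$ and $p_{\pred(x_i,t^\ast)}^{x_i}$. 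The required testing points are the key: they originate as $p_\epsilon^x$ from the edges in $\pi$ and must percolate upward through the TD to reach $t^\ast$. At every intermediate ordering node the available derivations are the bypass rules~(\ref{red:linkbypass}),~(\ref{red:linkbypass2}) and~(\ref{red:linkpq}); a case distinction (whether the node's own queries fire or not) shows at least one of them applies, so the testing points survive all the way up.

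The main obstacle is the last step: one has to argue that the queries at $t^\ast$ are not merely \emph{consistent} with $M$ but actually \emph{forced} into it by minimality of the answer-set semantics. For that, I would show that the support graph of the constructed $q_{t^\ast}^x$ via~(\ref{red:linknopred}),~(\ref{red:linkprevious}),~(\ref{red:linkpq}) is acyclic in the positive dependency graph (its only ancestors are edge atoms and $p_\epsilon$-atoms, all chosen during the initial guess), so in $\Pi^M$ these atoms admit a well-founded derivation; by minimality of $M$ as a model of $\Pi^M$, they must belong to $M$. Finally, one verifies that deriving them here violates none of the constraints~(\ref{red:probecons})--(\ref{red:checkunused}): since $M$ is an answer set, the assignment $\mathcal{I}_{t^\ast}(\pi)$ must be defined and satisfy $F_{t^\ast}$, and $\psi_{t^\ast}$ is either empty or induces a compatible assignment at $t^\ast$ (otherwise the single Hamiltonian path over $V_{t^\ast}$ or over the sibling's $V$-set would be inconsistent). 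This yields the desired $t'=t^\ast$ with $Q(t')\subseteq M$.
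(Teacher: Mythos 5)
Your overall strategy coincides with the paper's: show that in an answer set the chosen edge atoms over $V_t$ form a single source-to-destination path, locate the node $t^\ast$ of the ordering-augmented TD whose $\varphi_{t^\ast}$ equals the induced ordering, and then push the testing points upward through the bypass/query rules so that all of $Q(t^\ast)$ ends up in $M$. Steps two and three are fine; in particular your case distinction at intermediate ordering nodes (either the query fires and (\ref{red:linkpq}) yields the testing point, or it does not and (\ref{red:linkbypass})/(\ref{red:linkbypass2}) apply) is exactly what is needed.

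The first step, however, has a genuine gap as stated. From ``every $y\in V_t\cup\{d_{V_t}\}$ has an incoming chosen edge'', ``every vertex has at most one outgoing chosen edge'' (Block~2), ``$s_{V_t}$ has an outgoing edge'' and ``$d_{V_t}$ has none'', you cannot conclude purely combinatorially that the chosen edges form a Hamiltonian path: these degree constraints are equally satisfied by a short path from $s_{V_t}$ to $d_{V_t}$ together with one or more disjoint directed cycles through the remaining vertices of $V_t$ (each cycle vertex has in-degree and out-degree one), and such cycles are syntactically available since opposite successor pairs occur as edge atoms across different ordering nodes. What excludes them is not counting but foundedness: on such a cycle each $e_{y,x}$ is supported only via $r_y$, which is supported only via the testing-point chain bottoming out in $p_\epsilon^y$, which in turn needs an incoming edge on the same cycle, so no ordering can prove any of these atoms --- contradicting that $M$ is an answer set (this is precisely the mechanism sketched in Figure~\ref{fig:sketch}, and it is what the paper's proof uses when it arranges the vertices ``in the order of their reachability''). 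Relatedly, your final paragraph invokes the wrong mechanism: the atoms $q_{t^\ast}^x$ are forced into $M$ simply because $M$ is a model of the reduct and the bodies of (\ref{red:linknopred}), (\ref{red:linkprevious}), (\ref{red:linkpq}) are satisfied; minimality/well-foundedness does not put atoms \emph{in} --- it is exactly the ingredient you need earlier, to rule out the spurious cycles.
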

\begin{proof}[Proof (Sketch)]
Assume towards a contradiction that there exists such an answer set~$M$ of~$\Pi$ and a node~$t$ of~$T$
such that there is no node~$t'$ of~$T$ with~$V_{t'}=V_t$ and~$Q(t')\subseteq M$.

However, then, by Formulas~(\ref{red:choose}), for every~$y\in V_t$ (except destination vertex~$d_{\rootOf(T)}$) there has to exist some~$e_{y,x}\in M$.
Formulas~(\ref{red:fintrans})--(\ref{red:fin}) ensure that (1) every~$y\in V_t$ is allowed only
one outgoing edge.
Assume towards a contradiction that for~$y\in V_t$ there are two edges~$e_{y,x}, e_{y,x'}\in M$.
Then, without loss of generality we encounter~$e_{y,x}\in E_{t''}$ in a node~$t''$.
By Formulas~(\ref{red:finnew}), we require~$o_{t''}^y$. Then, by Formulas~(\ref{red:fintrans}),
we require~$o_{t_i}^y\in M$ for every~$t_i$ that is between~$t''$ and the node, say~$t'''$, where
we encounter~$e_{y,x'}^y\in E_{t'''}$. Observe that by the connectedness of tree decompositions, 
such a path from~$t''$ to~$t'''$ has to exist. Finally, we arrive at a contradiction,
since Formulas~(\ref{red:fin}) for~$t'''$ fail due to~$o_{t_i}^y\in M$.
Consequently, we conclude (2) every~$y\in V_t$ (except destination vertex~$d_{\rootOf(T)}$) is admitted at most one outgoing edge.
Combining (1) with (2) and by Formulas~(\ref{red:reached}), we derive that every~$y\in V_t$ (except~$d_{\rootOf(T)}$)
there is precisely one outgoing edge required in~$M$.

So by construction of~$\Pi$, there is a vertex among~$V_t$ that is reached first, which is then used to reach the second, and so on. 
We can arrange these vertices along a path in the order of their reachability, say~$P=v_1, \ldots, v_{\Card{V_t}}$.
However, by construction of~$\Pi$, there has to exist a node~$t^*$, where~$\varphi_{t^*}$ precisely reflects the ordering of~$P$.
Observe that~$p_{\prev(t^*,v_1)}^{v_1}\in M$, which is either derived by Formula~(\ref{red:reach}) if $\prev(t^*,v_1)=\epsilon$ or by Formulas~(\ref{red:copy}) if
for the child node of~$t^*$, we have~$\dom(\varphi_{\prev(t^*,v_1)})=\emptyset$. If $\dom(\varphi_{\prev(t^*,v_1)})\neq\emptyset$,
we have that $p_{\prev(t^*,v_1)}^{v_1}\in M$ is either by Formulas~(\ref{red:linkpq}) or via Formulas~(\ref{red:linkbypass}) or (\ref{red:linkbypass2}). 
In any case, we can derive~$Q(t^*)\subseteq M$, which are derived precisely in the order of~$P$ by using Formulas~(\ref{red:linknopred}), followed by several applications of Formulas~(\ref{red:linkprevious}).
This contradicts that~$t'$ does not exist, since~$t'=t^*$.
\end{proof}

\begin{lemma}[Compatibility]\label{lem:compat}
The reduction from a Boolean formula~$F$ and a TD~$\mathcal{T}=(T,\chi)$ of~$\mathcal{G}_\Pi$ to a logic program~$\Pi$, consisting of Formulas~(\ref{red:start}) to~(\ref{red:checkunused}) ensures \emph{compatibility}:
Let~$M$ be an answer set of~$\Pi$. Then, for every two nodes~$t,t'$ of~$T$ with~$Q(t),Q(t')\subseteq M$,
we have that $\mathcal{I}_t(\varphi_t)$ and $\mathcal{I}_{t'}(\varphi_{t'})$ are compatible.
%
%
\end{lemma}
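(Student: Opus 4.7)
The plan is to reduce the general compatibility claim to a local compatibility check between adjacent original TD nodes, enforced by Formulas~(\ref{red:probecons}), and then to lift local compatibility to global compatibility using the running intersection property of tree decompositions. Throughout, let~$M$ be the fixed answer set and, for each original TD node~$t_a$ (pre-augmentation), write~$V_{t_a}$ for its ordering vertex set.

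\textbf{Step 1: Unique chosen ordering per ordering vertex set.}
First, I would invoke Lemma~\ref{lem:compat1} (Decidability), which yields, for each~$t$ of~$T$, some augmented node~$s$ with~$V_s=V_t$ and~$Q(s)\subseteq M$. Combined with the ``at most one outgoing edge'' enforcement established in the proof of Lemma~\ref{lem:compat1} (Formulas~(\ref{red:fintrans})--(\ref{red:fin})) together with the reachability constraints (Formulas~(\ref{red:reached}),~(\ref{red:choose})), the chosen edges in~$M$ over~$V_t$ form a unique Hamiltonian-like path that encodes a unique ordering~$\alpha^*_{V_t}\in\ord(t)$. I then claim: \emph{whenever $Q(s)\subseteq M$, we have $\varphi_s=\alpha^*_{V_s}$}. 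The argument is a walk through Blocks 3--4: since $q_s^x$ is only derivable by Formulas~(\ref{red:linknopred}),~(\ref{red:linkprevious}) from the previous testing point~$p_{s'}^x$ and, recursively, from initial testing points~$p_\epsilon^x$ that only fire when the matching edge~$e_{y,x}$ is chosen (Formula~(\ref{red:reach})), the full firing of~$Q(s)$ forces the chosen outgoing edges in~$M$ to match exactly~$\varphi_s$, hence~$\varphi_s=\alpha^*_{V_s}$.

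\textbf{Step 2: Local compatibility across one original TD edge.}
Consider adjacent original TD nodes~$t_1,t_2$ with~$t_2\in\children(t_1)$. By construction of the ordering-augmented TD (Definition~\ref{def:ord}), the intermediate sequence between them enumerates every pair~$(\alpha,\beta)\in\ord(t_1)\times\ord(t_2)$; in particular, there is a unique intermediate node~$n^*$ with~$\varphi_{n^*}=\alpha^*_{V_{t_1}}$ and~$\psi_{n^*}=\alpha^*_{V_{t_2}}$. By Step~1 (applied once for~$\varphi_{n^*}$ and once analogously for~$\psi_{n^*}$), the query atoms for both sides fire in~$M$, so the last-element queries~$q_{n^*}^x,q_{n^*}^y$ for~$\varphi_{n^*}$ and~$\psi_{n^*}$ are both in~$M$. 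Now Formula~(\ref{red:probecons}) contributes the constraint $\leftarrow q_{n^*}^x,q_{n^*}^y$ whenever~$\mathcal{I}_{n^*}(\varphi_{n^*})$ and~$\mathcal{I}_{n^*}(\psi_{n^*})$ are incompatible; since~$M$ satisfies this constraint, the two assignments are compatible on~$\chi(t_1)\cap\chi(t_2)$.

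\textbf{Step 3: Lifting to arbitrary~$t,t'$.}
For general~$t,t'\in T$ with~$Q(t),Q(t')\subseteq M$, take the path~$t=s_0,s_1,\ldots,s_m=t'$ in the augmented tree; projecting to original TD nodes yields a path~$u_0,u_1,\ldots,u_{m'}$ in the original tree with~$V_{u_i}=V_{s_{j(i)}}$ for an appropriate index map. By Step~1 each encountered ordering is the unique chosen one, and by Step~2 consecutive chosen orderings are compatible on the pairwise bag intersection. For any atom~$a\in\chi(t)\cap\chi(t')$, the running-intersection property forces~$a\in\chi(u_i)$ for every~$i$, and a straightforward induction along the path transports the value of~$a$ from~$\mathcal{I}_t(\varphi_t)$ to~$\mathcal{I}_{t'}(\varphi_{t'})$, yielding compatibility.

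\textbf{Main obstacle.}
The delicate part is Step~1, namely a clean proof that the full firing of~$Q(s)$ pinpoints~$\varphi_s$ as the unique ordering coherent with the edges chosen in~$M$. One has to unwind the interplay between Formulas~(\ref{red:copy})--(\ref{red:linkbypass2}) (which propagate testing points independently of whether the queried order holds) and Formulas~(\ref{red:linknopred})--(\ref{red:linkpq}) (which chain query atoms along the order in~$\varphi_s$), and rule out spurious derivations. The cyclic sketch indicated by Figure~\ref{fig:sketch} is the crux: any mismatch between the chosen edges and~$\varphi_s$ would require deriving a~$q_s^x$ whose only support lies on a positive dependency cycle through~$r_y$ with no external proof, contradicting the minimality of~$M$ as an answer set.
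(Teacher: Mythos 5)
Your plan is correct in substance and, at its core, follows the same two mechanisms as the paper's own proof: your Step~1 is the paper's case $V_t=V_{t'}$ (the paper argues it via the provability ordering~$\alpha$ of~$M$, chaining the two opposed query derivations until it reaches $\alpha(r_y)<\alpha(r_y)$ — exactly the Figure~\ref{fig:sketch} cycle-without-external-support argument you identify as the crux), and your Step~2 is the paper's case $V_t\neq V_{t'}$, which likewise locates a node pairing the two orderings and invokes the constraints of Formulas~(\ref{red:probecons}). What your version changes is the packaging rather than the ideas: funnelling everything through the canonical edge-induced ordering $\alpha^*_{V_t}$ and then chaining compatibility along the tree path (Step~3) makes explicit a transitivity step the paper leaves implicit. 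Definition~\ref{def:ord} only pairs orderings of \emph{adjacent} original nodes, so the paper's direct claim that a node $t^*$ with $\varphi_{t^*}=\varphi_t$ and $\psi_{t^*}=\varphi_{t'}$ exists is immediate only for neighbouring bags; your path induction via the running-intersection property is the cleaner way to reach arbitrary $t,t'$.

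One wrinkle you should repair: in Step~2 you cite Step~1 to conclude that the query atoms for both $\varphi_{n^*}$ and $\psi_{n^*}$ fire at the pairing node~$n^*$, but Step~1 is the converse implication ($Q(s)\subseteq M$ forces $\varphi_s=\alpha^*_{V_s}$) and cannot yield that any queries fire. What you need there is the existence direction — at a node whose orderings agree with the chosen edge path, the chain of Formulas~(\ref{red:linknopred}) and~(\ref{red:linkprevious}) is actually derivable in~$M$ — and this is precisely the construction carried out in the proof of Lemma~\ref{lem:compat1}, which you would apply at~$n^*$ to both the $\varphi$- and the $\psi$-side orderings. With that substitution (the paper glosses over the same point when it asserts that Formulas~(\ref{red:probecons}) ``do not hold'' at~$t^*$ without verifying the relevant query atoms are in~$M$), your plan goes through and matches the published argument.
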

\begin{proof}
Case $V_t\neq V_{t'}$:
Assume towards a contradiction that there exists such an answer set~$M$ of~$\Pi$ and two nodes~$t,t'$ of~$T$
with~$Q(t),Q(t')\subseteq M$, where
$\mathcal{I}_t(\varphi_t)$ and $\mathcal{I}_{t'}(\varphi_{t'})$ are incompatible.
Then, however, by definition of incompatibility, both assignments must share at least one common variable~$x$ with
$[\mathcal{I}_t(\varphi_t)](x)\neq[\mathcal{I}_{t'}(\varphi_{t'})](x)$.
Consequently, by construction of the nice TD~$\mathcal{T}$, there has to exist a node
~$t^*$ in~$T$ with~$\varphi_{t^*}=\varphi_t$ and~$\psi_{t^*}=\varphi_{t'}$ or vice versa ($\varphi_{t^*}=\varphi_{t'}$ and~$\psi_{t^*}=\varphi_{t}$). But then, we have that Formulas~(\ref{red:probecons}) do not hold for~$t^*$, which contradicts that~$M$ is an answer set of~$\Pi$.
%
%

Case $V_t=V_{t'}$:
Assume towards a contradiction that there exists such an answer set~$M$ of~$\Pi$ with an ordering~$\alpha$ that proves~$M$ and two nodes~$t,t'$ of~$T$
with~$Q(t),Q(t')\subseteq M$, where
$\mathcal{I}_t(\varphi_t)$ and $\mathcal{I}_{t'}(\varphi_{t'})$ are incompatible.
Then, there exist at least two vertices~$x,y\in V_t$ with~$\varphi_t(x) > \varphi_t(y)$, but
$\varphi_{t'}(y) > \varphi_{t'}(x)$, since otherwise~$\varphi_t=\varphi_{t'}$. 
%
By construction of~$\mathcal{R}$, in particular, Formulas~(\ref{red:linkprevious}), 
there is a chain of rule dependencies from~$q_{t'}^y$ to $q_{t'}^x$, requiring that~$q_{t'}^y$ is used to derive~$q_{t'}^x$, i.e., $\alpha(q_{t'}^y) < \alpha(q_{t'}^x)$.
On the other hand, for~$t$ we have a chain of rule dependencies from~$q_{t}^x$ to~$q_{t}^y$, i.e., $\alpha(q_{t}^x) < \alpha(q_{t}^y)$. 
%
%

Subcase $r_y$ is used to derive $r_x$, i.e., $\alpha(r_y) < \alpha(r_x)$: This subcase is is similar to the one in Figure~\ref{fig:sketch} with the red edges. So, the chain of dependencies indirectly derives~$r_y$ before derive~$p_\epsilon^x$, which is required for deriving~$r_x$, by Formulas~(\ref{red:linkplast}) and~(\ref{red:linkroot}). 
So we have that~$\alpha(r_y) < \alpha(p_\epsilon^x) < \alpha(r_x)$. Then, by Formulas~(\ref{red:linknopred}) and~(\ref{red:linkprevious}), we require that~$\alpha(p_\epsilon^x) < \alpha(q_t^x)$.
Further, by construction, e.g., Formulas~(\ref{red:linkpq}) and~(\ref{red:copy})--(\ref{red:linkbypass2}) we have that~$\alpha(q_t^y) < \alpha(r_y)$ and by assumption above we have~$\alpha(q_t^x) < \alpha(q_t^y)$. As a result, we arrive at the contradiction that~$\alpha(r_y) < \alpha(r_y)$.

Subcase $r_x$ is used to derive $r_y$, i.e., $\alpha(r_x) < \alpha(r_y)$: Here, the chain of dependencies 
indirectly derives~$r_x$ before derive~$p_\epsilon^y$, which is required for deriving~$r_y$, by Formulas~(\ref{red:linkplast}) and~(\ref{red:linkroot}). 
So we have that~$\alpha(r_x) < \alpha(p_\epsilon^y) < \alpha(r_y)$. Then, by Formulas~(\ref{red:linknopred}) and~(\ref{red:linkprevious}), we require~$\alpha(p_\epsilon^y) < \alpha(q_{t'}^y)$.
By construction, e.g., Formulas~(\ref{red:linkpq}) and~(\ref{red:copy})--(\ref{red:linkbypass2}) we have that~$\alpha(q_{t'}^x) < \alpha(r_x)$ and by assumption above we have~$\alpha(q_{t'}^y) < \alpha(q_{t'}^x)$. As a result, we arrive at the contradiction that~$\alpha(r_x) < \alpha(r_x)$.
%
%
\end{proof}

\begin{restatetheorem}[thm:corr]
\begin{theorem}[Correctness]
The reduction from a Boolean formula~$F$ and a TD~$\mathcal{T}=(T,\chi)$ of~$\mathcal{G}_\Pi$ to a logic program~$\Pi$, consisting of Formulas~(\ref{red:start}) to~(\ref{red:checkunused}), is correct.
Concretely, for every model of~$F$, there is at least one answer set of~$\Pi$. Vice versa, for every answer set of~$\Pi$, there is a unique model of~$F$. 
\end{theorem}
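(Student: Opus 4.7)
The plan is to prove the two directions separately, leveraging Lemmas~\ref{lem:compat1} and~\ref{lem:compat} that are already in hand.

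\textbf{Forward direction (model $\Rightarrow$ answer set).} Let $I \models F$. For every node $t$ of $T$, define $\alpha_t \in \ord(t)$ to be the (unique) ordering with $\mathcal{I}_t(\alpha_t) = I \cap \chi(t)$; this is defined because $\mathcal{I}_t$ is a bijection onto the $2^{\Card{\chi(t)}}$ assignments over $\chi(t)$. Now, in the ordering-augmented TD $\mathcal{T}'$, exactly one auxiliary node $t^\alpha$ below each TD node carries $\varphi_{t^\alpha} = \alpha_t$, and exactly one join-auxiliary node $t^{\alpha,\beta}$ has $(\varphi_{t^{\alpha,\beta}},\psi_{t^{\alpha,\beta}}) = (\alpha_t,\alpha_{t'})$ for each neighboring pair $(t,t')$. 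Using these selected nodes, I construct a candidate interpretation $M$ by setting: (i) all reachability atoms $r_x$, $r_{s_{V_t}}$, $r_{d_{V_t}}$ to true; (ii) edge atoms $e_{y,x}$ on the selected nodes according to the chosen orderings $\alpha_t$ (with complementary $\hat e_{y,x}$ false on selected and false/true on the non-selected to satisfy the shift of Formulas~(\ref{red:choose})); (iii) $o_t^y$ exactly on the nodes on the path from the unique ``emitting'' node of an edge upward; (iv) testing points $p_t^x$ and query atoms $q_t^x$ exactly on those nodes whose orderings are selected, following the chain forced by Formulas~(\ref{red:linknopred})--(\ref{red:linkpq}). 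I then verify $M \models \Pi^M$ by checking each block of rules, and argue minimality by exhibiting an ordering $\gamma$ on $M$ that proves every atom: start with the source reachability atoms; propagate along the linear chain of edges prescribed by the selected $\alpha_t$; derive testing points, then query atoms, then terminal reachability, then $o$-atoms. The constraints in Block~5 are satisfied because the $\alpha_t$ come from a single global assignment $I$ that satisfies every $F_t$.

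\textbf{Backward direction (answer set $\Rightarrow$ unique model).} Let $M$ be an answer set of $\Pi$. By Lemma~\ref{lem:compat1}, for every TD node $t$ there is at least one auxiliary node $t^*$ with $V_{t^*} = V_t$ and $Q(t^*) \subseteq M$; let $\alpha_t \eqdef \varphi_{t^*}$. By Formulas~(\ref{red:checkunused}), the assignment $\mathcal{I}_t(\alpha_t)$ is defined. By Lemma~\ref{lem:compat}, any two such $\alpha_t$, $\alpha_{t'}$ yield compatible assignments, so $I \eqdef \bigcup_t \mathcal{I}_t(\alpha_t)$ is a well-defined total assignment on $\var(F)$. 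Formulas~(\ref{red:check}) then give $\mathcal{I}_t(\alpha_t) \models F_t$ for every $t$, and since every clause of $F$ appears in $F_t$ for some $t$ (a standard property of primal-graph TDs), we conclude $I \models F$. Uniqueness of $I$: any other candidate model $I'$ arising from $M$ would, by the same definition, coincide with $I$ on every bag via the unique $\mathcal{I}_t$; compatibility extends this to all variables.

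\textbf{Main obstacle.} The nontrivial step is the minimality check in the forward direction. The reduction intentionally routes all derivations of $r_x$ through the edge chosen for $x$ (Formulas~(\ref{red:reach})--(\ref{red:linkroot})), so I must show that the \emph{unique} choice of edges induced by the $\alpha_t$ is the only way (under the GL reduct) to prove the reachability atoms, testing points, and queries selected in $M$. The key lemma I will need is that, given the selected single outgoing edge per vertex enforced by Block~2, the positive-dependency subgraph restricted to $M$ is acyclic with a unique topological order matching the chain $s_{V_t} \to \cdots \to d_{V_t}$; this gives the required proving ordering $\gamma$ and hence minimality. The potential cycles visualized in Figure~\ref{fig:sketch} only become vicious when incompatible queries are active, but the forward construction selects mutually compatible $\alpha_t$, so the ``red'' cycles never arise.
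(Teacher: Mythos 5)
Your backward direction follows the paper's own route almost verbatim (Lemma~\ref{lem:compat1} to pick a fully queried node per bag, Formulas~(\ref{red:checkunused}) for definedness, Lemma~\ref{lem:compat} for well-definedness, TD clause coverage plus Formulas~(\ref{red:check}) for $I\models F$), so that part is fine, modulo the remark that the paper additionally proves that two \emph{distinct} answer sets cannot induce the same model, which your ``uniqueness'' paragraph does not address. The genuine gap is in your forward construction, item~(iv): you place testing points $p_t^x$ and query atoms $q_t^x$ ``exactly on those nodes whose orderings are selected.'' That interpretation is not even a classical model of $\Pi$, and it also leaves the reachability atoms unsupported. Concretely, Formulas~(\ref{red:linknopred}) and~(\ref{red:linkprevious}) are purely positive rules generated for \emph{every} node of the augmented tree: at a node $t$ whose ordering is not the selected one, once the child's testing point $p_{t'}^{x_1}$ for the first element of $\varphi_t$ is in your interpretation, $q_t^{x_1}$ is forced, and the bypass rules~(\ref{red:linkbypass}), (\ref{red:linkbypass2}) and~(\ref{red:copy}) likewise force $p_t^x$ at such nodes whenever the child carries the testing point; omitting these atoms violates those rules. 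Moreover, the rules~(\ref{red:linkplast}) and~(\ref{red:linkroot}) derive $r_x$ from the testing point at the \emph{last} node of the chain in which $x$ occurs, which is in general not a ``selected'' node; since you put all $r_x$ into $M$ unconditionally, they would then have no support in the reduct, so $M$ could not be a minimal model (and dropping them instead violates Formulas~(\ref{red:reached})).

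The paper's construction avoids this by including, for every node $t$ and both orderings $\alpha\in\{\varphi_t,\psi_t\}$, \emph{all} testing points $p_t^x$ with $x\in\dom(\alpha)$, and query atoms only for the \emph{longest prefix} of $\alpha$ that is compatible with the selected ordering; this is exactly what makes the bypass rules applicable where the query chain breaks off and lets the testing points percolate through non-selected nodes up to the point where reachability is read off. Your minimality idea (exhibit a proving ordering via the Lin--Zhao characterization, using that each vertex has a unique chosen outgoing edge, so the derivation follows the Hamiltonian-path order $s_{V_t}\to\cdots\to d_{V_t}$) is a reasonable alternative to the paper's minimal-countermodel argument and the acyclicity intuition matches Figure~\ref{fig:sketch}, but it can only be carried out after you repair item~(iv) along the lines above; as stated, the interpretation you build is neither a model nor proven, so the plan does not go through.
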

\end{restatetheorem}
\begin{proof}
``$\Rightarrow$'':
Let~$I: \var(F) \rightarrow \{0,1\}$ be a model of~$F$, i.e., $I\models F$.
From this we construct an answer set~$A$ of~$\Pi$ as follows.
For every~$t$ in~$T$, we let~$\alpha_t\eqdef \mathcal{I}_t^{-1}(I_{|V_t})$ be the corresponding
ordering of the assignment~$I$ restricted to bag~$V_t$.
We construct~$A'$ as follows:
\begin{itemize}%
	\item[(i)] for every~$t$ in~$T$ with~$\alpha_t=\langle a_1, \ldots, a_\ell \rangle$ and~$1\leq i\leq \ell$, we require 
	$p_\epsilon^{a_i}\in A'$ as well as $r_{a_i}\in A'$, and additionally~$r_{s_t}\in A'$ if $t$ is a leaf node
	\item[(ii)] for every~$t$ in~$T$ with~$\alpha_t=\langle a_1, \ldots, a_\ell \rangle$, we require 
	$e_{a_i, a_{i+1}}\in A'$ for every~$1\leq i<\ell$,
	\item[(iii)] for every~$t$ in~$T$ with~$\alpha_t=\langle a_1, \ldots, a_\ell \rangle$, $t'\in\children(t)$, and $\alpha_{t'}=\langle a'_1, \ldots, a'_{\ell'}\rangle$ s.t.\ $\alpha_t\neq\alpha_{t'}$, we require $e_{a'_{\ell'}, a_1}\in A'$,
%
%
	\item[(iv)] for every~$t$ in~$T$, $\alpha\in \{\varphi_t, \psi_t\}$, and $\alpha=\langle a_1, \ldots, a_\ell\rangle$ we need to have~$p_t^{a_i}\in A'$ for every~$1\leq i\leq \ell$. Further, for the largest sequence~$\beta=\langle a_1, \ldots, a_{\ell'} \rangle$ of~$\alpha$ ($\ell'\leq \ell$) that is compatible with~$\alpha_t$, we also require~$q_t^{a_j}\in A'$ for every~$1\leq j \leq \ell'$.
\end{itemize}
Then, we define~$A''$ based on~$A'$, where 
%
(v) for every~$t$ in~$T$ with~$x,y\in V_t$ and~$e_{x,y}\in A'$, we need to have~$o_t^{y}\in A''$. 
	Further, for every parent and ancestor~$t^*$ of~$t$ in~$T$ with~$y\in V_{t^*}$, we require~$o_{t^*}^y\in A''$.
%

Then, we define~$A$ based on $A'$ and~$A''$, where (vi) $A\eqdef A' \cup A'' \cup \{\hat e_{x,y} \mid e_{x,y}\in (\bigcup_{t\text{ in }T} E_t \setminus A')\}$. 

We show that~$A$ is indeed an answer set of~$\Pi$.
First, we establish in Part (1) that~$A$ indeed satisfies every rule in~$\Pi$
and then, we show in Part (2) that~$A$ is indeed a subset-minimal model of~$\Pi$.
Part (1): By (i), $A$ satisfies both Formulas~(\ref{red:start}), (\ref{red:reached}), and~(\ref{red:reach}).
By (ii), (iii), and (vi), we have that~$A$ satisfies Formulas~(\ref{red:choose}).
By (v), $A$ satisfies Formulas~(\ref{red:fintrans}), (\ref{red:finnew}), and~(\ref{red:fin}).
Due to (iv), $A$ satisfies Formulas~(\ref{red:probecons}) and (\ref{red:check}), and~(\ref{red:checkunused}) as well.
By (iv), $A$ satisfies those formulas with variables of the form~$p_t^x$ in the head, which are Formulas~(\ref{red:copy}), (\ref{red:linkpq}), (\ref{red:linkbypass}), and (\ref{red:linkbypass2}). Further, by~(i), $A$ satisfies Formulas~(\ref{red:linkplast}), (\ref{red:linkroot}).
Finally, by (iv), we have that $A$ satisfies Formulas~(\ref{red:linknopred}) and~(\ref{red:linkprevious}).

Part (2):
Assume towards a contradiction that there is a set~$B\subsetneq A$ that satisfies Formulas~(\ref{red:start})--(\ref{red:check}). 
Observe that the difference between~$B$ and~$A$ cannot be due to (ii), (iii), or (vi), since otherwise
$B$ would not satisfy Formulas~(\ref{red:choose}). 
Further, Definition (i) is required for~$B$ to be an answer set, due to Formulas (\ref{red:start}) 
as well as Formulas~(\ref{red:reach}), 
which can be only satisfied for every~$y\in V_t$ of every~$t$ in~$T$, by
requiring~$p_\epsilon^y\in B$, see Formulas~(\ref{red:linkplast}), (\ref{red:linkroot}).
Then, Definition (v) is required in order to satisfy Formulas~(\ref{red:fintrans}) and~(\ref{red:finnew}).
If~$B$ does not set at least some~$p_t^x$ to true, then $B$ does not satisfy some of the Formulas~(\ref{red:copy}), (\ref{red:linkpq}),
(\ref{red:linkbypass}), or~(\ref{red:linkbypass2}). If, on the other hand,~$B$ does not set some~$q_t^x$ to true, then either Formulas~(\ref{red:linknopred}) or~(\ref{red:linkprevious}) are not satisfied by~$B$.

``$\Leftarrow$'': 
%
%
Let~$M$ be an answer set of~$\Pi$. 
Then, by Lemma~\ref{lem:compat1}, for every node~$t$ in~$T$, there ist at least one node~$t'$ of~$T$, with~$V_{t'}=V_t$ and $Q(t')\subseteq M$. 
From this, we construct the following assignment~$I: \var(F) \rightarrow \{0,1\}$, where for every~$x\in \var(F)$, we define~$I(x)\eqdef v$ whenever there is one $t$ in $T$ with $Q(t)\subseteq M$ such that~$\mathcal{I}_t(\varphi_t)(x)=v$.
Observe that by Lemma~\ref{lem:compat1} every variable~$x\in\var(F)$ is indeed addressed and finally by compatibility of Lemma~\ref{lem:compat},
$I$ is well-defined.
By construction of~$I$ and due to Formulas~(\ref{red:check}) and~(\ref{red:checkunused}), 
we have~$I\models F$.
To be more precise, assume towards a contradiction that~$I\not\models F$.
Then, there is at least one clause~$c\in F$ with~$I\not\models \{c\}$.
By the properties of a TD, there has to exist at least one node~$t$ of~$T$ with~$c\in F_t$.
By Lemma~\ref{lem:compat1} for every node~$t$, there ist at least one node~$t'$ of~$T$, with~$V_{t'}=V_t$ and $Q(t')\subseteq M$. Then, 
by Formulas~(\ref{red:check}) (and~(\ref{red:checkunused})) 
we have that~$\mathcal{I}_{t'}(\alpha_{t'})\models F_{t'}$.
Therefore, by construction of~$I$ and by compatibility of Lemma~\ref{lem:compat}, 
we have that~$I\models F_{t'}$. 
This contradicts the assumption that $I\not\models \{c\}$.

Observe that~$I$ is uniquely defined for every different answer set~$M$.
Assume towards a contradiction that there were two different answer sets~$M,M'$ of~$\Pi$,
with the same~$I$ as defined and constructed above.
But then, if both~$M$ and~$M'$ differ by variables of the form~$e_{y,x}$,
it can be shown that
there is a node~$t$ of~$T$ with~$Q(t)\subseteq M$, but~$Q(t)\not\subseteq M'$.
In the proof of Lemma~\ref{lem:compat1}, it is shown how~$Q(t)\in M$ precisely depends on these edges~$e_{y,x}$,
reflecting the ordering $y$ before~$x$.
%
%
As a result, $M$ and~$M'$ can not differ by variables of the form~$e_{y,x}$.
Consequently, since the remaining formulas are just propagating information
(no further choice, except Formulas~(\ref{red:choose})), we have~$M=M'$,
which contradicts our assumption.
\end{proof}%

\section{Additional Proofs}

\begin{lemma}[Largest SCC Size]\label{cor:cycle}
Let~$F$ be a Boolean formula, and~$\mathcal{T}=(T,\chi,\varphi,\psi)$ be an ordering-augmented TD
of~$\mathcal{G}_F$ of width $k$.
Then, the reduction~$\mathcal{R}$ constructs a program
with
strongly connected components (SCCs) of~$D_\Pi$ of size at most
~$2^{\mathcal{O}(k'\cdot\log(k'))}$ with~$k'=\frac{k}{\log(k)}$,
which equals~$2^{\mathcal{O}(k)}$. 
\end{lemma}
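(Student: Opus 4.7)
The plan is to analyze the positive dependency graph $D_\Pi$ of the constructed program, argue that every SCC is confined to atoms indexed by a single ordering vertex set $V_t$, and then count these atoms in terms of $k'$. First, I would observe that constraints (Block~5) contribute nothing to $D_\Pi$, since they have empty heads. The auxiliary atoms $o^y_t$ of Block~2 are only derived from other $o$-atoms of child nodes or from edge atoms $e_{y,x}$ via Formulas~(\ref{red:fintrans})--(\ref{red:finnew}), and they never appear in the positive body of any rule whose head is of type $r$, $e$, $p$, or $q$. Hence they form no directed cycles and can be dropped for the purpose of SCC counting.

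Next, I would trace every remaining rule of Blocks~1, 3, 4 and verify that any edge $(a,b)$ of $D_\Pi$ carries a vertex index from one common set: Formulas~(\ref{red:start})--(\ref{red:reachfin}) involve only indices from the $V_t$ that owns $e_{y,x}\in E_t$, while Formulas~(\ref{red:copy})--(\ref{red:linkpq}) propagate between $p$- and $q$-atoms whose vertex indices lie in $\dom(\varphi_t)\cup\dom(\psi_t)$, each of which is itself entirely contained in a single $V$-set (since $\varphi_t$ is an ordering over the $V$-set of the original ancestor and $\psi_t$ over that of the original descendant). Consequently, every SCC of $D_\Pi$ is localized to atoms carrying indices from one fixed $V_t$.

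Then, I would bound the atoms so indexed. For each of the at most $k'$ vertices $x\in V_t\cup\{s_{V_t},d_{V_t}\}$, the relevant atoms are $r_x$, at most $O(k')$ edge atoms $e_{x,y},\hat e_{x,y}$, one initial testing point $p_\epsilon^x$, and the atoms $p_{\tilde t}^x,q_{\tilde t}^x$ for every ordering-augmented TD node $\tilde t$ that handles $V_t$. By Definition~\ref{def:ord} such a node $\tilde t$ is either the original $t$ itself, an insertion on the edge between $t$ and its parent or between $t$ and one of its (at most two) children in the nice TD, or a leaf-expansion copy; their number is bounded by a constant times $|\ord(t)|^2\leq (k'!)^2$. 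Summing, the SCC contains $O(k'\cdot(k'!)^2)$ atoms, which by Observation~\ref{obs:klogk} is $2^{\mathcal{O}(k'\cdot\log(k'))}$.

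Finally, substituting $k'=\mathcal{O}(k/\log(k))$ from the proof of Theorem~\ref{thm:tw}, we get $k'\cdot\log(k')=\mathcal{O}((k/\log(k))\cdot\log(k))=\mathcal{O}(k)$, yielding the advertised $2^{\mathcal{O}(k)}$ bound. The main obstacle is the localization step: carefully verifying rule by rule that no positive body propagation ever bridges atoms whose vertex indices belong to two distinct $V$-sets, since it is this property that decouples the SCCs of different bags and prevents a blow-up growing with $|T|$.
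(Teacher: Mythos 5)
Your proposal is correct and follows essentially the same route as the paper's proof: identify Blocks 1, 3, 4 as the only potentially cyclic rules, localize every SCC to atoms indexed by a single set $V_t$ (together with its source/destination atoms), and observe that the count is dominated by the $p$-/$q$-atoms arising from pairs of orderings, bounded by roughly $(k'!)^2$, giving $2^{\mathcal{O}(k'\cdot\log(k'))}=2^{\mathcal{O}(k)}$ after substituting $k'=\mathcal{O}(\frac{k}{\log(k)})$. Your slightly larger bound $\mathcal{O}(k'\cdot(k'!)^2)$ and the more explicit treatment of Blocks 2 and 5 and of the localization step are harmless refinements of the same argument.
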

\begin{proof}
Observe that the largest SCC of~$D_\Pi$ can only be
due to
Formulas~(\ref{red:choose})--(\ref{red:linkroot}) and (\ref{red:copy})--(\ref{red:linkpq}),
as these form the only potentially cyclic rules. 
%
The SCC is constructed over variables of the form~$r_x$, $e_{x,y}$, $p_\epsilon^x$, $p_{t}^x$, $q_t^x$ with~$x,y\in V_{t^*}$ for some~$t,t^*$ of~$T$ such that~$V_t=V_{t^*}$.
Thereby, in the SCC we have at most~$k'$ many variables of the form~$r_x$ and~$p_\epsilon^x$ and at most~$k'^2$ many variables of the form~$e_{x,y}$.
However, there are at most~$(k'!)^2$ many variables of the form~$p_t^x$ and~$q_t^x$
due to the fact that any potential combination of two orderings~$\varphi_t,\psi_t$ is analyzed by Formulas~(\ref{red:linkbypass})--(\ref{red:linkpq}).
While the variables constructed for the orderings~$\varphi_t$ and~$\psi_t$ are by construction over different SCCs, cf., Definition~\ref{def:ord},
the combinations cause the investigation of duplicate orderings over variables of the form~$p_t^x, q_t^x$, which still cannot exceed the number~$(k'!)^2$ of combinations. 
Obviously, $(k'!)^2$ dominates the SCC size, which is bounded by~${2^{{\mathcal{O}(k'\cdot\log(k'))}}}^2 = 2^{\mathcal{O}(2\cdot k'\cdot\log(k'))} = 2^{\mathcal{O}(k'\cdot\log(k'))}$. 
Then, $2^{\mathcal{O}(k'\cdot\log(k'))} = 2^{\mathcal{O}(\frac{k}{\log(k)}\cdot\log(k'))} = 2^{\mathcal{O}(\frac{k}{\log(k)}\cdot(\log(k)-\log(\log(k)))} = 2^{\mathcal{O}(k)}$ establishes the claim. 
%
%
\end{proof}

\begin{restatecorollary}[cor:hcycle]
\begin{corollary}[LB Largest SCC]
Let~$\Pi$ be a normal logic program, where the treewidth of~$\mathcal{G}_\Pi$ is~$k$
such that the largest SCC size of~$D_\Pi$ is in~$2^{\mathcal{O}(k\cdot\log(k))}$.
Then, unless ETH fails, it is not expected
that the consistency of~$\Pi$ can be decided in time~$2^{o(k\cdot\log(k))}\cdot\poly(\Card{\at(\Pi)})$.
\end{corollary}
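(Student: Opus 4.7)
The plan is to proceed by contradiction, using the reduction $\mathcal{R}$ of Section~\ref{sec:main} to transport a hypothetical fast algorithm for the restricted SCC-bounded class into a fast \SAT algorithm that violates ETH. So I would assume that there exists an algorithm deciding consistency of every normal program $\Pi$ whose primal graph has treewidth $k'$ and whose largest SCC in $D_\Pi$ is bounded by $2^{\mathcal{O}(k'\cdot\log(k'))}$, running in time $2^{o(k'\cdot\log(k'))}\cdot\poly(\Card{\at(\Pi)})$.

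Next, I would take an arbitrary \SAT instance $F$ together with a nice TD of $\mathcal{G}_F$ of width $k$ (obtained, e.g., via a constant-factor treewidth approximation computable in single-exponential time, so that the preprocessing cost stays within $2^{\mathcal{O}(k)}\cdot\poly$). Turning this TD into an ordering-augmented TD as in Definition~\ref{def:ord} and applying $\mathcal{R}$ produces a normal program $\Pi$. By Theorem~\ref{thm:tw} the treewidth of $\mathcal{G}_\Pi$ is $k' \in \mathcal{O}(k/\log(k))$, and by Lemma~\ref{cor:cycle} the largest SCC of $D_\Pi$ has size $2^{\mathcal{O}(k)}$. The key arithmetic step is to observe that $k'\cdot\log(k') = \frac{k}{\log(k)}\cdot(\log(k)-\log\log(k)) = \Theta(k)$, which simultaneously (i) places $\Pi$ into the hypothesised class, since $2^{\mathcal{O}(k)} = 2^{\mathcal{O}(k'\cdot\log(k'))}$, and (ii) translates the assumed runtime back into the \SAT timeframe.

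Then I would invoke the assumed algorithm on $\Pi$, obtaining a decision procedure for consistency of $\Pi$ in time $2^{o(k'\cdot\log(k'))}\cdot\poly(\Card{\at(\Pi)}) = 2^{o(k)}\cdot\poly(\Card{\var(F)})$, where the polynomial factor is preserved because $\mathcal{R}$ blows up the instance only polynomially in $\Card{\var(F)}$ and the size of the TD. Combined with correctness of $\mathcal{R}$ (Theorem~\ref{thm:corr}), deciding satisfiability of $F$ reduces to consistency of $\Pi$, so $F$ is decidable in $2^{o(k)}\cdot\poly(\Card{\var(F)})$, contradicting ETH.

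The main obstacle I expect is not the reduction itself but bookkeeping the two nested asymptotic bounds so that the SCC-size restriction in the corollary's hypothesis actually matches what $\mathcal{R}$ produces: one must confirm that $k'\cdot\log(k')$ and $k$ agree up to constants when $k' = \Theta(k/\log(k))$, and that the $2^{\mathcal{O}(k)}$ SCC bound of Lemma~\ref{cor:cycle} is not larger than $2^{\mathcal{O}(k'\cdot\log(k'))}$. A secondary subtlety is making sure the preprocessing to obtain a sufficiently good TD of $\mathcal{G}_F$, together with the construction of the ordering-augmented TD, does not introduce an overhead of $2^{\Omega(k)}$ that would destroy the contradiction; a single-exponential approximation of treewidth, together with the observation that the ordering-augmented TD has size polynomial in the original TD and in $k'!= 2^{\mathcal{O}(k)}$, suffices for this.
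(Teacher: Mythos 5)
Your proposal matches the paper's own argument: assume the fast algorithm, reduce an arbitrary \SAT instance via $\mathcal{R}$, invoke Theorem~\ref{thm:corr} for correctness, Theorem~\ref{thm:tw} for the treewidth bound $\mathcal{O}(k/\log(k))$, and Lemma~\ref{cor:cycle} for the SCC bound, then observe that $2^{o(k'\cdot\log(k'))}$ collapses to $2^{o(k)}$ and contradicts ETH. Your extra bookkeeping (checking $k'\cdot\log(k')=\Theta(k)$ so the constructed program lies in the hypothesised class, and that TD preprocessing stays single-exponential) is exactly what the paper leaves implicit, so the proof is correct and essentially identical.
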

\end{restatecorollary}
\begin{proof}
This result follows from the correctness of reduction~$\mathcal{R}$, see Theorem~\ref{thm:corr}, 
the treewidth-awareness by Theorem~\ref{thm:tw}, and the largest SCC size bound by Lemma~\ref{cor:cycle}.
%
Assume towards a contradiction that
one can solve~$\Pi$ in time~$2^{o(k\cdot\log(k))}\cdot\Card{\poly(\at(\Pi))}$.
 %
%
%
%
%
Then, we take an arbitrary Boolean formula~$F$ and an ordering-augmented TD~$\mathcal{T}$ of~$\mathcal{G}_F$ of width~$k'$ and construct~$\Pi$ by means of~$\mathcal{R}$.
By Theorem~\ref{thm:corr}, the reduction is correct, i.e., $\Pi$ admits an answer set if and only if there is a satisfying assignment of~$F$. Further, the largest SCC of~$D_\Pi$ is bounded by~$2^{\mathcal{O}(k\cdot\log(k))}$, according to Lemma~\ref{cor:cycle}. Then, the treewidth~$k$ of~$\Pi$ is bounded by~$\mathcal{O}(\frac{k'}{\log(k')})$ by Theorem~\ref{thm:tw}.
So assuming that~$\Pi$ can be decided in time~$2^{o(k\cdot\log(k))}\cdot\poly(\Card{\at(\Pi)})$,
results in solving~$F$ in time~$2^{o(\frac{k'}{\log(k')}\cdot\log(\frac{k'}{\log(k')}))}\cdot\poly(\Card{\at(F)}) = 2^{o(k')}\cdot \poly(\Card{\var(F)})$, which contradicts the ETH.
\end{proof}

The following result for QBFs is known, where
it turns out that deciding $\QBFSAT$ remains $\ell$-fold exponential in the treewidth
of the primal graph (even when restricting the graph 
to the variables of the inner-most quantifier block).

\begin{proposition}[\cite{FichteHecherPfandler20}]\label{prop:lb}
Given a QBF~$Q=\forall V_1. \exists V_2, \ldots, \exists V_\ell. F$ of quantifier depth~$\ell$, whose treewidth of~$\mathcal{G}_F$ is $k+1$ such that  
$k$ is the treewidth of~$\mathcal{G}_F$ restricted to vertices in~$V_\ell$.
%
Then, under ETH, the validity of~$Q$ cannot be decided
in time~$\tower(\ell, o(k))\cdot\poly(\var(Q))$. 
\end{proposition}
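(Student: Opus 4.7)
The plan is to establish the tower-exponential lower bound by induction on the quantifier depth~$\ell$, using ETH as the base case and a treewidth-compressing reduction as the inductive step. For~$\ell=1$, the formula~$\forall V_1.F$ is valid iff~$\neg F$ is unsatisfiable, so the bound~$2^{o(k)}\cdot\poly(\var(Q))$ is precisely the ETH lower bound for SAT expressed in terms of primal treewidth~$k$.

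For the inductive step, I would construct a treewidth-aware reduction that takes any QBF~$Q'$ of quantifier depth~$\ell{-}1$ whose innermost-block treewidth is~$K$ and produces an equivalent QBF~$Q$ of quantifier depth~$\ell$ whose innermost-block treewidth is~$O(\log K)$, with only polynomial blow-up in~$\var(Q')$. The reduction follows the standard compression trick underlying tower-exponential lower bounds: along a tree decomposition of~$\mathcal{G}_{F'}$, each variable occurring in a bag of the former innermost block is replaced by its~$\log K$-bit positional index together with one bit carrying its truth value, and a single additional quantifier alternation is interleaved at the appropriate level of the prefix to universally iterate over positions and existentially guess the compressed assignment. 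Because the compression is carried out bag-by-bag along the decomposition and the consistency-checking subformulas at each bag only refer to the~$O(\log K)$ newly introduced positional/value atoms plus neighboring bag information, the treewidth increase stays confined to one logarithm per step, which is exactly the gain needed for one extra layer of the tower.

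Iterating this reduction~$\ell{-}1$ times, starting from a SAT instance of treewidth~$n$ (to which ETH applies in the form~$2^{o(n)}\cdot\poly(n)$), yields a QBF of depth~$\ell$ whose innermost-block treewidth is~$k=\Theta(\log^{(\ell-1)} n)$. If such QBFs could be decided in~$\tower(\ell,o(k))\cdot\poly(\var(Q))$, unwinding the iterated compression would solve the source SAT instance in time~$\tower(\ell,o(\log^{(\ell-1)} n))\cdot\poly(n)=2^{o(n)}\cdot\poly(n)$, contradicting ETH. This is the contrapositive that delivers the stated lower bound.

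The main obstacle is designing the compression step so that three requirements hold simultaneously: (i)~the consistency-checking gadgets remain strictly local to each bag, so that the treewidth restricted to the new innermost block~$V_\ell$ is~$O(\log K)$ rather than the global blow-up that a naive binary encoding would incur; (ii)~only the innermost quantifier block is actually compressed while the outer~$\ell{-}1$ blocks are preserved in structure and type; and (iii)~the freshly introduced alternation is placed at the correct position of the prefix and carries the correct quantifier type so that the inductive prefix shape is maintained (which is also what explains the small additive~$+1$ gap between the overall treewidth of~$\mathcal{G}_F$ and the~$V_\ell$-restricted treewidth~$k$ in the statement, coming from the bridging variables between adjacent bags). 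Verifying all three conditions under a single tree-decomposition-driven construction is the technical heart of the argument, exactly as carried out in~\cite{FichteHecherPfandler20}.
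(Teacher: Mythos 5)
This statement is not proven in the paper at all: Proposition~\ref{prop:lb} is imported verbatim from \cite{FichteHecherPfandler20} and used as a black box (in the proof of Theorem~\ref{cor:pcount}), so there is no in-paper argument to compare yours against. That said, your sketch reproduces the strategy of the cited source at the right level: an ETH base case at depth one, an inductive/iterated step consisting of a tree-decomposition-guided compression that trades one additional quantifier alternation for a logarithmic decrease of the treewidth of the innermost block, and the final unwinding of a hypothetical $\tower(\ell,o(k))\cdot\poly(\var(Q))$ algorithm into a $2^{o(n)}$ SAT algorithm contradicting ETH. You also correctly flag where the real work sits, namely keeping the consistency gadgets bag-local so that only the $V_\ell$-restricted treewidth governs the bound, and preserving the prefix shape when the new block is inserted. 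Two points deserve care if you were to write this out in full: the bookkeeping of $o(\cdot)$ through iterated exponentials (e.g.\ that $\tower(\ell,o(\log K))\subseteq \tower(\ell-1,o(K))$ at every level) must be made explicit rather than asserted, and your explanation of the ``$k$ versus $k+1$'' gap as coming from bridging variables is a plausible guess but not something you can claim without the actual construction; neither issue undermines the overall plan. Note also that what the present paper really exploits is a \emph{stronger} normal form established inside the cited proof (each variable of the outer blocks occurring in a unique bag of a nice TD), which your sketch does not yield as stated, but that is beyond what the proposition itself asserts.
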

%

This proposition will be used as the basis for deriving the following result.

\begin{restatetheorem}[cor:pcount]
\begin{theorem}[LB Projected Counting]
Let~$\Pi$ be a normal logic program, $P{\subseteq}\at(\Pi)$ be atoms, and~$k$ be the treewidth of~$\mathcal{G}_\Pi$, such that the largest SCC size of~$D_\Pi$ is in~$2^{\mathcal{O}(k\cdot\log(k))}$.
Then, under ETH, the cardinality~$\Card{\{M\cap P \mid M\models \Pi\}}$ cannot be computed in time~$2^{2^{o(k\cdot\log(k))}}\cdot\poly(\Card{\at(\Pi)})$.
\end{theorem}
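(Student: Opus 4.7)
The plan is to reduce from the validity of $\forall\exists$-QBFs $Q=\forall V_1\exists V_2.\,F$ via Proposition~\ref{prop:lb} with $\ell=2$: under ETH, such $Q$ cannot be decided in time $2^{2^{o(K)}}\cdot\poly(\Card{\var(Q)})$, where $K$ is the treewidth of $\mathcal{G}_F$ restricted to $V_2$ (and the full treewidth of $\mathcal{G}_F$ is $K+1$). My goal is to construct a normal program $\Pi$ together with a projection $P\subseteq\at(\Pi)$ whose projected count decides $Q$, and such that $k'\eqdef\tw{\mathcal{G}_\Pi}\in\mathcal{O}(K/\log K)$ while the largest SCC of $D_\Pi$ has size $2^{\mathcal{O}(k'\cdot\log(k'))}$, matching the theorem's hypothesis.

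First I would fix a nice ordering-augmented TD $\mathcal{T}$ of $\mathcal{G}_F$ of width $K+1$ and apply the reduction $\mathcal{R}$ of Section~\ref{sec:main} to $F$, obtaining a normal program $\Pi_F$ whose answer sets biject with the satisfying assignments of $F$ by Theorem~\ref{thm:corr}. Theorem~\ref{thm:tw} and Lemma~\ref{cor:cycle} immediately give $k'\in\mathcal{O}(K/\log K)$ and the SCC bound $2^{\mathcal{O}(k'\cdot\log(k'))}$ for $\Pi_F$.

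Second, since the atoms of $\Pi_F$ encode $F$-assignments only implicitly through the maps $\mathcal{I}_t(\varphi_t)$, I would expose each $v\in V_1$ as a dedicated atom. For each $v\in V_1$ fix one original TD node $t(v)$ with $v\in\chi(t(v))$; for every node $t'$ of the ordering-augmented TD arising from $t(v)$ whose ordering $\varphi_{t'}$ has $\mathcal{I}_{t(v)}(\varphi_{t'})$ defined and $[\mathcal{I}_{t(v)}(\varphi_{t'})](v)=1$, add the tight rule $v\leftarrow q_{t'}^x$, where $x$ is the last element of $\varphi_{t'}$. By Lemma~\ref{lem:compat1} some such $t'$ is activated in every answer set, and Lemma~\ref{lem:compat} forces compatibility across activated nodes, so $v\in M$ precisely when the bijecting $F$-assignment sets $v=1$. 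To avoid widening existing bags when $\Card{V_1}$ is large, I would attach each added rule in a fresh size-$2$ child bag $\{v,q_{t'}^x\}$ below $t'$ in the TD of $\mathcal{G}_\Pi$, preserving the width bound $\mathcal{O}(k')$; tightness of the rule leaves SCC sizes unchanged. Setting $P\eqdef V_1$, we obtain $\Card{\{M\cap P\mid M\models\Pi\}}=\Card{\{I|_{V_1}\mid I\models F\}}$, which equals $2^{\Card{V_1}}$ exactly when $Q$ is valid.

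Finally, assuming towards a contradiction a $2^{2^{o(k'\cdot\log(k'))}}\cdot\poly(\Card{\at(\Pi)})$ projected-counting algorithm, one obtains a validity procedure for $Q$ in time $2^{2^{o(K)}}\cdot\poly(\Card{\var(Q)})$, because $k'\cdot\log(k')\in\mathcal{O}((K/\log K)(\log K-\log\log K))=\mathcal{O}(K)$ and $\Card{\at(\Pi)}$ is polynomial in $\Card{\var(Q)}$; this contradicts Proposition~\ref{prop:lb}. The main obstacle is the second step: exposing $V_1$ as projectable atoms without violating the treewidth or SCC bounds when $\Card{V_1}$ is large. The pseudo-bag trick above is the key local adjustment to the TD built in the proof of Theorem~\ref{thm:tw}, and TD-connectedness is immediate since each $v$ occurs only in its added rules while each $q_{t'}^x$ already lies in the bag of $t'$.
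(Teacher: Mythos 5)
Your overall route is the same as the paper's: reduce from $\forall V_1.\exists V_2.F$ via Proposition~\ref{prop:lb} with $\ell=2$, apply $\mathcal{R}$ to an (ordering-augmented) TD of $\mathcal{G}_F$, expose the $V_1$-values as projection atoms $P=V_1$, and transfer the bound using $k'\log k'\in\mathcal{O}(K)$. The semantic part of your modification (reading off $v$ from the last-element query atom of an activated ordering node, using Lemmas~\ref{lem:compat1} and~\ref{lem:compat}) is essentially a definitional variant of the paper's guess-plus-constraint rules $v\leftarrow\neg\hat v$, $\hat v\leftarrow\neg v$, $\leftarrow v,q_t^x$ / $\leftarrow\hat v,q_t^x$, and is fine at that level of detail.

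However, the step you yourself flag as the main obstacle contains a genuine gap. The ``fresh size-$2$ child bag'' trick is not a valid tree decomposition: a single $v\in V_1$ occurs in many added rules $v\leftarrow q_{t'}^x$, one for each ordering node $t'$ arising from $t(v)$ whose assignment sets $v=1$, and these nodes are spread along the (possibly branching) chains inserted around $t(v)$. Putting $v$ only into isolated leaf bags $\{v,q_{t'}^x\}$ hanging off different nodes violates the connectedness condition (iii) of TDs; repairing it forces $v$ into every bag on the paths between those attachment points, i.e., into essentially all bags $\chi'(t_1),\ldots,\chi'(t_o)$ of the ordering nodes for $t(v)$. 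Once that is done, the width increase per bag equals the number of $V_1$-variables whose chosen node contributes to that bag, which in general can be as large as $\Theta(K)$, destroying the claimed $\mathcal{O}(K/\log K)$ treewidth of $\mathcal{G}_\Pi$. The paper closes exactly this hole by invoking the strengthened form of Proposition~\ref{prop:lb} established in its proof (``Assumption (A)''): the hard QBF instances admit a nice TD in which every $v\in V_1$ occurs in a unique bag and each bag contains at most one $V_1$-variable, so adding $v$ (and $\hat v$) to precisely the bags of the ordering nodes below that unique bag increases the width by at most $2$, preserving both the treewidth and the SCC bounds. Your argument never uses this structural restriction on the QBF instances, and without it (or some replacement) the reduction does not satisfy the hypotheses of the theorem, so the contradiction with Proposition~\ref{prop:lb} is not obtained.
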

\end{restatetheorem}
\begin{proof}
The idea is to apply the result of Proposition~\ref{prop:lb},
where we reduce from~$Q=\forall V_1. \exists V_2. F$, whose treewidth of~$\mathcal{G}_F$ is~$k'$.
In the proof of Proposition~\ref{prop:lb},
the authors actually show an even stronger result for a restricted QBF, such that there exists a nice TD~$\mathcal{T}=(T,\chi)$ of~$\mathcal{G}_F$, where for every variable~$v\in V_1\cup \ldots \cup V_{\ell-1}$ there is a unique node~$t$ in~$T$ with~$v\in \chi(t)$, i.e., $\{v\} = \chi(t)\cap (V_1 \cup \ldots \cup V_{\ell-1})$. 
Assumption (A): We assume such a constructed nice TD~$\mathcal{T}$ for~$\ell=2$ and turn it into an ordering-augmented TD~$\mathcal{T}'=(T',\chi',\varphi,\psi)$ of~$\mathcal{G}_F$. 

Then, we set~$P\eqdef V_1$ and construct a program~$\Pi'$ from~$F$ by using reduction~$\mathcal{R}$.
Then, for every variable~$v\in V_1$, we construct
the rules~\inlineequation[g1]{v \leftarrow \neg \hat v} and \inlineequation[g2]{\hat v \leftarrow \neg v}
over fresh atoms~$v$ and~$\hat v$. These rules are responsible for guessing truth values over fresh variables. 
Further, we construct~$\Pi$ by adding to~$\Pi'$ the following rules.
First, we add compatibility rules, similar to Formulas~(\ref{red:check}), where we ensure that these truth values are reflected in the reduction, as follows: 
\inlineequation[sync]{\leftarrow v, q_t^x}
for every node~$t$ in~$T'$ with~$x\in\dom(\varphi_t)$ having no~$\prec_{\varphi_t}$ successor such that~$\mathcal{I}_t(\varphi_t)[v] = 0$.
Analogously, we add \inlineequation[sync2]{\leftarrow \hat v, q_t^x}
for every node~$t$ in~$T'$ with~$x\in\dom(\varphi_t)$ having no~$\prec_{\varphi_t}$ successor such that~$\mathcal{I}_t(\varphi_t)[v] = 1$.

Given the correctness proof of Theorem~\ref{thm:corr} and the adaptions above, it is easy to see that~$Q$ evaluates to true, whenever we have that~$\Card{\{M\cap P \mid M\models \Pi\}} = 2^{\Card{V_1}}$.
Even further, since by Assumption (A), every variable~$v\in V_1$ appears uniquely in a node~$t$ of~$T$; by Definition~\ref{def:ord} no other variable in~$V_1$ than~$v$ appears in assignments~$\varphi_{t_i}$ for~$t_i$ on a path~$t_1, \ldots, t_o$ of nodes below~$t$ in~$T'$.
As a result, we can easily modify the TD constructed by Theorem~\ref{thm:tw}, by adding~$v$ and~$\hat v$ to previsely these bags~$\chi(t_1), \ldots, \chi(t_o)$.
Consequently, the treewidth compared to Theorem~\ref{thm:tw} 
increases at most by~$2$, i.e., 
 the treewidth guarantee 
 of~$k$ in $\mathcal{O}(\frac{k'}{\log(k')})$ is preserved. 
%

Now assume towards a contradiction that
$\Card{\{M\cap P \mid M\models \Pi\}}$ can be computed in time~$2^{2^{o(k\cdot\log(k))}}\cdot\poly(\Card{\at(\Pi)})$.
Then, however, we can decide~$Q$ in time $2^{2^{o(\frac{k'}{\log(k')}\cdot(\log(k')-\log(\log(k'))))}}\cdot\poly(\Card{\at(F)})$ = $2^{2^{o(k')}}\cdot\poly(\Card{\at(F)})$, which contradicts the ETH.
%
%
%
%
%
%
\end{proof}

\begin{restateobservation}[obs:logk]
\begin{observation}
Let~$\iota, k\in \Nat$ with $2\leq \iota\leq k$. Then, $(\iota!)^\frac{k}{\iota}$ is in~$2^{\Omega(k\cdot\log(\iota))}$.
\end{observation}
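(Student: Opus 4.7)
The plan is to bootstrap directly off Observation~\ref{obs:klogk}, which already gives the base case $\iota!\in 2^{\Omega(\iota\cdot\log(\iota))}$, and then simply raise both sides to the power $k/\iota$. Concretely, I would start by fixing, via Observation~\ref{obs:klogk}, a constant $c>0$ (and an integer threshold $\iota_0\geq 2$) such that $\iota!\geq 2^{c\cdot \iota\cdot\log(\iota)}$ for every $\iota\geq \iota_0$; for the finitely many small values $2\leq \iota < \iota_0$ one can either shrink $c$ or absorb these into the implicit constant of the $\Omega$-notation, since $\log(2)=1$ already makes the bound trivially true up to a constant.

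Next I would compute
\[
(\iota!)^{k/\iota} \;\geq\; \bigl(2^{c\cdot \iota\cdot\log(\iota)}\bigr)^{k/\iota} \;=\; 2^{c\cdot \iota\cdot\log(\iota)\cdot k/\iota} \;=\; 2^{c\cdot k\cdot \log(\iota)},
\]
which is exactly the claimed $2^{\Omega(k\cdot\log(\iota))}$ lower bound. The identity $\iota\cdot(k/\iota)=k$ in the exponent is where all the work happens, and it makes the $\iota$ cancel cleanly, leaving a bound that depends on $k$ linearly and on $\iota$ only through $\log(\iota)$.

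There is essentially no hard step: the only mild subtleties are (i) handling the corner case $\iota=2$ (where $\log(\iota)=1$, so one has to make sure the asymptotic constant in Observation~\ref{obs:klogk} is not vacuous), and (ii) the fact that $k/\iota$ need not be an integer; for the latter I would simply note that the inequality is on real-valued quantities, so no rounding issue arises, or alternatively replace $k/\iota$ by $\lfloor k/\iota\rfloor$ and absorb the rounding loss (at most a factor of $\iota!$) into the constant. Given the very short chain of inequalities, I expect this to be at most a couple of lines in the final write-up.
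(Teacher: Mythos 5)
Your proposal is correct and follows essentially the same route as the paper: invoke Observation~\ref{obs:klogk} to get $\iota!\geq 2^{c\cdot\iota\cdot\log(\iota)}$ and raise to the power $k/\iota$ so that the $\iota$ cancels in the exponent, yielding $2^{\Omega(k\cdot\log(\iota))}$. Your extra care about the constant for small $\iota$ and the non-integer exponent is a harmless refinement of the paper's argument, not a different approach.
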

\end{restateobservation}
\begin{proof}
By Observation~\ref{obs:klogk}, $\iota!$ is bounded by ~$2^{\Omega(\iota\cdot\log(\iota))}$. As a result, $(\iota!)^\frac{k}{\iota}$ is then in~$2^{{\Omega(\iota\cdot\log(\iota))}^\frac{k}{\iota}}$, which is in~$2^{{\frac{k}{\iota}\cdot\Omega(\iota\cdot\log(\iota))}}$, yielding the result: $(\iota!)^\frac{k}{\iota}$ is in~$2^{{\Omega(k\cdot\log(\iota))}}$.
\end{proof}

\section{Reduction~$\mathcal{R}'$ for~$\iota$-Tightness}

\begin{figure*}[t]
{
\begin{flalign}
	&\textbf{Block 1: Orderings \& Reachability}\hspace{-10em}\notag\\[-.4em]
	\label{red:start2}&r_{s_{V_t}^j}\leftarrow & \text{for every }t\text{ in }T,  z\in V_t, j=\block(z,t)  \tag{\ref{red:start}'}\\
	\label{red:reached2}&\leftarrow \neg r_y & \text{ for every }t\text{ in }T, y\in V_t \cup \{d_{V_t}^j \mid z\in V_t, j=\block(z,t)\}\tag{\ref{red:reached}'}\\
	%
	\label{red:reachfin2}&{r_{d_{V_t}^j}} \leftarrow e_{y,d_{V_t}^j}  & \text{ for every }t\text{ in }T, e_{y,d_{V_t}^j}\in E_t\tag{\ref{red:reachfin}'}\\
%
%
%
%
	&\textbf{Block 2: ${\leq}1$ Outgoing Edge}\hspace{-10em}\notag\\[-.4em]
	\label{red:fintrans2}&o_t^y\leftarrow o_{t'}^y &\text{ for every }t\text{ in }T, t'\in\children(t), y\in (V_t\cap V_{t'}) \cup \{s_{V_t}^j\mid V_t{=}V_{t'}, z\in V_t, z\in\block(z,t)\}\tag{\ref{red:fintrans}'}\\
	\label{red:fin2}&\leftarrow o_{t'}^y, e_{y,x} &\text{ for every }t\text{ in }T, e_{y,x}\in E_t, t'\in\children(t),y\in V_{t'}\cup \{s_{V_t}^j\mid V_t{=}V_{t'}, z\in V_t, z\in\block(z,t)\}\tag{\ref{red:fin}'}
\end{flalign}
\vspace{-1.75em}
}\caption{The reduction~$\mathcal{R}'$ adapted from~$\mathcal{R}$, thereby replacing~(\ref{red:start}), (\ref{red:reached}), (\ref{red:reachfin}), (\ref{red:fintrans}), and~(\ref{red:fin}) by (\ref{red:start2}), (\ref{red:reached2}), (\ref{red:reachfin2}), (\ref{red:fintrans2}), and~(\ref{red:fin2}), respectively. Reduction~$\mathcal{R}'$ takes a formula~$F$ and an ordering-augmented TD~$\mathcal{T}=(T,\chi,\varphi,\psi)$ of~$\mathcal{G}_F$.}\vspace{-.35em}\label{fig:red2}
\end{figure*}

In the following, we describe the modifications of~$\mathcal{R}$ required for reduction~$\mathcal{R}'$.
Let~$F$ be a Boolean formula and~$\mathcal{T}=(T,\chi,\varphi_t,\psi_t)$ be an ordering-augmented TD of~$\mathcal{G}_F$ of width~$k$.
Recall that we decrease the treewidth from~$k$ to the smallest~$k'$ such that~$(\iota!)^{\frac{k'}{\iota}} \geq 2^k$ for fixed~$2\leq \iota\leq k'$.
Then, for every node~$t$ of~$T$, $k'_t$ is the \emph{smallest integer} with~$(\iota)!^{\frac{k'_t}{\iota}} \geq 2^{\Card{\chi(t)}}$.
Analogously to above, $V_t$ is uniquely defined by the bag~$\chi(t)$ and consists of~$k'_t$ many fresh elements.

In the context of reduction~$\mathcal{R}'$, for every node~$t$ of~$T$, we let $\ord(t)$ be the \emph{set of partial orderings over~$V_t$}, where each element in~$V_t$ is totally ordered among \emph{the same block of~$\iota$ many elements} (less for remaining $<\iota$ elements).
So, essentially, each ordering in~$\ord(t)$ is the combination of
up to~$\frac{k'_t}{\iota}$ many individual total orderings.
Assume an arbitrary, but fixed total ordering among those blocks of up to~$\iota$ elements. For every~$v\in V_t$ we refer by~$\block(v,t)$ to the \emph{block number}~$1\leq j\leq \frac{k'}{\iota}$ of up to~$\iota$ elements, where~$v$ belongs to. 
%
%


For every node~$t$ of~$T$, let $E_t$ be the largest subset s.t.~$E_t{\,\subseteq\,} \{e_{x,y} {\,|\,} x \prec_{\varphi_t} y\} \cup \{e_{s_{V_t}^j,a} | a{\in}\dom(\varphi_t) \text{ has no}\prec_{\varphi_t}\text{predecessor},j{=}\block(a,t)\} {\,\cup\,} \{e_{b, d_{V_t}^j} | b{\in}\dom(\varphi_t) \text{ has no}\prec_{\varphi_t} \text{successor},j{=}\block(a,t)\}$ 
such that for any node~$t'$ of~$T$ with~$t'\neq t$ we have~$E_t\cap E_{t'}{=}\emptyset$.

Figure~\ref{fig:red2} presents the slightly adapted reduction~$\mathcal{R}'$ taken from~$\mathcal{R}$,
where Formulas~(\ref{red:start}), (\ref{red:reached}), (\ref{red:reachfin}), (\ref{red:fintrans}), and~(\ref{red:fin}), are replaced by Formulas~(\ref{red:start2}), (\ref{red:reached2}), (\ref{red:reachfin2}), (\ref{red:fintrans2}), and~(\ref{red:fin2}), respectively. 
%
%
%
%
With this reduction, we obtain the following result.

\begin{theorem}[Treewidth-Awareness of~$\mathcal{R}'$]\label{thm:tw2}
The reduction~$\mathcal{R}'$ from a Boolean formula~$F$ and an ordering-augmented TD~$\mathcal{T}{=}(T,$ $\chi,\varphi,\psi)$ of~$\mathcal{G}_F$ of width~$k$ to normal program~$\Pi$, using some fixed~$2\leq\iota\leq\frac{k}{\log(\iota)}$,  slightly decreases treewidth.
Precisely, 
the treewidth of~$\mathcal{G}_\Pi$ is at most~$\mathcal{O}(\frac{k}{\log(\iota)})$. Further, there is a TD showing that the tightness width of~$\Pi$ is
in~$\mathcal{O}(\iota)$.
\end{theorem}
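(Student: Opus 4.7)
The plan is to mimic the proof of Theorem~\ref{thm:tw}, lifted to the block-partitioned setting of $\mathcal{R}'$. First, I would construct a TD $\mathcal{T}'=(T,\chi')$ of $\mathcal{G}_\Pi$ whose tree is $T$ and whose bags, for every node $t$ of $T$, are defined as $\chi'(t)\eqdef R(t)\cup E(t)\cup P(t)\cup O(t)$. Here, $R(t)$, $E(t)$, $P(t)$, and $O(t)$ are obtained from their counterparts in the proof of Theorem~\ref{thm:tw} by (i) replacing each single source/destination atom $r_{s_{V_t}}, r_{d_{V_t}}, o_t^{s_{V_t}}$ by its block-indexed versions $r_{s_{V_t}^j}, r_{d_{V_t}^j}, o_t^{s_{V_t}^j}$ for every $1\leq j\leq \lceil |V_t|/\iota\rceil$, and (ii) leaving the other atoms unchanged. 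Verifying that $\mathcal{T}'$ is a TD of $\mathcal{G}_\Pi$ will be routine: every rule of $\mathcal{R}'$ places all of its atoms in some bag, and connectedness is inherited from $\mathcal{T}$ together with the auxiliary path nodes introduced by Definition~\ref{def:ord}.

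For the width bound, I would argue that each of $R(t), E(t), P(t), O(t)$ has size $\mathcal{O}(|V_t|)$, since the number of blocks $\lceil|V_t|/\iota\rceil$ is at most $|V_t|$ and $|E_t|\in\mathcal{O}(|V_t|)$ (each block of at most $\iota$ ordering vertices contributes at most $\iota+1$ edges of the form $e_{s_{V_t}^j,\cdot}, e_{\cdot,\cdot}, e_{\cdot,d_{V_t}^j}$). By construction of $V_t$ we have $|V_t|\leq k'$, where $k'$ is the smallest integer with $(\iota!)^{k'/\iota}\geq 2^k$. Taking logarithms on both sides and invoking Observation~\ref{obs:logk}, one obtains $k'\cdot\log(\iota)=\Omega(k)$, whence $k'=\mathcal{O}(k/\log(\iota))$, matching the claimed width.

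For the tightness width, the plan is to exploit that the block partition of $E_t$ already confines every directed cycle of $D_\Pi$ to a single block. The only rules capable of producing cycles are those of Blocks~1, 3, and 4, and as illustrated in Figure~\ref{fig:sketch}, any such cycle has to route through some source/destination pair $r_{s_{V_t}^j}, r_{d_{V_t}^j}$; in $\mathcal{R}'$ these are per-block atoms, so atoms of distinct blocks end up in distinct SCCs. Therefore, for every atom $x$ of $\Pi$, the intersection $\chi'(t)\cap\scc(x)$ consists only of the $\mathcal{O}(\iota)$ auxiliary atoms (of the forms $r_\cdot, e_\cdot, p_\cdot, q_\cdot, o_\cdot$) stemming from the single block that contains $x$. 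The hard part will be ruling out unintended SCC-merging caused by the testing-point and query rules~(\ref{red:copy})--(\ref{red:linkpq}) at neighbouring TD nodes; but since those rules propagate only per-element atoms $p_t^x, q_t^x$ for $x\in\dom(\varphi_t)\cup\dom(\psi_t)$ and never cross-link different source/destination pairs, the block isolation is preserved and the tightness width bound of $\mathcal{O}(\iota)$ follows.
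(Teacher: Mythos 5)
Your construction and width bound coincide with the paper's own proof of Theorem~\ref{thm:tw2}: the same tree $T$ with bags $\chi'(t)=R(t)\cup E(t)\cup P(t)\cup O(t)$ in which the single source/destination atoms are replaced by their block-indexed versions, the observation that $\Card{\chi'(t)}\in\mathcal{O}(\Card{V_t})$, and Observation~\ref{obs:logk} yielding $k'=\mathcal{O}(\frac{k}{\log(\iota)})$; for the tightness width the paper simply counts that $R(t)$, $E(t)$, and $P(t)$ each meet any SCC of $D_\Pi$ in at most $2\iota$, $\iota$, and $4\iota$ atoms, which is in substance the same per-block count you carry out. One detail in your justification is off, although it does not affect the conclusion: cycles of $D_\Pi$ do not ``route through'' the pairs $r_{s_{V_t}^j}$, $r_{d_{V_t}^j}$ at all --- the source atoms are facts (Formulas~(\ref{red:start2})) and the destination atoms have no outgoing positive dependency (they otherwise occur only in constraints), so neither lies on any cycle, already in $\mathcal{R}$. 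The correct reason why SCCs are confined to a single block is that the ordering edges collected in $E_t$ and the $\prec_{\varphi_t}$/$\prec_{\psi_t}$ comparisons driving Formulas~(\ref{red:linkbypass})--(\ref{red:linkpq}) only ever relate elements of the same block, so every cyclic chain of the form $r_y\rightarrow e_{y,x}\rightarrow p_\epsilon^x\rightarrow\cdots\rightarrow r_x$ stays inside one block (the $o$-atoms and $\hat e$-atoms never lie on cycles). With that substitution your conclusion that $\Card{\chi'(t)\cap\scc(x)}\in\mathcal{O}(\iota)$ for every $x$ matches the paper's bound, so the proposal is essentially the paper's argument with a slightly misstated, but repairable, intermediate claim.
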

\begin{proof}[Proof (Sketch)]
We construct a TD~$\mathcal{T}'=(T,\chi')$ of~$\mathcal{G}_\Pi$ to show that the width of~$\mathcal{T}'$ increases only slightly (compared to~$k$).
To this end, let~$t$ be a node of~$T$ with~$\children(t)=\langle t_1, \ldots, t_\ell \rangle$
and let~$\hat t$ be the parent of~$t$ (if exists).
We define (i)~$R(t)\eqdef \{r_x \mid x\in V_t\}\cup \{r_x \mid x\in V_{t'}, t'\in\children(t)\}\cup\{r_{s_{V_t}^j}, r_{d_{V_t}^j} \mid z\in V_t, j=\block(z,t)\}$, 
(ii)~$E(t)\eqdef \{e_{x,y}, \hat e_{x,y}\mid e_{x,y}\in E_t\}$, 
(iii)~$P(t)\eqdef \{q_t^x, p_t^x, p_{t'}^x, p_\epsilon^x \mid x\in \dom(\varphi_t)\cup\dom(\psi_t), t'\in\prev(x,t)\}$, and
(iv)~$O(t)\eqdef \{o_t^x, o_t^{s_{V_t}^j} \mid x\in V_t, j=\block(x,t)\}\cup\{ o_{t'}^x, o_{t'}^{s_{V_{t'}}^j} \mid t'\in\children(t), x\in V_{t'}\cap V_t\}$.
Then, we let
$\chi'(t) \eqdef R(t) \cup E(t) \cup P(t) \cup O(t)$.
%
%
Observe that~$\mathcal{T}'$ is a TD of~$\mathcal{G}_\Pi$ and by construction~$\Card{\chi'(t)}$ is in $\mathcal{O}(\Card{V_t})$.
%
By definition of~$V_t$, $\Card{V_t}\leq k'$, where~$\iota!^{\frac{k'}{\iota}} \geq 2^k$.
Then, since $\iota!^{\frac{k'}{\iota}}$ is in~$2^{\Omega(k'\cdot\log(\iota))}$ (see Observation~\ref{obs:logk}), we have that~$2^{\Omega(k'\cdot\log(\iota))}$ is at least~$2^k$ and therefore~$k'=\mathcal{O}(\frac{k}{\log(\iota)})$. 
Further, by construction of (i), $R(t)$ contains at most~$2\cdot\iota$ many elements of any SCC of~$D_\Pi$; (ii) $E(t)$ has at most $\iota$ many elements of any SCC of~$D_\Pi$; (iii) $P(t)$ intersects with at most $4\cdot\iota$ many elements of any SCC of~$D_\Pi$. Overall~$\Pi$ has a tightness width over~$\mathcal{T}'$ of~$\mathcal{O}(\iota)$. 
\futuresketch{
by constructing~$2^{\mathcal{O}(k'\cdot\log(k'))}$ many interpretations for some~$k'$.
In order to quantify~$k'$, we take the $\log$ on both sides and end up
with $k$ being in $\mathcal{O}(k'\cdot\log(k'))$.
As a result, we have that~$k'$ is bound by~$\mathcal{O}(\frac{k}{\log(k')})$.
Since~$\log(k')$ is in $\Omega(\log(k))$, we end up with~$k'$ being bounded by~$\mathcal{O}(\frac{k}{\log(k)})$.
}
%
%
%
%
\end{proof}

Interestingly, reduction~$\mathcal{R}'$ cannot be significantly improved either. 

\begin{theorem}[Treewidth Decrease of~$\mathcal{R}'$ is Optimal]\label{thm:runtime2}
Assume a reduction 
from a formula~$F$ 
to a normal logic program~$\Pi$ running in time~$2^{o(k)}\cdot\poly(\Card{\var(F)})$, where~$k$ is the treewidth of~$\mathcal{G}_F$.
Then, under ETH, the treewidth of~$\mathcal{G}_\Pi$
cannot be~$o(\frac{k}{\log(\iota)})$ for 
 fixed~$2\leq\iota\leq\frac{k}{\log(\iota)}$. 
\end{theorem}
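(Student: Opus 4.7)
The plan is to parallel the contradiction argument used for Theorem~\ref{thm:runtime1} verbatim, only swapping in Proposition~\ref{prop:iota} (the $\iota$-tight upper bound) for the generic normal-ASP upper bound invoked there. The statement as written implicitly requires the output $\Pi$ to be $\iota$-tight; otherwise the parameter $\iota$ on the right-hand side would be disconnected from $\Pi$ and the claim would be strictly weaker than Theorem~\ref{thm:runtime1}. Under that reading the argument reduces to a direct calculation.

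Concretely, I would assume towards a contradiction that a reduction $\mathcal{R}^*$ with the claimed runtime exists. Applied to an arbitrary formula $F$ with $\tw{\mathcal{G}_F}=k$, it produces a normal, $\iota$-tight program $\Pi=\mathcal{R}^*(F)$ whose primal graph has treewidth $k'\in o(\frac{k}{\log(\iota)})$. Invoking Proposition~\ref{prop:iota} on $\Pi$ decides consistency in time $2^{\mathcal{O}(k'\cdot\log(\iota))}\cdot\poly(\Card{\at(\Pi)})$. Because $\Card{\at(\Pi)}$ is polynomial in $\Card{\var(F)}$ by the assumed reduction runtime, this simplifies to $2^{\mathcal{O}(k'\cdot\log(\iota))}\cdot\poly(\Card{\var(F)})$. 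Chaining with the reduction cost $2^{o(k)}\cdot\poly(\Card{\var(F)})$ gives a satisfiability algorithm for $F$ of the same form, and since $k'\cdot\log(\iota)\in o(\frac{k}{\log(\iota)}\cdot\log(\iota))=o(k)$, the total runtime is $2^{o(k)}\cdot\poly(\Card{\var(F)})$, contradicting ETH.

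The main obstacle I anticipate is conceptual rather than technical: one has to fix the reading so that the $\iota$ quantified on the right-hand side matches the tightness parameter of the produced program, and one must verify that the hypothesis of Proposition~\ref{prop:iota}---a $\iota$-tight program together with a TD witnessing that tightness---actually transfers to $\Pi$ produced by an arbitrary hypothetical $\mathcal{R}^*$. Once the statement is read that way, the remaining bookkeeping (polynomial factors staying polynomial and the exponent being dominated by $k'\cdot\log(\iota)$) is routine, and the entire modification relative to Theorem~\ref{thm:runtime1} amounts to replacing $\log(k)$ by $\log(\iota)$ in the exponent estimate.
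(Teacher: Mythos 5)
Your proposal matches the paper's own proof essentially verbatim: the paper likewise argues by contradiction, applies the $\iota$-tight upper bound (Proposition~\ref{prop:iota}, via the cited result of Fandinno and Hecher) to decide $\Pi$ in time $2^{\mathcal{O}(k'\cdot\log(\iota))}\cdot\poly(\Card{\var(F)})$ with $k'\in o(\frac{k}{\log(\iota)})$, and chains this with the assumed reduction runtime to obtain a $2^{o(k)}\cdot\poly(\Card{\var(F)})$ algorithm for \SAT, contradicting ETH. The tightness-transfer point you flag is exactly the spot the paper treats tersely, by appealing to Theorem~\ref{thm:tw2} to assert that $\Pi$ admits tightness width $\iota$ on a TD of width $k'$, i.e., by reading the hypothetical reduction as one that, like $\mathcal{R}'$, comes with such a witnessing decomposition.
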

\begin{proof}
Assume towards a contradiction that such a reduction, call it~$\mathcal{R}^*$, exists.
Then, we apply this reduction on any~$F$ and for~$\iota$, resulting in program~$\Pi=\mathcal{R}^*(F)$.
We know that~$\Pi$ can be decided~\cite{FandinnoHecher21} in time~$2^{\mathcal{O}(k'\cdot\log(\iota))}\cdot\poly(\Card{\var(F)})$, 
where~$k'$ is in~$o(\frac{k}{\log(\iota)})$, since~$\Pi$ admits tightness width~$\iota$ on some TD of width~$k'$ by Theorem~\ref{thm:tw2}. 
As a result, we have that~$\Pi$ and therefore~$F$ can be decided in time~$2^{o(\frac{k}{\log(\iota)}\cdot\log(\iota))}\cdot\poly(\Card{\var(F)})$, which is in~$2^{o(k)}\cdot\poly(\Card{\var(F)})$, contradicting the ETH. 
%
\end{proof}

Further, we show the following SCC bound, cf., Lemma~\ref{cor:cycle}.

\begin{lemma}[Largest SCC Size by~$\mathcal{R}'$]\label{cor:scc2}
Let~$F$ be a Boolean formula, and~$\mathcal{T}=(T,\chi,\varphi,\psi)$ be an ordering-augmented TD
of~$\mathcal{G}_F$ of width $k$.
Then, the reduction~$\mathcal{R}'$ constructs for some 
 fixed~$2\leq\iota\leq k'$, a program
with
strongly connected components (SCCs) of~$D_\Pi$ of size at most
~$2^{\mathcal{O}(k'\cdot\log(\iota))}$ with~$k'=\frac{k}{\log(\iota)}$,
which equals~$2^{\mathcal{O}(k)}$.
%
\end{lemma}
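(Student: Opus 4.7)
The plan is to adapt the argument of Lemma~\ref{cor:cycle} to the partitioned-block setting introduced by~$\mathcal{R}'$. The key structural observation is that in~$\mathcal{R}'$ the ordering vertices~$V_t$ are partitioned into $\frac{k'_t}{\iota}$ blocks of size at most~$\iota$, and the partial orderings~$\varphi_t,\psi_t$ only totally order vertices inside a single block. This block separation is already reflected in the adapted Formulas~(\ref{red:start2}), (\ref{red:reached2}), (\ref{red:reachfin2}), (\ref{red:fintrans2}), (\ref{red:fin2}) — the source and destination atoms $r_{s_{V_t}^j}, r_{d_{V_t}^j}$ are indexed per block — and in the block-respecting definition of~$E_t$, which by construction contains only edges between $\prec_{\varphi_t}$-related vertices plus per-block source/destination hooks.

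First, I would argue that every SCC of~$D_\Pi$ lies entirely within atoms associated with a single block. As in Lemma~\ref{cor:cycle}, the only potentially cyclic rules are those of Formulas~(\ref{red:choose})--(\ref{red:linkroot}) and~(\ref{red:copy})--(\ref{red:linkpq}). Inspecting these, every head atom and positive body atom pertaining to an ordering vertex~$x$ refers only to vertices in the same block as~$x$: the reachability atoms~$r_x$, edges~$e_{x,y}$, initial testing points~$p_\epsilon^x$, and the query and testing atoms~$q_t^x, p_t^x$ all inherit the block restriction from~$\prec_{\varphi_t}$ and~$\prec_{\psi_t}$ occurring in Formulas~(\ref{red:linkbypass})--(\ref{red:linkpq}).

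Next, I would count the atoms that can occur in such a per-block SCC. For a single block of size at most~$\iota$, there are $\mathcal{O}(\iota)$ reachability atoms, $\mathcal{O}(\iota^2)$ edge atoms (including the $\hat e$ variants), and — the dominating contribution — up to $(\iota!)^2$ many atoms of the form $p_t^x, q_t^x$, since in the ordering-augmented TD a block-restricted pair~$(\varphi_t,\psi_t)$ ranges over at most $\iota!\cdot\iota!$ combinations and distinct combinations live in distinct TD nodes by Definition~\ref{def:ord}. The factor $(\iota!)^2$ is in~$2^{\mathcal{O}(\iota\cdot\log(\iota))}$ by Observation~\ref{obs:klogk}, and using $\iota\leq k'$ this is at most~$2^{\mathcal{O}(k'\cdot\log(\iota))}$. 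The identity~$k'\cdot\log(\iota)=\frac{k}{\log(\iota)}\cdot\log(\iota)=k$ then yields the claimed~$2^{\mathcal{O}(k)}$.

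The one place where care is genuinely required is checking that no cross-block cycle sneaks in through the Block~2 atoms~$o_t^y$ or through the propagation Formulas~(\ref{red:copy})--(\ref{red:linkbypass2}); however, these are purely acyclic bottom-up propagations along~$T$ for a fixed vertex~$y$, so they only add edges from a child node's copy of an atom to its parent node's copy and contribute no intra-SCC edges beyond those already accounted for by the block-local count above.
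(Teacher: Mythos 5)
There is a genuine gap in the counting step. Your structural observation that every SCC only involves ordering vertices of a single block is fine (it is essentially why $\mathcal{R}'$ achieves tightness width $\mathcal{O}(\iota)$), but you then conclude that the dominating atoms $p_t^x,q_t^x$ number at most $(\iota!)^2$ per block, on the grounds that ``a block-restricted pair $(\varphi_t,\psi_t)$ ranges over at most $\iota!\cdot\iota!$ combinations and distinct combinations live in distinct TD nodes.'' That premise misreads Definition~\ref{def:ord} in the $\mathcal{R}'$ setting: each ordering in $\ord(t)$ is a \emph{combination of block orderings over all} $\frac{k'}{\iota}$ blocks, and the ordering-augmented TD introduces one fresh node for every pair in $\ord(t)\times\ord(t')$, i.e.\ roughly $\bigl((\iota!)^{k'/\iota}\bigr)^2$ nodes per neighboring pair. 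Two full combinations that agree on the block of $x$ still yield distinct nodes $t_i\neq t_j$ and hence distinct atoms $p_{t_i}^x\neq p_{t_j}^x$, and all of these lie on the upward propagation chain (Formulas~(\ref{red:copy})--(\ref{red:linkpq})) that eventually feeds $r_x$ via Formulas~(\ref{red:linkplast}) and~(\ref{red:linkroot}); so they all belong to the same SCC as $r_x$, $e_{x,y}$, $p_\epsilon^x$. Consequently the true dominating count is $\bigl((\iota!)^{\frac{k'}{\iota}}\bigr)^2 = 2^{\mathcal{O}(k'\cdot\log(\iota))}$, not $(\iota!)^2$; your figure undercounts by an exponential factor whenever $k'$ is substantially larger than $\iota$.

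The lemma's bound is of course not violated — $2^{\mathcal{O}(k'\cdot\log(\iota))}$ is exactly what is claimed — but your proof as written rests on a false intermediate claim about the SCC's size, so it does not establish the statement. Once the count is corrected to ``one testing/query atom per fresh node, with $\bigl((\iota!)^{k'/\iota}\bigr)^2$ fresh nodes per pair of original neighbors,'' the argument becomes the paper's: the $r_x$, $p_\epsilon^x$ atoms contribute $\mathcal{O}(k')$, the edge atoms $\mathcal{O}(k'^2)$, the $p_t^x,q_t^x$ atoms $\bigl((\iota!)^{\frac{k'}{\iota}}\bigr)^2$, and the last term dominates, giving $2^{\mathcal{O}(k'\cdot\log(\iota))}=2^{\mathcal{O}(k)}$. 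Your remark about Block~2 atoms $o_t^y$ and the propagation formulas being acyclic is correct and consistent with the paper, which likewise excludes them from the potentially cyclic rules.
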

\begin{proof}
Observe that the largest SCC of~$D_\Pi$ can only be
due to
Formulas~(\ref{red:choose}), (\ref{red:reach}), (\ref{red:reachfin2}), (\ref{red:linkplast}), (\ref{red:linkroot}) and (\ref{red:copy})--(\ref{red:linkpq}),
as these form the only potentially cyclic rules. 
%
The SCC is constructed over variables of the form~$r_x$, $e_{x,y}$, $p_\epsilon^x$, $p_{t}^x$, $q_t^x$ with~$x,y\in V_{t^*}$ for some~$t,t^*$ of~$T$ such that~$V_t=V_{t^*}$.
Thereby, in the SCC we have at most~$k'$ many variables of the form~$r_x$ and~$p_\epsilon^x$ and at most~$k'^2$ many variables of the form~$e_{x,y}$.
However, there are at most~${(\iota!)^{\frac{k'}{\iota}}}^2$ many variables of the form~$p_t^x$ and~$q_t^x$
due to the fact that any potential combination of two orderings~$\varphi_t,\psi_t$ is analyzed by Formulas~(\ref{red:linkbypass})--(\ref{red:linkpq}).
%
%
Obviously, ${(\iota!)^{\frac{k'}{\iota}}}^2$ dominates the SCC size, which is bounded by~${2^{{\mathcal{O}(k'\cdot\log(\iota))}}}^2 = 2^{\mathcal{O}(2\cdot k'\cdot\log(\iota))} = 2^{\mathcal{O}(k'\cdot\log(\iota))}$. 
Then, $2^{\mathcal{O}(k'\cdot\log(\iota))} = 2^{\mathcal{O}(\frac{k}{\log(\iota)}\cdot\log(\iota))} = 
2^{\mathcal{O}(k)}$ establishes the claim. 
%
%
\end{proof}

\begin{restatetheorem}[cor:hscc]
\begin{theorem}[LB $\iota$-Tightness]
Let~$\Pi$ be a $\iota$-tight logic program, where the treewidth of~$\mathcal{G}_\Pi$ is~$k$
such that the largest SCC size of~$D_\Pi$ is in~$2^{\mathcal{O}(k\cdot\log(\iota))}$.
Then, under ETH, the consistency of~$\Pi$ cannot be decided in time~$2^{o(k\cdot\log(\iota))}\cdot\poly(\Card{\at(\Pi)})$.
\end{theorem}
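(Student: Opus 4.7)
The plan is to mimic the strategy used for Corollary~\ref{cor:hcycle}, but replacing every ingredient about the normal reduction~$\mathcal{R}$ by its $\iota$-tight counterpart~$\mathcal{R}'$. Concretely, I would argue by contradiction: assume there is an algorithm deciding consistency of any $\iota$-tight normal program~$\Pi$ whose largest SCC in~$D_\Pi$ has size at most~$2^{\mathcal{O}(k\cdot \log(\iota))}$ in time~$2^{o(k\cdot\log(\iota))}\cdot\poly(\Card{\at(\Pi)})$, where~$k$ is the treewidth of~$\mathcal{G}_\Pi$. The goal is to use such an algorithm, together with~$\mathcal{R}'$, to solve arbitrary \SAT instances in sub-exponential time, contradicting the ETH.

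The concrete steps would be: (i) take an arbitrary Boolean formula~$F$ together with a nice TD of~$\mathcal{G}_F$ of width~$k^\star$, turn it into an ordering-augmented TD via Definition~\ref{def:ord}, and apply reduction~$\mathcal{R}'$ with some fixed~$2\leq\iota\leq k^\star/\log(\iota)$ to obtain a normal program~$\Pi$ whose answer sets correspond bijectively to the satisfying assignments of~$F$ (correctness being the adaptation of Theorem~\ref{thm:corr} to~$\mathcal{R}'$); (ii) invoke Theorem~\ref{thm:tw2} to deduce that~$\mathcal{G}_\Pi$ admits a TD whose width~$k$ is in~$\mathcal{O}(k^\star/\log(\iota))$ and whose tightness width is in~$\mathcal{O}(\iota)$, so that~$\Pi$ is indeed $\iota$-tight; (iii) invoke Lemma~\ref{cor:scc2} to deduce that the largest SCC in~$D_\Pi$ has size at most~$2^{\mathcal{O}(k\cdot\log(\iota))}$, so that~$\Pi$ meets the preconditions of the assumed algorithm.

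The final step is then a routine chaining of time bounds. Running the assumed algorithm on~$\Pi$ solves~$F$ in time
\begin{equation*}
2^{o(k\cdot \log(\iota))}\cdot\poly(\Card{\at(\Pi)}) \;=\; 2^{o((k^\star/\log(\iota))\cdot\log(\iota))}\cdot\poly(\Card{\var(F)}) \;=\; 2^{o(k^\star)}\cdot\poly(\Card{\var(F)}),
\end{equation*}
which contradicts the ETH on \SAT. The only place where~$\iota$ appears both as denominator (via Theorem~\ref{thm:tw2}) and as factor (in the exponent of the assumed runtime) is this cancellation, which is exactly why the bound has to read $2^{o(k\cdot\log(\iota))}$ and cannot be tightened further.

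The main obstacle is not the chaining itself but making sure that~$\mathcal{R}'$ really satisfies each of the three ingredients simultaneously: correctness needs that the partial orderings per block correctly encode the~$2^k$ assignments per bag, which requires re-running the argument of Lemmas~\ref{lem:compat1} and~\ref{lem:compat} with $\frac{k'}{\iota}$ source/destination pairs instead of one; $\iota$-tightness of~$\Pi$ requires that cycles in~$D_\Pi$ stay \emph{within} one block per bag, which is ensured by our block-wise choice of~$E_t$ and the adapted Formulas~(\ref{red:start2})--(\ref{red:fin2}); and the SCC bound of Lemma~\ref{cor:scc2} must remain valid even though several blocks share the same $p_t^x, q_t^x$ infrastructure. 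Once these three ingredients are in place, the ETH reduction above goes through essentially verbatim.
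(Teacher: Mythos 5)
Your proposal is correct and follows essentially the same route as the paper's own proof: assume a fast algorithm, reduce an arbitrary \SAT instance via~$\mathcal{R}'$, invoke Theorem~\ref{thm:tw2} for the treewidth and tightness-width bounds and Lemma~\ref{cor:scc2} for the SCC bound, and chain the runtimes so that the $\log(\iota)$ factors cancel to yield $2^{o(k^\star)}$, contradicting ETH. Your explicit remark that correctness of~$\mathcal{R}'$ (the adaptation of Theorem~\ref{thm:corr} and Lemmas~\ref{lem:compat1}--\ref{lem:compat} to multiple source/destination pairs per bag) must be re-verified is a point the paper leaves implicit, but it does not change the argument.
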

\end{restatetheorem}
\begin{proof}
The result follows from 
the treewidth-awareness by Theorem~\ref{thm:tw2} and the largest SCC size bound by Lemma~\ref{cor:scc2}.
%
Assume towards a contradiction that
one can solve~$\Pi$ in time~$2^{o(k\cdot\log(\iota))}\cdot\Card{\poly(\at(\Pi))}$.
 %
%
%
%
%
So, we take an arbitrary Boolean formula~$F$ and an ordering-augmented TD~$\mathcal{T}$ of~$\mathcal{G}_F$ of width~$k'$ and construct~$\Pi$ by means of~$\mathcal{R}'$ on~$\iota$.
%
Then, by Lemma~\ref{cor:scc2}, the largest SCC of~$D_\Pi$ is bounded by~$2^{\mathcal{O}(k\cdot\log(\iota))}$. Further, by Theorem~\ref{thm:tw2},  the treewidth~$k$ of~$\Pi$ is bounded by~$\mathcal{O}(\frac{k'}{\log(\iota)})$ and there is a TD demonstrating tightness width~$\mathcal{O}(\iota)$ of~$\Pi$. 
So assuming that~$\Pi$ can be decided in time~$2^{o(k\cdot\log(\iota))}\cdot\poly(\Card{\at(\Pi)})$,
results in solving~$F$ in time~$2^{o(\frac{k'}{\log(\iota)}\cdot\log(\iota))}\cdot\poly(\Card{\at(F)}) = 2^{o(k')}\cdot \poly(\Card{\var(F)})$, which contradicts the ETH.
\end{proof}

\end{document}

\clearpage

\section{Why \ASP Consistency is Harder than \SAT}

Even further, there are some crucial key points that need to be fulfilled:
\begin{itemize}
	\item Decouple different assignments
	\item Query the order of atoms
	\item Ensure the order is synchronized among different nodes
\end{itemize}

Assumptions
\begin{itemize}
	\item Every variable appears in at most two TD nodes of a normalized TD 
	\item $V_t$ contains the vertices used to address assignments over~$\chi(t)$ 
	\item Start vertex~$s_t$ 
	for every leaf node~$t$; has edges to every other vertex in~$V_t$ 
	\item For the root node~$t$ of~$T$, there is a final destination vertex~$d_t$ and edges from every~$V_t$ to~$d_t$ 
	\item $E_t$ contains for non-join node~$t$ of~$T$, edges among new variables~$V_t$ in~$t$ used for capturing orderings as well as from~$V_{t'}$ to~$V_t$ (directed) for non-join child nodes~$t$; further edges from~$s_t$ to every vertex in~$V_t$ for every leaf node~$t$ are also in~$E_t$ (as mentioned in the previous point)
	\item $\pred(x,t)\eqdef \children(t) \text{ if } x\in V_{t'} \text{ for some }t'\in \children(t)\text{, otherwise }\{\epsilon\}$
%
	\item for non-join node~$t$ we have that $\varphi_t$ is over variables~$V_t$ (or empty) and $\psi_t$ is over variables~$V_{t^\star}$, where~$t^\star$ is the closest ancestor of~$t$ in~$T$ (without going over join nodes) such that $V_t \neq V_{t^\star}$ (or empty). 
	\item there, we define~$\prec_{\varphi_t}$ and~$\prec_{\psi_t}$ to mark the \emph{direct neighbor relation}
	\item for non-join node~$t$, we let~$\mathcal{I}_t: \ord(t) \rightarrow 2^{\chi(t)}$ be the mapping from ordering among~$V_t$ to assignment of variables in~$\chi(t)$
	\item every pair of~$\varphi_t, \psi_t$ has to occur
\end{itemize}

Extensions for~$\iota$ with~$1\leq i \leq k$:
We need more than one start~$s$, do not cross swords!
Also, one needs to ensure that each~$\psi_t$ or $\varphi_t$ has at least~$\frac{k}{\iota}$ many SCC parts (i.e., at least one vertex per SCC is sufficient).

\begin{lemma}[Inconsistency Causes Cycle]
Let~$F$ be a Boolean formula, $\mathcal{T}=(T,\chi)$ be a TD of~$\mathcal{G}_F$.
Further, let~$I$ be any interpretation of~$F$.
Then, there is a corresponding answer set
\end{lemma}

New approach: Only directed from the leaves towards the root,
start nodes in every leaf node. it seems that the largest cycle size
can be $2^{\mathcal{O}(k\cdot\log(k))}$, i.e., also the cycle
does not need to be "well-spread", namely using distance about~$2^{\mathcal{O}(k\cdot\log(k))}$.
Note that this fragment already requires ordering / numbers $\log(2^{\mathcal{O}(k\cdot\log(k)})$.

First lemma: 
Then, there cannot be a node~$t'$ of~$T$
such that~$\varphi_t=\langle a_1 \ldots, a_\ell\rangle$ is incompatible 
with~$\psi_t=\langle a'_1, \ldots, a'_{\ell'}\rangle$
and we have that both~$\{q_t^{a_\ell}, q_t^{a'_{\ell'}}\}\subseteq M$.

Second lemma:
Then, we have that there is no cycle (due to out degree at most one + every one needs to be reached; also, join nodes wait for every child reachability).

Third lemma:
Then, for every~$V_t$ precisely one ordering is decided / contained in such an answer set.
Assume that this is not the case, i.e., no ordering is compatible. however, lemma 2?

Then, there is an ordering~$\varphi$ over~$\at(\Pi)$,
where every atom of~$M$ is proven. 
%
%
Next, we construct a model~$I$ of~$F$ as follows.
For each~$x\in\at(\Pi)$, we let (c1) $x\in I$ if $x\in M$.
For each node~$t$ of~$T$, 
and~$x\in\chi(t)$: 
(c2) For every~$l\in \bvali{x}{t}{i}$ with~$i=\hat\varphi_t(x)$, 
we set $l\in I$ if~$l$ is a variable. 
(c3) If there is a rule~$r\in\Pi_t$ proving~$x$, 
we let both~$p^x_{<t}, p^x_{t}\in I$.
Finally, (c4) we set $p^x_{<t} \in I$, if~$p^x_{<{t'}}\in I$ for~$t'\in\children(t)$. 

It remains to show that~$I$ is indeed a model of~$F$.
By (c1), Formulas~(\ref{red:checkrules}) are satisfied by~$I$.
Further, by (c2) of~$I$, the order of~$\varphi$ is preserved
among~$\chi(t)$ for each node~$t$ of~$T$, therefore Formulas~(\ref{red:prop}) are satisfied by~$I$.
Further, by definition of TDs, for each rule~$r\in \Pi$ there is a node~$t$ with~$r\in\Pi_t$.
Consequently, $M$ is proven with ordering~$\varphi$, 
for each~$x\in M$ there is a node~$t$ and a rule~$r\in\Pi_t$ proving~$x$.
Then, Formulas~(\ref{red:checkfirst}) are satisfied by~$I$ due to (c3), 
and Formulas~(\ref{red:check}) are satisfied by~$I$ due to (c4). 
Finally, by connectedness of TDs, also Formulas~(\ref{red:checkremove}) and~(\ref{red:checkremove2}) are satisfied.

%

%

``$\Leftarrow$'':
Given any model~$I$ of~$F$.
Then, we construct an answer set~$M$ of~$\Pi$ as follows.
We set~$a\in M$ if~$a\in I$ for any~$a\in\at(\Pi)$.
We define for each node~$t$ a $t$-local ordering~$\varphi_t$,
where we set $\varphi_t(x)$ to~$j$ for each~$x\in\chi(t)$ such that~$j$ is the decimal number of the binary number for~$x$ in~$t$ given by~$I$.
Concretely, $\varphi_t(x)\eqdef j$, where~$j$ is such $I \models\bvali{x}{t}{j}$.
Then, we define an ordering~$\varphi$ iteratively as follows.
We set~$\varphi(a)\eqdef 0$ for each~$a\in \at(\Pi)$,
where there is no node~$t$ of~$T$ with~$\varphi_t(b) < \varphi_t(a)$.
Then, we set~$\varphi(a)\eqdef 1$ for each~$a\in \at(\Pi)$,
where there is no node~$t$ of~$T$ with~$\varphi_t(b) < \varphi_t(a)$ for some~$b\in\chi(t)$ not already assigned in the previous iteration, and so on. 
In turn, we construct~$\varphi$ iteratively by assigning increasing values to~$\varphi$.
Observe that $\varphi$ is well-defined, i.e., each atom~$a\in\at(\Pi)$ gets a unique value since it cannot be the case for two nodes~$t,t'$ and atoms $x,x'\in\chi(t)\cap\chi(t')$ that~$\varphi_t(x) < \varphi_t(x')$, but~$\varphi_{t'}(x) \geq \varphi_{t'}(x')$.
Indeed, this is prohibited by Formulas~(\ref{red:prop})
and connectedness of~$\mathcal{T}$ ensuring that $\mathcal{T}$ restricted to~$x$ is still 
connected.

It remains to show that~$\varphi$ is an ordering for~$\Pi$ proving~$M$.
Assume towards a contradiction that there is an atom~$a\in M$
that is not proven.
%
Observe that either~$a$ is in the bag~$\chi(n)$ of the root node~$n$ of~$T$, or it is forgotten below~$n$.
In both cases we require a node~$t$ such that~$p^x_{<t} \notin I$
by Formulas~(\ref{red:checkremove2}) and~(\ref{red:checkremove}), respectively.
Consequently, by connectedness of $\mathcal{T}$ and  Formulas~(\ref{red:check}) there is a node~$t'$,
where~$p_{t'}^x \in I$.
But then, since Formulas~(\ref{red:checkfirst}) are satisfied by~$I$,
there is a rule~$r\in\Pi_{t'}$ proving~$a$ with~$\varphi_{t'}$.
Therefore, since by construction of~$\varphi$, there cannot be a node~$t$ of~$T$ with~$x,x'\in\chi(t)$, $\varphi_t(x) < \varphi_t(x')$, but~$\varphi(x) \geq \varphi(x')$, 
$r$ is proving~$a$ with~$\varphi$. 
%
%
%
%
\smallskip
\noindent\textbf{
Strengthening the \SAT Formula.}
Next, we strengthen the previous reduction to ensure to
get rid of duplicate $\mathcal{T}$-local orderings
for a particular answer set of~$\Pi$.
%
%
%

In Formulas~(\ref{red:cnt:ineq}), we ensure that if a variable~$x\in\at(\Pi)$ is set to false, then its ordering position is zero.
Formulas~(\ref{red:cnt:ineq2}) make sure that if the position of~$x$ is set to~$i\geq 1$ in node~$t$, there has to be a bag atom~$y$ having position~$i-1$. 
Intuitively, if this is not the case we could shift the position of~$x$ from~$i$ to~$i-1$.
Finally, Formulas~(\ref{red:cnt:checkfirst}) ensure that whenever in a node~$t$ there is a rule~$r\in\Pi_t$ with~$x\in H_r$ and~$x$ has position~$i\geq 1$, 
either 
there is at least one atom~$y\in B_r^+$ having position~$i-1$, or $r$ is not proving~$x$.
%
%
%
%
%
%
{
\vspace{-.25em}
\begin{align}
	\label{red:cnt:ineq}&\neg x \longrightarrow \bigwedge_{1 \leq j \leq \ceil{\log(\Card{\chi(t)})}}\hspace{-.75em}\neg b_{x_t}^j&&{\text{for each } x\in \chi(t)}\raisetag{2.5em}\\
	\label{red:cnt:ineq2}&\bvali{x}{t}{i} \longrightarrow\hspace{-1em}\bigvee_{y \in \chi(t)\setminus\{x\}}\hspace{-1em}\bvali{y}{t}{i-1}&&{\text{for each } x\in \chi(t),}\notag\\[-1.5em]
	&&&{1 \leq i < {\Card{\chi(t)}}}\raisetag{1.15em}\\
	&\hspace{-1em}\bigwedge_{r\in\Pi_t, x\in H_r, 1 \leq i <\Card{\chi(t)}}\hspace{-2em}(\bvali{x}{t}{i}\longrightarrow 
	\label{red:cnt:checkfirst}&&\hspace{-1.75em}\bigvee_{a\in B_r^+}\hspace{-.75em}\neg a \vee {(a \not\prec_t x)} \vee\hspace{-2.5em}\bigvee_{b\in B_r^- \cup (H_r \setminus \{x\})}\hspace{-2.5em}b\, \vee\notag\\ 
	&
	\bigvee_{y\in B_r^+}\hspace{-.25em}\bvali{y}{t}{i-1})&&{\text{for each } x\in \chi(t)}\raisetag{2.5em} 
	%
	%
	%
	%
	%
\end{align}%
\vspace{-.65em}
}

In general, we do not expect to get rid of
all redundant $\mathcal{T}$-local orderings for an answer set, though.
The reason for this expectation lies in the fact that
the different (chains of) rules required for setting the position for an atom~$a$
might be ``spread'' among the whole tree decomposition.
Consequently, one would need to compare different,
\emph{absolute} values of orderings, cf.,~\cite{Janhunen06}, instead of the 
ordering positions relative to one TD node as presented here, 
which requires to store for each atom in the worst case numbers up to~$\Card{\at(\Pi)}$.
Obviously, this number is then not bounded by the treewidth,
and one cannot encode it without increasing the treewidth in general.
However, if for each answer set~$M$ of~$\Pi$, and every~$a\in M$, there can be only one rule~$r\in \Pi$, where~$a\in H_r \cap M$ and~$M\cap (H_r\setminus\{a\})=\emptyset$,
that is satisfied by~$M$,
then there is a bijective correspondence between answer sets of~$\Pi$ and models of Formulas~(\ref{red:checkrules}) to~(\ref{red:cnt:checkfirst}).
One example of such programs is constructed in the next section.

This section concerns the hardness of ASP consistency when
considering treewidth.
The high-level reason for \ASP being harder than \SAT when
assuming bounded treewidth, lies in the issue
that a TD, while capturing the structural dependencies
of a program, might force an evaluation that is
completely different from the orderings proving answer sets. 
Consequently, during dynamic programming for \ASP, one needs to store in each table~$\tab{t}$ for each node~$t$ during post-order traversal, in addition to an interpretation (candidate answer set), also an ordering among the atoms in those interpretations.
We show 
that under reasonable assumptions in complexity theory,
this worst-case cannot be avoided. Then, the resulting runtime consequences cause \ASP to be slightly harder than \SAT, where 
in contrast to \ASP
storing a table~$\tab{t}$ of only assignments for each node~$t$~suffices.

\begin{figure}
	\centering%
	\begin{tikzpicture}[node distance=7mm,every node/.style={fill,circle,inner sep=2pt}]%
		\node (s1) [label={[text height=1.5ex,yshift=0.0cm,xshift=0.05cm]left:$s_1$}] {};
		\node (s2) [right=of s1,label={[text height=1.5ex,yshift=0.0cm,xshift=0.05cm]left:$s_2$}] {};
		\node (s3) [right=of s2,label={[text height=1.5ex,yshift=0.0cm,xshift=0.05cm]left:$s_3$}] {};
		%
		\node (d1) [above of=s1,yshift=.75em,label={[text height=1.5ex,xshift=0.05cm]left:$d_1$}] {};
		\node (d2) [above of=s2,yshift=.75em,label={[text height=1.5ex,xshift=0.05cm]left:$d_2$}] {};
		\node (d3) [above of=s3,yshift=.75em,label={[text height=1.5ex,xshift=0.05cm]left:$d_3$}] {};
		%
		%
		\node (x1) [rectangle,above=of s1,yshift=-1.15em,xshift=.35em,label={[text height=1.5ex,yshift=0.0cm,xshift=0.05cm]left:$x$}] {};
		\node (x2) [rectangle,above=of s2,yshift=-.85em,xshift=.65em,label={[text height=1.5ex,yshift=0.0cm,xshift=0.05cm]left:$y$}] {};
		\node (x4) [rectangle,above=of s3,yshift=-1.4em,label={[text height=1.5ex,yshift=0.0cm,xshift=0.05cm]left:$z$}] {};
		%
		%
		%
		\draw [->,thick] (s1) to (x1);
		\draw [->,ultra thick, red] (s1) to (x1);
		\draw [->,thick] (s2) to (x1);
		\draw [->,thick] (x1) to (d1);
		\draw [->,ultra thick,red] (x1) to (d1);
		\draw [->,thick] (x2) to (d2);
		\draw [->,ultra thick, red] (x2) to (d2);
		\draw [->,thick] (s2) to (x2);
		\draw [->,ultra thick, red] (s2) to (x2);
		\draw [->,thick] (s3) to (x4);
		\draw [->,ultra thick, red] (s3) to (x4);
		\draw [->,thick] (x4) to (d3);
		\draw [->,ultra thick, red] (x4) to (d3);
		\draw [->,thick] (x2) to (d3);
		\draw [<->,thick] (x4) to (x2);
	\end{tikzpicture}%
	\begin{tikzpicture}[node distance=0.75mm]
\tikzset{every path/.style=thick}

\node (leaf1) [tdnode,label={[yshift=-0.25em,xshift=0.0em]left:$t_1$}] {$\{s_1,x\}$};
\node (leaf1b) [tdnode, right=of leaf1,label={[yshift=0.25em,xshift=0.65em]below right:$t_3$}] {$\{s_2,x,y\}$};
\node (leaf1c) [tdnode,below=of leaf1b,xshift=-.5em,label={[yshift=-0.25em,xshift=0.25em]left:$t_2$}] {$\{d_2,y\}$};
\node (leaf2) [tdnode,label={[xshift=-1.0em, yshift=-0.15em]above right:$t_5$}, right = of leaf1b]  {$\{y,d_3,z\}$};
\node (leaf2b) [tdnode,label={[xshift=-.25em, yshift=-0.15em] right:$t_4$}, below = of leaf2]  {$\{s_3,z\}$};
\coordinate (middle) at ($ (leaf1.north east)!.5!(leaf2.north west) $);
\node (join) [tdnode,ultra thick,label={[]left:$t_6$}, above  = .75mm of middle] {$\{x,y,d_1\}$};


\draw [<-] (join) to (leaf1);
\draw [<-] (join) to (leaf1b);
\draw [<-] (join) to (leaf2);
\draw [<-] (leaf1b) to (leaf1c);
\draw [<-] (leaf2) to (leaf2b);
\end{tikzpicture}
	\vspace{-.45em}
	\caption{An instance $I=(G,P)$ (left) of the \problemFont{Disjoint Paths Problem} and a TD of~$G$ (right).}
	\label{fig:disjpaths}
\end{figure}

We show our novel hardness result by reducing from the \problemFont{(directed) Disjoint Paths Problem}, which is a graph problem defined as follows.
Given a directed graph~$G=(V,E)$, and a set~$P\subseteq V\times V$ 
of disjoint pairs of the form $(s_i, d_i)$ consisting of 
\emph{source}~$s_i$ and \emph{destination}~$d_i$,
where~$s_i, d_i\in V$ such that each vertex occurs at most once in~$P$, 
i.e., $\Card{\bigcup_{(s_i,d_i)\in P}\{s_i,d_i\}}=2\cdot\Card{P}$.
Then, $(G,P)$ 
is an instance of the \problemFont{Disjoint Paths Problem},
asking whether there exist $\Card{P}$ many (vertex-disjoint) paths
from~$s_i$ to~$d_i$ for~$1\leq i\leq \Card{P}$.
Concretely, each vertex of~$G$ is allowed to appear in at most one of these paths.
For the ease of presentation, we assume without loss of generality~\cite{LokshtanovMarxSaurabh11}
that sources~$s_i$ have no incoming edge $(x,s_i)$,
and destinations~$d_i$ have no outgoing edge~$(d_i,x)$.
%

\begin{example}
Figure~\ref{fig:disjpaths} (left) shows an instance~$I=(G,P)$ of the \problemFont{Disjoint Paths Problem},
where $P$ consists of pairs of the form $(s_i,d_i)$.
The only solution to~$I$ is both emphasized and colored in red.
Figure~\ref{fig:disjpaths} (right) depicts a TD of~$G$.
\end{example}

While under ETH, \SAT cannot be solved in time~$2^{o(k)}\cdot\poly(\Card{\var(F)})$,
where~$k$ is the treewidth of the primal graph of a given propositional formula~$F$,
the  \problemFont{Disjoint Paths Problem} is considered to be even harder.
Concretely, the problem has been shown to be slightly superexponential
as stated in the following proposition.
\begin{proposition}[\cite{LokshtanovMarxSaurabh11}]\label{prop:slightlysuper}
Under ETH, the~\problemFont{Disjoint Paths Problem} is slightly superexponential,
i.e., any instance~$(G,P)$ with~$G=(V,E)$ cannot be solved in time~$2^{o(k\cdot\log(k))}\cdot \poly({\Card{V}})$, where $k=\tw{G}$.
\end{proposition}

It turns out that the 
\problemFont{Disjoint Paths Problem}
is a suitable problem candidate for showing the hardness of \ASP.
Next, we require the following notation of open pairs, 
whose result is then applied in our reduction.
Given an instance~$(G,P)$ of the~\problemFont{Disjoint Paths Problem}, a TD~$\mathcal{T}=(T,\chi)$ of~$G$, and a node~$t$ of~$T$.
Then, a pair~$(s,d)\in P$ is \emph{open in node~$t$}, if either~$s\in \chi_{\leq t}$ (``\emph{open due to source $s$'}') or~$d\in\chi_{\leq t}$ (``\emph{open due to destination $d$}''), but not both.

\begin{proposition}[\cite{Scheffler94}]\label{prop:earlyout}
An instance~$(G,P)$ of the~\problemFont{Disjoint Paths Problem} does not have a solution if there is a TD~$\mathcal{T}=(T,\chi)$ of~$G$ and a bag~$\chi(t)$ with more than~$\Card{\chi(t)}$ many pairs in~$P$ that are open in a node~$t$ of~$T$.
\end{proposition}
\begin{proof}
The result, cf.,~\cite{Scheffler94}, boils down to the fact that each bag~$\chi(t)$, when removed from~$G$, results in a disconnected graph consisting of two components.
Between these components can be at most~$\Card{\chi(t)}$ different paths. 
\end{proof}


\begin{figure}
\centering%
	\begin{tikzpicture}[node distance=0.75mm]
\tikzset{every path/.style=thick}

\node (leaf1b) [tdnode,label={[yshift=-0.25em,xshift=0.0em]left:$t_2$}] {$\{s_2,x,y\}$};
\node (leaf1c) [tdnode,below=of leaf1b,label={[yshift=-0.25em,xshift=0.0em]left:$t_1$}] {$\{s_2,d_2,y\}$};
\node (leaf2) [tdnode,label={[xshift=-.5em, yshift=-0.15em]above right:$t_4$}, right = of leaf1b]  {$\{y,d_3,z\}$};
\node (leaf2b) [tdnode,xshift=.4em,label={[xshift=-.25em, yshift=-0.15em] right:$t_3$}, below = of leaf2]  {$\{s_3,d_3,z\}$};
\coordinate (middle) at ($ (leaf1.north east)!.5!(leaf2.north west) $);
\node (join) [tdnode,ultra thick,label={[]left:$t_5$}, above  = .75mm of middle] {$\{s_1,d_1,x,y\}$};


\draw [<-] (join) to (leaf1);
\draw [<-] (join) to (leaf2);
\draw [<-] (leaf2) to (leaf2b);
\draw [<-] (leaf1b) to (leaf1c);
\end{tikzpicture}%
\hspace{-4em}\begin{tikzpicture}[node distance=0.5mm]%
\tikzset{every path/.style=thick}

\node (abstand) [white] {};
\node (leaf2) [tdnode,label={[xshift=-.25em, yshift=-0.15em] right:$t_1$}, right = 4.1em of abstand]  {$\{s_3,d_3,y,z\}$};
\node (leaf3) [tdnode,above=of leaf2, label={[xshift=-.30em, yshift=-0.15em] right:$t_2$}]  {$\{s_3,d_3,y\}$};
\coordinate (middle) at ($ (leaf3.north east)!.5!(leaf3.north west) $);
\node (join) [tdnode,ultra thick,label={[xshift=-3em,yshift=.25em]below left:$t_3$}, above  = .75mm of middle] {$\{s_1,d_1,s_2,d_2,s_3,d_3,x,y\}$};


\draw [<-] (join) to (leaf3);
\draw [<-] (leaf3) to (leaf2);
\end{tikzpicture}
	\vspace{-.45em}
	\caption{A pair-respecting TD (left), and a pair-connected TD~$\mathcal{T}$ (right) of $(G,P)$ of Figure~\ref{fig:disjpaths}.}
	\label{fig:tds}
\end{figure}

\noindent
\textbf{Preparing pair-connected TDs.}
Before we present the actual reduction, we need to define a \emph{pair-respecting} tree decomposition of an instance~$(G,P)$ of the~\problemFont{Disjoint Paths Problem}.
Intuitively, such a TD of~$G$ additionally ensures 
that each pair in~$P$ is encountered together in some TD bag.

\begin{definition}
A TD~$\mathcal{T}=(T,\chi)$ of~$G$ is a
\emph{pair-respecting TD} of~$(G,P)$ if for any pair~$p=(s,d)$ with~$p\in P$, (1) whenever~$p$
is open in a node~$t$ due to~$s$, or due to~$d$, then~$s\in\chi(t)$, or $d\in\chi(t)$, respectively.
Further, (2) whenever $p$ is open in a node~$t$, 
but not open in the parent~$t'$ of $t$ (``$p$ is \emph{closed in~$t'$}''), both~$s,d\in\chi(t')$. 
\end{definition}

We observe that such a pair-respecting TD can be computed with
only a linear increase in the (tree)width in the worst case.
%
%
%
Concretely, we can turn any TD~$\mathcal{T}=(T,\chi)$ of~$G$ into
a pair-respecting TD~$\mathcal{T}'=(T,\chi')$ of~$(G,P)$.
Thereby, the tree~$T$ is traversed for each~$t$ of~$T$ in post-order, and vertices of~$P$ are added to~$\chi(t)$
accordingly, resulting in~$\chi'(t)$, such that conditions (1) and (2) of pair-respecting TDs are met.
Observe, that this 
doubles the sizes of the bags in the worst case,
since by Proposition~\ref{prop:earlyout}
there can be at most bag-size many open pairs.

\begin{example}
Figure~\ref{fig:tds} (left) shows a pair-respecting TD of~$(G,P)$ of Figure~\ref{fig:disjpaths},
which can be obtained by transforming the TD of Figure~\ref{fig:disjpaths} (right), followed by simplifications.
\end{example}

Given a sequence~$\sigma$ 
of
pairs of~$P$ in the order of closure
with respect to the post-order of~$T$.
We refer to~$\sigma$ by the \emph{closure sequence} of~$\mathcal{T}$.
We denote by~$p\in_i\sigma$ that pair~$p$ is the \emph{pair 
closed $i$-th} in the order of~$\sigma$.
Intuitively, e.g., the first pair~$p \in_1\sigma$
indicates that pair $p\in P$ 
is the first 
to be closed 
when traversing~${T}$ in post-order.

\begin{definition}
%
A \emph{pair-connected TD}~$\mathcal{T}{=}(T,\chi)$ of $(G,P)$ is a 
pair-respecting TD of $(G,P)$, if,
whenever a pair~$p\in_i\sigma$ with~$i{>}1$
is closed in a node~$t$ of~$T$,
also for the pair $(s,d)\in_{i-1}\sigma$
closed directly 
before $p$ in $\sigma$, both~$s,d\in\chi(t)$.
\end{definition}

%
%
%
We can turn any pair-respecting, \emph{nice} TD~$\mathcal{T}'{=}(T,\chi')$ of width~$k$ into a pair-connected TD~$\mathcal{T}''{=}(T,\chi'')$ with constant increase in the width.
Let therefore pair~$p\in_i\sigma$ be closed ($i{>}1$) in a node~$t$,
and pair~$(s,d)\in_{i-1}$ be closed before~$p$ in node~$t'$.
Intuitively, we need to add~$s,d$ to all bags~$\chi'(t'), \ldots, \chi'(t)$ of nodes encountered
after node~$t'$ and before node~$t$ of
the post-order tree traversal, resulting in~$\chi''$.
However, 
the width of~$\mathcal{T}''$ 
is at most~$k+3\cdot \Card{\{s,d\}} = k+6$,
since in the tree traversal each node of~$T$ is passed at most $3$ times, 
namely when traversing down, when going from the left branch to the right branch, 
and then also when going upwards.
Indeed, to ensure~$\mathcal{T}''$ is a TD (connectedness condition),
we add at most $6$ additional atoms to every bag.
%
%

\begin{example}
Figure~\ref{fig:tds} (right) depicts a pair-connected TD of~$(G,P)$ of Figure~\ref{fig:disjpaths},
obtainable by transforming the pair-respecting TD of Figure~\ref{fig:tds} (left), followed by simplifications.
\end{example}

\subsection{Reducing from \problemFont{Disjoint Paths} to \ASP}

In this section, we show the main reduction~$R$ of this paper,
assuming any instance~$I=(G,P)$ of the \problemFont{Disjoint Paths Problem}.
Before we construct our program~$\Pi$, we require a  nice,
pair-connected TD~$\mathcal{T}=(T,\chi)$ of~$G$,
whose width is~$k$ and a corresponding closure sequence~$\sigma$.
By Proposition~\ref{prop:earlyout}, for each node~$t$ of~$\mathcal{T}$,
there can be at most~$k$ many open pairs of~$P$, which we assume in the following.
If this was indeed not the case, we can immediately output, 
e.g., $\{a\leftarrow \neg a\}$.

Then, we use the following atoms in our reduction.
Atoms $e_{u,v}$, or $ne_{u,v}$ indicate that edge~$(u,v)\in E$ is used, or unused,
respectively.
Then, $r_u$ for any vertex~$u\in V$ indicates that~$u$ is reached via used edges,
and $r^*_d$ are auxiliary reachability atoms for destination vertices~$d$ (i.e., where $(s,d)\in P$). 
Finally, we also need atom~$f^u_t$ for a node~$t$ of~$T$, and vertex~$u\in\chi(t)$,
to indicate that vertex~$u$ is already finished in node~$t$,
i.e., $u$ has one used, outgoing edge.
The presence of this atom~$f^u_t$ in an answer set prohibits to take additional 
edges of~$u$ in parent nodes of~$t$, which is needed due to the need of disjoint paths
of the~\problemFont{Disjoint Paths Problem}.

The instance~$\Pi=R(I,\mathcal{T})$ constructed by reduction~$R$ consists of three program parts,
namely \emph{reachability}~$\Pi_\mathcal{R}$, 
\emph{linking}~$\Pi_\mathcal{L}$ of two pairs in~$P$, as well as \emph{checking}~$\Pi_\mathcal{C}$ of disjointness of constructed paths.
Consequently, $\Pi=\Pi_\mathcal{R}\cup\Pi_\mathcal{L}\cup\Pi_\mathcal{C}$.
All three programs~$\Pi_{\mathcal{R}}$, $\Pi_\mathcal{L}$, and~$\Pi_\mathcal{C}$ are guided along TD~$\mathcal{T}$,
which ensures that the width of~$\Pi$ is only linearly increased.
Note that this has to be carried out carefully.
In particular, since the number of atoms of the form~$e_{u,v}$ using only vertices~$u,v$ that appear in one bag, can be already quadratic in the bag size.
The goal of this reduction, however, admits only a linear overhead in the bag size.
Consequently, we are, e.g., not allowed to construct rules in~$\Pi$ that require more than~$\mathcal{O}(k)$ edges in one bag of a TD of~$\mathcal{G}_\Pi$.

To this end, 
let the \emph{ready edges~$E^{\text{re}}_{t}$  in node~$t$}
be the set of edges~$(u,v)\in E$  
not present in~$t$ anymore, i.e., 
$\{u,v\}\subseteq \chi(t')\setminus \chi(t)$ 
for any child node~$t'\in\children(t)$. 
Further, let $E^{\text{re}}_{n}$ for the root node~$n=\rootOf(T)$ 
additionally contain also all edges of~$n$, i.e., $E\cap (\chi(n) \times \chi(n))$.
Intuitively, ready edges for~$t$ will be processed in node~$t$.
Note that each edge occurs in exactly one set of ready edges.
Further, for nice TDs~$\mathcal{T}$, we always have~$\Card{E^{\text{re}}_{t}}\leq k$, i.e., 
ready edges are linear in~$k$.
\begin{example}
Recall instance~$I{=}(G,P)$ with $G{=}(V,E)$ of Figure~\ref{fig:disjpaths}, and pair-connected TD~$\mathcal{T}{=}(T,\chi)$ of~$I$ of Figure~\ref{fig:tds} (right).
Then, $E_{t_1}^{\text{re}}{=}\emptyset$,
$E_{t_2}^{\text{re}}{=}\{(y,z), (z,y), (z, d_3), (s_3, z)\}$, since~$z\notin\chi(t_2)$, and
$E_{t_3}^{\text{re}}{=}E\setminus E_{t_2}^{\text{re}}$ for root~$t_3$ of~$\mathcal{T}$.
\end{example}

\vspace{-.6em}
\paragraph{Reachability~$\Pi_\mathcal{R}$.} 
Program~$\Pi_\mathcal{R}$ is constructed as follows.
{
\vspace{-1.5em}
\begin{align}
	\label{red:edgeguess1}&e_{u,v}\leftarrow r_u, \neg ne_{u,v}\hspace{-.4em}&&{\text{for each }(u,v)\in E^{\text{re}}_t}\\
	\label{red:edgeguess2}&ne_{u,v}\leftarrow \neg e_{u,v}&&{\text{for each }(u,v)\in E^{\text{re}}_t}\\
	\label{red:reachx}&r_{v}\leftarrow e_{u,v}&&{\text{for each }(u,v)\in E^{\text{re}}_t}, (s,v)\notin P \\
	\label{red:reach2}&r^*_{d}\leftarrow e_{u,d}&&{\text{for each }(u,d)\in E^{\text{re}}_t}, (s,d)\in P
\end{align}%
}

\vspace{-1em}
\noindent Rules~(\ref{red:edgeguess1}) and~(\ref{red:edgeguess2}) ensure that
there is a partition of edges in used edges~$e_{u,v}$ and unused edges~$ne_{u,v}$.
Additionally, Rules~(\ref{red:edgeguess1}) take care that only edges of adjacent, reachable vertices are used.
Naturally, this requires that initially at least one vertex is reachable (constructed below).
Rules~(\ref{red:reach}) and~(\ref{red:reach2}) ensure reachability~$r_v$ and~$r^*_v$ over used edges~$e_{u,v}$ for non-destination vertex~$v$ and destination~$v$, respectively.
%
%

\medskip
\noindent
\textbf{Linking of pairs~$\Pi_\mathcal{L}$.} 
Program~$\Pi_\mathcal{L}$ is constructed as follows.\hspace{-1em}
{
\vspace{-.35em}
\begin{align}
	\label{red:pair1}&\hspace{-.5em}\leftarrow \neg r_{d}&&{\text{for each }(s,d)\in P}\\
	\label{red:pair1st}&\hspace{-.5em}r_{s_1}\leftarrow &&{\text{for }(s_1,d)\in_1 \sigma}\\
	\label{red:paireven}&\hspace{-.5em}r_{s_i}\leftarrow r_{d_{i-1}}&&{\text{for each }(s_i,d)\in_i\hspace{-.1em} \sigma, (s,d_{i-1})\hspace{-.1em}\in_{i-1}\hspace{-.1em}\sigma} \\
\label{red:paircycles2}&\hspace{-.5em}r_{d_1}\leftarrow r^*_{d_1}&&\text{for }(s, d_1)\in_1\sigma\\
\label{red:paircycles}&\hspace{-.5em}r_{d_{i}}\leftarrow r^*_{d_i}, r_{d_{i-1}}\hspace{-.9em}&&\text{for each }(s, d_i)\hspace{-.1em}\in_i\hspace{-.1em}\sigma,(s',d_{i-1})\hspace{-.1em}\in_{i-1}\hspace{-.2em}\sigma\hspace{-.5em}
\end{align}%
\vspace{-1em}
}

\noindent Rules~(\ref{red:pair1}) 
make sure that, ultimately, destination vertices of all pairs are reached.
As an initial, reachable vertex, Rule~(\ref{red:pair1st}) sets the source vertex~$s$ reachable, whose pair is closed first.
Then, the linking of pairs is carried out along the TD in the order of closure, as given by~$\sigma$.
%
%
%
Thereby, Rules~(\ref{red:paireven}) conceptually construct auxiliary links (similar to edges) between different pairs, in the order of~$\sigma$,
which is guided along the TD 
to ensure only a linear increase in treewidth of~$\mathcal{G}_\Pi$ of the resulting program~$\Pi$. 
Interestingly, these additional dependencies, since guided along the TD, do not increase the treewidth by much as we will see in the next subsection.
Rule~(\ref{red:paircycles2}) makes sure that if destination vertex~$d_1$ of the pair closed first is auxiliary-reached ($r^*_{d_1}$), reachability~$r_{d_1}$ is set.

Then, it is \emph{crucial} that we prevent a source vertex~$s_i$ of a pair~$(s_i,d_i)\in_i \sigma$ 
from reaching a destination vertex~$d_j$ of a pair~$(s_j,d_j)\in_j\sigma$ preceding~$(s_i,d_i)$ in~$\sigma$, i.e., $j<i$.
To this end, we need to construct parts of 
cycles that prevent this.
Concretely, if some source~$s_i$ reaches to~$d_j$, i.e., $d_j$ is reachable via~$s_i$, 
the goal is to have a cyclic reachability from~$d_j$ to~$s_i$, with no external support for corresponding reachability atoms. 
%
%
Actually, Rules~(\ref{red:paireven}) 
also have the purpose of aiding in construction of these potential positive cycles.
Together with Rules~(\ref{red:paircycles}) we achieve that if~$d_j$ is reachable, this cannot be due to~$s_i$, since reachability of~$d_{i-1}, d_{i-2}, \ldots, d_j$ (therefore~$s_i$ itself) is required for reachability of~$s_{i}$. 
Consequently, assuming that there is no external support for these reachability atoms 
(which we will ensure in program~$\Pi_{\mathcal{C}}$ below),
and that if~$s_i$ is reachable, $d_j$ is reachable, we end up with cyclic reachability
without external support.
%
%
Figure~\ref{fig:cycles} shows the positive dependency graph~$D_{R_{\mathcal{L}}}$
of Rules~(\ref{red:paireven})--(\ref{red:paircycles}), where pairs~$(s_i,d_i)\in_i\sigma$, discussed in the following example.

\begin{example}
Consider the dependency graph~$D_{R_{\mathcal{L}}}$ of Rules (\ref{red:paireven}) and~(\ref{red:paircycles}),
as depicted in Figure~\ref{fig:cycles}.
Observe that whenever $s_i$ reaches some $d_j$ with~$j<i$,
this causes a cycle~$C{=}r_{s_i},\ldots, r_{d_j}, r_{d_{j+1}}, \ldots, r_{d_{i-1}}, r_{s_i}$ over reachability atoms (cyclic dependency).
%
%
If each vertex~$u$ of~$G$ can have at most one outgoing edge, i.e., 
only one atom~$e_{u,v}$ in an answer set of~$\Pi=R(I,\mathcal{T})$,
no atom of $C$ can be proven (no external support).
Note that~$C$ could also be constructed by adding
$\mathcal{O}(\Card{P}^2)$ many edges from~$d_i$ to~$d_j$ for~$j>i$.
However, this would cause an increase of structural dependency for~$d_i$,
and in fact, the treewidth increase would be beyond linear. 
\end{example}

\begin{figure}[t]%
  \centering%
  \vspace{-1em}
    \resizebox{.25\linewidth}{!}{%
	\begin{tikzpicture}[node distance=7mm,every node/.style={fill,circle,inner sep=2pt}]%
		\node (s0) [white] {};
		\node (s1) [red,right=of s0,label={[text height=1.5ex,yshift=0.0cm,xshift=0.05cm]left:$r_{s_1}$}] {};
		\node (s2) [right=of s1,label={[text height=1.5ex,yshift=0.0cm,xshift=0.05cm]left:$r_{s_2}$}] {};
		\node (s5) [right=of s2,label={[text height=1.5ex,yshift=0.0cm,xshift=0.05cm]left:$r_{s_3}$}] {};
		\node (sdots) [white,right=of s5,label={[text height=1.5ex,yshift=0.0cm,xshift=0.05cm]left:$ $}] {};
		\node (sdots2) [white,right=of sdots,xshift=-2em,label={[text height=1.5ex,yshift=0.0cm,xshift=0.05cm]left:$\ldots$}] {};
		\node (sn1) [right=of sdots2,label={[text height=1.5ex,yshift=0.0cm,xshift=0.15cm]left:$r_{s_{\Card{P}-1}}$}] {};
		\node (sn) [right=of sn1,label={[text height=1.5ex,yshift=0.0cm,xshift=0.2cm]left:$r_{s_{\Card{P}}}$}] {};
		
		\node (d1) [above of=s1,yshift=.0em,label={[text height=1.5ex,xshift=0.05cm]left:$r_{d_1}$}] {};
		\node (stard1) [below of=d1,yshift=0.95em,label={[text height=1.5ex,xshift=0.05cm]left:$r^*_{d_i}{:}$}] {};
		\node (d2) [above of=s2,yshift=.0em,label={[text height=1.5ex,xshift=0.05cm]left:$r_{d_2}$}] {};
		\node (stard2) [below of=d2,yshift=0.95em,label={[text height=1.5ex,xshift=0.05cm]left:$ $}] {};
		\node (d5) [above of=s5,yshift=.0em,label={[text height=1.5ex,xshift=0.05cm]left:$r_{d_3}$}] {};
		\node (stard3) [below of=d5,yshift=0.95em,label={[text height=1.5ex,xshift=0.05cm]left:$ $}] {};
		\node (ddots) [white,above of=sdots,yshift=.0em,label={[text height=1.5ex,xshift=0.05cm]left:$ $}] {};
		\node (ddots2) [white,above of=sdots2,yshift=.0em,label={[text height=1.5ex,xshift=0.05cm]left:$\ldots$}] {};
		\node (starddots) [white,below of=ddots,xshift=.0em,yshift=0.95em,label={[text height=1.5ex,xshift=0.05cm]left:$ $}] {};
		\node (starddots2) [white,below of=ddots2,yshift=.95em,label={[text height=1.5ex,xshift=0.05cm]left:$\ldots$}] {};
		\node (dn1) [above of=sn1,yshift=.0em,label={[text height=1.5ex,xshift=0.15cm]left:$r_{d_{\Card{P}{-1}}}$}] {};
		\node (stardn1) [below of=dn1,yshift=0.95em,label={[text height=1.5ex,xshift=0.05cm]left:$ $}] {};
		\node (dn) [above of=sn,yshift=.0em,label={[text height=1.5ex,xshift=0.2cm]left:$r_{d_{\Card{P}}}$}] {};
		\node (stardn) [below of=dn,yshift=0.95em,label={[text height=1.5ex,xshift=0.05cm]left:$ $}] {};
		
		\draw [dashed,->,red] (stard1) to (d1);
		\draw [dashed,->,red] (stard2) to (d2);
		\draw [dashed,->,red] (stard3) to (d5);
		\draw [dashed,->,red] (stardn) to (dn);
		\draw [dashed,->,red] (stardn1) to (dn1);
		\draw [dashed,->,red] (starddots) to (ddots);
		\draw [dashed,->,red] (d1) to (s2);
		\draw [dashed,->,red] (d2) to (s5);
		\draw [dashed,->,red] (d5) to (sdots);
		\draw [dashed,->,red] (ddots) to (sn1);
		\draw [dashed,->,red] (dn1) to (sn);
		\draw [out=90,in=90,dashed,->,red] (d1) to (d2);
		\draw [out=90,in=90,dashed,->,red] (d2) to (d5);
		\draw [out=90,in=90,dashed,->,red] (dn1) to (dn); 
		\draw [out=90,in=90,dashed,->,red] (d5) to (ddots);
		\draw [out=90,in=90,dashed,->,red] (ddots) to (dn1);
	\end{tikzpicture}}%
    \vspace{-.95em}
 \caption{Positive dependency graph~$D_{R_{\mathcal{L}}}$ of Rules~(\ref{red:paireven})--(\ref{red:paircycles}) constructed for any closure sequence~$\sigma$ such that~$(s_i,d_i)\in_i\sigma$.}%
    \label{fig:cycles}
\end{figure}

%

\smallskip
\noindent\textbf{Checking of disjointness~$\Pi_\mathcal{C}$.} 
Finally, we create rules in $\Pi$ that enforce at most one outgoing, used edge per vertex.
This is required to ensure that we do not use a vertex twice, as required by the~\problemFont{Disjoint Paths Problem}.
%
We do this by guiding the information, whether the corresponding outgoing edge was used,
via atoms~$f^u_t$ along the TD to ensure that the treewidth is not increased significantly.
%
Having at most one outgoing, used edge per vertex of~$G$ further ensures 
that when a source of a pair~$p$ reaches a destination of a pair preceding~$p$ in~$\sigma$, 
then no atom of the resulting cycle as constructed in $\Pi_{\mathcal{L}}$ 
will be provable (no external support).
Consequently, in the end every source of~$p$ has to reach the destination of~$p$ by the pigeon hole principle.
Program~$\Pi_\mathcal{C}$ is constructed for every node~$t$ with~$t',t''{\in}\children(t)$, if~$t$ has child nodes, as follows.
{
\vspace{-.3em}
\begin{flalign}
	\label{red:setf}&f_{t}^u\leftarrow e_{u,v}&&{\text{for each }(u,v)\in E^{\text{re}}_{t}, u\in\chi(t)}\\
	\label{red:propf}&f_t^u\leftarrow f_{t'}^u&&{\text{for each }u\in\chi(t)\cap\chi(t')}\\
	\label{red:prohibitf}&\leftarrow f_{t'}^u, f_{t''}^u&&{\text{for each }u\in\chi(t')\cap\chi(t''), t'\neq t''}\\
	\label{red:tddegree}&\leftarrow f_{t'}^u, e_{u,v} &&{\text{for each }(u,v)\in E^{\text{re}}_{t}, u\in\chi(t')}\\
	\label{red:localdegree}&\leftarrow e_{u,v}, e_{u,w}&&{\text{for each }(u,v),(u,w)\in E^{\text{re}}_{t}\hspace{-.3em},\ v{\neq}w}
\end{flalign}%
\vspace{-1em}
}

\noindent Rules~(\ref{red:setf}) ensure that the finished flag~$f^u_t$ is set for used edges~$e_{u,v}$.
Then, this information of~$f^u_{t'}$ is guided along the TD from child node~$t'$ to parent node~$t$ by Rules~(\ref{red:propf}).
If for a vertex~$u\in V$ we have~$f^u_{t'}$ and~$f^u_{t''}$ for two different child nodes~$t', t''\in\children(t)$,
this indicates that two different edges were encountered both below~$t'$ and below~$t''$. Consequently,
this situation is avoided by Rules~(\ref{red:prohibitf}).
Rules~(\ref{red:tddegree}) make sure to disallow additional edges for vertex~$u$ in a TD node~$t$,
if the flag~$f^u_{t'}$ of child node~$t'$ is set.
Finally, Rules~(\ref{red:localdegree}) prohibit 
two different edges 
for the same vertex~$u$ within a TD node.

\begin{example}
Recall instance~$I=(G,P)$ with $G=(V,E)$ of Figure~\ref{fig:disjpaths}, pair-connected TD~$\mathcal{T}=(T,\chi)$ of~$I$ of Figure~\ref{fig:tds} (right),
and $E_{t_2}^{\text{re}}=\{(y,z), (z,y), (z, d_3), (s_3, z)\}$.
We briefly present the construction of~$\Pi_{\mathcal{C}}$ for node~$t_2$.
%

\noindent\begin{tabular}{@{\hspace{0.15em}}l@{\hspace{0.15em}}|@{\hspace{0.15em}}l@{\hspace{0.0em}}}
Rules & $\Pi_{\mathcal{L}}$\\
\hline
(\ref{red:setf})& $f^y_{t_2} \leftarrow e_{y,z}$; $f^{s_3}_{t_2} \leftarrow e_{s_3,z}$\\
(\ref{red:propf}) & $f^{s_3}_{t_2} \leftarrow f^{s_3}_{t_1}$; $f^{d_3}_{t_2} \leftarrow f^{d_3}_{t_1}$; $f^y_{t_2} \leftarrow f^y_{t_1}$\\
(\ref{red:tddegree}) & $\leftarrow f^y_{t_1}, e_{y,z}$; $\leftarrow f^z_{t_1}, e_{z,y}$; $\leftarrow f^z_{t_1}, e_{z,d_3}$; $\leftarrow f^{s_3}_{t_1}, e_{s_3,z}$\\
(\ref{red:localdegree})& $\leftarrow e_{z,y}, e_{z,d_3}$
\end{tabular}

\end{example}

%

\futuresketch{\paragraph{Modification for non-nice TDs.}
Note that the reduction as presented above, conceptually works for a \emph{non-nice}, pair-connected TD~$\mathcal{T}'=(T',\chi')$ of~$(G,P)$.
However, one needs to take care of two issues.
First, the reduction lies on~$\mathcal{T}'$ being pair-connected as defined above.
If there is a node in~$\mathcal{T}'$ with~$c$ many child nodes, pair-connectedness could cause an 
increase of~$3\cdot(c+1)$ in the treewidth, depending on how distributed the pairs of~$P$ in~$\mathcal{T}'$ are.
%
Then, one needs to take care that~$\Card{E^{\text{re}}_t}$ is still manageable for a node~$t$ of~$T'$, since already a number quadratic
in the width of~$\mathcal{T}'$ might be insufficient. 
However, one can keep~$\Card{E^{\text{re}}_t}$ small by adding intermediate nodes between~$t$ and the parent of~$t$ to the TD.
}

\subsection{Correctness and Runtime Analysis}

First, we show that the reduction is indeed correct,
followed by a result stating that the treewidth of the reduction is at most linearly worsened,
which is crucial for the runtime lower bound to hold.
Then, we present the runtime and the (combined) main result of this paper.

\begin{lemma}[$\leq 1$ Outgoing Edge]\label{lem:degree}
Given any instance~$I=(G,P)$ of the \problemFont{Disjoint Paths Problem},
and any answer set~$M$ of~$R(I, \mathcal{T})$ using any pair-connected TD~$\mathcal{T}$ of~$(G,P)$.
Then, there cannot be two edges of the form $e_{u,v}, e_{u,w}\in M$.
\end{lemma}
\begin{proof}
Assume towards a contradiction that there are three different vertices~$u,v,w\in V$ with~$e_{u,v}, e_{u,w}\in M$.
Then, by Rules~(\ref{red:localdegree}) there cannot be a node~$t$ with~$(u,v),(u,w)\in E_t^{\text{re}}$.
However, by the definition of TDs, there are nodes~$t', t''$ with~$(u,v)\in E_{t'}^{\text{re}}$ and~$(u,w)\in E_{t''}^{\text{re}}$.
By connectedness of TDs, $u$ appears in each bag of any node of the path~$X$ between~$t'$ and~$t''$.
Then, either~$t'$ is an ancestor of~$t''$ (or vice versa, symmetrical) or there is a common ancestor~$t$.
In the former case, $f^u_{t''}$ is justified by Rules~(\ref{red:setf}) and so is~$f^u_{\hat t}$ on each node~$\hat t$ of~$X$
by Rules~(\ref{red:propf}) and therefore ultimately Rules~(\ref{red:tddegree}) fail due to~$f^u_{t'}, e_{u,w}\in M$.
In the latter case, $f^u_{t''}, f^u_{t'}$ is justified by Rules~(\ref{red:setf}) and so is~$f^u_{\hat t}$ on each node~$\hat t$ of~$X$
by Rules~(\ref{red:propf}). Then, 
Rules~(\ref{red:prohibitf}) fail due to~$f^u_{t'}, f^u_{t''}\in M$.
\end{proof}

\begin{theorem}[Correctness]\label{thm:corr}
Reduction~$R$ as proposed in this section is correct.
More concretely, given an instance~$I=(G,P)$ of the \problemFont{Disjoint Paths Problem},
and a pair-connected TD~$\mathcal{T}=(T,\chi)$ of~$G$.
Then, $I$ has a solution if and only if the program~$R(I,\mathcal{T})$ admits an answer set.
\end{theorem}
\begin{proof}
``$\Rightarrow$'': Given any positive instance~$I$ of~\problemFont{Disjoint Paths Problem}.
Then, there are disjoint paths $P_1, \ldots, P_i, \ldots P_{\Card{P}}$ 
from~$s_1$ to~$d_1$, \ldots, $s_i$ to~$d_i$, \ldots, $s_{\Card{P}}$ to~$d_{\Card{P}}$ for each pair~$(s_i,d_i)\in P$.
Assuming further pair-connected TD~$\mathcal{T}$ of~$I$, we construct in the following
an answer set~$M$ of~$\Pi=R(I,\mathcal{T})$.
To this end, we collect reachable atoms~$A\eqdef \{u \mid u\text{ appears in some }P_i, 1\leq i \leq \Card{P}\}$
and used edges~$U\eqdef\{(u,v) \mid v\text{ appears immediately after }u\text{ in some }P_i, 1\leq i \leq \Card{P}\}$.
Then, we construct answer set candidate~$M\eqdef \{r^*_{d_i}\mid 1\leq i\leq \Card{P}\} \cup \{r_u \mid u\in A\}\cup\{e_{u,v}\mid (u,v)\in U\} \cup\{ne_{u,v}\mid (u,v)\in E\setminus U\}\cup \{f^u_t \mid (u,v) \in U \cap E^{\text{re}}_t\} \cup \{f^u_{t} \mid (u,v) \in U\cap E^{\text{re}}_{t'}, u\in\chi(t), t'\text{ is a descendant of }t\text{ in }T\}$.
It remains to show that~$M$ is an answer set of~$\Pi$.
Observe that~$M$ indeed satisfies all the rules of~$\Pi_{\mathcal{R}}$.
In particular, by construction, we have reachability~$r_v$ for every vertex~$v$ of every pair in~$P$, 
and the partition in used edges~$e_{u,v}$ and unused edges~$ne_{u,v}$ is ensured.
Further, $\Pi_{\mathcal{L}}$ is satisfied, as, again by construction, for each vertex~$v$ of every pair in~$P$, we have~$r_v\in M$.
Finally, $\Pi_{\mathcal{C}}$ is satisfied as by construction~$f^u_t\in M$ iff $e_{u,v}\in M\cap E^{\text{re}}_t$ or
$e_{u,v}\in M\cap E^{\text{re}}_{t'}$ for any descendant node~$t'$ of~$t$ with~$u\in\chi(t)$.
%
It is easy to see that~$M$ is indeed a $\subseteq$-smallest model of the reduct~$\Pi^M$,
since, atoms for used and unused edges form a partition of~$E$.

``$\Leftarrow$'': Given any answer set~$M$ of~$\Pi$.
First, we observe that we can only build paths from sources towards destinations,
as sources have only outgoing edges and destinations allow only incoming edges.
Further, by construction, vertices can only have one used, outgoing edge, cf., Lemma~\ref{lem:degree}.
Consequently, if a vertex had more than one used, incoming edge, one cannot match at least one pair of~$P$ (by combinatorial pigeon hole principle).
Hence, in an answer set~$M$ of $\Pi$, there is at most one incoming edge per vertex.
By construction of~$\Pi$, in order to reach each~$d_i$ with~$(s_i,d_i)\in_i\sigma$, 
$s_i$ cannot reach some~$d_{j'}$ with~$j'< i$.
Towards a contradiction assume otherwise, i.e., $s_i$ reaches~$d_{j'}$.
But then, by construction of the reduction, we also have a reachable path from~$d_{j'}$ to~$s_i$, consisting of~$d_{j'}, d_{j'+1}, \ldots, d_{i-1}, s_i$.
Since every vertex has at most one incoming edge, $d_{j'}$ cannot have any other justification for being reachable, nor does any source on this path.
Hence, this forms a cycle 
without external support, which can not be present in an answer set.
Therefore, $s_i$ only reaches~$d_i$, since otherwise there would be at least one vertex~$s_j$ required to reach~$s_{i'}$ with~$(s_{i'}, d_{i'})\in_{i'}\sigma$, $i'<j$.
Consequently, we construct a witnessing path~$P_i$ for each pair~$(s,d)\in_i \sigma$
as follows: $P_i\eqdef s, p_1, \ldots, p_m, d$ where~$\{e_{s, p_1}, e_{p_1, p_2}, \ldots, e_{p_{m-1}, p_m}, e_{p_m, d}\}\subseteq M$.
Thus, $P_i$ starts with~$s$, follows used edges in~$M$ and reaches~$d$.
%
\end{proof}

\begin{lemma}[Treewidth-Awareness]\label{lem:treewidthaware}
Given an instance~$I=(G,P)$ of the \problemFont{Disjoint Paths Problem},
and a pair-connected, nice TD~$\mathcal{T}$ of~$I$ of width~$k$. 
Then, the treewidth of~$\mathcal{G}_\Pi$,
where~$\Pi=R(I,\mathcal{T})$ is obtained by~$R$, is
at most~$\mathcal{O}(k)$.
\end{lemma}
\begin{proof}
Given any pair-connected, nice TD~$\mathcal{T}=(T,\chi)$ of~$I=(G,P)$. Since~$\mathcal{T}$ is nice, a node in~$T$ has at most~$\ell=2$ many child nodes.
From~$\mathcal{T}$ we construct a TD~$\mathcal{T}'=(T,\chi')$ of~$\mathcal{G}_\Pi$.
Thereby we set for every node~$t$ of~$T$, $\chi'(t)\eqdef \{r_u, f^u_t \mid u \in \chi(t)\} \cup \{r^*_d \mid d\in\chi(t), (s,d)\in P\} \cup \{e_{u,v}, ne_{u,v}, r_u, r_v, f^u_{t'} \mid (u,v)\in E^{\text{re}}_t, t'\in\children(t), u\in\chi(t')\} \cup \{f^u_{t'}, f^u_t \mid t' \in\children(t), u\in\chi(t)\cap\chi(t')\}$. 
Observe that~$\mathcal{T}'$ is a valid TD of~$\mathcal{G}_\Pi$.
Further, by construction we have~$\Card{\chi'(t)} \leq 2\cdot \Card{\chi(t)} + \Card{\chi(t)} + (4 + \ell) \cdot k + (\ell + 1) \cdot \Card{\chi(t)}$,
since~$\Card{E_t^{\text{re}}}\leq k$.
The claim sustains for nice TDs ($\ell=2$).
\end{proof}

\begin{corollary}[Runtime]
Reduction~$R$ as proposed in this section runs for a given
instance~$I=(G,P)$ of the \problemFont{Disjoint Paths Problem} with~$G=(V,E)$,
and a pair-connected, nice TD~$\mathcal{T}$ of~$I$ of width~$k$
and~$h$ many nodes, in time $\mathcal{O}(k\cdot h)$.
\end{corollary}

Next, we are in the position of showing the main result, namely the normal ASP lower bound.

\begin{theorem}[Lower bound]\label{thm:lowerbound}
Given an arbitrary normal or HCF program~$\Pi$, where~$k$ is the treewidth
of the primal graph of~$\Pi$. 
Then, unless ETH fails, the consistency problem for~$\Pi$ 
cannot be solved in time~$2^{o(k\cdot \log(k))}\cdot \poly(\Card{\at(\Pi)})$.
\end{theorem}
\begin{proof}

%

Let~$(G,P)$ be an instance of the~\problemFont{Disjoint Paths Problem}.
%
%
First, we construct~\cite{BodlaenderEtAl13} a nice TD~$\mathcal{T}$ of~$G=(V,E)$ of treewidth~$k$ in time~$c^k\cdot \Card{V}$
for some constant~$c$ such that the width of~$\mathcal{T}$ is at most~$5k+4$.
%
Then, we turn the result into a pair-connected TD~$\mathcal{T}'=(T',\chi')$, thereby having width at most~$k'= 2\cdot(5k+4)+ 6$.
%
%
%
Then, we construct program~$\Pi=R(I,\mathcal{T}')$.
By Lemma~\ref{lem:treewidthaware}, the treewidth of~$\mathcal{G}_\Pi$ is in~$\mathcal{O}(k')$, which is in~$\mathcal{O}(k)$.
Assume towards a contradiction that consistency of $\Pi$ can be decided in time~$2^{o(k\cdot \log(k))}\cdot \poly(\Card{\at(\Pi)})$.
By correctness of~$R$ (Theorem~\ref{thm:corr}), this solves~$(G,P)$, contradicting Proposition~\ref{prop:slightlysuper}.
\end{proof}

Our reduction works by construction for any pair-connected TD.
Consequently, this immediately yields a lower bound for \emph{pathwidth},
which is similar to treewidth, but admits only \emph{path decompositions} (TDs whose tree is just a path).

\begin{corollary}[Pathwidth lower bound]
Given any normal or HCF program~$\Pi$, where~$k$ is the pathwidth
of the primal graph of~$\Pi$. 
Then, unless ETH fails, the consistency problem for~$\Pi$ 
cannot be solved in time~$2^{o(k\cdot \log(k))}\cdot \poly(\Card{\at(\Pi)})$.
\end{corollary}

From Theorem~\ref{thm:lowerbound}, we follow that 
a general reduction from normal or HCF programs to \SAT formulas
can probably not avoid the treewidth (pathwidth) overhead, 
which renders our reduction from the previous section ETH-tight.

\begin{corollary}[ETH-tightness of the Reduction to \SAT]
Under ETH, the increase of treewidth of the reduction using Formulas~(\ref{red:checkrules}) to (\ref{red:checkfirst}) cannot be significantly improved.
\end{corollary}\vspace{-.5em}
\begin{proof}
Assume towards a contradiction that one can 
reduce from an arbitrary normal \ASP
program~$\Pi$, where~$k$ is the treewidth of~$\mathcal{G}_\Pi$ 
to a \SAT formula, whose treewidth is in~$o(k\cdot\log(k))$.
Then, this contradicts Theorem~\ref{thm:lowerbound},
as we can use an algorithm~\cite{SamerSzeider10b} for \SAT being single exponential in the treewidth,
thereby deciding consistency of~$\Pi$ in time~$2^{o(k\cdot\log(k))}\cdot\poly(\Card{\at(\Pi)})$.
\end{proof}


\section{Discussion, Conclusion, and Future Work}

The curiosity of studying and determining the hardness of \ASP and the underlying reasons has attracted the attention of the KR community for a long time.
This paper discusses this question from a different angle,
which hopefully will provide new insights into the hardness of \ASP and foster follow-up work.
The results in this paper indicate that, at least from a structural point of view, deciding the consistency of \ASP is already harder than 
\SAT, since \ASP programs might compactly represent structural dependencies within the formalism.
More concretely, compiling the hidden structural dependencies of a program to a \SAT formula, measured in terms of the well-studied parameter treewidth, most certainly causes a blow-up of the treewidth of the resulting formula.
In the light of a known result~\cite{AtseriasFichteThurley11} on the correspondence of treewidth and the resolution width applied in \SAT solving, this reveals that \ASP consistency might be indeed harder than solving \SAT.
We further presented a reduction from \ASP to \SAT that is aware of the treewidth in the sense that the reduction causes not more than this inevitable blow-up of the treewidth in the worst-case.

The work in this paper gives rise to plenty of future work.
On the one hand, we are currently working on the comparison of different treewidth-aware reductions to \SAT and variants thereof, and how these variants perform in practice.
Further, we are curious about treewidth-aware reductions to \SAT, which preserve answer sets bijectively or are modular~\cite{Janhunen06}.
We hope this work might reopen the quest to study the correspondence of treewidth and \ASP solving
similarly to~\cite{AtseriasFichteThurley11} for \SAT.
Also investigating further structural parameters ``between'' treewidth and directed variants of treewidth could lead to new insights,
since for \ASP directed measures~\cite{BliemWoltranOrdyniak16} often do not yield efficient algorithms.

\futuresketch{
\subsection{Consistency of Head-Cycle-Free Programs}%
%
We can use the algorithm~$\dpa_\PRIM$ to decide the consistency
problem for head-cycle-free programs and simply specify our new local
algorithm (\PRIM) that ``transforms'' tables from one node to another.
As graph representation we use the primal graph.  The idea is to
implicitly apply along the tree decomposition the characterization of
answer sets by~\citex{LinZhao03} extended to head-cycle-free
programs~\cite{Ben-EliyahuDechter94}.
To this end, we store in table~$o(t)$ at each node~$t$ rows of the
form~$\langle I, \mathcal{P}, \sigma\rangle$.
The first position consists of an interpretation~$I$ restricted to the
bag~$\chi(t)$.  For a sequence~$\vec \tabval$, we
write~$\mathcal{I}(\vec \tabval)\eqdef \vec u_{(1)}$ to address the
\emph{interpretation part}.
The second position consists of a set~$\mathcal{P} \subseteq I$ that
represents atoms in~$I$ for which we know that they have already been
proven.
The third position~$\sigma$ is a sequence of the atoms~$I$ such that
there is a super-sequence~$\sigma'$ of~$\sigma$, which induces an ordering~$<_{\sigma'}$.
%
%
%
%
Our local algorithm~\PRIM stores interpretation parts always
restricted to bag~$\chi(t)$ and ensures that an interpretation can be extended
to a model of sub-program~$\prog_{\leq t}$.
More precisely, it guarantees that interpretation~$I$ can be extended
to a model~$I'\supseteq I$ of~$\prog_{\leq t}$ and that the atoms
in~$I'\setminus I$ (and the atoms in $\mathcal{P}\subseteq I$) have
already been \emph{proven}, using some induced ordering~$<_{\sigma'}$
where $\sigma$ is a sub-sequence of~$\sigma'$.
In the end, an interpretation~$\mathcal{I}(\vec u)$ of a row~$\vec u$
of the table~$o(n)$ at the root~$n$ proves that there is a
superset~$I' \supseteq \mathcal{I}(\vec u)$ that is an answer set
of~$\prog = \progt{n}$.

Listing~\ref{fig:prim} presents the algorithm~\PRIM.  Intuitively,
whenever an atom~$a$ is introduced ($\intr$), we decide whether we
include~$a$ in the interpretation, determine bag atoms that can be
proven in consequence of this decision, and update the
sequence~$\sigma$ accordingly.
To this end, we define 
%
for a given interpretation~$I$ and a sequence~$\sigma$ the set
$\gatherproof(I, \sigma, \prog_t) \eqdef \bigcup_{r\in \prog_t, a\in
  H_r}\SB a \mid B_r^+ \subseteq I, I \cap B^-_r = \emptyset, I \cap
(H_r\setminus \{a\}) = \emptyset, B^+_r <_\sigma a \SE$ where
$B^+_r <_\sigma a$ holds if $b <_\sigma a$ is true for every
$b \in B^+_r$.  Moreover, given a
level mapping~$\sigma$ and a
set~$A$ of atoms, we compute the potential level mappings
involving~$A$. Therefore, we let
$\possord(\sigma, a, J) \eqdef \SB \MAI{\sigma}{a\mapsto 0} \SM a\notin J \SE \cup
\{\{a\mapsto i, x \mapsto j \mid \sigma(x) = j, \neg sh \text{ or } j < i\} \cup \{x \mapsto j + 1 \mid \sigma(x) = j, j\geq i, sh\} \mid a\in J, 1\leq i \leq k + 1, sh \in \{1, 0\}\SE$.
When removing ($\rem$) an atom~$a$, we only keep those rows where~$a$ has
been proven (contained in~$\mathcal{P}$) and then restrict remaining rows
to the bag (not containing~$a$). In case the node is of
type~$\join$, we combine two rows in two different child tables,
intuitively, we are enforced to agree on the interpretations~$I$ and
sequences~$\sigma$. However, concerning the individual
proofs~$\mathcal{P}$, it suffices that an atom is proven in one of the
rows.

\input{algorithms/prim}%



\setlength{\tabcolsep}{0.25pt}
\renewcommand{\arraystretch}{0.75}
\begin{figure}[t]
\hspace{-0.75em}
\includegraphics[width=0.5\textwidth]{figure-phc-tables}
\caption{Selected tables of~$\tau$ obtained by~$\dpa_{\PRIM}$ on
  TD~${\cal T}$.} 
\label{fig:running2}
\end{figure}

\begin{example}\label{ex:sat}
  Recall program~$\prog$ from
  Example~\ref{ex:running}. Figure~\ref{fig:running2} depicts a
  TD~$\TTT=(T,\chi)$ of the primal graph~$G_1$ of $\prog$. Further,
  the figure illustrates a snippet of tables of the
  TTD~$(T,\chi,\tau)$, which we obtain when running $\dpa_{\PRIM}$ on
  program~$\prog$ and TD~$\TTT$ according to Listing~\ref{fig:prim}.
  In the following, we briefly discuss some selected rows of those
  tables.
  Note that for simplicity and space reasons, we write $\tau_q$
  instead of $\tau(t_q)$ and identify rows by their node and
  identifier~$i$ in the figure. For example, the row
  $\vec\tabval_{13.3}=\langle I_{13.3}, \mathcal{P}_{13.3},
  \sigma_{13.3}\rangle\in\tab{13}$ refers to the third row of
  table~$\tab{13}$ for node~$t_{13}$. 
  Node~$t_1$ is of type~$\leaf$. Table~$\tab{1}$ has only one row,
  which consists of the empty interpretation, empty set of proven
  atoms, and the empty sequence (Line~\ref{line:primleaf}).
  Node~$t_2$ is of type~$\intr$ and introduces atom~$a$. Executing
  Line~\ref{line:primintr} results in
  $\tab{2}=\{\langle \emptyset, \emptyset,\langle \rangle \rangle,
  \langle \{a\}, \emptyset, \langle a\rangle\rangle\}$.
  Node~$t_3$ is of type~$\intr$ and introduces~$b$. Then, bag-program
  at node~$t_3$ is $\prog_{t_3}=\{a \vee b \hsep\}$.
  By construction (Line~\ref{line:primintr}) we ensure that
  interpretation~$I_{3.i}$ is a model of~$\prog_{t_3}$ for every
  row~$\langle I_{3.i}, \mathcal{P}_{3.i}, \sigma_{3.i}\rangle$
  in~$\tab{3}$.
  Node~$t_{4}$ is of type~$\rem$.  Here, we restrict the rows such
  that they contain only atoms occurring in bag~$\chi(t_4)=\{b\}$.  To
  this end, Line~\ref{line:primrem} takes only
  rows~$\vec\tabval_{3.i}$ of table~$\tab{3}$ where atoms in~$I_{3.i}$
  are also proven,~i.e., contained in~$\mathcal{P}_{3.i}$.
  In particular, every row in table~$\tab{4}$ originates from at least
  one row in~$\tab{3}$ that either proves~$a\in \mathcal{P}_{3.i}$ or
  where~$a\not\in I_{3.i}$. 
  Basic conditions of a TD ensure that once an atom is removed, it
  will not occur in any bag at an ancestor node. Hence, we also
  encountered all rules where atom~$a$ occurs.
  Nodes~$t_5, t_6, t_7$, and~$t_8$ are symmetric to
  nodes~$t_1, t_2, t_3$, and~$t_4$.
  Nodes~$t_9$ and~$t_{10}$ again introduce atoms. 
  Observe that $\mathcal{P}_{10.4} = \{e\}$ since
  $\sigma_{10.4}$ does not allow to prove~$b$ using atom~$e$.
  However, $\mathcal{P}_{10.5}=\{b,e\}$ as the sequence~$\sigma_{10.5}$
  allows to prove~$b$.
  In particular, in row~$\vec\tabval_{10.5}$ atom~$e$ is used to
  derive~$b$.  As a result, atom~$b$ can be proven, whereas
  ordering~$\sigma_{10.4}=\langle b,e\rangle$ does not serve in
  proving~$b$.
  %
  %
  %
  %
  We proceed similar for nodes~$t_{11}$ and $t_{12}$.
  At node~$t_{13}$ we join tables~$\tab{4}$ and~$\tab{12}$ according
  to Line~\ref{line:primjoin}.
  Finally, we have $\tab{14}\neq \emptyset$. Hence, $\prog$ has an
  answer set. We can construct the answer set~$\{b,e\}$ by combining
  the interpretation parts~$I$ of the yellow marked rows of
  Figure~\ref{fig:running2}.
\end{example}

Next, we provide a notion to reconstruct answer sets from a computed
TTD, which allows for computing for a given row its predecessor rows
in the corresponding child tables,~c.f., \cite{FichteEtAl18}.
\newcommand{\llangle}{\ensuremath{\langle\hspace{-2pt}\{\hspace{-0.2pt}}}
\newcommand{\rrangle}{\ensuremath{\}\hspace{-2pt}\rangle}}
\newcommand{\STab}{\ensuremath{\ATab{\AlgA}}}%
%
%
Let $\prog$ be a program, $\TTT=(T, \chi, \tau)$ be an~$\AlgA$-TTD
of~$\mathcal{G}_\prog$, and $t$ be a node of~$T$ where
$\children(t)=\langle t_1, \ldots, t_{\ell}\rangle$. 
%
%
%
%
%
Given a sequence~$\vec s=\langle s_1, \ldots, s_{\ell} \rangle$, we
let
$\llangle \vec s\rrangle \eqdef \langle \{s_1\}, \ldots, \{s_{\ell}\}
\rangle$. 
For a given $\AlgA$-row~$\vec u$, we define the originating
$\AlgA$-rows of~$\vec u$ in node~$t$ by
%
$\orig(t,\vec \tabval) \eqdef \SB \vec s \SM \vec s \in \tau(t_1)
\times \cdots \times \tau({t_\ell}), \vec u \in {\AlgA}(t,\chi(t),
\cdot,(\prog_t,\cdot), \llangle \vec s\rrangle) \SE.$ %
%
%
We extend this to an $\AlgA$-table~$\rho$ by
$\origs(t,\rho) \eqdef \bigcup_{\vec u \in \rho}\orig(t,\vec u)$.
%
%



%
%
%
\begin{example}\label{ex:origins} 
  Consider program~$\prog$ and $\PRIM$-tabled tree
  decomposition~$(T,\chi,\tab{})$ from Example~\ref{ex:sat}.  We focus
  on~$\vec{\tabval_{1.1}} =\langle\emptyset, \emptyset, \langle
  \rangle\rangle$ of table~$\tab{1}$ of leaf~$t_1$. The
  row~$\vec{\tabval_{1.1}}$ has no preceding row,
  since~$\type(t_1)=\leaf$. Hence, we have
  $\origse{\PRIM}(t_1,\vec{\tabval_{1.1}})=\{\langle \rangle\}$.
%
  The origins of row~$\vec{\tabval_{11.1}}$ of
  table~$\tab{11}$ are given by
  $\origse{\PRIM}(t_{11},\vec{\tabval_{11.1}})$, which correspond to
  the preceding rows in
  table~$\tab{10}$ that lead to
  row~$\vec{\tabval_{11.1}}$ of
  table~$\tab{11}$ when running
  algorithm~$\PRIM$,~i.e.,
  $\origse{\PRIM}(t_{11},\vec{\tabval_{11.1}}) = \{\langle
  \vec{\tabval_{10.1}} \rangle, \langle \vec{\tabval_{10.6}} \rangle,
  \langle \vec{\tabval_{10.7}}
  \rangle\}$. Origins of
  row~$\vec{\tabval}_{12.2}$ are given by
  $\origse{\PRIM}(t_{12},\vec{\tabval_{12.2}}) = \{\langle
  \vec{\tabval_{11.2}} \rangle, \langle
  \vec{\tabval_{11.6}} \rangle\}$. Note
  that~$\vec{\tabval_{11.4}}$
  and~$\vec{\tabval_{11.5}}$ are not among those origins, since
  $d$ is not proven.  Observe that
  $\origse{\PRIM}(t_j,\vec\tabval)=\emptyset$ for any
  row~$\vec\tabval\not\in\tab{j}$.
  For node~$t_{13}$ of type~$\join$ and row~$\vec{\tabval_{13.2}}$, we
  obtain
  $\origse{\PRIM}(t_{13},\vec{\tabval_{13.2}}) =
  \{\langle\vec{\tabval_{4.2}},$ $\vec{\tabval_{12.2}} \rangle, \langle\vec{\tabval_{4.2}},$ $\vec{\tabval_{12.3}} \rangle\}$.
%
  %
  %
%
\end{example}

\longversion{\paragraph{Table Algorithm for the Incidence Graph}

\input{algorithms/sinc}%

With the general algorithm in mind (see Figure~\ref{fig:dp-approach}), we are now ready to propose $\INC$, a
new table algorithm for solving \ASP on the semi-incidence graph (see 
Listing~\ref{fig:sinc}). 
%
As in the general approach, \INC computes and stores witness sets, and their
corresponding counter-witness sets. However, in addition, for each witness set
and counter-witness set, respectively, 
we need to store so-called \emph{satisfiability states}
(or \emph{sat-states}, for short), since the atoms of a rule may no longer be
contained in one single bag of the tree decomposition of the semi-incidence graph. Therefore, we
need to remember in each tree decomposition node, ``how much'' of a rule is already satisfied. The
following describes this in more detail.

By definition of tree decompositions and the semi-incidence graph, for
every atom~$a$ and every rule~$r$ of a program, it is true that if atom~$a$ occurs
in rule~$r$, then $a$ and $r$ occur together in at least one bag of
the tree decomposition. As a consequence, the table algorithm encounters every
occurrence of an atom in any rule. In the end, on removal of~$r$, we
have to ensure that $r$ is among the rules that are already
satisfied. However, we need to keep track of whether a witness satisfies a rule, because not all atoms that occur in a rule
occur together in a bag. Hence, when our algorithm traverses
the tree decomposition and an atom is removed we still need to store this
sat-state, as setting the removed atom to a certain truth
value influences the satisfiability of the rule.
Since the semi-incidence graph contains a clique on every set~$A$ of
atoms that occur together in a weight rule body or
choice rule head, 
those atoms~$A$ occur together in a bag in every tree decomposition of the
semi-incidence graph. For that reason, we do {not} need to
incorporate weight or 
choice rules into the satisfiability state, in
contrast to the table algorithm for the incidence graph discussed later~(c.f. Section~\ref{sec:inc}).
%
%
%
%
%

In algorithm~\INC (detailed in Listing~\ref{fig:sinc}), a tuple in the
table~$\tab{t}$ is a triple~$\langle M, \sigma, \CCC \rangle$.  The
set~$M \subseteq \at(\prog)\cap\chi(t)$ represents a witness set. 
The family~$\CCC$ of sets represents counter-witnesses, which we will
discuss in more detail below.
The sat-state~$\sigma$ for $M$ represents rules of $\chi(t)$ satisfied
by a superset of~$M$.  Hence, $M$ witnesses a model~$M'\supseteq M$
where $M' \models \progtneq{t} \cup \sigma$.  
We use the binary operator~$\cup$ to combine sat-states, which ensures
that rules satisfied in at least one operand remain satisfied. For a node~$t$, our algorithm considers a local-program depending on the bag~$\chi(t)$. Intuitively, this provides a local view on the program.

For a node~$t$, our algorithm considers a local-program depending on the bag~$\chi(t)$. Intuitively, this provides a local view on the program.
%




\begin{definition}\label{def:bagprogram}%
  Let $\prog$ be a program, $\TTT=(\cdot,\chi)$ a tree decomposition of $S(\prog)$,
  $t$ a node of $\TTT$ and ${R} \subseteq \prog_t$.
  %
  The \emph{local-program mapping}~${R}^{(t)}: R \rightarrow prog(\at(R)\cap\chi(t))$ assigns to each rule~$r\in R$ a rule obtained from~$r$
  by 
  removing all
    literals~$a$ and $\neg a$ where $a \not\in \chi(t)$.
  \shortversion{}
\end{definition}%

\begin{example}
  Observe
  $\prog_{t_1}^{(t_1)} = \{(r_1, b \hsep \neg a), (r_2, a \hsep \neg b)\}$ and
  $\prog_{t_2}^{(t_2)} = \{(r_1, b \hsep \neg a), (r_2, a \hsep \neg b), (r_3, d\hsep)\}$ for $\prog_{t_1}$, $\prog_{t_2}$ of
  Figure~\ref{fig:graph-td2}. 
\end{example}

Since the local-program mapping~$\prog^{(t)}$ depends on the considered
node~$t$, we may have different local-program mappings for node~$t$ and
its child~$t'$. In particular, the mappings~$\{r\}^{(t)}$ and
$\{r\}^{(t')}$ might already differ for a
rule~$r \in \chi(t) \cap \chi(t')$. In consequence for satisfiability
with respect to sat-states, we need to keep track of a representative
of a rule. 

$\SP(\dot{R}, \sigma) \eqdef \{ a \mid (r, s) \in \dot{R}, a \in H_s, a >_\sigma r\}$

$\checkord(\dot{R}, \sigma, \phi, a) \eqdef \text{true iff } a >_\sigma r \implies r\not\in\phi \text{ for any } (r,s) \in\dot{R}$ where $a\in\at(s)$

$\checkmod(\dot{R}, J, \sigma) \eqdef \text{true iff } J \cap \at(s)  = B_s^+ \cup X$, $\Card{X}\leq 1$, and~$X\subseteq H_s$ where $X = \{a\mid a \in \at(s), a >_\sigma r\} \text{ for any } (r,s) \in\dot{R} \text{ with } \sigma = \langle \ldots, r, \ldots \rangle$

Note that in the end~$X \cap H_s\neq \emptyset$ in at least some bag,
since otherwise the rule is not satisfied anyway, but we also have to
enforce that no body atom can in~$X$!}



Next, we provide statements on correctness and a runtime analysis of
our algorithm.

\begin{theorem}[$\star$\footnote{Proofs of statements marked with ``$\star$'' can be found in the supplemental material.}]\label{thm:primcorrectness}
  The algorithm~$\dpa_\PRIM$ is correct. \\
  In other words, given a head-cycle-free program~$\prog$ and a
  TTD~${\cal T} = (T,\chi,\cdot)$ of~$\mathcal{G}_\prog$
  where~$T=(N,\cdot,n)$ with root~$n$. Then,
  algorithm~$\dpa_\PRIM((\prog,\cdot),\TTT)$ returns the
  $\PRIM$-TTD~$(T,\chi,\tau)$ such that $\prog$ has an answer set if
  and only if
  $\langle \emptyset, \emptyset, \langle
  \rangle\rangle\in\tau(n)$. Further, we can construct all the answer
  sets of~$\prog$ from transitively following the origins
  of~$\tau(n)$.\footnoteitext{\label{foot:nu}$\nu$ contains rows
    obtained by recursively following origins of~$\tau(n)$. Formal details are in Def.~\ref{def:extensions} (supplemental material).}\footnoteitext{\label{foot:phc}Later we use (among others)~\mdpa{\PRIM} where~$\AlgA=\PRIM$.}
\end{theorem}
\begin{proof}[Proof (Idea).]
  %
  For soundness, we state an invariant and establish that this
  invariant holds for every node~$t\in N$.  For each
  row~$\vec\tabval=\langle I, \mathcal{P}, \sigma\rangle\in\tau(t)$,
  we have~$I\subseteq\chi(t), \mathcal{P}\subseteq I$, and~$\sigma$ is
  a sequence over atoms in~$I$. Intuitively, we ensure that
  $I\models\progt{t}$ and that exactly the atoms in~$\attneq{t}$
  and~$\mathcal{P}$ can be proven using a super-sequence~$\sigma'$
  of~$\sigma$.  By construction, we guarantee that we can decide
  consistency if
  row~$\langle \emptyset, \emptyset, \langle
  \rangle\rangle\in\tau(n)$. Further, we can even reconstruct answer
  sets, by following $\origa{\PRIM}$ of this single row back to the
  leaves.
  For completeness, we show that we obtain all the rows required to
  output all the answer sets of~$\prog$.
\end{proof}

\begin{theorem}
  \label{thm:primruntime}
  Given a head-cycle-free program~$\prog$ and a tree
  decomposition~${\cal T} = (T,\chi)$ of~$\mathcal{G}_\prog$ of width~$k$ with $g$
  nodes. Algorithm~$\dpa_{\PRIM}$ runs in time
  $\mathcal{O} (3^{k}\cdot k !  \cdot g)$.
\end{theorem}
\begin{proof}[Proof (Sketch).]
  Let~$d = k+1$ be maximum bag size of the tree
  decomposition~$\TTT$. 
  The table~$\tau(t)$ has at most
  $3^{d} \cdot d!$ rows, since for a row~$\langle I, \mathcal{P}, \sigma\rangle$ we have~$d!$ many sequences~$\sigma$, and by construction of algorithm~$\PRIM$, an atom can be either in~$I$, both in~$I$ and~$\mathcal{P}$, or neither in~$I$ nor in~$\mathcal{P}$.
  %
  %
  %
  %
  %
  In total, with the help of efficient data structures, e.g., for nodes~$t$ with~$\type(t)=\join$, one can establish a runtime bound of~$\bigO{{3^{d}\cdot d!}}$.
  %
  Then, we apply this to every node~$t$ of the tree decomposition,
  which results in running
  time~$\bigO{{3^{d}\cdot d!} \cdot g}\subseteq \bigO{3^{k}\cdot k!\cdot g}$.
  Consequently, the theorem holds.
\end{proof}

In order to obtain an upper bound on factorial, we can simply take
$k! \leq 2^k$ for any fixed~$k\geq 4$. However, more precisely the
factorial is asymptotically bounded as follows.


\begin{proposition}[$\star$]\label{prop:kfact}
  Given any positive integer~$i \geq 1$ and
  functions~$f(k)\eqdef k!, g(k) \eqdef 2^{k^{(i+1)/i}}$. Then,
  $f \in \bigO{g}$.
\end{proposition}

A natural question is whether we can significantly improve this
algorithm for fixed~$k$.  To this end, we take the \emph{exponential
  time hypothesis (ETH)} into account, which states that there is some
real~$s > 0$ such that we cannot decide satisfiability of a given
3-CNF formula~$F$ in
time~$2^{s\cdot\Card{F}}\cdot\CCard{F}^{\mathcal{O}(1)}$.

\begin{proposition}
  Unless ETH fails, consistency of head-cycle-free program~$\prog$
  cannot be decided in time~$2^{o(k)} \cdot \CCard{\prog}^{o(k)}$
  where~$k$ is the treewidth of the primal graph of~$\prog$.
\end{proposition}
\begin{proof}
  The result follows by reducing from SAT to ASP (head-cycle-free)
  similar to the proof of Proposition~\ref{prop:hcfproj}.
\end{proof}

In the construction above, we store an arbitrary but fixed ordering on
the involved atoms. We believe that we cannot avoid these orderings in
general, since we have to compensate arbitrarily ``bad'' orderings
induced by the decomposition, which leads us to the following
conjecture.

\begin{conjecture}
  Unless ETH fails, consistency of a head-cycle-free program~$\prog$
  cannot be decided in
  time~$2^{o(k\cdot \text{log}(k))} \cdot \CCard{\prog}^{o(k)}$
  where~$k$ is the treewidth of the primal graph of~$\prog$.
\end{conjecture}

In other words, we claim that consistency for head-cycle-free programs
is slightly superexponential. We would like to mention
that~\citex{LokshtanovMarxSaurabh11} argue that whenever we cannot
avoid an ordering the problem is expected to be slightly
superexponential.
If the conjecture holds, our algorithm is asymptotically worst-case
optimal, even for fixed treewidth~$k$ since~$\dpa_{\PRIM}$ runs in
time~$\mathcal{O}(2^{k\cdot \text{log}(k)}\cdot g)$, where number~$g$
of decomposition nodes is linear in the size of the
instance~\cite{Bodlaender96}.

\section{Dynamic Programming for~$\PASP$}

\label{sec:projmodelcounting}

\begin{figure}[t]
\centering
\includegraphics[scale=0.8]{figure_projection.pdf}
\caption{Algorithm~$\mdpa{\AlgA}$ consists of~$\dpa_\AlgA$
  and~$\dpa_\PROJ$. 
}
\label{fig:multiarch}
\end{figure}%

In this section, we present our dynamic programming
algorithm$^{\ref{foot:phc}}$~\mdpa{\AlgA}, which allows for solving the projected answer
set counting problem (\PASP).
\mdpa{\AlgA} is based on an approach of projected counting for propositional
formulas~\cite{FichteEtAl18} where TDs are traversed multiple times.
We show that ideas from that approach can be fruitfully extended to
answer set programming.
Figure~\ref{fig:multiarch} illustrates the steps of \mdpa{\AlgA}.
First, we construct the primal graph~$\mathcal{G}_\prog$ of the input program~$\prog$
and compute a TD of $\prog$. Then, we traverse the TD a first time by
running $\dpa_\AlgA$ (Step~3a), which outputs a
TTD~$\TTT_{\text{cons}}=(T,\chi,\tau)$.
Afterwards, we traverse $\TTT_{\text{cons}}$ in pre-order and remove
all rows from the tables 
that cannot be extended to an answer set (\emph{``Purge
  non-solutions''}).
In other words, we keep only rows~$\vec u$ of table~$\tau(t)$ at
node~$t$, if~$\vec u$ is involved in those rows that are used to
construct an answer set of~$\prog$, and let the resulting TTD
be~$\TTT_{\text{purged}}=(T,\chi,\nu)^{\ref{foot:nu}}$. We refer to $\nu$
as~\emph{purged table mapping}.
%
%
%
%
%
%
%
In Step~3b ($\dpa_\PROJ$), we traverse $\TTT_{\text{purged}}$ to count
interpretations with respect to the projection atoms and obtain
$\TTT_{\text{proj}}=(T,\chi,\pi)$. From the table~$\pi(n)$ at the root~$n$ of
$T$, we can then read the projected answer sets count of the input instance.
%
%
In the following, we only describe the local algorithm ($\PROJ$),
since the traversal in $\dpa_\PROJ$ is the same as before.
For \PROJ, 
a row at a node~$t$ is a pair $\langle\rho, c \rangle\in\pi(t)$ where
$\rho \subseteq \nu(t)$ is an $\AlgA$-table and $c$ is a non-negative
integer.
In fact, integer~$c$ stores the number of intersecting solutions
($\ipmc$). However, we aim for the projected answer sets count
($\pmc$), whose computation requires a few additional
definitions. Therefore, we can simply widen definitions from very
recent work~\cite{FichteEtAl18}.


%

%

In the remainder, 
we assume~$(\prog, P)$ to be an instance of~\PASP, $(T, \chi, \tau)$
to be an~$\AlgA$-TTD of~$\mathcal{G}_\prog$ and the mappings~$\tau$, $\nu$, and
$\pi$ as used above. 
Further, let~$t$ be a node of~$T$ with~$\children(t)=\langle t_1, \ldots, t_\ell\rangle$ and let $\rho \subseteq \nu(t)$.
%
%
%
%
\newcommand{\RRR}{\ensuremath{\mathcal{R}}}
%
%
  %
  %
  The relation~$\bucket \subseteq \rho \times \rho$ considers
  equivalent rows with respect to the projection of its
  interpretations by 
  $\bucket \eqdef \SB (\vec u,\vec v) \SM \vec u, \vec v \in \rho,
  \restrict{\mathcal{I}(\vec u)}{P} = \restrict{\mathcal{I}(\vec
    v)}{P}\SE.$
  Let $\buckets_P(\rho)$ be the set of equivalence classes induced
  by~$\bucket$ on~$\rho$,~i.e.,
  $\buckets_P(\rho) \eqdef\, (\rho / \bucket) = \SB [\vec u]_P \SM
  \vec u \in \rho\SE$, where
  $[\vec u]_P = \SB \vec v \SM \vec v \bucket \vec u,\vec v \in
  \rho\}$~\cite{Wilder12a}.
  Further, 
  $\subbuckets_P(\rho) \eqdef \SB S \SM \emptyset \subsetneq S
  \subseteq B, B \in \buckets_P(\rho)\SE$.

\begin{example}\label{ex:equiv} 
  Consider program~$\prog$, set~$P$ of projection atoms,
  TTD~$(T,\chi, \tau)$, and table~$\tab{10}$ from
  Example~\ref{ex:running0} and Figure~\ref{fig:running2}.
  Note that during purging rows~$\vec {u_{10.2}}$ and
  $\vec {u_{10.8}}, \ldots, \vec {u_{10.13}}$ are removed (highlighted gray),
  since they are not involved in any answer set, resulting in table~$\nu_{10}$.
  %
  %
  %
  Then, $\vec{ u_{10.4}} =_P \vec{ u_{10.5}}$ and
  $\vec{ u_{10.6}} =_P \vec{ u_{10.7}}$.  The set~$\nu_{10}/\bucket$
  of equivalence classes of $\nu_{10}$
  is~$\buckets_P(\nu_{10})=\SB \{\vec{ u_{10.1}}\}, \{\vec{ u_{10.3}}\}, \{\vec{ u_{10.4}}, \vec{ u_{10.5}}\}, \{\vec{
    u_{10.6}}, \vec{ u_{10.7}}\}\SE$.
\end{example}

Later, we require to construct already computed projected counts for
tables of children of a given node~$t$. Therefore, we define the
\emph{stored $\ipmc$} of a table~$\rho \subseteq \nu(t)$ in
table~$\pi(t)$ by
$\sipmc(\pi(t), \rho) \eqdef \sum_{\langle \rho, c\rangle \in \pi(t)}
c.$ 
We extend this to a
sequence~$s=\langle \pi(t_1), \ldots, \pi(t_\ell)\rangle$ of tables of
length $\ell$ and a
set~$O = \{\langle \rho_1, \ldots, \rho_\ell\rangle, \langle \rho_1',
\ldots, \rho_\ell'\rangle, \ldots\}$ of sequences of~$\ell$ tables by
$\sipmc(s, O)=\prod_{i \in \{1, \ldots,
  \ell\}}\sipmc(s_{(i)},O_{(i)}).$
In other words, we select the $i$-th position of the sequence together
with sets of the $i$-th positions from the set of sequences.




%
Intuitively, when we are at a node~$t$ in algorithm~$\dpa_\PROJ$ we
have already computed~$\pi(t')$ of $\TTT_{\text{proj}}$ for every node~$t'$
below~$t$.
Then, we compute the projected answer sets count
of~$\rho \subseteq \nu(t)$. Therefore, we apply the
inclusion-exclusion principle to the stored projected answer sets count
of origins.
We define $\pcnt(t,\rho, \langle\pi(t_1),\ldots\rangle) \eqdef 
\sum_{\emptyset \subsetneq O \subseteq {\origs(t,\rho)}}
(-1)^{(\Card{O} - 1)} \cdot \sipmc(\langle \pi(t_1), \ldots\rangle, O)$. 
%
Vaguely speaking, $\pcnt$ determines the $\AlgA$-origins of table~$\rho$, 
goes over all subsets of these origins and looks up the
stored counts ($\sipmc$) in the \PROJ-tables of the children~$t_i$ of~$t$.

\begin{example}\label{ex:pcnt} 
  Consider again program~$\prog$ and TD~$\TTT$ from
  Example~\ref{ex:running1} and Figure~\ref{fig:running2}. First, we
  compute the projected count $\pcnt(t_{4},\{\vec{ u_{4.1}}\}, \langle\pi(t_{3})\rangle)$
  for row~$\vec{ u_{4.1}}$ of table~$\nu(t_{4})$ where
  $\pi(t_3) \eqdef\allowdisplaybreaks[4] \big\SB
  \langle \{\vec{ u_{3.1}}\}, 1\rangle,$
  $\langle \{\vec{ u_{3.2}}\},1\rangle, \langle \{\vec{u_{3.1}}, \vec{
    u_{3.2}}\},1\rangle\big\SE$ with
  $\vec{u_{3.1}}=\langle \emptyset, \emptyset, \langle\rangle \rangle$
  and~$\vec{u_{3.2}}=\langle \{a\}, \emptyset, \langle a\rangle
  \rangle$.
  Note that~$t_5$ has only the child~$t_4$ and therefore the product in~$\sipmc$
  consists of only one factor. 
  Since
  $\origse{\PRIM}(t_4, \vec{ u_{4.1}}) = \{\langle\vec{
    u_{3.1}}\rangle\}$, only the value of~$\sipmc$ for
  set~$\{\langle\vec{ u_{3.1}}\rangle\}$ is non-zero. Hence, we obtain
  $\pcnt(t_4,\{\vec{ u_{4.1}}\}, \langle\pi(t_3)\rangle)=1$. 
  Next, we compute
  $\pcnt(t_{4},\{\vec{ u_{4.1}}, \vec{u_{4.2}}\}, \langle\pi(t_3)\rangle)$. Observe that
  $\origse{\PRIM}(t_4, \{\vec{ u_{4.1}}, \vec{ u_{4.2}}\}) =
  \{\langle\vec{ u_{3.1}}\rangle, \langle\vec{ u_{3.2}}\rangle\}$. We
  sum up the values of~$\sipmc$ for sets~$\{\vec{ u_{4.1}}\}$
  and~$\{\vec{ u_{4.2}}\}$ and subtract the one for
  set~$\{\vec{ u_{4.1}}, \vec{ u_{4.2}}\}$.  Hence, we obtain
  $\pcnt(t_4,\{\vec{ u_{4.1}}, \vec{ u_{4.2}}\}, \langle\pi(t_3)\rangle)=1+1-1=1$.
\end{example}

Next, we provide a definition to compute $\ipmc$, which can be
computed at a node~$t$ for given table~$\rho\subseteq \nu(t)$ by
computing the $\pmc$ for children~$t_i$ of~$t$ using stored $\ipmc$
values from tables~$\pi(t_i)$, subtracting and adding~$\ipmc$ values
for subsets~$\emptyset\subsetneq\varphi\subsetneq\rho$ accordingly.
Formally, $\icnt(t,\rho,s)\eqdef 1$ if $\type(t) = \leaf$ and
otherwise
$\icnt(t,\rho,s)\eqdef \big|\pcnt(t,\rho, s)\;+
\quad\sum_{\emptyset\subsetneq\varphi\subsetneq\rho}(-1)^{\Card{\varphi}}
\cdot \ipmc(t,\varphi, s)\big|$ where
$s = \langle \pi(t_1), \ldots\rangle$.
In other words, if a node is of type~$\leaf$ the $\ipmc$ is one, since
bags of leaf nodes are empty.
%
%
%
%
%
%
Otherwise, we compute the ``non-overlapping'' count of given
table~$\rho\subseteq\nu(t)$ with respect to~$P$, by exploiting the
inclusion-exclusion principle on $\AlgA$-origins of~$\rho$ such that
we count every projected answer set only once. Then we have to 
subtract and add $\ipmc$ values (``all-overlapping'' counts) for
strict subsets~$\varphi$ of~$\rho$, accordingly.

Finally, Listing~\ref{fig:dpontd3} presents the local algorithm~\PROJ,
which stores~$\pi(t)$ consisting of every sub-bucket of the given
table~$\nu(t)$ together with its $\ipmc$.




\input{algorithms/dponpass3}%

\setlength{\tabcolsep}{0.25pt}
\renewcommand{\arraystretch}{0.75}
\begin{figure*}[t]
\centering
\begin{tikzpicture}[node distance=0.5mm]
\tikzset{every path/.style=thick}

\node (l1) [stdnode,label={[tdlabel, xshift=0em,yshift=+0em]right:${t_1}$}]{$\emptyset$};
\node (i1) [stdnode, above=of l1, label={[tdlabel, xshift=0em,yshift=+0em]left:${t_2}$}]{$\{a\}$};
\node (i12) [stdnode, above=of i1, label={[tdlabel, xshift=0em,yshift=+0.35em]left:${t_3}$}]{$\{a,b\}$};
\node (i13) [stdnode, above=of i12, label={[tdlabel, xshift=+0.05em,yshift=+0em]right:${t_4}$}]{$\{b\}$};
\node (lrx) [stdnode, right=2.5em of l1, yshift=-6.9em, label={[tdlabel, xshift=0em,yshift=+0em]left:${t_5}$}]{$\emptyset$};
\node (cc) [stdnode, above=of lrx, label={[tdlabel, xshift=0em,yshift=+0em]left:${t_6}$}]{$\{c\}$};
\node (bc) [stdnode, above=of cc, label={[tdlabel, xshift=0em,yshift=+0em]left:${t_7}$}]{$\{c,e\}$};
\node (l2) [stdnode, above=of bc, label={[tdlabel, xshift=0em,yshift=+0em]left:${t_8}$}]{$\{e\}$};
\node (i2) [stdnode, above=of l2, label={[tdlabel, xshift=0em,yshift=+0em]right:${t_{9}}$}]{$\{d, e\}$};
\node (i22) [stdnode, above=of i2, label={[tdlabel, xshift=0em,yshift=+0em]left:${t_{10}}$}]{$\{b,d,e\}$};
\node (r2) [stdnode, above=of i22, label={[tdlabel, xshift=0em,yshift=+0em]left:${t_{11}}$}]{$\{b, d\}$};
\node (r22) [stdnode, above=of r2, label={[tdlabel, xshift=0.05em,yshift=+0em]left:${t_{12}}$}]{$\{b\}$};
\node (j) [stdnode, above left=of r22, xshift=0.0em, yshift=-0.0em, label={[tdlabel, xshift=0em,yshift=+0.3em]right:${t_{13}}$}]{$\{b\}$};
\node (rt) [stdnode,ultra thick, above=of j, label={[tdlabel, xshift=0em,yshift=+0em]right:${t_{14}}$}]{$\emptyset$};
\node (label) [font=\scriptsize,left=of rt,xshift=0.45em]{${\cal T}$:};
%
%
%
%
\node (leaf1) [stdnodetable, left=3em of i1, yshift=2em, label={[tdlabel, xshift=2em,yshift=1em]below right:$\pi_{3}$}]{%
	\begin{tabular}{l@{\hspace{0.0em}}l@{\hspace{0.0em}}r}%
		\multicolumn{1}{l}{$\langle \tuplecolor{black}{\nu_{3.i}}, $}&\multicolumn{1}{r}{$\tuplecolor{\specialPredColor}{c_{3.i}} \rangle$}\\
		\hline\hline
		$\langle\tuplecolor{black}{\{\langle\tuplecolor{\inputPredColor}{\{a\}}, \tuplecolor{\outputPredColor}{\{a\}}, \tuplecolor{\statePredColor}{\langle a\rangle}\rangle\}}, $&$\tuplecolor{\specialPredColor}{1}\rangle$ \\\hline
		$\langle\tuplecolor{black}{\{\langle\tuplecolor{\inputPredColor}{\{b\}}, \tuplecolor{\outputPredColor}{\{b\}}, \tuplecolor{\statePredColor}{\langle b\rangle}\rangle\}}, $&$\tuplecolor{\specialPredColor}{1}\rangle$ \\\hline
		$\langle\tuplecolor{black}{\{\langle\tuplecolor{\inputPredColor}{\{a\}}, \tuplecolor{\outputPredColor}{\{a\}}, \tuplecolor{\statePredColor}{\langle a\rangle}\rangle}, $& \multirow{2}{*}{$\tuplecolor{\specialPredColor}{1}\rangle$} \\ 
		$\hspace{1.05em}\tuplecolor{black}{\langle\tuplecolor{\inputPredColor}{\{b\}}, \tuplecolor{\outputPredColor}{\{b\}}, \tuplecolor{\statePredColor}{\langle b\rangle}\rangle\},}$ 
	\end{tabular}%
};
\node (leaf1b) [stdnodenum,left=of leaf1,xshift=0.6em,yshift=+0.0em]{%
	\begin{tabular}{c}%
		\multirow{1}{*}{$i$}\\ 
		\hline\hline
		$1$ \\\hline
		$2$ \\\hline
		\multirow{2}{*}{$3$}\\\\ 
	\end{tabular}%
};
\node (leaf0x) [stdnodetable, left=-3.5em of leaf1b, yshift=4.5em, label={[tdlabel, xshift=2em,yshift=-0.15em]below left:$\pi_{4}$}]{%
	\begin{tabular}{l@{\hspace{0.0em}}l@{\hspace{0.0em}}r}%
		\multicolumn{1}{l}{$\langle \tuplecolor{black}{\nu_{4.i}}, $}&\multicolumn{1}{r}{$\tuplecolor{\specialPredColor}{c_{4.i}} \rangle$}\\
		\hline\hline
		$\langle\tuplecolor{black}{\{\langle\tuplecolor{\inputPredColor}{\emptyset}, \tuplecolor{\outputPredColor}{\emptyset}, \tuplecolor{\statePredColor}{\langle \rangle}\rangle\}}, $&$\tuplecolor{\specialPredColor}{1}\rangle$ \\\hline
		$\langle\tuplecolor{black}{\{\langle\tuplecolor{\inputPredColor}{\{b\}}, \tuplecolor{\outputPredColor}{\{b\}}, \tuplecolor{\statePredColor}{\langle b\rangle}\rangle\}}, $&$\tuplecolor{\specialPredColor}{1}\rangle$ \\\hline
		$\langle\tuplecolor{black}{\{\langle\tuplecolor{\inputPredColor}{\emptyset}, \tuplecolor{\outputPredColor}{\emptyset}, \tuplecolor{\statePredColor}{\langle \rangle}\rangle}, $& \multirow{2}{*}{$\tuplecolor{\specialPredColor}{1}\rangle$} \\ 
		$\hspace{1.05em}\tuplecolor{black}{\langle\tuplecolor{\inputPredColor}{\{b\}}, \tuplecolor{\outputPredColor}{\{b\}}, \tuplecolor{\statePredColor}{\langle b\rangle}\rangle\},}$ 
	\end{tabular}%
};
\node (leaf0b) [stdnodenum,left=of leaf0x,xshift=0.6em,yshift=0pt]{%
	\begin{tabular}{c}%
		\multirow{1}{*}{$i$}\\ 
		\hline\hline
		$1$ \\\hline
		$2$ \\\hline
		\multirow{2}{*}{$3$}\\\\ 
	\end{tabular}%
};
\node (leaf2b) [stdnodenum,left=6.5em of j,xshift=-0.25em,yshift=-11.5em]  {%
	\begin{tabular}{c}%
		\multirow{1}{*}{$i$}\\ 
		\hline\hline
		$1$\\\specialrule{.1em}{.05em}{.05em}	
		$2$\\\specialrule{.1em}{.05em}{.05em}	
		$3$\\\hline
		$4$\\\hline
		\multirow{2}{*}{$5$}\\\\ 
	\end{tabular}%
};
\node (leaf2) [stdnodetable,right=-3em of leaf2b, label={[tdlabel, xshift=0.1em,yshift=-0.25em]right:$\pi_{9}$}]  {%
	\begin{tabular}{l@{\hspace{0.0em}}l@{\hspace{0.0em}}r}%
		\multicolumn{1}{l}{$\langle \tuplecolor{black}{\nu_{9.i}}, $}&\multicolumn{1}{r}{$\tuplecolor{\specialPredColor}{c_{9.i}} \rangle$}\\
		\hline\hline
		$\langle\tuplecolor{black}{\{\langle\tuplecolor{\inputPredColor}{\{d\}}, \tuplecolor{\outputPredColor}{\emptyset}, \tuplecolor{\statePredColor}{\langle \rangle}\rangle\}}, $&$\tuplecolor{\specialPredColor}{1}\rangle$ \\\specialrule{.1em}{.05em}{.05em}	
		$\langle\tuplecolor{black}{\{\langle\tuplecolor{\inputPredColor}{\{e\}}, \tuplecolor{\outputPredColor}{\{e\}}, \tuplecolor{\statePredColor}{\langle e\rangle}\rangle\}}, $&$\tuplecolor{\specialPredColor}{1}\rangle$ \\\specialrule{.1em}{.05em}{.05em}	
		$\langle\tuplecolor{black}{\{\langle\tuplecolor{\inputPredColor}{\{d,e\}}, \tuplecolor{\outputPredColor}{\{e\}}, \tuplecolor{\statePredColor}{\langle d,e\rangle}\rangle\}}, $&$\tuplecolor{\specialPredColor}{1}\rangle$ \\\hline
		$\langle\tuplecolor{black}{\{\langle\tuplecolor{\inputPredColor}{\{d,e\}}, \tuplecolor{\outputPredColor}{\{e\}}, \tuplecolor{\statePredColor}{\langle e,d\rangle}\rangle\}}, $&$\tuplecolor{\specialPredColor}{1}\rangle$ \\\hline
		$\langle\tuplecolor{black}{\{\langle\tuplecolor{\inputPredColor}{\{d,e\}}, \tuplecolor{\outputPredColor}{\{e\}}, \tuplecolor{\statePredColor}{\langle d,e\rangle}\rangle}, $& \multirow{2}{*}{$\tuplecolor{\specialPredColor}{1}\rangle$} \\ 
		$\hspace{1.05em}\tuplecolor{black}{\langle\tuplecolor{\inputPredColor}{\{d,e\}}, \tuplecolor{\outputPredColor}{\{e\}}, \tuplecolor{\statePredColor}{\langle e,d\rangle}\rangle\},}$ 
	\end{tabular}%
};
\node (joinrb2) [stdnodenum,left=-0.45em of leaf2] {%
	\begin{tabular}{c}
		\multirow{1}{*}{$i$}\\
		\hline\hline
		$1$ \\\specialrule{.1em}{.05em}{.05em}	
		$2$ \\\specialrule{.1em}{.05em}{.05em}	
		$3$ \\\hline
		$4$ \\\hline
		\multirow{2}{*}{$5$} \\\\
	\end{tabular}%
};
\coordinate (middle) at ($ (leaf1.north east)!.5!(leaf2.north west) $);
\node (join) [stdnodetable,left=-0.2em of i13, yshift=4.5em, label={[tdlabel, xshift=0.1em,yshift=-0.15em]below right:$\pi_{{13}}$}] {%
	\begin{tabular}{l@{\hspace{0.0em}}l@{\hspace{0.0em}}r}%
		\multicolumn{1}{l}{$\langle \tuplecolor{black}{\nu_{13.i}}, $}&\multicolumn{1}{r}{$\tuplecolor{\specialPredColor}{c_{13.i}} \rangle$}\\
		\hline\hline
		$\langle\tuplecolor{black}{\{\langle\tuplecolor{\inputPredColor}{\emptyset}, \tuplecolor{\outputPredColor}{\emptyset}, \tuplecolor{\statePredColor}{\langle \rangle}\rangle\}}, $&$\tuplecolor{\specialPredColor}{2}\rangle$ \\\hline
		$\langle\tuplecolor{black}{\{\langle\tuplecolor{\inputPredColor}{\{b\}}, \tuplecolor{\outputPredColor}{\{b\}}, \tuplecolor{\statePredColor}{\langle b\rangle}\rangle\}}, $&$\tuplecolor{\specialPredColor}{2}\rangle$ \\\hline
		$\langle\tuplecolor{black}{\{\langle\tuplecolor{\inputPredColor}{\emptyset}, \tuplecolor{\outputPredColor}{\emptyset}, \tuplecolor{\statePredColor}{\langle \rangle}\rangle}, $& \multirow{2}{*}{$\tuplecolor{\specialPredColor}{1}\rangle$} \\ 
		$\hspace{1.05em}\tuplecolor{black}{\langle\tuplecolor{\inputPredColor}{\{b\}}, \tuplecolor{\outputPredColor}{\{b\}}, \tuplecolor{\statePredColor}{\langle b\rangle}\rangle\},}$ 
	\end{tabular}%
};
\node (joinb) [stdnodenum,left=-0.45em of join] {%
	\begin{tabular}{c}
		\multirow{1}{*}{$i$}\\
		\hline\hline
		$1$ \\\hline
		$2$ \\\hline
		\multirow{2}{*}{$3$}\\\\
	\end{tabular}%
};
\node (leaf0) [stdnodetable,below=-1.5em of l1, xshift=-5em, label={[tdlabel, xshift=0.1em,yshift=0.15em]right:$\pi_{1}$}] {%
	\begin{tabular}{l@{\hspace{0.0em}}l@{\hspace{0.0em}}r}%
		\multicolumn{1}{l}{$\langle \tuplecolor{black}{\nu_{1.i}}, $}&\multicolumn{1}{r}{$\tuplecolor{\specialPredColor}{c_{1.i}} \rangle$}\\
		\hline\hline
		$\langle\tuplecolor{black}{\{\langle\tuplecolor{\inputPredColor}{\emptyset}, \tuplecolor{\outputPredColor}{\emptyset}, \tuplecolor{\statePredColor}{\langle\rangle}\rangle\}}, $&$\tuplecolor{\specialPredColor}{1}\rangle$
	\end{tabular}%
};
\node (leaf0r) [stdnodetable,above=0.5em of rt, xshift=1.1em, label={[tdlabel, xshift=0.1em,yshift=-0.1em]below right:$\pi_{14}$}] {%
	\begin{tabular}{l@{\hspace{0.0em}}l@{\hspace{0.0em}}r}%
		\multicolumn{1}{l}{$\langle \tuplecolor{black}{\nu_{14.i}}, $}&\multicolumn{1}{r}{$\tuplecolor{\specialPredColor}{c_{14.i}} \rangle$}\\
		\hline\hline
		$\langle\tuplecolor{black}{\{\langle\tuplecolor{\inputPredColor}{\emptyset}, \tuplecolor{\outputPredColor}{\emptyset}, \tuplecolor{\statePredColor}{\langle\rangle}\rangle\}}, $&$\tuplecolor{\specialPredColor}{3}\rangle$
	\end{tabular}%
};\node (leaf0nr) [stdnodenum,yshift=0.0em, left=-0.5em of leaf0r] {%
	\begin{tabular}{c}%
		\multirow{1}{*}{$i$}\\ 
		\hline\hline
		$1$
	\end{tabular}%
};
\node (leaf0n) [stdnodenum,yshift=0.0em, left=-0.5em of leaf0] {%
	\begin{tabular}{c}%
		\multirow{1}{*}{$i$}\\ 
		\hline\hline
		$1$
	\end{tabular}%
};
\node (joinrrt) [stdnodetable,right=6.5em of i13, yshift=3em, label={[tdlabel, xshift=0.1em,yshift=+0.25em]right:$\pi_{{12}}$}] {%
		\begin{tabular}{l@{\hspace{0.0em}}l@{\hspace{0.0em}}r}%
		\multicolumn{1}{l}{$\langle \tuplecolor{black}{\nu_{12.i}}, $}&\multicolumn{1}{r}{$\tuplecolor{\specialPredColor}{c_{12.i}} \rangle$}\\
		\hline\hline
		$\langle\tuplecolor{black}{\{\langle\tuplecolor{\inputPredColor}{\emptyset}, \tuplecolor{\outputPredColor}{\emptyset}, \tuplecolor{\statePredColor}{\langle \rangle}\rangle\}}, $&$\tuplecolor{\specialPredColor}{2}\rangle$ \\\hline
		$\langle\tuplecolor{black}{\{\langle\tuplecolor{\inputPredColor}{\{b\}}, \tuplecolor{\outputPredColor}{\emptyset}, \tuplecolor{\statePredColor}{\langle b \rangle}\rangle\}}, $&$\tuplecolor{\specialPredColor}{2}\rangle$ \\\hline
		$\langle\tuplecolor{black}{\{\langle\tuplecolor{\inputPredColor}{\{b\}}, \tuplecolor{\outputPredColor}{\{b\}}, \tuplecolor{\statePredColor}{\langle b \rangle}\rangle\}}, $&$\tuplecolor{\specialPredColor}{1}\rangle$ \\\hline
		$\langle\tuplecolor{black}{\{\langle\tuplecolor{\inputPredColor}{\emptyset}, \tuplecolor{\outputPredColor}{\emptyset}, \tuplecolor{\statePredColor}{\langle \rangle}\rangle}, \langle\tuplecolor{\inputPredColor}{\{b\}}, \tuplecolor{\outputPredColor}{\emptyset}, \tuplecolor{\statePredColor}{\langle b\rangle}\rangle\},$&{$\tuplecolor{\specialPredColor}{1}\rangle$} \\\hline
		$\langle\tuplecolor{black}{\{\langle\tuplecolor{\inputPredColor}{\emptyset}, \tuplecolor{\outputPredColor}{\emptyset}, \tuplecolor{\statePredColor}{\langle \rangle}\rangle}, \langle\tuplecolor{\inputPredColor}{\{b\}}, \tuplecolor{\outputPredColor}{\{b\}}, \tuplecolor{\statePredColor}{\langle b\rangle}\rangle\}, $&{$ \tuplecolor{\specialPredColor}{0}\rangle$} \\\hline
		$\langle\tuplecolor{black}{\{\langle\tuplecolor{\inputPredColor}{\{b\}}, \tuplecolor{\outputPredColor}{\emptyset}, \tuplecolor{\statePredColor}{\langle b \rangle}\rangle}, \langle\tuplecolor{\inputPredColor}{\{b\}}, \tuplecolor{\outputPredColor}{\{b\}}, \tuplecolor{\statePredColor}{\langle b\rangle}\rangle\}, $& {$\tuplecolor{\specialPredColor}{1}\rangle$} \\\hline
		$\langle\tuplecolor{black}{\{\langle\tuplecolor{\inputPredColor}{\emptyset}, \tuplecolor{\outputPredColor}{\emptyset}, \tuplecolor{\statePredColor}{\langle \rangle}\rangle, \langle\tuplecolor{\inputPredColor}{\{b\}}, \tuplecolor{\outputPredColor}{\emptyset}, \tuplecolor{\statePredColor}{\langle b \rangle}\rangle}, $&\multirow{2}{*}{$\tuplecolor{\specialPredColor}{0}\rangle$} \\
		\hspace{1.05em}$\tuplecolor{black}{\langle\tuplecolor{\inputPredColor}{\{b\}}, \tuplecolor{\outputPredColor}{\{b\}}, \tuplecolor{\statePredColor}{\langle b\rangle}\rangle\}}, $ 
	\end{tabular}%
};
\node (joinrbrt) [stdnodenum,left=-0.45em of joinrrt] {%
	\begin{tabular}{c}
		\multirow{1}{*}{$i$}\\
		\hline\hline
		$1$ \\\hline
		$2$ \\\hline
		$3$ \\\hline
		{$4$}\\\hline
		{$5$}\\\hline
		{$6$}\\\hline
		\multirow{2}{*}{$7$}\\\\
	\end{tabular}%
};
\node (joinr) [stdnodetable,right=6.5em of i13, yshift=-7em, label={[tdlabel, xshift=-1.1em,yshift=+0.1em]above right:$\pi_{{10}}$}] {%
	\begin{tabular}{l@{\hspace{0.0em}}l@{\hspace{0.0em}}r}%
		\multicolumn{1}{l}{$\langle \tuplecolor{black}{\nu_{10.i}}, $}&\multicolumn{1}{r}{$\tuplecolor{\specialPredColor}{c_{10.i}} \rangle$}\\
		\hline\hline
		$\langle\tuplecolor{black}{\{\langle\tuplecolor{\inputPredColor}{\{d\}}, \tuplecolor{\outputPredColor}{\{d\}}, \tuplecolor{\statePredColor}{\langle d \rangle}\rangle\}}, $&$\tuplecolor{\specialPredColor}{1}\rangle$ \\\hline
		$\langle\tuplecolor{black}{\{\langle\tuplecolor{\inputPredColor}{\{b,d\}}, \tuplecolor{\outputPredColor}{\{d\}}, \tuplecolor{\statePredColor}{\langle b,d \rangle}\rangle\}}, $&$\tuplecolor{\specialPredColor}{1}\rangle$ \\\hline
		$\langle\tuplecolor{black}{\{\langle\tuplecolor{\inputPredColor}{\{d\}}, \tuplecolor{\outputPredColor}{\{d\}}, \tuplecolor{\statePredColor}{\langle d\rangle}\rangle}, $& \multirow{2}{*}{$\tuplecolor{\specialPredColor}{1}\rangle$} \\ 
		$\hspace{1.05em}\tuplecolor{black}{\langle\tuplecolor{\inputPredColor}{\{b,d\}}, \tuplecolor{\outputPredColor}{\{d\}}, \tuplecolor{\statePredColor}{\langle b,d\rangle}\rangle\},}$\\\specialrule{.1em}{.05em}{.05em}	
		$\langle\tuplecolor{black}{\{\langle\tuplecolor{\inputPredColor}{\{b,e\}}, \tuplecolor{\outputPredColor}{\{e\}}, \tuplecolor{\statePredColor}{\langle b,e\rangle}\rangle\}}, $&$\tuplecolor{\specialPredColor}{1}\rangle$ \\\hline
		$\langle\tuplecolor{black}{\{\langle\tuplecolor{\inputPredColor}{\{b,e\}}, \tuplecolor{\outputPredColor}{\{b,e\}}, \tuplecolor{\statePredColor}{\langle e,b\rangle}\rangle\}}, $&$\tuplecolor{\specialPredColor}{1}\rangle$ \\\hline
		$\langle\tuplecolor{black}{\{\langle\tuplecolor{\inputPredColor}{\{b,e\}}, \tuplecolor{\outputPredColor}{\{e\}}, \tuplecolor{\statePredColor}{\langle b,e\rangle}\rangle}, $& \multirow{2}{*}{$\tuplecolor{\specialPredColor}{1}\rangle$} \\ 
		$\hspace{1.05em}\tuplecolor{black}{\langle\tuplecolor{\inputPredColor}{\{b,e\}}, \tuplecolor{\outputPredColor}{\{b,e\}}, \tuplecolor{\statePredColor}{\langle e,b\rangle}\rangle\},}$\\\specialrule{.1em}{.05em}{.05em}	
		$\langle\tuplecolor{black}{\{\langle\tuplecolor{\inputPredColor}{\{d,e\}}, \tuplecolor{\outputPredColor}{\{d,e\}}, \tuplecolor{\statePredColor}{\langle d,e\rangle}\rangle\}}, $&$\tuplecolor{\specialPredColor}{1}\rangle$ \\\hline
		$\langle\tuplecolor{black}{\{\langle\tuplecolor{\inputPredColor}{\{d,e\}}, \tuplecolor{\outputPredColor}{\{d,e\}}, \tuplecolor{\statePredColor}{\langle e,d\rangle}\rangle\}}, $&$\tuplecolor{\specialPredColor}{1}\rangle$ \\\hline
		$\langle\tuplecolor{black}{\{\langle\tuplecolor{\inputPredColor}{\{d,e\}}, \tuplecolor{\outputPredColor}{\{d,e\}}, \tuplecolor{\statePredColor}{\langle d,e\rangle}\rangle}, $& \multirow{2}{*}{$\tuplecolor{\specialPredColor}{1}\rangle$} \\ 
		$\hspace{1.05em}\tuplecolor{black}{\langle\tuplecolor{\inputPredColor}{\{d,e\}}, \tuplecolor{\outputPredColor}{\{d,e\}}, \tuplecolor{\statePredColor}{\langle e,d\rangle}\rangle\},}$ 
	\end{tabular}%
};
\node (joinr2) [stdnodetable,right=0.5em of joinr, yshift=0em, label={[tdlabel, xshift=-1.1em,yshift=+0.1em]above right:$\pi_{{11}}$}] {%
	\begin{tabular}{l@{\hspace{0.0em}}l@{\hspace{0.0em}}r}%
		\multicolumn{1}{l}{$\langle \tuplecolor{black}{\nu_{11.i}}, $}&\multicolumn{1}{r}{$\tuplecolor{\specialPredColor}{c_{11.i}} \rangle$}\\
		\hline\hline
		$\langle\tuplecolor{black}{\{\langle\tuplecolor{\inputPredColor}{\{d\}}, \tuplecolor{\outputPredColor}{\{d\}}, \tuplecolor{\statePredColor}{\langle d \rangle}\rangle\}}, $&$\tuplecolor{\specialPredColor}{2}\rangle$ \\\hline
		$\langle\tuplecolor{black}{\{\langle\tuplecolor{\inputPredColor}{\{b,d\}}, \tuplecolor{\outputPredColor}{\{d\}}, \tuplecolor{\statePredColor}{\langle b,d \rangle}\rangle\}}, $&$\tuplecolor{\specialPredColor}{1}\rangle$ \\\hline
		$\langle\tuplecolor{black}{\{\langle\tuplecolor{\inputPredColor}{\{d\}}, \tuplecolor{\outputPredColor}{\{d\}}, \tuplecolor{\statePredColor}{\langle d\rangle}\rangle}, $& \multirow{2}{*}{$\tuplecolor{\specialPredColor}{1}\rangle$} \\ 
		$\hspace{1.05em}\tuplecolor{black}{\langle\tuplecolor{\inputPredColor}{\{b,d\}}, \tuplecolor{\outputPredColor}{\{d\}}, \tuplecolor{\statePredColor}{\langle b,d\rangle}\rangle\},}$\\\specialrule{.1em}{.05em}{.05em}	
		$\langle\tuplecolor{black}{\{\langle\tuplecolor{\inputPredColor}{\{b\}}, \tuplecolor{\outputPredColor}{\emptyset}, \tuplecolor{\statePredColor}{\langle b\rangle}\rangle\}}, $&$\tuplecolor{\specialPredColor}{1}\rangle$ \\\hline
		$\langle\tuplecolor{black}{\{\langle\tuplecolor{\inputPredColor}{\{b\}}, \tuplecolor{\outputPredColor}{\{b\}}, \tuplecolor{\statePredColor}{\langle b\rangle}\rangle\}}, $&$\tuplecolor{\specialPredColor}{1}\rangle$ \\\hline
		$\langle\tuplecolor{black}{\{\langle\tuplecolor{\inputPredColor}{\{b\}}, \tuplecolor{\outputPredColor}{\emptyset}, \tuplecolor{\statePredColor}{\langle b\rangle}\rangle}, $& \multirow{2}{*}{$\tuplecolor{\specialPredColor}{1}\rangle$} \\ 
		$\hspace{1.05em}\tuplecolor{black}{\langle\tuplecolor{\inputPredColor}{\{b\}}, \tuplecolor{\outputPredColor}{\{b\}}, \tuplecolor{\statePredColor}{\langle b\rangle}\rangle\},}$\\\specialrule{.1em}{.05em}{.05em}	
		$\langle\tuplecolor{black}{\{\langle\tuplecolor{\inputPredColor}{\{d,e\}}, \tuplecolor{\outputPredColor}{\{d,e\}}, \tuplecolor{\statePredColor}{\langle d,e\rangle}\rangle\}}, $&$\tuplecolor{\specialPredColor}{1}\rangle$
	\end{tabular}%
};
\node (joinrb2) [stdnodenum,right=-0.45em of joinr2] {%
	\begin{tabular}{c}
		\multirow{1}{*}{$i$}\\
		\hline\hline
		$1$ \\\hline
		$2$ \\\hline
		\multirow{2}{*}{$3$} \\\\\specialrule{.1em}{.05em}{.05em}	
		$4$ \\\hline
		$5$ \\\hline
		\multirow{2}{*}{$6$} \\\\\specialrule{.1em}{.05em}{.05em}	
		$7$
	\end{tabular}%
};
\node (joinrb) [stdnodenum,left=-0.45em of joinr] {%
	\begin{tabular}{c}
		\multirow{1}{*}{$i$}\\
		\hline\hline
		$1$ \\\hline
		$2$ \\\hline
		\multirow{2}{*}{$3$} \\\\\specialrule{.1em}{.05em}{.05em}	
		$4$ \\\hline
		$5$ \\\hline
		\multirow{2}{*}{$6$} \\\\\specialrule{.1em}{.05em}{.05em}	
		$7$ \\\hline
		$8$ \\\hline
		\multirow{2}{*}{$9$} \\\\
	\end{tabular}%
};
\coordinate (top) at ($ (leaf2.north east)+(0.6em,-0.5em) $);
\coordinate (bot) at ($ (top)+(0,-12.9em) $);

\draw [<-] (j) to (rt);
\draw [->] (j) to ($ (i13.north)$);
\draw [->] (j) to ($ (r22.north)$);
\draw [->](r2) to (i22);
\draw [->](r22) to (r2);
\draw [<-](i2) to (i22);
\draw [<-](l2) to (i2);
\draw [<-](l1) to (i1);
\draw [->](i12) to (i1);
\draw [->](i13) to (i12);
\draw [<-](bc) to (l2);
\draw [<-](cc) to (bc);
\draw [<-](lrx) to (cc);

\draw [dashed, bend right=40] (leaf0r) to (rt);
\draw [dashed, bend left=15] (joinrrt) to (r22);
\draw [dashed] (j) to (join);
\draw [dashed, bend right=15] (i2) to (leaf2);
\draw [dashed, bend left=5] (i12) to (leaf1);
\draw [dashed, bend left=22] (leaf0) to (l1);
\draw [dashed, bend right=1] (leaf0x) to (i13);
\draw [dashed, bend right=45] (joinr) to (i22);
\end{tikzpicture}
\caption{Selected tables of~$\pi$ obtained by~$\dpa_{\algo{PROJ}}$ on
  TD~${\cal T}$ and purged table mapping~$\nu$ (obtained by purging on~$\tau$, c.f, 
  Figure~\ref{fig:running2}).} 
\label{fig:running3}
\end{figure*}

\begin{example} 
  Recall instance~$(\prog,P)$, TD~$\TTT$, and tables~$\tab{1}$,
  $\ldots$, $\tab{14}$ from Examples~\ref{ex:running0}, \ref{ex:sat},
  and Figure~\ref{fig:running2}. Figure~\ref{fig:running3} depicts
  selected tables of~$\pi_1, \ldots, \pi_{14}$ obtained after
  running~$\dpa_\PROJ$ for counting projected answer sets.
  We assume that row $i$ in table $\pi_t$ corresponds to
  $\vec{v_{t.i}} = \langle \rho_{t.i}, c_{t.i} \rangle$
  where~$\rho_{t.i}\subseteq\nu(t)$.
  Recall that for some nodes~$t$, there are rows among
  different~$\PRIM$-tables that are removed (highlighted gray in Figure~\ref{fig:running2}) during purging. 
  By purging we avoid to correct stored counters (backtracking)
  whenever a row has no ``succeeding'' row in the
  parent table.
  
  Next, we discuss selected rows obtained by
  $\dpa_\PROJ((\prog,P),(T,\chi,\nu))$. Tables $\pi_1$, $\ldots$,
  $\pi_{14}$ are shown in Figure~\ref{fig:running3}.
  Since~$\type(t_1)= \leaf$, we have
  $\pi_1=\langle\{\langle \emptyset , \emptyset, \langle \rangle
  \rangle \}, 1\rangle$.  Intuitively, at~$t_1$ the
  row~$\langle\emptyset, \emptyset, \langle\rangle\rangle$ belongs to~$1$ bucket.
  Node~$t_2$ introduces atom~$a$, which results in
  table~$\pi_2\eqdef\big\SB\langle \{\vec{u_{2.1}}\},
  1\rangle, \langle \{\vec{u_{2.2}}\},
  1\rangle, \langle \{\vec{u_{2.1}}, \vec{u_{2.2}}\},
  1\rangle\big\SE$, where~$\vec{u_{2.1}}=\langle \emptyset, \emptyset, \langle \rangle\rangle$ and~$\vec{u_{2.2}}=\langle \{a\}, \emptyset, \langle a\rangle \rangle$ 
  (derived similarly to table~$\pi_{4}$ as in Example~\ref{ex:pcnt}). 
  %
  Node~$t_{10}$ introduces projected atom~$e$, and
  node~$t_{11}$ removes~$e$.  
  %
  %
  For row~$\vec{v_{11.1}}$ we compute the
  count~$\ipmc(t_{11},\{\vec{\tabval_{11.1}}\},
  \langle\pi_{10}\rangle)$ by means of~$\pcnt$. Therefore, take
  for~$\varphi$ the singleton set~$\{\vec{\tabval_{11.1}}\}$.
  We simply have
  $\ipmc(t_{11},\{\vec{\tabval_{11.1}}\}, \langle\pi_{10}\rangle) =
  \pmc(t_{11},\{\vec{\tabval_{11.1}}\}, \langle\pi_{10}\rangle)$.  To
  compute
  $\pmc(t_{11},\{\vec{\tabval_{11.1}}\}, \langle\pi_{10}\rangle)$, we
  take for~$O$ the sets~$\{\vec{u_{10.1}}\}$, $\{\vec{u_{10.6}}\}$,
  $\{\vec{u_{10.7}}\}$, and~$\{\vec{u_{10.6}}, \vec{u_{10.7}}\}$ into
  account, since all other non-empty subsets of origins
  of~$\vec{\tabval_{11.1}}$ in~$\nu_{10}$ do not occur in~$\pi_{10}$.
  Then, we take the sum over the values
  $\sipmc(\langle \pi_{10}\rangle,\{\vec{\tabval_{10.1}}\})=1$,
  $\sipmc(\langle \pi_{10}\rangle,\{\vec{\tabval_{10.6}}\})=1$,
  $\sipmc(\langle \pi_{10}\rangle,\{\vec{\tabval_{10.7}}\})=1$ and
  subtract
  $\sipmc(\langle \pi_{10}\rangle,\{\vec{\tabval_{10.6}},
  \vec{\tabval_{10.7}}\})=1$. This results
  in~$\pmc(t_{11},\{\vec{\tabval_{11.1}}\}, \langle\pi_{10}\rangle) =
  c_{10.1} + c_{10.7}\; + $ $ c_{10.8} - c_{10.9} = 2$. We proceed similarly
  for row~$v_{11.2}$, resulting in~$c_{11.2}=1$.
  Then for row~$v_{11.3}$,
  $\ipmc(t_{11},\{\vec{\tabval_{11.1}},\vec{\tabval_{11.6}}\}, \langle\pi_{10}\rangle) = | %
  \pmc(t_{11},\{\vec{\tabval_{11.1}},\vec{\tabval_{11.6}}\}, \langle\pi_{10}\rangle) - \ipmc(t_{11},\{\vec{\tabval_{11.1}}\}, \langle\pi_{10}\rangle)$ $- \ipmc(t_{11},\{\vec{\tabval_{11.6}}\}, \langle\pi_{10}\rangle) | = \Card{2-c_{11.1}-c_{11.2}}= \Card{2 -2 - 1} =\Card{-1} = 1 = c_{11.3}$.
  Hence, $c_{11.3} = 1$ represents the number of projected answer sets,
  both rows~$\vec{u_{11.1}}$ and~$\vec{u_{11.6}}$ have in common. We
  then use it for table~$t_{12}$.  Node~$t_{12}$ removes projection
  atom~$d$.  For node~$t_{13}$ where $\type(t_{13}) = \join$ one
  multiplies stored $\sipmc$ values for \AlgA-rows in the two children
  of~$t_{13}$ accordingly.  In the end, the projected answer sets count
  of~$\prog$ corresponds to~$\sipmc(\langle\pi_{14}\rangle,\vec{u_{14.1}})=3$.
\end{example}

\subsection{Runtime Analysis and Correctness}
Next, we present asymptotic upper bounds on the runtime of our
Algorithm~$\dpa_{\PROJ}$.  
%
%
%
%
We assume~$\gamma(n)$ to be the number of operations that are required
to multiply two~$n$-bit integers, which can be achieved in time
$n\cdot log\, n \cdot log\, log\,n$~\cite{Knuth1998,Harvey2016}.  
Often even constant-time multiplication is assumed.




\begin{theorem}
  \label{thm:runtime}
  Given a \PASP instance~$(\prog,P)$ and a tabled tree decomposition
  $\TTT_{\text{purged}} = (T,\chi,\nu)$ of~$\mathcal{G}_\prog$ of width~$k$ with $g$
  nodes. Then, $\dpa_{\PROJ}$ runs in time
  $\mathcal{O}(2^{4m}\cdot g \cdot \gamma(\CCard{\prog}))$
  where~$m\eqdef \max(\{\nu(t) \mid t\in N\})$.
\end{theorem}
\begin{proof}
  Let~$d = k+1$ be maximum bag size of the TD~$\TTT$. For each
  node~$t$ of $T$, we consider the table $\nu(t)$ of $\TTT_{\text{purged}}$.
  %
  Let TDD~$(T,\chi,\pi)$ be the output of~$\dpa_\PROJ$. In worst case,
  we store in~$\pi(t)$ each subset~$\rho \subseteq \nu(t)$ together
  with exactly one counter. Hence, we have at most $2^{m}$ many rows
  in $\rho$.
  %
  %
  In order to compute $\ipmc$ for~$\rho$, we consider every
  subset~$\varphi \subseteq \rho$ and compute~$\pcnt$. Since
  $\Card{\rho}\leq m$, we have at most~$2^{m}$ many subsets $\varphi$
  of $\rho$. Finally, for computing $\pcnt$, we consider in the worst
  case each subset of the origins of~$\varphi$ for each child table,
  which are at most~$2^{m}\cdot 2^{m}$ because of nodes~$t$
  with~$\type(t)=\join$.
  %
  %
  In total, we obtain a runtime bound
  of~$\bigO{2^{m} \cdot 2^{m} \cdot 2^{m}\cdot 2^{m} \cdot
    \gamma(\CCard{\prog})} \subseteq \bigO{2^{4m} \cdot
    \gamma(\CCard{\prog}})$ due to multiplication of two $n$-bit
  integers for nodes~$t$ with~$\type(t)=\join$ at costs~$\gamma(n)$.
  %
  %
  Then, we apply this to every node of~$T$ 
  resulting in
  runtime~$\bigO{2^{4m} \cdot g \cdot \gamma(\CCard{\prog})}$.
  %
\end{proof}

\begin{corollary}\label{cor:runtime}
  Given an instance $(\prog,P)$ of \PASP where $\prog$ is
  head-cycle-free and has treewidth~$k$. Then, $\mdpa{\PRIM}$ runs in
  time~$\mathcal{O}(2^{3^{k+1.27}\cdot k!}\cdot \CCard{\prog}\cdot
  \gamma(\CCard{\prog}))$.
\end{corollary}
\begin{proof}
  We can compute in time~$2^{\mathcal{O}(k^3)}\cdot\CCard{\mathcal{G}_\prog}$ a
  TD~${\cal T'}$ with~$g\leq \CCard{\prog}$ nodes of width at
  most~$k$~\cite{Bodlaender96}. Then, we can simply
  run~$\dpa_{\PRIM}$, which runs in
  time~$\mathcal{O}({3^{k}\cdot k!}\cdot \CCard{\prog})$ by
  Theorem~\ref{thm:primruntime} and since the number of nodes of a
  tree decomposition is linear in the size of the input
  instance~\cite{Bodlaender96}. 
  Then, we again traverse the TD for purging and output
  $\TTT_{\text{purged}}$, which runs in time single exponential of the
  treewidth and linear of the instance size. Finally, we run
  $\dpa_{\PROJ}$ and obtain by Theorem~\ref{thm:runtime} that the
  runtime bound
  $\mathcal{O}(2^{4\cdot3^{k}\cdot k!}\cdot \CCard{\prog}\cdot
  \gamma(\CCard{\prog})) \subseteq $
  $\mathcal{O}(2^{3^{k + 1.27}\cdot k!}\cdot \CCard{\prog}\cdot
  \gamma(\CCard{\prog}))$.  %
  Hence, the corollary holds.
\end{proof}


The next result establishes lower bounds. 

\begin{theorem}
  Unless ETH fails, $\PASP$ cannot be solved in
  time~$2^{2^{o(k)}}\cdot \CCard{\prog}^{o(k)}$ for a given instance
  $(\prog,P)$ where~$k$ is the treewidth of the primal graph of~$\prog$.
\end{theorem}
\begin{proof}
  Assume for proof by contradiction that there is such an algorithm.
  %
  %
  We show that this contradicts a very recent
  result~\cite{LampisMitsou17,FichteEtAl18}, which states that one
  cannot decide the validity of a QBF
  $\forall{V_1}.\exists V_2.E$ in
  time~$2^{2^{o(k)}}\cdot \CCard{E}^{o(k)}$,
  where 
  $E$ is in CNF.
  %
  Let $(\forall{V_1}.\exists V_2.E,k)$ be an instance
  of~$\forall\exists$-\SAT parameterized by the treewidth~$k$. Then,
  we reduce to an instance~$((\prog,P),2k)$ of the decision
  version~$\PASP$-exactly-$2^{\Card{V_1}}$ when parameterized by
  treewidth of~$\mathcal{G}_\prog$ such that $P=V_1$, the number of solutions is
  exactly~$2^{\Card{V_1}}$, and~$\prog$ is as follows.  For
  each~$v\in V_1 \cup V_2$, program~$\prog$ contains 
  rule~$v \lor nv \hsep$.
  Each clause~$x_1, \ldots, x_i, \neg x_{i+1}, \ldots, \neg x_j$
  results in one additional
  rule~$\hsep \neg x_1,\ldots, \neg x_i, x_{i+1}, \ldots, x_{j}$.
  It is easy to see that the reduction is correct
  and therefore instance~$((\prog,P), 2k)$ is
  a yes instance of 
  $\PASP$-exactly-$2^{\Card{V_1}}$ 
  if and only if~$(\forall{V_1}.\exists V_2.E,k)$
  is a yes instance of 
  problem~$\forall\exists$-\SAT. 
  %
  %
  %
  In fact, the reduction is also an fpl-reduction, since the treewidth
  of $\prog$ at most doubles due to duplication of atoms.
  Note that we require an \emph{fpl}-reduction here, as results do not
  carry over from simple fpt-reductions.
  This concludes the proof and establishes the theorem.
\end{proof}

\longversion{
\begin{corollary}
  Unless ETH fails, $\PASP$ cannot be solved in
  time~$2^{2^{o(k)}}\cdot \CCard{\prog}^{o(k)}$ for a given instance
  $(\prog,P)$ where~$k$ is the treewidth of the incidence graph of~$\prog$.
\end{corollary}
\begin{proof}
  Let $w_i$ and $w_p$ be the treewidth of the incidence graph and
  primal graph of~$\prog$, respectively. Then,
  $w_i \leq w_p +1$~\cite{SamerSzeider10b}, which establishes the
  claim.
\end{proof}

\begin{corollary}
  Given an instance $(\prog,P)$ of \PASP where $\prog$ has treewidth~$k$. Then,
  Algorithm~$\mdpa{\AlgA}$ runs in
  time~$2^{2^{\Theta(k)}} \cdot \CCard{\prog}^c$ for some positive
  integer~$c$.
\end{corollary}}

Finally, we state that indeed~$\mdpa{\PRIM}$ gives the projected answer sets count of a given head-cycle-free program~$\prog$.

\begin{proposition}[$\star$]\label{prop:phcworks}
  Algorithm $\mdpa{\PRIM}$ is correct and outputs for any instance
  of \PASP its projected answer sets count.
\end{proposition}
\begin{proof}
Soundness follows by establishing an invariant for any row of~$\pi(t)$ guaranteeing that the values of~$\ipmc$ indeed capture ``all-overlapping'' counts of~$\progt{t}$. One can show that the invariant is a consequence of the properties of~\PRIM and the additional ``purging'' step, which neither destroys soundness nor completeness of~$\dpa_\PRIM$. Further, completeness guarantees that indeed all the required rows are computed.
\end{proof}

\subsection{Solving \PDASP for Disjunctive Programs}

In this section, we extend our algorithm to solve the projected answer
set counting problem (\PDASP) for disjunctive programs. Therefore, we
simply use a local algorithm \algo{PRIM} for disjunctive ASP that was
introduced in the
literature~\cite{FichteEtAl17a,JaklPichlerWoltran09}.
%
%
%
Recall algorithm~\mdpa{\AlgA} illustrated in
Figure~\ref{fig:multiarch}.
First, we 
construct a graph representation and heuristically compute a tree
decomposition of this graph. Then, we run $\dpa_\algo{PRIM}$ as first
traversal resulting in TTD~$(T,\chi,\tau)$. Next, we purge rows
of~$\tau$, which can not be extended to an answer set resulting in
TTD~$(T,\chi,\nu)$. Finally, we compute the projected answer sets count
by~$\dpa_{\PROJ}$ and obtain TTD~$(T,\chi,\pi)$.

\begin{proposition}[$\star$]\label{prop:disjworks}
  $\mdpa{\algo{PRIM}}$ is correct, i.e., it outputs the projected answer sets count for any instance
  of \PDASP.
\end{proposition}

The following corollary states the runtime results.

\begin{corollary}\label{cor:disjruntime}
  Given an instance $(\prog,P)$ of \PDASP where $\prog$ is a
  disjunctive program of treewidth~$k$. Then, $\mdpa{\algo{PRIM}}$
  runs in
  time~$\mathcal{O}(2^{2^{2^{k+3}}}\cdot \CCard{\prog}\cdot
  \gamma(\CCard{\prog}))$.
\end{corollary}
\begin{proof}
  The first two steps follow the proof of Corollary~\ref{cor:runtime}.
  However, $\dpa_{\algo{PRIM}}$ runs in
  time~$\mathcal{O}(2^{2^{k+2}}\cdot
  \CCard{\prog})$~\cite{FichteEtAl17a}. Finally, we run $\dpa_{\PROJ}$
  and obtain by Theorem~\ref{thm:runtime} that
  $\mathcal{O}(2^{4\cdot2^{2^{k+2}}}\cdot \CCard{\prog}\cdot
  \gamma(\CCard{\prog})) \subseteq $
  $\mathcal{O}(2^{2^{2^{k+3}}}\cdot \CCard{\prog}\cdot
  \gamma(\CCard{\prog}))$.  
  %
  %
\end{proof}

%
%





Again, we are interested in whether we can improve the algorithm
significantly. While we obtain lower bounds from the ETH for~$\SAT$
(single-exponential) and
for~$\forall\exists$-\SAT/$\exists\forall$-\SAT (double-exponential),
to our knowledge it is unproven whether this extends to
$\forall\exists\forall$-\SAT and~$\exists\forall\exists$-\SAT
(triple-exponential).
Since it was anticipated by~\citex{MarxMitsou16} that it follows
just by assuming ETH, we state this as hypothesis. In particular, they
claimed that alternating quantifier alternations are the reason for
large dependence on treewidth. However, the proofs can be quite
involved, trading an additional alternation for exponential
compression.

\begin{hypothesis}\label{hyp:lampis3}
  The $\forall\exists\forall$-\SAT problem for a QBF~$Q$ in DNF of
  treewidth~$k$ can not be decided in
  time~${2^{2^{2^{o(k)}}}}\cdot \CCard{Q}^{o(k)}$.
\end{hypothesis}



%

\longversion{\begin{proposition}
  Unless ETH fails, QBFs of the form $\exists V_1.\forall V_2.\cdots\forall V_\ell. E$
where~$k$ is the treewidth of the primal graph of DNF formula~$E$, can not be solved in time~$2^{2^{\dots^{2^{o(k)}}}}\cdot \CCard{\prog}^{o(k)}$,
where the height of the tower is~$\ell$.
\end{proposition}

\begin{proof}[Idea]
	Follows by construction defined in the proof of~\cite{LampisMitsou17} for QBFs of the form~$\exists V_1.\forall V_2. E$. \todo{provide rigorous proof?}
\end{proof}

\begin{corollary}
Unless ETH fails, QBFs of the form $\forall V_1.\exists V_2.\cdots\forall V_\ell. E$
where~$k$ is the treewidth of the primal graph of CNF formula~$E$, can not be solved in time~$2^{2^{\dots^{2^{o(k)}}}}\cdot \CCard{\prog}^{o(k)}$,
where the height of the tower is~$\ell$.
\end{corollary}}

\begin{theorem}\label{thm:lowerbound_disj}
  Unless Hypothesis~\ref{hyp:lampis3} fails, \PDASP for disjunctive programs~$\prog$ cannot be
  solved in time~$2^{2^{2^{o(k)}}} \cdot \CCard{\prog}^{o(k)}$ for
  given instance~$(\prog, P)$ of treewidth~$k$.
\end{theorem}
\begin{proof}
  Assume for proof by contradiction that there is such an algorithm.
  %
  %
  We show that this contradicts Hypothesis~\ref{hyp:lampis3},~i.e.,
  we cannot decide the validity of a QBF 
  %
  $Q=\forall{V_1}.\exists V_2.\forall V_3.E$ in
  time~$2^{2^{2^{o(k)}}}\cdot \CCard{E}^{o(k)}$
  %
  where 
  $E$ is in DNF. 
  %
  %
  %
  Assume we have
  %
  given such an instance when parameterized by the treewidth~$k$.
  In the following, we employ a well-known
  reduction~$R$~\cite{EiterGottlob95}, which
  transforms~$\exists V_2.\forall V_3. E$
  into~$\prog=R(\exists V_2.\forall V_3. E)$ and gives a yes
  instance~$\prog$ of consistency if and only
  if~$\exists V_2. \forall V_3. E$ is a yes instance of
  $\exists\forall$-\SAT.
  Then, we reduce instance~$(Q,k)$ via a reduction~$S$ to an
  instance~$((\prog',V_1),2k+2)$, where $\prog'=R(\exists V_2'.\forall V_3. E)$, $V_2'\eqdef V_1\cup V_2$, of the decision
  version~$\PDASP$-exactly-$2^{\Card{V_1}}$ of~$\PDASP$ when
  parameterized by treewidth such that the number of projected answer
  sets is
  exactly~$2^{\Card{V_1}}$.
  %
  %
  It is easy to see that reduction~$S$
  %
  %
  gives a yes instance~$(\prog',V_1)$
  of~$\PDASP$-exactly-$2^{\Card{V_1}}$ if and only
  if~$\forall V_1.\exists V_2. \forall V_3. E$ is a yes instance
  of~$\forall\exists\forall$-\SAT.
  However, it remains to show that the reduction~$S$ indeed increases
  the treewidth only linearly.
  %
  %
  %
  %
  Therefore, let $\TTT=(T,\chi)$ be TD of~$E$. We transform~$\TTT$
  into a TD~$\TTT'=(T,\chi')$ of~$\mathcal{G}_{\prog'}$ as follows.  For each
  bag~$\chi(t)$ of~$\TTT$, we add vertices for the atoms~$w$ and $w'$
  (two additional atoms introduced in reduction~$R$) and in addition
  we duplicate each vertex~$v$ in~$\chi(t)$ (due to corresponding duplicate
  atoms introduced in reduction~$R$). Observe
  that~$\width(\TTT') \leq 2\cdot \width(\TTT) + 2$. By construction
  of~$R$, $\TTT'$ is then a TD of~$\mathcal{G}_{\prog'}$.
  Hence, $S$ is also an fpl-reduction.
\end{proof}

\longversion{
\begin{corollary}
Unless ETH fails, \PDASP for disjunctive programs~$\prog$ cannot be solved in time~$2^{2^{2^{o(k)}}} \cdot \CCard{\prog}^{o(k)}$ for given instance
~$(\prog, P)$ where~$k$ is the treewidth of the incidence graph of~$\prog$.
\end{corollary}}

Then, the runtime of algorithm~$\mdpa{\algo{PRIM}}$ is asymptotically
worst-case optimal, depending on multiplication costs~$\gamma(n)$. 
%
\longversion{
In fact, we can conclude the following
corollary, which renders algorithm~$\mdpa{\algo{PRIM}}$ asymptotically
worst-case optimal, depending on the costs~$\gamma(n)$ for multiplying
two~$n$-bit numbers.

\begin{corollary}
Unless ETH fails, \PDASP for disjunctive programs~$\prog$ runs in time~$2^{2^{2^{\Theta(k)}}} \cdot \CCard{\prog} \cdot \gamma(\CCard{\prog})$ for given instance
~$(\prog, P)$ where~$k$ is the treewidth of the primal graph of~$\prog$.
\end{corollary}
\begin{proof}
Lower bounds by Theorem~\ref{thm:lowerbound_disj}. Upper bound by Algorithm~$\dpa_{\algo{PRIM}}$ in the first pass (c.f., Corollary~\ref{cor:disjruntime}),
followed by purging and the projection algorithm~$\dpa_\PROJ$ in the second pass.
\end{proof}
}
%

%
%
%

\section{Conclusions}\label{sec:conclusions}

In the light of very recent works~\cite{EibenEtAl19,ijcai,HecherMorakWoltran20},
which provide methods on dealing with high treewidth for \SAT, QBF and other formalisms,
it seems that the full potential of treewidth has not been unleashed for ASP yet.
Towards making these advancements accessible for ASP, we present novel reductions for the important fragments of normal and HCF programs to \SAT.
We introduced novel algorithms to count the projected answer sets
(\PASP) of head-cycle-free or disjunctive programs. Our algorithms
employ dynamic programming and exploit small treewidth of the
primal graph of the input program. The second algorithm, which solves
arbitrary disjunctive programs, is expected asymptotically optimal
assuming the exponential time hypothesis (ETH).
More precisely, runtime is triple exponential in the treewidth and
polynomial in the size of the input instance. When we restrict the
input to head-cycle-free programs, the runtime drops to double
exponential.

Our results extend previous work to
answer set programming and we believe that it can be applicable to
other hard combinatorial problems, such as
circumscription~\cite{DurandHermannKolaitis05}, quantified Boolean
formulas (QBF)~\cite{CharwatWoltran16a}, or default
logic~\cite{FichteHecherSchindler18a}.
%

}
\appendix
\clearpage



\bibliography{references}

\futuresketch{
\clearpage
\section{Additional Resources}

\appendix

\subsection{Additional Examples}
\begin{example}[c.f.,\citey{FichteEtAl17a}]\label{ex:bagprog} 
  %
  Intuitively, the tree decomposition of Figure~\ref{fig:graph-td}
  enables us to evaluate program $\prog$ by analyzing sub-programs
  $\{r_2\}$ and $\{r_3,r_4, r_5\}$, and combining results agreeing on
  $e$ followed by analyzing~$\{r_1\}$.  Indeed, for the given tree
  decomposition of Figure~\ref{fig:graph-td}, $\progt{t_1}=\{r_2\}$,
  $\progt{t_2}=\{r_3,r_4, r_5\}$ and
  $\prog=\progt{t_3}=\{r_1\} \cup \progtneq{t_3}$. Note that
  here~$\prog=\progt{t_3} \neq \progtneq{t_3}$ and the tree
  decomposition is not nice.  \longversion{For the tree decomposition
    of Figure~\ref{fig:graph-td2}, we have
    $\progt{t_1} = \{r_1,r_2\}$, 
    as well as $\progt{t_3} = \{r_3\}$.} 
\end{example}%

%

\subsection{Worst-Case Analysis of $\dpa_{\PRIM}$: Omitted proofs}

\begin{restateproposition}[prop:kfact]
\begin{proposition}
Given any positive integer~$i \geq 1$ and functions~$f(k)\eqdef k!, g(k) \eqdef 2^{k^{(i+1)/i}}$. Then, $f \in O(g)$.
\end{proposition}
\end{restateproposition}
\begin{proof}
We proceed by simultaneous induction.\\
Base case ($k=i=1$): Obviously, $1^{2} \geq 1!$.\\
Induction hypothesis: $k! \in O(2^{k^{(i+1)/i}})$\\
Induction step ($k \rightarrow k+1$): \\We have to show that for $k\geq k_0$ for some fixed $k_0$, the following equation holds.
\begin{align*}
  2^{(k+1)^{(i+1)/i}} \geq (k+1)\cdot k!\\
  2^{(k+1)^{1/i}\cdot(k+1)} \geq (k+1)\cdot k!\\
  2^{(k+1)^{1/i}+k\cdot(k+1)^{1/i}} \geq (k+1)\cdot k!\\
  2^{(k+1)^{1/i}}\cdot 2^{k\cdot(k+1)^{1/i}} \geq (k+1)\cdot k!\\
  2^{(k+1)^{1/i}} \cdot k! \geq^{IH} (k+1)\cdot k!\\
  2^{(k+1)^{1/i}} \geq (k+1)\\
  2^{(k+1)^{1/i}} \geq 2^{\text{log}_2(k+1)}\geq (k+1)\\
  \text{ where } k\geq k_0 \text{ for some fixed } k_0 \text{ since } \text{log}_2\in O(\text{exp}(1/i))
\end{align*}
Induction step ($k \rightarrow k+1, i \rightarrow i+1$): Analogous, previous step works for any~$i$.\\
Induction step ($i \rightarrow i+1$): Analogous.
\end{proof}

\subsection{Characterizing Extensions}

In the following, we assume~$(\prog,P)$ to be an instance of~$\PASP$. 
Further, let~$\mathcal{T}=(T,\chi,\tau)$
be an~$\AlgA$-TTD of~$\mathcal{G}_\prog$ where~$T=(N,\cdot,n)$, node~$t\in N$, and~$\rho\subseteq\tau(t)$.

\begin{definition}\label{def:extensions}
  Let $\vec u$ be a row of $\rho$.

  An \emph{extension below~$t$} is a set of pairs where a pair consist
  of a node~$t'$ of the \emph{induced sub-tree~$T[t]$ rooted at~$t$} and a row~$\vec v$ of $\tau(t')$
  and the cardinality of the set equals the number of nodes in the
  sub-tree~$T[t]$. 
  
  We define the family of \emph{extensions below~$t$}
  recursively as follows.  If $t$ is of type~\leaf, then
  $\Ext_{\leq t}(\vec u) \eqdef \{\{\langle t,\vec u\rangle\}\}$;
  otherwise
  $\Ext_{\leq t}(\vec u) \eqdef \bigcup_{\vec v \in \origs(t,\vec u)}
  \big\SB\{\langle t,\vec u\rangle\}\cup X_1 \cup \ldots \cup X_\ell
  \SM X_i\in\Ext_{\leq t_i}({\vec v}_{(i)})\big\SE$ 
  for the~$\ell$ children~$t_1, \ldots, t_\ell$ of~$t$.
  %
  We extend this notation for an $\AlgS$-table~$\rho$ by
  $\Ext_{\leq t}(\rho)\eqdef \bigcup_{\vec u\in\rho} \Ext_{\leq
    t}(\vec u)$.  Further, we
  let~$\Exts \eqdef \Ext_{\leq n}(\tau(n))$ be the
  \emph{family of all extensions}. 
  
  Further, we define \emph{the local table for node}~$t$ and family~$E$ of extensions (below some node) as
  $\local(t,E)\eqdef \bigcup_{\hat\rho \in E}\{ \langle \vec{\tabval}\rangle \mid
  \langle t, \vec{\tabval}\rangle \in \hat{\rho}\}$.

\end{definition}

If we would construct all extensions below the root~$n$, it allows us
to also obtain all models of program~$\prog$.  To this end, we state the following definition.

\begin{definition}\label{def:satext}
  %
  We define 
  the \emph{satisfiable
    extensions below~$t$} for~$\rho$ by
  \[\PExt_{\leq t}(\rho)\eqdef \bigcup_{\vec u\in\rho} \SB X \SM X
    \in \Ext_{\leq t}(\vec u), X \subseteq Y, Y \in \Exts\SE.\]
\end{definition}

\begin{observation}
$\PExt_{\leq n}(\tau(n)) = \Exts$.
\end{observation}

\begin{definition}
We define the \emph{purged table mapping~$\nu$ of~$\tau$} by
$\nu(t)\eqdef \local(t,\PExt_{\leq t}[\tau(t)])$ for every~$t\in N$.
\end{definition}

Next, we define an auxiliary notation that gives us a way to
reconstruct interpretations from families of extensions.


\begin{definition}\label{def:iextensions}
  Let $E$ be a family of extensions
  below~$t$. 
  We define the \emph{set~$\mathcal{I}(E)$ of interpretations} of~$E$
  by
  $\mathcal{I}(E) \eqdef \big\SB \bigcup_{\langle \cdot, \vec u
    \rangle \in X} \mathcal{I}(\vec u) \mid X \in E \big\SE$
  and the set~$\mathcal{I}_P(E)$ of \emph{projected interpretations} by
  $\mathcal{I}_P(E) \eqdef \big\SB \bigcup_{\langle \cdot, \vec u \rangle \in X}
  \mathcal{I}(\vec u) \cap P \mid X \in E \big\SE$.

\end{definition}

\begin{example} 
  Consider again program~$\prog$ and TTD~$(T,\chi,\tau)$ of~$\mathcal{G}_\prog$,
  where~$t_{14}$ is the root of~$T$, from Example~\ref{ex:sat}.
  Let~$X=\{\langle t_{13}, \langle\{b\}, \{b\}, \langle b\rangle\rangle\rangle, \langle t_{12},
  \langle\{b\}, \emptyset, \langle b\rangle\rangle\rangle,
  \langle t_{11},
  \langle\{b\}, \emptyset, \langle b\rangle\rangle\rangle,$
  $\langle t_{10},
  \langle\{b,e\}, \{e\}, \langle b,e\rangle\rangle\rangle,
  \langle t_{9},
  \langle\{e\}, \{e\}, \langle e\rangle\rangle\rangle,
  \langle t_{4},
  \langle\{b\}, \{b\},$ 
  $\langle b\rangle\rangle\rangle,
  \langle t_{3},
  \langle\{b\}, \{b\}, \langle b\rangle\rangle\rangle,
  \langle t_{1},
  \langle\emptyset, \emptyset, \langle \rangle\rangle\rangle\}$
  be an extension
  below~$t_{14}$.  Observe that~$X\in\Exts$ and that
  Figure~\ref{fig:running2} highlights those rows of tables for
  nodes~$t_{13}, t_{12}, t_{11}, t_{10}, t_{9}, t_4, t_3$ and~$t_1$ that also occur in~$X$
  (in yellow). Further, $\mathcal{I}(\{X\})=\{b,e\}$ computes the
  corresponding answer set of~$X$, and $\mathcal{I}_P(\{X\}) = \{e\}$ derives
  the projected answer sets of~$X$.  $\mathcal{I}(\Exts)$ refers to the set
  of answer sets of~$\prog$, whereas~$\mathcal{I}_P(\Exts)$ is the set
  of projected answer sets of~$\prog$.
\end{example}

\subsection{Correctness of~$\dpa_{\PRIM}$: Omitted proofs}

In the following, we assume~$\prog$ to be a head-cycle-free program. Further, let~$\mathcal{T}=(T,\chi,\tau)$
be an~$\AlgA$-TTD of~$\mathcal{G}_\prog$ where~$T=(N,\cdot,n)$ and~$t\in N$ is a node.

We state definitions required for the correctness
proofs of our algorithm \PRIM. In the end, we only store rows that
are restricted to the bag content to maintain runtime bounds. 
Similar to related work~\cite{FichteEtAl17a}, we define the
content of our tables in two steps. First, we define the properties of
so-called \emph{$\PRIM$-solutions up to~$t$}. Second, we restrict
these solutions to~\emph{$\PRIM$-row solutions} at~$t$.

\begin{definition}\label{def:globalhcf}
Let~$\hat I\subseteq\att{t}$ be an interpretation,
$\hat{\mathcal{P}}\subseteq \hat I$ be a set of atoms and~$\hat\sigma$ be an ordering over atoms~$\hat I$.
Then, $\langle \hat I, \hat{\mathcal{P}}, \hat\sigma\rangle$ is referred to as~\emph{$\PRIM$-solution up to~$t$} if the following holds.
  \begin{enumerate}
    \item~$\hat I\models\progt{t}$,
    \item for each~$a\in\hat I\cap\attneq{t}$, we have~$a\in\hat{\mathcal{P}}$, and
    \item $a\in\hat{\mathcal{P}}$ if and only if~$a$ is proven using program~$\progt{t}$ and ordering~$\hat\sigma$.
  \end{enumerate}
\end{definition}

Next, we observe that the $\PRIM$-solutions up to~$n$ suffice to capture all the answer sets.

\begin{proposition}\label{prop:hcfglobal}
The set of~$\PRIM$-solutions up to~$n$ characterizes the set of answer sets of~$\prog$.
In particular: $\{\hat I \mid  \langle \hat I, \hat{\mathcal{P}}, \hat\sigma \rangle \text{ is a } \PRIM\text{-solution up to }n\} = \{I \mid I \text{ is an answer set of }\prog\}$.
\end{proposition}
\begin{proof}
Observe that Definition~\ref{def:globalhcf} for root node~$t=n$ indeed suffices for~$\hat I$ to be a model of~$\progt{n}=\prog$,
and, moreover, every atom in~$\hat I=\hat P$ is proven in~$\prog$ by ordering~$\hat\sigma$.
\end{proof}

\begin{definition}\label{def:localhcf}
Let~$\langle \hat I, \hat{\mathcal{P}}, \hat\sigma\rangle$ be a~$\PRIM$-solution up to~$t$. Then, $\langle \hat I \cap \chi(t), \hat{\mathcal{P}} \cap \chi(t), \sigma \rangle$, where~$\sigma$ is the partial ordering of~$\hat\sigma$ only containing~$\chi(t)$, is referred to as~\emph{$\PRIM$-row solution at node~$t$}.
\end{definition}

Given a~$\PRIM$-solution~$\vec{\hat\tabval}$ up to~$t$ and a~$\PRIM$-row solution~$\vec\tabval$ at~$t$.
We say~$\vec{\hat\tabval}$ is a \emph{corresponding} $\PRIM$-solution up to~$t$ of~$\PRIM$-row solution at~$t$ if~$\vec{\hat\tabval}$ can be used
to construct~$\vec\tabval$ according to Definition~\ref{def:localhcf}.

In fact,~\emph{$\PRIM$-row solutions} at~$t$ suffice to capture all the answer sets of~$\prog$.
Before we show that, we need the following definition.

\begin{definition}
Let $t\in N$ be a node of~$\TTT$
  with~$\children(t) = \langle t_1, \ldots, t_\ell \rangle$.
Further, let~$\vec{\hat\tabval}=\langle \hat I, \hat{\mathcal{P}}, \hat\sigma\rangle$ be a~$\PRIM$-solution up to~$t$ 
and~$\vec{\hat v}=\langle \hat{I'}, \hat{\mathcal{P}'}, \hat{\sigma'} \rangle$ be a~$\PRIM$-solution up to
$t_i$. Then,~$\vec\tabval$ is \emph{compatible with~$\vec v$} (and vice-versa) if
	\begin{enumerate}
		\item $\hat{I'} = \hat{I}\cap \att{t_i}$
		\item $\hat{\mathcal{P}'} = \hat{\mathcal{P}}\cap \att{t_i}$
		\item $\hat{\sigma'}$ is a sub-sequence of~$\hat\sigma$ such that~$\hat\sigma$ may additionally contain atoms in~$\att{t}\setminus\att{t_i}$
	\end{enumerate}
\end{definition}

\begin{lemma}[Soundness]\label{lem:paspcorrect}
  Let $t\in N$ be a node of~$\TTT$
  with~$\children(t,T) = \langle t_1, \ldots, t_\ell \rangle$.
  Further, let $\vec v_i$ be a~$\PRIM$-row solution at~$t_i$ for~$1\leq i\leq \ell$.
  Then, each row~$\vec\tabval = \langle I, \mathcal{P}, \sigma \rangle$ in~$\tau(t)$ 
  with~$\langle \vec v_1, \ldots, \vec v_\ell \rangle \in \origa{\PRIM}(t, \vec\tabval)$ is also a~$\PRIM$-row solution at
  node~$t$. Moreover, for any corresponding~$\PRIM$-solution~${\vec{\hat\tabval}}$ up to~$t$ (of~$\vec\tabval$)
  there are corresponding \emph{compatible}~$\PRIM$-solutions~$\vec{\hat{v_i}}$ up to~$t_i$ (for~$\vec v_i$). 
\end{lemma}
\begin{proof}[Proof (Sketch)]
We proceed by case distinctions.
Assume case(i):~$\type(t)=\leaf$. Then, $\langle \emptyset, \emptyset, \langle \rangle \rangle$ is a~\PRIM-row solution at~$t$. This concludes case(i).

Assume case(ii):~$\type(t)=\intr$ and~$\chi(t)\setminus\chi(t')=\{a\}$. Let~$\vec v_1=\langle I, \mathcal{P}, \sigma\rangle$ be any \PRIM-row solution at child node~$t_1$,
and~$\vec{\hat{v_1}}=\langle \hat I, \hat{\mathcal{P}}, \hat\sigma\rangle$ be any corresponding \PRIM-solution up to~$t_1$, which exists by Definition~\ref{def:localhcf}. In the following, we show that the way~\PRIM transforms
\PRIM-row solution~$\vec v_1$ at~$t_1$ to a \PRIM-row solution~$\vec\tabval=\langle I', \mathcal{P}', \sigma'\rangle$ at~$t$ is sound.
We identify several sub-cases.

Case (a): Atom~$a\not\in I'$ is set to false. Then, \PRIM constructs~$\vec\tabval$ where~$I'=I, \sigma'=\sigma$ and~$\mathcal{P}'=\mathcal{P}\cup \gatherproof(I',\sigma', \prog_t)$. Note that by construction~$I'\models \prog_t$.
Towards showing soundness, we define how to transform~$\vec{\hat{v_1}}$ into~$\vec{\hat\tabval}$ such that~$\vec{\hat\tabval}$ is indeed the corresponding~$\PRIM$-solution up to~$t$ of row~$\vec{\tabval}$ constructed by~\PRIM. To this end, we define~$\vec{\hat\tabval}$ as follows: $\vec{\hat\tabval} = \langle \hat I, \hat{\mathcal{P}} \cup  \gatherproof(I',\sigma', \prog_t), \hat\sigma\rangle$. Observe that~$\vec{\hat\tabval}$ is a~\PRIM-solution up to~$t$ according to Definition~\ref{def:globalhcf}.
Moreover, by construction and Definition~\ref{def:localhcf}, $\vec{\hat\tabval}$ is a corresponding~$\PRIM$-solution up to~$t$ of~$\hat\tabval$. 
It remains to show, that indeed for any
corresponding~$\PRIM$-solution~${\vec{\hat\tabval}}=\langle \hat {I'},
\hat{\mathcal{P}'}, \hat{\sigma'} \rangle$ up to~$t$
(of~$\vec\tabval$, there is a
corresponding~$\PRIM$-solution~$\vec{\hat{\zeta_1}}$ up to~$t_1$
(of~$\vec{{v_1}}$). 
To this end, we
define~$\vec{\hat{\zeta_1}}=\langle \hat{I'}, \hat{\mathcal{P}'}
\setminus (\mathcal{P}'\setminus\mathcal{P}), \hat{\sigma'}\rangle$
that is by construction according to Definition~\ref{def:globalhcf}
indeed a corresponding~$\PRIM$-solution up to~$t_1$
of~$\vec{\hat{v_1}}$.  This concludes case (a).

Case (b): Atom~$a\in I'$ is set to true. Conceptually, the case works analogously. 
This concludes cases (b) and (ii). 

The remaining cases for nodes~$t$ with~$\type(t)=\rem$ (slightly easier) and nodes~$t$ with~$\type(t)=\join$, 
where we need to consider \PRIM-row solutions at two different child nodes of~$t$, go through similarly.
\end{proof}


\begin{lemma}[Completeness]\label{lem:primcomplete}
  Let~$t\in N$ be node of~$\TTT$ where
  $\type(t) \neq \leaf$ and~$\children(t,T) = \langle t_1, \ldots, t_\ell \rangle$. Given a
  $\PRIM$-row solution~$\vec\tabval=\langle I, \mathcal{P}, \sigma \rangle$ at node~$t$,
  and any corresponding~$\PRIM$-solution~$\vec{\hat\tabval}$ up to~$t$ (of~$\vec\tabval$).
  Then, there exists $\vec s=\langle {v_1}, \ldots, {v_\ell}\rangle$ where ${v_i}$ is a
  $\PRIM$-row solution at~$t_i$ 
  such that~$\vec s\in\origa{\PRIM}(t,\vec\tabval)$,
  and corresponding~$\PRIM$-solution~$\vec{\hat{v_i}}$ up to~$t_i$ (of~$v_i$) that is
  compatible with~$\vec{\hat\tabval}$.
\end{lemma}
\begin{proof}[Proof (Idea)]
Since~$\vec\tabval$ is a~\PRIM-row solution at~$t$, there is by Definition~\ref{def:localhcf} a corresponding~\PRIM-solution~$\vec{\hat\tabval}=\langle \hat I, \hat{\mathcal{P}}, \hat\sigma\rangle$ up to~$t$. 

We proceed again by case distinction. Assume that~$\type(t)=\intr$. Then we define~$\vec{\hat{v_1}}\eqdef \langle \hat I \setminus \{a\}, \hat{\mathcal{P}'}, \hat{\sigma'}\rangle$,
where~$\hat{\sigma'}$ is a sub-sequence of~$\hat\sigma$ that does not contain~$a$ and~$\hat{\mathcal P}'=\gatherproof(\hat I \setminus \{a\}, t_1, \progt{t_1})$. 
Observe that all the conditions of Definition~\ref{def:globalhcf} are met and that~$\hat{\mathcal P}'\subseteq \hat{\mathcal{P}'}$. Then, we can easily define \PRIM-row solution~$\vec{v_1}$ at~$t_1$ according to Definition~\ref{def:localhcf} by using~$\vec{\hat{v_1}}$. By construction of~$\vec{\hat{v_1}}$ and by the definition of~$\gatherproof$, we conclude that~$\vec\tabval$ can be constructed with~$\PRIM$
using~$\vec{v_1}$. Moreover, \PRIM-solution~$\vec{\hat{v_1}}$ up to~$t_1$ is indeed compatible with~$\vec{\hat\tabval}$.

Assume that~$\type(t)=\rem$. The case is slightly easier as the one above, and the remainder works similar.

Similarly, one can show the result for the remaining node with~$\type(t)=\join$, but define \PRIM-row solutions for two preceding child nodes of~$t$.
\end{proof}

We are now in the position to proof our theorem.

\begin{restatetheorem}[thm:primcorrectness]%
\begin{theorem}
  The algorithm~$\dpa_\PRIM$ is correct. \\
  More precisely, 
  %
  the algorithm~$\dpa_\PRIM((\prog,\cdot),\TTT)$ returns
  $\PRIM$-TTD~$(T,\chi,\tau)$ such that we can decide consistency of~$\prog$ and even reconstruct the answer sets of~$\prog$:
  \begin{align*}
	&\mathcal{I}(\Ext_{\leq n}[\tau(n)])=
	\{\hat I \mid  \langle \hat I, \hat{\mathcal{P}}, \hat\sigma \rangle \text{ is a } \PRIM\text{-solution up to }n\}\\
	&=\{I \mid I \in \ta{\at(\prog)}, I \text{ is an answer set of }\prog\}.
  \end{align*}
\end{theorem}
\end{restatetheorem}
\begin{proof}[Proof (Idea).]
  %
  By Lemma~\ref{lem:paspcorrect} we have soundness for every
  node~$t \in N$ and hence only valid rows as output of table
  algorithm~$\PRIM$ when traversing the tree decomposition in
  post-order up to the root~$n$.
  By Proposition~\ref{prop:hcfglobal} we then know that we can reconstruct answer sets
  given~\PRIM-solutions up to~$n$.
  In more detail, we proceed by means of induction. 
  For the induction base we only store~\PRIM-row solutions~$\vec\tabval\in\tau(t)$ at a certain node~$t$ starting at the leaves.
  For nodes~$t$ with~$\type(t)=\leaf$, obviously there is only the following (one)~\PRIM-row solution at~$t$: $\vec\tabval=\langle \emptyset, \emptyset, \langle \rangle\rangle$.
  
  Then, by Lemma~\ref{lem:paspcorrect} we establish the induction step, since algorithm~\PRIM only creates~\PRIM-row solutions at every node~$t$,
  assuming that it gets~\PRIM-row solutions at~$t_i$ for every child node~$t_i$ of~$t$.
  As a result, if there is no answer set of~$\prog$, the table~$\tau(n)$ is empty.
  On the other hand, if there is an answer set of~$\prog$, we obtain a~\PRIM-row solution~$\vec\tabval$ at root node~$n$, 
  for which by Definition~\ref{def:localhcf} a corresponding~\PRIM-solution~$\vec{\hat\tabval}$ up to~$n$ exists.
  Further, in the induction step we ensured that~\PRIM-solutions up to~$t$ for every~\PRIM-row solution at~$t$ for every node~$t\in N$ can be found that are compatible to~$\vec{\hat\tabval}$. In other words, by keeping track of corresponding origin~\PRIM-row solutions of~$\vec\tabval$ we can combine interpretation positions~$\mathcal{I}(\cdot)$ of rows by following origin rows top-down in order to reconstruct only valid answer set.
  %
  %
  
  %
  %

  %
  %

  Next, we establish completeness by induction starting from the
  root~$n$. Let therefore,
  $\hat\rho=\langle \hat I, \hat{\mathcal{P}}, \hat\sigma \rangle$ be
  the~\PRIM-solution up to node~$n$. If~$\hat\rho$ does not exist for
  node~$n$, there is by definition no answer set
  of~$\prog$. Otherwise, by Definition~\ref{def:localhcf}, we know
  that for the root~$n$ we can construct \PROJ-row solutions at~$n$ of
  the form~$\rho=\langle\emptyset, \emptyset, \langle \rangle\rangle$
  for~$\hat\rho$.  We already established the induction step in
  Lemma~\ref{lem:primcomplete} using~$\rho$ and~$\hat\rho$. As a
  consequence, we can reconstruct exactly \emph{all the answer sets}
  of~$\prog$ by following origin rows (see
  Definition of~$\orig$) back to the leaves and combining
  interpretation parts~$\mathcal{I}(\cdot)$, accordingly.
  Hence, we obtain some (corresponding) rows for every
  node~$t$. Finally, we stop at the leaves.

  In consequence, we have shown both soundness and completeness. As a
  result, Theorem~\ref{thm:primcorrectness} is sustains.
\end{proof}

\begin{corollary}\label{cor:primcorrectness}
  %
  %
  Algorithm~$\dpa_\PRIM((\prog,\cdot),\TTT)$ returns
  $\PRIM$-TTD~$(T,\chi,\tau)$ such that:
  \begin{align*}
    &\mathcal{I}(\PExt_{\leq t}[\tau(t)])\\
    &=\{\hat I \mid  \langle \hat I, \hat{\mathcal{P}}, \hat\sigma \rangle \text{ is a } \PRIM\text{-solution up to }t, \text{ there is answer set }\\
    &\quad\;\;\, I' \supseteq \hat I \text{ of } \prog \text{ such that }
      I' \subseteq I \cup (\at(\prog) \setminus \att{t})\}\\
    &=\{I \mid I \in \ta{\att{t}}, I \models \prog_{\leq t}, \text{ there is an answer set }\\
    &\quad\;\;\, I' \supseteq I \text{ of } \prog \text{ such that } I'\subseteq I \cup (\at(\prog) \setminus \att{t})\}.
  \end{align*}
\end{corollary}
\begin{proof}
The corollary follows from the proof of Theorem~\ref{thm:primcorrectness} applied up to node~$t$ and by considering only rows that are involved in reconstructing answer sets (see Definition~\ref{def:satext}).
\end{proof}


\subsection{Correctness of~$\mdpa{\AlgA}$: Omitted proofs}

In the following, we assume~$(\prog, P)$ to be an instance of~$\PASP$. Further, let~$\mathcal{T}=(T,\chi,\tau)$
be an~$\AlgA$-TTD of~$\mathcal{G}_\prog$ where~$T=(N,\cdot,n)$, node~$t\in N$, and~$\rho\subseteq\tau(t)$.

\begin{definition}\label{def:asplocalsol}
Table algorithm $\AlgA$ is referred to as \emph{admissible}, if for each row $\vec{u_{t.i}}\in\tau(t)$ of any node~$t\in T$ the following holds:
  \begin{enumerate}
    \item $\mathcal{I}(\vec{\tabval_{t.i}}) \subseteq \chi(t)$
    \item For any $\vec v \in \tau(t')$, $\vec w \in \tau(t'')$ we have $\mathcal{I}(\vec v) \cap \chi(t') \cap \chi(t'') = \mathcal{I}(\vec w) \cap \chi(t') \cap \chi(t'')$
    \item $\mathcal{I}(\PExt_{\leq t}[\tau(t)]) = \{I \mid I \in
    \ta{\att{t}}, I \models \prog_{\leq t}, \text{ there is an answer set } I \cup (\at(\prog) \setminus \att{t}) \supseteq  I' \supseteq I \text{ of } \prog\}$
    \item If~$t=n$ or~$\type(t)=\leaf$: $\Card{\local(t,\PExt_{\leq t}[\tau(t)])} \leq 1$
  \end{enumerate}
\end{definition}

Note that the last condition is not a hard restriction, since the bags of the leaf and root nodes of a tree decomposition are defined to be empty anyway. However, it rather serves as technical trick simplifying proofs.

\begin{observation}
Table algorithms~$\PRIM$ and~$\algo{PRIM}$ are admissible.
\end{observation}
\begin{proof}
  Obviously, Conditions 1, 2, and 4 hold by construction of the table algorithms and by properties auf tree decompositions. For condition 3, we have to check for correctness and completeness, which has been shown~\cite{FichteEtAl17a} for algorithm~$\algo{PRIM}$. For~$\PRIM$, see Theorem~\ref{thm:primcorrectness} and Corollary~\ref{cor:primcorrectness}.
\end{proof}

In the following, we assume that whenever~$\AlgA$ occurs, $\AlgA$ is an admissible table algorithm.

\begin{proposition}\label{prop:sat}
$\mathcal{I}(\PExt_{\leq n}[\tau(n)]) = \mathcal{I}(\Exts) = \{I \mid I \in
    \ta{\at(\prog)}, I \text{ is an answer set of } \prog\}.$
\end{proposition}
\begin{proof}
  Fill in Definition~\ref{def:asplocalsol} with root~$n$ of $\AlgA$-TTD ${\cal T}$.
\end{proof}

The following definition is key for the correctness proof, since later we show that these are equivalent with the result of~$\dpa_\PROJ$ using purged table mapping~$\nu$.

\begin{definition}\label{def:pmc}
  %
  The \emph{projected answer sets count} $\pmc_{\leq t}(\rho)$ of
  $\rho$ below~$t$ is the size of the union over projected
  interpretations of the satisfiable extensions of~$\rho$ below~$t$,
  formally,
  $\pmc_{\leq t}(\rho) \eqdef \Card{\bigcup_{\vec u\in\rho}
    \mathcal{I}_P(\PExt_{\leq t}(\{\vec u\}))}$.

  The \emph{intersection projected answer sets count}
  $\ipmc_{\leq t}(\rho)$ of $\rho$ below~$t$ is the size of the
  intersection over projected interpretations of the satisfiable
  extensions of~$\rho$ below~$t$,~i.e.,
  $\ipmc_{\leq t}(\rho) \eqdef \Card{\bigcap_{\vec u\in\rho}
    \mathcal{I}_P(\PExt_{\leq t}(\{\vec u\}))}$.
\end{definition}

In the following, we state definitions required for the correctness
proofs of our algorithm \PROJ.  In the end, we only store rows that
are restricted to the bag content to maintain runtime bounds. 
We define the
content of our tables in two steps. First, we define the properties of
so-called \emph{$\PROJ$-solutions up to~$t$}. Second, we restrict
these solutions to~\emph{$\PROJ$-row solutions} at~$t$.

\begin{definition}\label{def:globalsol}
  Let~$\emptyset \subsetneq \rho \subseteq \tau(t)$ be a
  table with $\rho \in \subbuckets_P(\tau(t))$.
  We define a \emph{${\PROJ}$-solution up to~$t$} to be the sequence
  $\langle \hat {\rho}\rangle = \langle\PExt_{\leq t}(\rho)\rangle$.
\end{definition}

Before we present equivalence results between~$\ipmc_{\leq t}(\ldots)$
and the recursive version~$\ipmc(t, \ldots)$
used during the computation of
$\dpa_\PROJ$, recall that~$\ipmc_{\leq t}$ and~$\pmc_{\leq t}$
(Definition~\ref{def:pmc}) are key to compute the projected answer sets
count. The following corollary states that computing $\ipmc_{\leq n}$
at the root~$n$ actually suffices to compute~$\pmc_{\leq n}$, which is
in fact the projected answer sets count of the input program.

\begin{corollary}\label{cor:psat}
  \begin{align*}
    &\ipmc_{\leq n}(\local(n,\PExt_{\leq n}[\tau(n)]))\\
 =& \pmc_{\leq n}(\local(n,\PExt_{\leq n}[\tau(n)]))\\
    =& \Card{\mathcal{I}_P(\PExt_{\leq n}[\tau(n)])}\\
    =& \Card{\mathcal{I}_P(\Exts)}\\
    =& \,|\{J \cap P
       \mid J \in \ta{\at(\prog)},\\
    & J \text{ is an answer set of } \prog\}|
  \end{align*}
\end{corollary}
\begin{proof}
  The corollary immediately follows from Proposition~\ref{prop:sat}
  and since the cardinality of $\local(n,\PExt_{\leq n}[\tau(n)])$ is at
  most one at root~$n$, by Definition~\ref{def:asplocalsol}.
\end{proof}

The following lemma establishes that the \PROJ-solutions up to
root~$n$ of a given tree decomposition solve the \PASP problem.

\begin{lemma}\label{lem:global}
  The
  value~$c = \sum_{\langle\hat{\rho}\rangle\text{ is a \PROJ-solution
      up to } n}\Card{\mathcal{I}_P(\hat{\rho})}$ corresponds to the
  projected answer sets count of~$\prog$ with respect to the set~$P$ of
  projection atoms.
\end{lemma}
\begin{proof}
  (``$\Longrightarrow$''): Assume
  that~$c = \sum_{\langle\hat{\rho}\rangle\text{ is a \PROJ-solution
      up to } n}\Card{\mathcal{I}_P(\hat {\rho})}$. Observe that there can be at
  most one projected solution up to~$n$ by Definition~\ref{def:asplocalsol}. %
  If~$c=0$, then $\tau(n)$ contains no rows. Hence, $\prog$ has no
  answer sets,~c.f., Proposition~\ref{prop:sat}, and obviously also no
  answer sets projected to~$P$. Consequently, $c$ is the projected answer sets
  count of~$\prog$.  
  If~$c>0$ we have by Corollary~\ref{cor:psat} that~$c$ is
  equivalent to the projected answer sets count of~$\prog$ with respect to~$P$.

  (``$\Longleftarrow$''): The proof proceeds similar to the only-if
  direction.
\end{proof}

\medskip %

In the following, we provide for a given node~$t$ and a given \PROJ-solution up to~$t$,
the definition of a \PROJ-row solution at~$t$.

\begin{definition}\label{def:loctab}~
  %
  %
  %
%
  Let 
  $\langle \hat{\rho} \rangle$ be a~$\PROJ$-solution up to~$t$. Then, we
  define the \emph{$\PROJ$-row solution at $t$} by
  $\langle \local(t,\hat{\rho}), \Card{\mathcal{I}_P(\hat{\rho})}\rangle$.
\end{definition}






\begin{observation}\label{obs:unique}
  Let $\langle \hat {\rho}\rangle$ be a \PROJ-solution up to a
  node~$t\in N$.  There is exactly one corresponding \PROJ-row
  solution
  $\langle \local(t,\hat{\rho}), \Card{\mathcal{I}_P(\hat{\rho})}\rangle$ at~$t$.

  Vice versa, let $\langle \rho, c\rangle$ at~$t$ be a \PROJ-row
  solution at~$t$ for some integer~$c$. Then, there is exactly one
  corresponding \PROJ-solution~$\langle\PExt_{\leq t}(\rho)\rangle$
  up to~$t$.
\end{observation}

We need to ensure that storing~$\PROJ$-row solutions at a
node~$t \in N$ suffices to solve the~\PASP problem, which is necessary
to obtain the runtime bounds as presented in
Corollary~\ref{cor:runtime}. For the root node~$n$, this is sufficient, shown in the following.

\begin{lemma}\label{lem:local}
  There is a
  \PROJ-row solution at the root~$n$ if and only if the projected
  answer sets count of~$\prog$ is larger than zero. Further, if there is a \PROJ-row solution~$\langle \rho, c\rangle$ at root~$n$, then~$c$ is the projected answer sets count of~$\prog$.
\end{lemma}
\begin{proof}%

  (``$\Longrightarrow$''): Let $\langle \rho, c\rangle$ be a
  \PROJ-row solution at root~$n$ where $\rho$ is an $\AlgA$-table and
  $c$ is a positive integer. Then, by Definition~\ref{def:loctab}
  there also exists a
  corresponding~$\PROJ$-solution~$\langle \hat{\rho} \rangle$ up
  to~$n$ such that $\rho = \local(n,\hat{\rho})$ and
  $c=\Card{\mathcal{I}_P(\hat{\rho})}$.
  Moreover, by Definition~\ref{def:asplocalsol}, we
  have~$\Card{\local(n,\PExt_{\leq n}[\tau(n)])}=1$.  
  Then, by Definition~\ref{def:globalsol},
  $\hat{\rho} = \PExt_{\leq n}[\tau(n)]$. By Corollary~\ref{cor:psat}, we
  have $c=\Card{\mathcal{I}_P(\PExt_{\leq n}[\tau(n)])}$ equals the projected answer sets count of~$\prog$.
  Finally, the claim follows.
  

  (``$\Longleftarrow$''): The proof proceeds similar to the only-if
  direction.
\end{proof}

Before we show that \PROJ-row solutions suffice, we require the following lemma.

\begin{observation}\label{obs:main_incl_excl}
  Let $n$ be a positive integer, $X = \{1, \ldots, n\}$, and $X_1$,
  $X_2$, $\ldots$, $X_n$ subsets of $X$.
  The number of elements in the intersection over all sets~$A_i$ is
    \[\Card{\bigcap_{i \in X} X_i} 
    = %
       \Bigg|\,\Card{\bigcup^n_{j = 1} X_j} 
                                         + \sum_{\emptyset \subsetneq I \subsetneq X} (-1)^{\Card{I}} 
                                              \Card{\bigcap_{i \in I} X_i}\,\Bigg|.\]
\end{observation}
\begin{proof}
  We take the well-known inclusion-exclusion
  principle~\cite{GrahamGrotschelLovasz95a} and rearrange the
  equation.
\end{proof}

\begin{lemma}\label{lem:main_incl_excl}
  Let $t\in N$ be a node of~$\TTT$
  with~$\children(t,T) = \langle t_1, \ldots, t_\ell \rangle$ and let
  $\langle\rho,\cdot\rangle$ be a~\PROJ-row solution at~$t$. Further, let~$\pi$ be a partial mapping of~$\pi'$ (finally returned by~$\dpa_\PROJ((\prog,P),\TTT)=(T,\chi,\pi')$), which maps nodes of the sub-tree~$T[t]$ rooted at~$t$ (excluding~$t$) to~$\PROJ$-tables.
  Then,
  \begin{enumerate}
  \item %
    $\ipmc(t,\rho,\langle\pi(t_1), \ldots,
    \pi(t_\ell)\rangle) = \ipmc_{\leq t}(\rho)$
  \item \smallskip%
    for $\type(t) \neq \leaf$:\\
    $\pmc(t,\rho,\langle\pi(t_1), \ldots,
    \pi(t_\ell)\rangle) = \pmc_{\leq t}(\rho)$.
  \end{enumerate}
\end{lemma}
\begin{proof}[Sketch]
  We prove the statement by simultaneous induction.
  
  (``Induction Hypothesis''): Lemma~\ref{lem:main_incl_excl} holds for the nodes in~$\children(t,T)$ and also for node~$t$, but on strict subsets~$\varphi\subsetneq\rho$.
  (``Base Cases''): Let $\type(t) = \leaf$.
  Then by definition,
  $\ipmc(t,\{\langle \emptyset, \ldots\rangle\}, \langle \rangle) = \ipmc_{\leq t}(\{\langle\emptyset,\ldots\rangle\}) =
  1$.  
  Recall that for $\pmc$ the equivalence does not hold for leaves, but we use a node
  that has a node~$t'\in N$ with~$\type(t') = \leaf$ as child for the
  base case. Observe that by definition of a tree decomposition
  such a node~$t$ can have exactly one child.
  Then, we have that
  $\pmc(t,\rho,\langle\pi(t')\rangle) = \sum_{\emptyset
    \subsetneq O \subseteq {\origs(t,\rho)}} (-1)^{(\Card{O} - 1)}
  \cdot \sipmc(\langle \tau(t')\rangle, O) =
  \Card{\bigcup_{\vec u\in\rho} \mathcal{I}_P(\PExt_{\leq t}(\{\vec u\}))} =
  \pmc_{\leq t}(\rho) = 1$ where $\langle\rho,\cdot\rangle$ is
  a~\PROJ-row solution at~$t$.

  (``Induction Step''): We proceed by case distinction.

  Assume that $\type(t) = \intr$.
  Let $a \in (\chi(t) \setminus \chi(t'))$ be an introduced
  atom. We have two cases. Case (i) $a$ also belongs to
  $(\at(\prog) \setminus P)$,~i.e., $a$ is not a projection atom; and
  Case (ii) $a$ also belongs to $P$,~i.e., $a$ is a projection
  atom.
  Assume that we have Case~(i).
  Let~$\langle \rho, c \rangle$ be a \PROJ-row solution at~$t$ for
  some integer~$c$. As a consequence of admissible algorithm~$\AlgA$ (see Definition~\ref{def:asplocalsol})
  there can be many rows in the table~$\tau(t)$ for one row in
  the table~$\tau(t')$, more precisely,
  $\Card{\buckets_P(\rho)} = 1$.
  As a result,
  $\pmc_{\leq t}(\rho) = \pmc_{\leq t'}(\orig(t,\rho))$ by
  applying Observation~\ref{obs:unique}.
  We apply the inclusion-exclusion principle on every subset~$\varphi$ of
  the origins of~$\rho$ in the definition of~$\pmc$ and by induction
  hypothesis we know that
  $\ipmc(t',\varphi,\langle\pi(t')\rangle) = \ipmc_{\leq
    t'}(\varphi)$, therefore,
  $\sipmc(\pi(t'), \varphi) = \ipmc_{\leq t'}(\varphi)$.  This
  concludes Case~(i) for $\pmc$. The induction step for $\ipmc$ works
  similar 
  by applying
  Observation~\ref{obs:main_incl_excl} and comparing the corresponding
  \PROJ-solutions up to~$t$ or $t'$, respectively. 
  Further, for showing the lemma for~$\ipmc$, one has to additionally apply the hypothesis for node~$t$, but on strict subsets~$\emptyset\subsetneq\varphi\subsetneq\rho$ of~$\rho$.
  %
  Assume that we have Case~(ii). We proceed similar as in Case~(i),
  since Case~(ii) is just a special case here, more precisely, we also
  have $\Card{\buckets_P(\rho)} = 1$ here.

  Assume that $\type(t) = \rem$. Let
  $a \in (\chi(t') \setminus \chi(t))$ be a removed atom. We have
  two cases. Case (i) $a$ also belongs to
  $(\at(\prog) \setminus P)$,~i.e., $a$ is not a projection atom; and
  Case (ii) $a$ also belongs to $P$,~i.e., $a$ is a projection
  atom.
  Assume that we have Case~(i).  Let~$\langle \rho, c \rangle$ be a
  \PROJ-row solution at~$t$ for some integer~$c$.
  As a consequence of admissible table algorithms~$\AlgA$ (see Definition~\ref{def:asplocalsol}) there can be many rows
  in the table~$\tau(t)$ for one row in the
  table~$\tau(t')$ (and vice-versa). Nonetheless we still have
  $\pmc_{\leq t}(\rho) = \pmc_{\leq t'}(\orig(t,\rho))$, because
  $a \notin P$ by applying Observation~\ref{obs:unique}.
  We apply the inclusion-exclusion principle on every subset~$\varphi$ of
  the origins of~$\rho$ in the definition of~$\pmc$ and by induction
  hypothesis we know that
  $\ipmc(t',\varphi,\langle\pi(t')\rangle) = \ipmc_{\leq
    t'}(\varphi)$, therefore,
  $\sipmc(\pi(t'), \varphi) = \ipmc_{\leq t'}(\varphi)$.  This
  concludes Case~(i) for $\pmc$. Again, the induction step for $\ipmc$
  works similar, but swapped.
  Assume that we have Case~(ii).
  Let~$\langle \rho, c \rangle$ be a \PROJ-row solution at~$t$ for
  some integer~$c$.
  Here we cannot ensure
  $\pmc_{\leq t}(\rho) = \pmc_{\leq t'}(\orig(t,\rho))$, since
  buckets fall together.  However, by applying
  Observation~\ref{obs:unique} we have
  $\pmc_{\leq t}(\rho) = \sum_{\varphi \in
    \buckets_P(\origs(t,\rho)_{(1)})} \pmc(t', \varphi, C) $ where the
  sequence~$C$ consists of the tables~$\pi(t'_i)$ of the children~$t'_i$ of~$t'$.
  For every~$\varphi \in \subbuckets_P(\origs(t,\rho)_{(1)})$ by
  induction hypothesis we know that
  $\ipmc(t',\varphi,\langle\pi(t')\rangle) = \ipmc_{\leq
    t'}(\varphi)$.
  Hence, we apply the inclusion-exclusion principle over all
  subsets~$\zeta$ of~$\varphi$ for all~$\varphi$ independently.  By
  construction
  $\sipmc(\pi(t'), \zeta) = \ipmc_{\leq t'}(\zeta)$.  Then,
  by construction
  $\pcnt(t,\rho, C') = \sum_{\emptyset \subsetneq O \subseteq
    {\origs(t,\rho)}} (-1)^{(\Card{O} - 1)} \cdot \sipmc(C', O) =
  \pmc_{\leq t}(\rho)$ where
  $C' = \langle \pi(t') \rangle$, since for the remaining
  terms $\sipmc(C', O)$ is simply zero, including cases where
  different buckets are involved.
  This concludes Case~(ii) for $\pmc$. Again, the induction step for
  $\ipmc$ works similar, but swapped by again applying
  Observation~\ref{obs:main_incl_excl}.

  Assume that $\type(t) = \join$. We proceed similar to the introduce
  case. However, we have two \PROJ-tables for the children of~$t$.
  Hence, we have to consider both sides when computing $\sipmc$
  (see Definition of~$\sipmc$). 
  There we consider the
  cross-product of two \AlgA-tables and we can also correctly apply
  the inclusion-exclusion principle on subsets of this cross-product,
  which we can do by simply multiplying $\sipmc$-values
  accordingly. The multiplication is closely related to the join case
  in table algorithm~\AlgA. For $\ipmc$ this does not apply, since the
  inclusion-exclusion principle is carried out at the node~$t$ and not
  for its children.

  Since we outlined all cases that can occur for node~$t$, this
  concludes the proof sketch.
\end{proof}


\begin{lemma}[Soundness]\label{lem:correct}
  Let $t\in N$ be a node of~$\TTT$
  with~$\children(t,T) = \langle t_1, \ldots, t_\ell \rangle$.
  Then, each row~$\langle \rho, c \rangle$ at node~$t$ constructed
  by table algorithm~$\PROJ$ is also a~\PROJ-row solution for
  node~$t$.
\end{lemma}
\begin{proof}[Idea]
  Observe that Listing~\ref{fig:dpontd3} computes a row for each
  sub-bucket $\rho \in \subbuckets_P(\local(t,\PExt_{\leq t}[\tau(t)]))$. The
  resulting row~$\langle\rho, c \rangle$ obtained by~$\ipmc$ is
  indeed a \PROJ-row solution for~$t$ according to
  Lemma~\ref{lem:main_incl_excl}.
\end{proof}


\begin{lemma}[Completeness]\label{lem:complete}
  Let~$t\in N$ be node of~$\TTT$ where
  $\type(t) \neq \leaf$ and~$\children(t,T) = \langle t_1, \ldots, t_\ell \rangle$. Given a
  \PROJ-row solution~$\langle \rho, c \rangle$ at node~$t$.
  There exists $\langle C_1, \ldots, C_\ell\rangle$ where $C_i$ is set
  of \PROJ-row solutions at~$t_i$ 
  such that
  $\rho \in \PROJ(t, \cdot, \tau(t), \cdot, P, \langle C_1, \ldots,
  C_\ell\rangle)$.
\end{lemma}
\begin{proof}[Idea]
Since~$\langle\rho,c \rangle$ is a~\PROJ-row solution for~$t$, there is by Definition~\ref{def:loctab} a corresponding ~\PROJ-solution~$\langle\hat\rho\rangle$ up to~$t$ such that~$\local(t,\hat\rho) = \rho$. 

We proceed again by case distinction. Assume that~$\type(t)=\intr$. Then we define~$\hat{\rho'}\eqdef \{(t',\hat\varphi) \mid (t', \hat\varphi)\in \rho, t \neq t'\}$. Then, for each subset~$\emptyset\subsetneq\varphi\subseteq\local(t',\hat{\rho'})$, we define~$\langle \varphi, \Card{\mathcal{I}_P(\PExt_{\leq t}(\varphi))}\rangle$ in accordance with Definition~\ref{def:loctab}. By Observation~\ref{obs:unique}, we have that~$\langle \varphi, \Card{\mathcal{I}_P(\PExt_{\leq t}(\varphi))}\rangle$ is an \AlgA-row solution at node~$t'$. 
Since we defined the~\PROJ-row solutions for~$t'$ for all the respective \PROJ-solutions up to~$t'$, we encountered every~\PROJ-row solution for~$t'$ that is required for deriving~$\langle \rho, c\rangle$ via~\PROJ (c.f., Definitions of~$\ipmc$ and of~$\pmc$). 

Assume that~$\type(t)=\rem$. The case is slightly easier as the one
above. We do not need to define a~\PROJ-row solution for~$t'$ for all
subsets~$\varphi$, since we only have to consider subsets~$\varphi$ here,
with~$\Card{\buckets_P(\varphi)}=1$. The remainder works similar.

Similarly, one can show the result for the remaining node with~$\type(t)=\join$, but define \PROJ-row solutions for two preceding child nodes of~$t$.
\end{proof}

We are now in the position to proof our theorem.

\begin{theorem}\label{thm:correctness}
  The algorithm~$\dpa_\PROJ$ is correct. \\
  More precisely, 
  %
  the algorithm~$\dpa_\PROJ((\prog,P),\TTT)$ returns
  $\PROJ$-TTD~$(T,\chi,\pi$) such that $c=\sipmc(\pi(n), \cdot)$
  is the projected answer sets count of~$\prog$ with respect to the set~$P$ of
  projection atoms.
\end{theorem}
\begin{proof}
  %
  By Lemma~\ref{lem:correct} we have soundness for every
  node~$t \in N$ and hence only valid rows as output of table
  algorithm~$\PROJ$ when traversing the tree decomposition in
  post-order up to the root~$n$.
  By Lemma~\ref{lem:local} we know that the projected answer sets count~$c$
  of~$\prog$ is larger than zero if and only if there exists a
  certain~\PROJ-row solution for~$n$.
  This~\PROJ-row solution at node~$n$ is of the
  form~$\langle \{\langle\emptyset, \ldots\rangle\} ,c\rangle$. If
  there is no \PROJ-row solution at node~$n$,
  then~$\tau(n)=\emptyset$ since the table algorithm~$\AlgA$
  is admissible (c.f., Proposition~\ref{prop:sat}). Consequently, we have
  $c=0$. Therefore, $c=\sipmc(\pi(n), \cdot)$ is the
  projected answer sets count of~$\prog$ with respect to~$P$ in both cases.
  %
  %
  
  %
  %

  %
  %

  Next, we establish completeness by induction starting from the
  root~$n$. Let therefore, $\langle \hat\rho \rangle$ be
  the~\PROJ-solution up to node~$n$, where for each row
  in~$\vec u\in \hat\rho$, $\mathcal{I}(\vec u)$ corresponds to an
  answer set of~$\prog$.  By Definition~\ref{def:loctab}, we know that
  for the root~$n$ we can construct a \PROJ-row solution at~$n$ of the
  form~$\langle \{\langle\emptyset, \ldots\rangle\} ,c\rangle$
  for~$\hat\rho$.  We already established the induction step in
  Lemma~\ref{lem:complete}.
  Hence, we obtain some (corresponding) rows for every
  node~$t$. Finally, we stop at the leaves.

  In consequence, we have shown both soundness and completeness. As a
  result, Theorem~\ref{thm:correctness} is sustains.
\end{proof}

\begin{corollary}\label{cor:correctness}
  The algorithm $\mdpa{\AlgA}$ is correct and outputs for any instance
  of \PASP its projected answer sets count.
\end{corollary}
\begin{proof}
  The result follows immediately, since~$\mdpa{\AlgA}$ consists of two
  dynamic programming passes~$\dpa_\AlgA$, a purging step, and~$\dpa_\PROJ$. For the
  soundness and completeness of~$\dpa_\algo{PRIM}$ we refer to other
  sources~\cite{FichteEtAl17a}. By Proposition~\ref{prop:sat}, the
  ``purging'' step does neither destroy soundness nor completeness
  of~$\dpa_\AlgA$.
\end{proof}

\begin{restateproposition}[prop:phcworks]
\begin{proposition}
  The algorithm $\mdpa{\PRIM}$ is correct and outputs for any instance
  of \PASP its projected answer sets count.
\end{proposition}
\end{restateproposition}
\begin{proof}
This is a direct consequence of Corollary~\ref{cor:correctness}.
\end{proof}

\begin{restateproposition}[prop:disjworks]
\begin{proposition}
  The algorithm $\mdpa{\algo{PRIM}}$ is correct and outputs for any instance
  of \PDASP its projected answer sets count.
\end{proposition}
\end{restateproposition}
\begin{proof}
This is a direct consequence of Corollary~\ref{cor:correctness}.
\end{proof}

\longversion{
\section{Correctness of QBF lower bound}

\begin{definition}[\cite{MarxMitsou16}]
  Given graph~$G=(V,E)$, integers~$i,j,r$ where~$i\leq j$ and total list-capacity function~$f: V \rightarrow \{i,\ldots,j\}$. Then an instance~$(G,r,f)$ of~$(i,j)$-Choosability Deletion asks for the existence of a set of vertices~$V'\subseteq V$ with~$\Card{V'}\leq r$ and~$V_1\eqdef V\setminus V'$, such that for all assignments~$\mathcal{L}: V_1 \rightarrow 2^{\{i,\ldots,j\}}$ with $\Card{\mathcal{L}(v)} = f(v)$ for all~$v\in V_1$, there is a coloring~$c: V_1 \rightarrow \mathcal{L}(v)$ such that for every edge~$(u,v)\in E\setminus (V_1\times V_1): c(u) \neq c(v)$.
\end{definition}

\begin{proposition}[\cite{MarxMitsou16}]
  Instances~$(G,r,f)$ of~$(1,4)$-Choosability Deletion where~$k$
  is the treewidth of~$G$, can not be
  solved in time~${2^{2^{2^{o(k)}}}}\cdot \CCard{G}^{o(k)}$.
\end{proposition}

\begin{restateproposition}[prop:lampis3]
\begin{proposition}
  QBFs of the form $\forall V_1.\exists V_2.\forall V_3. E$ where~$k$
  is the treewidth of the primal graph of DNF formula~$E$, can not be
  solved in time~${2^{2^{2^{o(k)}}}}\cdot \CCard{E}^{o(k)}$.
\end{proposition}
\end{restateproposition}
\begin{proof}
We proof the result by reducing from the problem~$(1,4)$-Choosability Deletion,
\end{proof}}
}

\end{document}
